\documentclass[acmsmall,screen,nonacm]{acmart}

\usepackage{ifthen}

\usepackage{refcount}

\usepackage{amsfonts}

\usepackage{mathtools}

\usepackage[only, llbracket, rrbracket, fatsemi, Yleft]{stmaryrd}

\usepackage{shuffle}

\usepackage{amsthm}
\usepackage{thmtools,thm-restate}

\usepackage{tikz}
\usetikzlibrary{arrows, calc, shapes, snakes, automata, fit, positioning, intersections}
\tikzset{>=stealth'} %

\tikzstyle{every picture} = [style=semithick]
\tikzstyle{every node}    = [font=\small]
\tikzstyle{every state}   = [thick, minimum size=1mm, inner sep=2pt]

\tikzstyle{initial}   = [initial   by arrow, initial   text=, initial   distance=4mm]
\tikzstyle{accepting} = [accepting by arrow, accepting text=, accepting distance=4mm]

\usepackage{algorithm}
\usepackage{algpseudocodex}
\algrenewcommand\algorithmicrequire{\textbf{Input:}}
\algrenewcommand\algorithmicensure{\textbf{Output:}}

\usepackage{subcaption}

\usepackage[capitalise, nameinlink, noabbrev]{cleveref}

\AtEndPreamble{%
\theoremstyle{acmplain}
\newtheorem{fact}[theorem]{Fact}
\theoremstyle{acmdefinition}
\newtheorem{remark}[theorem]{Remark}
}

\usepackage[textsize=footnotesize]{todonotes}

\newcommand{\equaldef}{\coloneqq}%

\newcommand{\setN}{\mathbb{N}}
\newcommand{\setZ}{\mathbb{Z}}
\newcommand{\setQ}{\mathbb{Q}}

\renewcommand{\vec}[1]{{\mathbf #1}}
\newcommand{\norm}[1]{{\mathopen{\|}#1\mathclose{\|}}}
\newcommand{\scalar}[2]{#1\cdot#2}
\newcommand{\Min}[2]{\operatorname{Min}_{#2}(#1)}

\newcommand{\con}[1]{\operatorname{Con}(#1)}

\newcommand{\per}[1]{\operatorname{Per}(#1)}

\newcommand{\lin}[1]{\operatorname{Lin}(#1)}
\newcommand{\clin}[1]{\overline{\operatorname{Lin}}(#1)}
\renewcommand{\dim}[1]{\operatorname{dim}(#1)}

\newcommand{\cone}[1]{\con{#1}}
\newcommand{\spanper}[1]{\per{#1}}
\newcommand{\vechull}[1]{\operatorname{Vec}(#1)}

\newcommand{\B}{\mathcal{B}}
\newcommand{\IB}{\mathcal{S}}

\newcommand{\src}[1]{\operatorname{src}(#1)}
\newcommand{\tgt}[1]{\operatorname{tgt}(#1)}
\newcommand{\Runs}[1]{\operatorname{Runs}(#1)}
\newcommand{\IRuns}[1]{\operatorname{IRuns}(#1)}
\newcommand{\reach}[2][]{\operatorname{Reach}_{#1}(#2)}
\newcommand{\Inv}{\vec{I}}
\newcommand{\bvasrel}[1]{\xrightarrow{#1}}
\newcommand{\steprel}[2][]{\xhookrightarrow{#1}_{#2}}

\newcommand{\edge}[1][]{\twoheadrightarrow_{#1}}
\newcommand{\edgestar}[1][]{\stackrel{*}{\twoheadrightarrow}_{#1}}

\newcommand{\post}[2]{\operatorname{Post}^*_{#1}(#2)}
\newcommand{\rank}[1]{\operatorname{rank}(#1)}
\newcommand{\source}[1]{\src{#1}}
\newcommand{\ra}[1]{\operatorname{arity}(#1)}
\newcommand{\act}[1]{\operatorname{act}(#1)}
\newcommand{\ball}[1]{\operatorname{Ball}_{#1}}

\newcommand{\dir}[1]{\operatorname{dir}(#1)}

\newcommand{\V}{\mathcal{V}}

\newcommand{\enc}[1]{\langle #1 \rangle}
\newcommand{\uprun}[2]{#1{\uparrow}#2}
\newcommand{\runcat}{\star}

\newcommand{\pre}[2]{\operatorname{Pre}^*_{#1}(#2)}

\newcommand{\fo}[1]{\textrm{FO}(#1)}

\newcommand{\extract}[3]{\operatorname{extract}_{#1}(#2,#3)}

\newcommand{\transformer}[2]{\stackrel{#1,#2}{\curvearrowright}}

\makeatletter
\AddToHook{cmd/appendix/before}{\def\cref@section@alias{appendix}}
\makeatother

\begin{document}

\title{%
  Solving the Reachability Problem for Branching Vector Addition Systems
  via Semilinear Inductive Invariants
}

\author{Clotilde Bizière}
\orcid{0009-0003-6469-1170}
\affiliation{%
  \institution{LaBRI, University of Bordeaux, CNRS, Bordeaux INP}
  \city{Talence}
  \country{France}
}
\affiliation{%
  \institution{Institute of Informatics, University of Warsaw}
  \city{Warsaw}
  \country{Poland}
}

\author{Jérôme Leroux}
\orcid{0000-0002-7214-9467}
\affiliation{%
  \institution{LaBRI, University of Bordeaux, CNRS, Bordeaux INP}
  \city{Talence}
  \country{France}
}

\author{Grégoire Sutre}
\orcid{0009-0004-3839-0005}
\affiliation{%
  \institution{LaBRI, University of Bordeaux, CNRS, Bordeaux INP}
  \city{Talence}
  \country{France}
}

\begin{abstract}
  In this paper, we solve the reachability problem for branching vector addition systems (BVAS), a long standing open problem. Our approach is based on semilinear inductive invariants. More precisely, we prove that if a configuration of a BVAS is not reachable, then there exists an inductive invariant, given as a semilinear set, that does not contain this configuration. Based on this property, we deduce a very simple (enumerative) algorithm solving the reachability problem for BVAS.
\end{abstract}

\maketitle

\section{Introduction}\label{sec:introduction}
\paragraph{Context.}

Branching vector addition systems (BVAS) are a computational model for the distribution of additive resources through branching structures.
The model can be traced back to the work of Rambow~\cite{DBLP:conf/acl/Rambow94} in computational linguistics. To overcome the limitations of context-free grammars as a model of natural-language syntax, Rambow introduced multiset-valued linear index grammars (MV-LIG), in which nonterminals carry multisets of resources that are distributed among the children of a derivation node. When studying the language-emptiness problem for such grammars, the left-to-right order of derivation trees becomes irrelevant. The resulting abstraction is precisely a BVAS, and language emptiness can equivalently be viewed as a reachability problem.

A decade later, the same model was independently rediscovered in two seemingly unrelated contexts. Verma and Goubault-Larrecq~\cite{DBLP:journals/dmtcs/VermaG05} used BVAS to study a class of equational tree automata arising in the analysis of cryptographic protocols, obtaining decidability results for a lossy variant of the model. Around the same time, de Groote, Guillaume, and Salvati~\cite{GGS04} proposed BVAS reachability as a natural automata-theoretic reformulation of provability in multiplicative exponential linear logic (MELL). In this setting, BVAS executions capture the flow of resources through proof trees.

Since then, BVAS and closely related models have appeared in a variety of areas of theoretical computer science, including logics over data trees~\cite{DBLP:conf/pods/BojanczykDMSS06,jacquemard:hal-00769249,DBLP:conf/fossacs/AbriolaFF17}, timed pushdown systems~\cite{DBLP:conf/lics/ClementeLLM17}, verification of concurrent systems~\cite{DBLP:journals/toplas/BouajjaniE13}, and the semantics of higher-order functional languages~\cite{DBLP:conf/esop/Cotton-BarrattM17}.

Formally, a BVAS
is a finite sequence $\B \coloneqq (\vec{\Delta}_1, \ldots, \vec{\Delta}_r)$ of finite subsets of $\setZ^d$.
Each set $\vec{\Delta}_n$ in this sequence contains actions of arity $n$.
Intuitively,
such an action $\vec{v} \in \vec{\Delta}_n$
may be viewed as the rewriting rule
$\mathtt{x}_1, \ldots, \mathtt{x}_n \rightarrow \vec{v} + \mathtt{x}_1 + \cdots + \mathtt{x}_n$
with formal parameters $\mathtt{x}_1, \ldots, \mathtt{x}_n$.
Configurations of $\B$ are vectors in $\setN^d$, and
executions are finite trees labeled by configurations
in which every internal node is obtained by summing its $n \ge 1$ children and adding an action from $\vec \Delta_n$.
The reachability problem asks,
given a BVAS, a target configuration and a finite set of initial configurations,
whether there exists an execution with root labeled by the target and leaves labeled by initial configurations.
The goal of this paper is to provide a solution to this long standing open problem.

\paragraph{Decidability of the VAS Reachability Problem.}

Vector addition systems (VAS), or equivalently Petri nets, are obtained as the special case where $r=1$. Executions then degenerate into sequences of configurations, and reachability asks whether a given target configuration can be obtained from an initial one by repeatedly applying additive actions from $\vec{\Delta}_1$.

VAS are a fundamental model for concurrency, and their reachability problem is one of the landmark decidability results in the theory of infinite-state systems. First posed in the late 1960s, it was shown decidable in the early 1980s through the works of Mayr~\cite{Mayr84} and Kosaraju~\cite{Kosaraju}. The resulting proof was later refined by several authors, including Lambert~\cite{Lambert}, and became known as the KLM decomposition. Despite decades of subsequent work, the combinatorial principles underlying this decomposition remained only partially understood for a long time. One indication of this difficulty is that the first meaningful complexity upper bounds were obtained only much later~\cite{LS15} in 2015, culminating a few years later in the proof that VAS reachability is Ackermann-complete~\cite{DBLP:conf/lics/LerouxS19,DBLP:conf/focs/Leroux21,CO22}.

A conceptually different approach was introduced by Leroux in the 2010s~\cite{DBLP:conf/popl/Leroux11,Turing-100:Vector_Addition_Systems_Reachability}. He showed that every unreachable configuration can be separated from the initial configurations by a semilinear (i.e., Presburger-definable) inductive invariant. Here, an inductive invariant is simply a set containing the initial configurations and closed under the actions of the VAS. This immediately yields a decision procedure by running in parallel one semi-algorithm enumerating executions and another enumerating semilinear sets until one is found to be an inductive invariant excluding the target.

\paragraph{Challenges of BVAS Reachability.}

Despite these successes for VAS, the decidability of BVAS reachability has remained open for more than thirty years. Existing decidability and complexity results are restricted to small dimensions~\cite{GollerHLT16, DBLP:conf/icalp/FigueiraLLMS17,DBLP:conf/mfcs/BiziereHLS25,DBLP:conf/fossacs/BiziereLS26}, to lossy variants of the model~\cite{DBLP:journals/dmtcs/VermaG05,DBLP:journals/jcss/DemriJLL13,LazicS15,DBLP:conf/concur/MajumdarW13}, and to bounded counters~\cite{DBLP:conf/concur/MazowieckiP19}.
At the same time, neither of the two classical approaches to VAS reachability appears to extend naturally to branching executions.

The KLM decomposition appears particularly difficult to generalize. Its original presentation relies on a delicate combinatorial analysis of linear executions, and its deeper structure 
only emerged through much later work on well-quasi-orderings and ideals~\cite{LS15}. At present, even the right formulation of a KLM-style decomposition for BVAS remains elusive.

The situation is yet more encouraging for the invariant-based approach. 
Leroux proved that VAS reachability sets enjoy strong geometric properties: they are almost semilinear and can therefore be closely approximated by semilinear sets. These properties were recently extended to BVAS~\cite{DBLP:conf/fossacs/BiziereLS26}.
Crucially, these approximations are only available from above: while VAS reachability sets admit semilinear overapproximations whose error has strictly smaller dimension, they do not admit analogous semilinear underapproximations in general.

Leroux circumvents this limitation by growing simultaneously two semilinear sets. The first one, denoted by $\vec S$, initially contains the initial configurations. The second one, denoted by $\vec T$, initially contains an unreachable target configuration that one wishes to exclude from the final invariant. Throughout the construction, no execution leads from a configuration in $\vec S$ to a configuration in $\vec T$.
At each step, $\vec T$ is enlarged using the complement of a semilinear over-approximation of $\post{}{\vec S}$, the set of configurations reachable from $\vec S$. Symmetrically, $\vec S$ is enlarged using the complement of a semilinear over-approximation of the backward reachability set $\pre{}{\vec T}$, i.e., the set of configurations from which one can reach $\vec T$. The crucial observation is that $\pre{}{\vec T}$ is itself a VAS reachability set, namely the reachability set of $\vec T$ in the VAS obtained by replacing each action with its opposite. Consequently, the same geometric machinery can be used to approximate both $\post{}{\vec S}$ and $\pre{}{\vec T}$.

This symmetry disappears in the branching setting, where there is no satisfactory analogue of backward reachability. More precisely, there is no operator $\text{Pre}^*$ satisfying
\[
\vec S \cap \pre{}{\vec T} \neq \emptyset
\quad\Longleftrightarrow\quad
\post{}{\vec S}\cap \vec T \neq \emptyset,
\]
since BVAS executions witnessing $\post{}{\vec S}\cap \vec T\neq \emptyset$ may require several configurations from $\vec S$.

Thus, the first challenge in extending the inductive-invariant approach to BVAS is to construct the invariant using only forward reasoning, i.e., without relying on $\text{Pre}^*$ or on a second set $\vec T$. Instead, one must grow a single set of configurations $\vec S$, while ensuring that it never contains the unreachable target, despite having access only to over-approximations.
This is, however, not the only obstacle.
The tree-shaped structure of BVAS executions introduces additional challenges compared to VAS.
For instance,
given two sets of configurations $\vec{A}_1$ and $\vec{A}_2$ that are each closed under the actions of a VAS,
their union $\vec{A}_1 \cup \vec{A}_2$ is also closed.
This property no longer holds for BVAS as branching executions may mix configurations from $\vec{A}_1$ and from $\vec{A}_2$.

\paragraph{Contributions.}

We solve the reachability problem for BVAS through a novel forward-only invariant construction. Our approach relies on several new results and constructions:
\begin{itemize}
\item
  Our invariant is constructed iteratively. Starting from the empty set $\vec A$, we progressively add configurations until $\vec A$ becomes an inductive invariant.
Throughout the construction, the set $\vec A$ excludes the unreachable target and is maintained as an \emph{attractor}, a new notion related to inductive invariants that we introduce for BVAS to address the mix-approximation problem.
An attractor is a set of configurations such that
every execution with one leaf in the attractor and all other leaves in the attractor or reachable, has its root in the attractor.
Inductive invariants are attractors, and
attractors containing the initial configurations are inductive invariants.
Moreover,
the union of an attractor and the reachability set is an attractor.
\item
  We introduce an abstract graph $\mathcal{G}$ that captures executions outside the current attractor $\vec{A}$. Nodes of $\mathcal{G}$ are pairs $(\rho,\vec{C})$ where $\rho$ is an execution and $\vec{C}$ is a finitely-generated cone. The concretization of such a node is the set of executions larger than $\rho$ (for a natural extension of the known well-quasi-order on VAS executions) that increase the target of $\rho$ by a vector in $\vec{C}$. Edges of $\mathcal{G}$ are defined by a standard $\exists\exists$-abstraction of the BVAS semantics.
  We use the abstract graph $\mathcal{G}$ to extract semilinear approximations of the reachable configurations that avoid the mix-approximation problem.
  To do so,
  we impose an \emph{homogeneity condition} that enforces the cone $\vec{C}$ of a node to only depend on its strongly connected component.
\item
  We show that a part of the reachable configurations of a bottom strongly connected component $\Gamma$ of $\mathcal{G}$ can be captured by enlarging the attractor $\vec{A}$. Our construction is based on a new model, called well-structured vector addition systems (WSVAS), that naturally extends VAS with infinite sets of actions, while still preserving the well-quasi-order and amalgamation property on executions. We use WSVAS to simulate BVAS executions composed of a main branch with, on the side of this branch, configurations that are reachable or in $\vec{A}$.
\item
  Once the attractor $\vec{A}$ has been enlarged, the abstract graph $\mathcal{G}$ is updated by removing $\Gamma$ and by adding new nodes corresponding to executions in the concretization of some nodes in $\Gamma$ that are still not captured by the enlarged attractor. In order to preserve the homogeneity property of $\mathcal{G}$, we establish a geometric decomposition result, that we dub \emph{face stripping theorem}. This theorem provides a way to decompose the difference of two particular semilinear sets into a sequence of semilinear sets that respects the order induced by the reachability relation of the BVAS.
  Moreover,
  the updated abstract graph is smaller than $\mathcal{G}$ for a natural well-founded relation.
\end{itemize}
We iteratively enlarge the attractor and update the abstract graph until the latter becomes empty
(which is bound to happen).
At that point, the attractor is guaranteed to be an inductive invariant that excludes the unreachable target.
This allows us to conclude that the reachability problem for BVAS is decidable.

\paragraph{Related work.}

Our result settles a long-standing open problem simultaneously arising in the theory of counter systems and in the proof theory of linear logic.

Understanding which extensions of VAS preserve decidability of reachability has been a recurring theme in the theory of infinite-state systems. Some extensions remain decidable despite their increased expressive power. This includes VAS equipped with a restricted form of zero tests, whose reachability problem was shown decidable in 2008~\cite{REINHARDT2008239} and whose complexity was recently established to be Ackermann-complete~\cite{VASSnzCGL}, matching that of VAS. Another long-standing open problem was recently settled with the decidability of reachability in pushdown vector addition systems~\cite{GKM25}. By contrast, the decidability of reachability remains open and actively investigated for unordered data nets~\cite{datanets,data-bireach,symVASS}.

The connection between counter systems, in particular VAS, and substructural logics, in particular linear logic, has motivated a long line of research. 
At an intuitive level, both formalisms reason about resources: VAS counters in one setting, and logical assumptions in the other.
This intuition has repeatedly led to reductions relating fragments of linear logic and classes of counter systems.
For example, Lincoln, Mitchell, Scedrov, and Shankar established the undecidability of linear-logic provability through a reduction from Minsky machines~\cite{LL-undecidable}. Conversely, Kanovich derived decidability for the !-Horn fragment of MELL by a reduction to VAS reachability~\cite{Kanovich1995PetriNH}. BVAS provide a particularly striking instance of this connection through the inter-reducibility between BVAS reachability and provability in MELL, established by de Groote, Guillaume, and Salvati~\cite{GGS04}.
Our result entails that MELL provability is decidable.

\section{Preliminaries}\label{sec:prelim}
We start with basic definitions and notations that are used throughout the paper.
Let $\setZ$ denote the set of integers,
$\setN$ denote the set of natural numbers,
$\setQ$ denote the set of rational numbers, and
$\setQ_{\geq 0}$ denote the set of non-negative rational numbers.
We also introduce the sets $\setN_{>0}$ and $\setQ_{>0}$ defined as
$\setN\setminus \{0\}$ and $\setQ_{\geq 0}\setminus\{0\}$, respectively.

\paragraph{Orders}

A \emph{quasi-ordered} set (\emph{qoset} for short) is a pair $(X, \preceq)$ where
$X$ is a set and $\preceq$ is a reflexive and transitive binary relation on $S$.
We let $\Min{X}{\preceq}$ denote the set of minimal elements of a qoset $(X, \preceq)$.
Recall that a \emph{minimal} element of $(X, \preceq)$ is an element $m \in X$ such that
$x \preceq m \Rightarrow m \preceq x$ for all $x \in X$.
A \emph{partially-ordered} set (\emph{poset} for short) is a qoset $(X, \preceq)$ such that $\preceq$ is antisymmetric.
A \emph{well-quasi-ordered} set (\emph{wqo} for short) is a qoset $(X, \preceq)$ such that
every infinite sequence $x_0, x_1, x_2, \ldots$ of elements in $X$ contains an infinite subsequence $x_{i_0} \preceq x_{i_1} \preceq x_{i_2} \cdots$ (with $i_0 < i_1 < i_2 \cdots$).
A \emph{well-partially-ordered} set (\emph{wpo} for short) is a poset $(X, \preceq)$ such that $(X, \preceq)$ is a wqo.
We recall that every wqo $(X, \preceq)$ contains a finite set $M \subseteq X$ such that
$\Min{X}{\preceq} = \{x \in X \mid \exists m \in M : m \preceq x \preceq m\}$ and
$X = \bigcup_{m \in M} \{x \in X \mid m \preceq x\}$.
In particular,
every wpo $(X, \preceq)$ has finitely many minimal elements.
A map $f$ from a qoset $(X, \preceq_X)$ to a qoset $(Y, \preceq_Y)$ is \emph{order-preserving}
if for every $a, b \in X$,
it holds that $a \preceq_X b \Rightarrow f(a) \preceq_Y f(b)$.

\paragraph{Vectors}

We consider a fixed \emph{dimension} $d \in \setN_{>0}$ for the remainder of the paper.
Vectors are typeset in bold face.
Given $\vec{x} \in \setQ^d$,
we let $(\vec{x}(1), \ldots, \vec{x}(d))$ denote the vector of rational numbers defining $\vec{x}$.
The \emph{dot-product} of two vector $\vec{x},\vec{y}\in\setQ^d$ is defined as $\sum_{i=1}^d\vec{x}(i)\vec{y}(i)$.
The partial order $\leq$ on $\setQ^d$ is the component-wise extension of the usual order $\leq$ on $\setQ$
(i.e.,
$\vec{x} \leq \vec{y}$ when $\vec{x}(i) \leq \vec{y}(i)$ for all $i \in \{1, \ldots, d\}$).
The sum and the subtraction of two vectors $\vec{x} + \vec{y}$ and $\vec{x}-\vec{y}$ are defined similarly component-wise.
The sum and the subtraction operators over vectors are extended over sets $\vec{X},\vec{Y}\subseteq\setQ^d$ by $\vec{X}+\vec{Y}\coloneqq\{\vec{x}+\vec{y}\mid \vec{x}\in\vec{X}\wedge\vec{y}\in\vec{Y}\}$ and $\vec{X}-\vec{Y}\coloneqq \{\vec{x}-\vec{y}\mid \vec{x}\in\vec{X}\wedge\vec{y}\in\vec{Y}\}$.
To reduce clutter,
singleton sets $\{\vec{x}\}$ appearing in a sum of sets are shortly written $\vec{x}$
(i.e., without braces).
We let $\setQ_{\geq 0}\vec{X}$ denote the set $\{\lambda\vec{x} \mid \lambda\in \setQ_{\geq 0}\wedge \vec{x}\in\vec{X}\}$.

\subsection{Discrete Sets}

\paragraph{Periodic Sets}
A set $\vec{P}\subseteq \setZ^d$ is called a \emph{periodic set} if $\vec{0}\in \vec{P}$ and $\vec{P}+\vec{P}\subseteq \vec{P}$. The \emph{periodic set spanned} by a set $\vec{X}\subseteq \setZ^d$ is the set $\spanper{\vec{X}}$ of vectors of the form $\vec{x}_1+\cdots+\vec{x}_k$ where $k\in\setN$ and $\vec{x}_1,\ldots,\vec{x}_k\in \vec{X}$. This set is clearly the $\subseteq$-minimal periodic set containing $\vec{X}$. When $\vec{X}$ is finite, the periodic set $\spanper{\vec{X}}$ is said to be \emph{finitely-generated}. We associate with a periodic set $\vec{P}\subseteq\setZ^d$ the quasi-order $\leq_{\vec{P}}$ on $\setZ^d$ defined by $\vec{x}\leq_{\vec{P}}\vec{y}$ if $\vec{y}\in \vec{x}+\vec{P}$. We recall the following folklore result, well-known when $\vec{P}\subseteq \setN^d$.
Its proof can be found in \cref{app:lem:fgper-wqo-on-Zd}.
\begin{restatable}{lemma}{lemFgperWqoOnZd}
  \label{lem:fgper-wqo-on-Zd}
  A periodic set $\vec{P}\subseteq \setZ^d$ is finitely-generated if, and only if, $\leq_{\vec{P}}$ is a wqo on $\vec{P}$.
\end{restatable}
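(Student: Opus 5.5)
We prove the two implications separately. Both rest on the observation that, for $\vec{x},\vec{y}\in\vec{P}$, the relation $\vec{x}\leq_{\vec{P}}\vec{y}$ means $\vec{y}-\vec{x}\in\vec{P}$. Moreover, $\vec{0}$ is below every element of $\vec{P}$ for this order.

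\emph{Only if.} Suppose $\vec{P}=\spanper{\{\vec{p}_1,\ldots,\vec{p}_k\}}$. Consider the surjective map $\pi:\setN^k\to\vec{P}$ defined by $\pi(\vec{n})=\sum_{j=1}^k\vec{n}(j)\vec{p}_j$. This map is order-preserving from $(\setN^k,\leq)$ to $(\vec{P},\leq_{\vec{P}})$. Indeed, if $\vec{n}\leq\vec{m}$, then $\pi(\vec{m})-\pi(\vec{n})=\pi(\vec{m}-\vec{n})\in\vec{P}$. By Dickson's lemma, $(\setN^k,\leq)$ is a wqo.

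Now take an infinite sequence $\vec{x}_0,\vec{x}_1,\ldots$ in $\vec{P}$. Choose preimages $\vec{n}_i$ with $\pi(\vec{n}_i)=\vec{x}_i$. Dickson's lemma gives an infinite increasing subsequence of the $\vec{n}_i$. Its image under $\pi$ is an infinite $\leq_{\vec{P}}$-increasing subsequence of the $\vec{x}_i$. Hence $\leq_{\vec{P}}$ is a wqo on $\vec{P}$.

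\emph{If.} Suppose $\leq_{\vec{P}}$ is a wqo on $\vec{P}$. Consider the set $\vec{X}$ of minimal elements of $\vec{P}\setminus\{\vec{0}\}$ for $\leq_{\vec{P}}$. We use the fact recalled in the preliminaries: a wqo contains a finite set $M$ such that every element lies above some element of $M$. Apply it to the wqo $\vec{P}\setminus\{\vec{0}\}$ to obtain such a finite set $M\subseteq\vec{P}\setminus\{\vec{0}\}$. We then aim to show $\vec{P}=\spanper{M}$.

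The inclusion $\spanper{M}\subseteq\vec{P}$ is immediate, since $\vec{P}$ is periodic. For the converse, we need a termination argument, because $\leq_{\vec{P}}$ need not be well-founded in the strict sense; for instance $\vec{P}$ may contain a group such as $\setZ\vec{v}$. Here is the decomposition step. Given $\vec{x}\in\vec{P}\setminus\{\vec{0}\}$, pick $\vec{m}\in M$ with $\vec{m}\leq_{\vec{P}}\vec{x}$. Then $\vec{x}=\vec{m}+\vec{x}'$ with $\vec{x}'\in\vec{P}$. Iterating this step need not terminate, and this is the main obstacle.

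To handle it, argue by contradiction. Suppose some $\vec{x}\in\vec{P}$ is not in $\spanper{M}$. Then iterating produces an infinite sequence $\vec{x}=\vec{x}_0,\vec{x}_1,\ldots$ with $\vec{x}_{i}=\vec{m}_i+\vec{x}_{i+1}$, where $\vec{m}_i\in M$ and every $\vec{x}_i\notin\spanper{M}$. By the wqo property there are indices $i<j$ with $\vec{x}_i\leq_{\vec{P}}\vec{x}_j$. This gives $\vec{x}_j=\vec{x}_i+\vec{q}$ with $\vec{q}\in\vec{P}$. We also have $\vec{x}_i=\vec{x}_j+\vec{m}_i+\cdots+\vec{m}_{j-1}$. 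Combining, $\vec{q}=-(\vec{m}_i+\cdots+\vec{m}_{j-1})$, so $-\vec{m}_i\in\vec{P}$ and $\vec{m}_i$ lies in the group of units $\vec{G}=\vec{P}\cap(-\vec{P})$.

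So the obstruction lives in the group part, and we remove it by enlarging the generating set. The group $\vec{G}$ is a subgroup of $\setZ^d$, hence finitely generated as a group. Let $\vec{g}_1,\ldots,\vec{g}_\ell$ be such group generators and set $M'=M\cup\{\pm\vec{g}_1,\ldots,\pm\vec{g}_\ell\}$. The set $M'$ is finite, and $\spanper{M'}\supseteq\vec{G}$. Replay the descent with $M'$, but treat differently the elements whose class in $\vec{P}/\vec{G}$ is zero: such an element lies in $\vec{G}$, hence already in $\spanper{M'}$, and we stop there.

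Now look at the descent on elements $\vec{x}_i\notin\vec{G}$. The bad case above forces $\vec{m}_i\in\vec{G}$. We avoid this by choosing, at each step, $\vec{m}\in M\setminus\vec{G}$. This is possible: apply the wqo basis to $\vec{P}\setminus\vec{G}$ instead of $\vec{P}\setminus\{\vec{0}\}$, which is legitimate since $\vec{P}\setminus\vec{G}$ is again a wqo under $\leq_{\vec{P}}$. Any repetition $i<j$ then yields $\vec{m}_i\in\vec{G}$, a contradiction. Hence the descent reaches $\vec{G}$ in finitely many steps, and $\vec{P}=\spanper{M'}$.

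The step needing the most care is this last choice. We must check that each remainder $\vec{x}_{i+1}$ either lies in $\vec{G}$, in which case we stop, or lies in $\vec{P}\setminus\vec{G}$, so the descent can continue. Both alternatives hold because $\vec{x}_{i+1}\in\vec{P}$ and $\vec{P}=\vec{G}\cup(\vec{P}\setminus\vec{G})$.
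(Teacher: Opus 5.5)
Your proof is correct. Your ``only if'' direction is the paper's (Dickson's lemma on coefficient vectors), but your ``if'' direction takes a genuinely different route.

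The paper splits $\setZ^d$ into orthants. For each sign vector $\vec{s}\in\{-1,1\}^d$ it flips coordinates and intersects with $\setN^d$. This gives periodic sets $\vec{Q}_{\vec{s}}\subseteq\setN^d$ on which $\leq_{\vec{Q}_{\vec{s}}}$ is a wpo; that step tacitly intersects $\leq_{\vec{P}}$ with the componentwise order and invokes Dickson's lemma. The paper then shows, by induction on the norm, that the finitely many minimal elements of $\vec{Q}_{\vec{s}}\setminus\{\vec{0}\}$ generate $\vec{Q}_{\vec{s}}$, and flips these generators back. Inside $\setN^d$ there are no nonzero units and the norm gives well-foundedness, so the obstruction you identified never arises. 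That argument rests only on Dickson's lemma, and applied to a group $\vec{P}$ it even re-derives finite generation of subgroups of $\setZ^d$.

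You instead stay inside $\vec{P}$. You isolate the unit group $\vec{G}=\vec{P}\cap(-\vec{P})$ as the exact obstruction to well-foundedness and take its finite generation from the structure of subgroups of $\setZ^d$. You then use the wqo property itself as the termination argument for the greedy descent. This is coordinate-free. It also works verbatim for periodic subsets of any abelian group whose subgroups are finitely generated.

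One point needs cleaning up in the write-up. The basis $M$ must be re-chosen as a finite basis of $\vec{P}\setminus\vec{G}$, and $M'$ must be built from that new basis. The earlier basis of $\vec{P}\setminus\{\vec{0}\}$ cannot serve for ``choosing $\vec{m}\in M\setminus\vec{G}$''. When $\vec{G}\neq\{\vec{0}\}$, any single nonzero $\vec{g}\in\vec{G}$ lies $\leq_{\vec{P}}$-below every element of $\vec{P}$, so $\{\vec{g}\}$ is already a valid such basis and $M\setminus\vec{G}$ can be empty. Remove the abandoned first attempt and the unused set $\vec{X}$ of minimal elements, so the final argument reads as a single clean descent.
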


A \emph{linear set} is a set $\vec{L}\subseteq \setZ^d$ of the form $\vec{b}+\vec{P}$ where $\vec{b}\in\setZ^d$ and $\vec{P}\subseteq \setZ^d$ is a finitely-generated periodic set. A \emph{semilinear set} is a finite union of linear sets.
We recall that semilinear sets are effectively closed under union, intersection and complement.
Semilinear sets coincide with the sets definable in the first-order theory $\fo{\setZ,+,\leq}$, known as Presburger arithmetic~\cite{gs66}.

\paragraph{Cylindric Sets and Diagonal Relations}\label{par:cylindric}\label{par:diagonal}
Let $\vec{Q}\subseteq \setZ^d$ be a finitely-generated periodic set. A set $\vec{S}\subseteq \setZ^d$ is said to be \emph{$\vec{Q}$-cylindric} if $\vec{S}+\vec{Q}\subseteq \vec{S}$. A set $\vec{S}\subseteq \setZ^d$ is said to be \emph{finitary $\vec{Q}$-cylindric} if there exists a finite set $\vec{B}\subseteq \setZ^d$ such that $\vec{S}=\vec{B}+\vec{Q}$.
Observe that $\leq_{\vec{Q}}$ is a wqo on a $\vec{Q}$-cylindric set $\vec{S}$ if, and only if, $\vec{S}$ is finitary $\vec{Q}$-cylindric.
This entails that $\vec{Q}$-cylindric sets included in some finitary $\vec{Q}$-cylindric set are finitary $\vec{Q}$-cylindric.
A binary relation $R$ over $\setZ^d$ is said to be \emph{$\vec{Q}$-diagonal} if $(\vec{x},\vec{y})+(\vec{q},\vec{q})\in R$ for every $(\vec{x},\vec{y})\in R$ and every $\vec{q}\in \vec{Q}$.
Note that a binary relation is $\vec{Q}$-diagonal if, and only if, it is $\vec{Q'}$-cylindric for the finitely-generated periodic set $\vec{Q}'\coloneqq\{(\vec{q},\vec{q}) \mid \vec{q}\in \vec{Q}\}$.

\paragraph{Groups}
A set $\vec{G}\subseteq \setZ^d$ is called a \emph{group} if $\vec{0}\in \vec{G}$ and $\vec{G}+\vec{G}\subseteq \vec{G}$, and $-\vec{G} \subseteq \vec{G}$. The \emph{group spanned} by a set $\vec{X}\subseteq \setZ^d$ is the set of vectors of the form $z_1\vec{x}_1+\cdots+z_k\vec{x}_k$ where $k\in\setN$, $\vec{x}_1,\ldots,\vec{x}_k\in \vec{X}$, and $z_1,\ldots,z_k\in\setZ$. Let us recall that every group $\vec{G} \subseteq \setZ^d$ is finitely-generated. Notice that the group spanned by a periodic set $\vec{P}$ is equal to $\vec{P}-\vec{P}$.

\subsection{Dense Sets}
\paragraph{Cones}
A set $\vec{C}\subseteq \setQ^d$ is called a \emph{cone} if $\vec{0}\in \vec{C}$, $\vec{C}+\vec{C}\subseteq \vec{C}$, and $\setQ_{\geq 0}\vec{C}\subseteq \vec{C}$. The \emph{cone spanned} by a set $\vec{X}\subseteq \setQ^d$ is the set $\cone{\vec{X}}$ of vectors of the form $\lambda_1\vec{x}_1+\cdots+\lambda_k\vec{x}_k$ where $k\in\setN$, $\vec{x}_1,\ldots,\vec{x}_k\in \vec{X}$, and $\lambda_1,\ldots,\lambda_k\in\setQ_{\geq 0}$. This set is clearly the $\subseteq$-minimal cone containing $\vec{X}$. Notice that $\cone{\vec{P}} = \setQ_{\geq 0}\vec{P}$ for every periodic set $\vec{P}\subseteq \setZ^d$.

\paragraph{Vector Spaces}
A set $\vec{V}\subseteq\setQ^d$ is called a \emph{vector space} if $\vec{0}\in\vec{V}$, $\vec{V}+\vec{V}\subseteq \vec{V}$, and $\setQ\vec{V}\subseteq \vec{V}$. The \emph{vector space spanned} by a set $\vec{X}\subseteq \setQ^d$ is the set $\vechull{\vec{X}}$ of vectors of the form $\lambda_1\vec{x}_1+\cdots+\lambda_k\vec{x}_k$ where $k\in\setN$, $\vec{x}_1,\ldots,\vec{x}_k\in \vec{X}$, and $\lambda_1,\ldots,\lambda_k\in\setQ$. Let us recall that any vector space $\vec{V}\subseteq\setQ^d$ is spanned by a finite set $\vec{X}\subseteq \vec{V}$. The minimal cardinality of such a set $\vec{X}$ is called the \emph{dimension} of $\vec{V}$ and it is denoted as $\dim{\vec{V}}$. Let us recall that given two vector spaces $\vec{V}\subsetneq \vec{W}$ we have $\dim{\vec{V}}<\dim{\vec{W}}$. It follows that $\dim{\vec{V}}\in\{0,\ldots,d\}$ for every vector space $\vec{V}\subseteq \setQ^d$. Moreover, if $\vec{V}$ is the vector space spanned by a set $\vec{X}\subseteq \setQ^d$, then $\vec{V}$ is also spanned by a set of $\dim{\vec{V}}$ vectors in $\vec{X}$. Notice that $\vechull{\vec{C}} = \vec{C}-\vec{C}$ for every cone $\vec{C}\subseteq \setQ^d$.

\paragraph{Topological Closure}
We denote by $\ball{\vec{v},\varepsilon}$ where $\vec{v}\in\setQ^d$ and $\varepsilon\in\setQ_{>0}$ the \emph{open ball} centered on $\vec{v}$ of radius $\varepsilon$ and defined as the set of $\vec{x}\in\setQ^d$ such that $\norm{\vec{v}-\vec{x}}<\varepsilon$. A \emph{limit} of a set $\vec{X}\subseteq \setQ^d$ is a vector $\vec{v}\in\setQ^d$ such that $\ball{\vec{v},\varepsilon}\cap\vec{X}\not=\emptyset$ for every $\varepsilon\in\setQ_{>0}$. We denote by $\overline{\vec{X}}$ the set of limits of $\vec{X}$, and called the \emph{topological closure} of $\vec{X}$. When $\overline{\vec{X}}=\vec{X}$, we say that $\vec{X}$ is \emph{topologically closed}. Recall that $\overline{\vec{X}}$ is topologically closed for every set $\vec{X}\subseteq \setQ^d$. Notice that if $\vec{C}\subseteq \setQ^d$ is a cone, then $\overline{\vec{C}}$ is a cone as well. Recall also that finitely-generated cones (including vector spaces) are topologically closed. 

\subsection{Linearizations}
We recall the definition of full\footnote{Our definition of \emph{full} periodic set slightly generalizes the one given in~\cite[Definition 2.6]{DBLP:conf/concur/GuttenbergRE23} since the cone $\vec{C}$ is not required to be finitely-generated in our case.} periodic sets from~\cite{DBLP:conf/concur/GuttenbergRE23}.
A periodic set $\vec{P}\subseteq\setZ^d$ is said to be \emph{full} if $\vec{P}=\vec{G}\cap\vec{C}$ for a group $\vec{G}\subseteq\setZ^d$ and a cone $\vec{C}\subseteq\setQ^d$. Given a periodic set $\vec{P}\subseteq \setZ^d$, its \emph{linearization}~\cite[Section 8]{Turing-100:Vector_Addition_Systems_Reachability} is the full periodic set $\lin{\vec{P}}\coloneqq (\vec{P}-\vec{P})\cap \setQ_{\geq 0}\vec{P}$. Clearly, a periodic set $\vec{P}$ is full if, and only if, $\lin{\vec{P}}=\vec{P}$. The \emph{closed linearization}~\cite[Section 5]{DBLP:conf/popl/Leroux11} of a periodic set $\vec{P}\subseteq \setZ^d$ is the full periodic set $\clin{\vec{{P}}}\coloneqq(\vec{P}-\vec{P})\cap\overline{\setQ_{\geq 0}\vec{P}}$.
As a side remark,
we observe that these two linearizations are upper closure operators (i.e., order-preserving, extensive and idempotent operators) on the complete lattice of periodic subsets of $\setZ^d$ ordered by inclusion.

\begin{remark}\label{rem:full}
  A finitely-generated periodic set $\vec{P}\subseteq \setZ^d$ if full if, and only if, $\clin{\vec{P}}=\vec{P}$, since finitely-generated cones are topologically closed.
\end{remark}

\begin{example}
  The set $\{0\}\cup (2+\setN)$ is a finitely-generated periodic set that is not full while $\{(0,0)\}\cup \setN_{>0}^2$ is a full but not-finitely-generated periodic set. 
\end{example}

\section{Safety Witnesses for Branching VAS and Overview of our Approach}\label{sec:big-picture}
A \emph{branching VAS} (or \emph{BVAS} for short)
is a non-empty finite sequence $\B \coloneqq (\vec{\Delta}_1, \ldots, \vec{\Delta}_r)$ of finite subsets of $\setZ^d$.
The natural number $r \geq 1$ is the maximal arity of $\B$, and
each $\vec{\Delta}_n$ with $n \in \{1,\ldots,r\}$ provides the set of \emph{$n$-ary actions}.
A \emph{configuration} of $\B$ is a vector in $\setN^d$.
The set of runs of $\B$ is defined inductively as follows.\footnote{%
  In the literature,
  BVAS runs are classically defined as unordered rooted trees whose nodes are labeled by configurations~\cite{DBLP:journals/dmtcs/VermaG05}.
  We choose an inductive, term-based definition as it simplifies some definitions and some proofs.
  Our term-based definition is clearly equivalent (for reachability) to the classical tree-based definition.
}
A \emph{run} is a pair $\rho \coloneqq (\vec{c}, (\rho_1, \ldots, \rho_n))$ where
$\vec{c} \in \setN^d$ is called the \emph{target} of $\rho$ and
$\rho_1, \ldots, \rho_n$ is a (possibly empty) finite sequence of runs,
with $n \in \{0,\ldots,r\}$,
whose respective targets $\vec{c}_1, \ldots, \vec{c}_n$ satisfy
$\vec{c} - \sum_{j=1}^{n} \vec{c}_j\in \vec{\Delta}_n$ if $n \geq 1$.
Given such a run $\rho$,
we define
$\tgt{\rho} \coloneqq \vec{c}$,
$\ra{\rho} \coloneqq n$,
$\act{\rho} \coloneqq \vec{c} - \sum_{j=1}^{n} \vec{c}_j$, and
$\rho[j] \coloneqq \rho_j$ for each $j \in \{1, \ldots, n\}$.
The \emph{source} of a run $\rho \coloneqq (\vec{c}, (\rho_1, \ldots, \rho_n))$ is
the non-empty word $\source{\rho} \in (\setN^d)^+$
defined inductively by
$\source{\rho} \coloneqq \vec{c}$ if $n = 0$ and
$\source{\rho} \coloneqq \source{\rho_1} \cdots \source{\rho_n}$ if $n \geq 1$.
We write $\Runs{\B}$ the set of runs of $\B$.
Given a language $L\subseteq (\setN^d)^+$,
the \emph{reachability set from $L$},
written $\reach{L}$,
is the set of targets of runs $\rho$ such that $\source{\rho}\in L$.

\begin{figure}[t]
  \centering
  \begin{tikzpicture}[%
    level distance=8mm,
    level 1/.style={sibling distance=12mm}]

    \node {$(0, 0, 1)$}
    child {
      node {$(0, 1, 0)$}
    }
    child {
      node {$(0, 1, 0)$}
    }
    ;
  \end{tikzpicture}
  \hfill
  \begin{tikzpicture}[%
    level distance=8mm,
    level 1/.style={sibling distance=22mm},
    level 2/.style={sibling distance=12mm}]

    \node {$(0, 1, 1)$}
    child {
      node {$(0, 0, 1)$}
      child {
        node {$(0, 1, 0)$}
      }
      child {
        node {$(0, 1, 0)$}
      }
    }
    child {
      node {$(0, 1, 1)$}
      child {
        node {$(1, 0, 0)$}
      }
    }
    ;
  \end{tikzpicture}
  \hfill
  \begin{tikzpicture}[%
    level distance=8mm,
    level 2/.style={sibling distance=28mm},
    level 3/.style={sibling distance=12mm}]

    \node {$(1, 1, 3)$}
    child {
      node {$(2, 0, 2)$}
      child {
        node {$(0, 1, 0)$}
        child {
          node {$(0, 1, 0)$}
        }
        child {
          node {$(0, 0, 1)$}
        }
      }
      child {
        node {$(2, 1, 1)$}
        child {
          node {$(2, 2, 0)$}
        }
        child {
          node {$(0, 1, 0)$}
        }
      }
    }
    ;
  \end{tikzpicture}
  \caption{%
    The runs $\rho$ (left), $\sigma$ (middle) and $\tau$ (right) of the BVAS from \cref{exa:BVAS}.
  }
  \label{fig:BVAS}
\end{figure}

\begin{example}
  \label{exa:BVAS}
  Consider the ($3$-dimensional) BVAS $\B \coloneqq (\vec{\Delta}_1, \vec{\Delta}_2)$ where
  $\vec{\Delta}_1 \coloneqq \{(-1, 1, 1)\}$ and
  $\vec{\Delta}_2 \coloneqq \{(0, -2, 1), (0, 0, -1)\}$.
  Some runs of this BVAS are depicted in \cref{fig:BVAS}.
  The run depicted in the left is
  $\rho \coloneqq ((0, 0, 1), (\lambda, \lambda))$ with
  $\lambda \coloneqq ((0, 1, 0), ())$.
  The run $\lambda$ may be thought of as a (shared) ``leaf'' of the run $\rho$.
  The run depicted in the middle is
  $\sigma \coloneqq ((0, 1, 1), (\rho, ((0, 1, 1), \ldots)))$.
  Let $\tau$ be the run depicted in the right.
  The targets of these runs are
  $\tgt{\rho} = (0, 0, 1)$ and $\tgt{\sigma} = (0, 1, 1)$ and $\tgt{\tau} = (1, 1, 3)$.
  Their sources are
  $\source{\rho} = (0, 1, 0) (0, 1, 0)$,
  $\source{\sigma} = (0, 1, 0) (0, 1, 0) (1, 0, 0)$, and
  $\source{\tau} = (0, 1, 0) (0, 0, 1) (2, 2, 0) (0, 1, 0)$.
  The run $\sigma$ is a witness that
  $(0, 1, 1) \in \reach{\{(1, 0, 0), (0, 1, 0)\}^+}$.
  We will show in \cref{exa:ind-invariants} that
  $(0, 1, 0) \not\in \reach{\{(1, 0, 0)\}^+}$.
  To conclude this example,
  we observe that there is a run with source
  $s \coloneqq (0, 1, 0) (0, 1, 0) (0, 0, 0)$ and target $(0, 0, 0)$,
  but there is no run with source
  $s' \coloneqq (0, 1, 0) (0, 0, 0) (0, 1, 0)$,
  even though the words $s$ and $s'$ are commutatively equivalent.
\end{example}

A \emph{VAS} is a BVAS $(\vec{\Delta}_1, \ldots, \vec{\Delta}_r)$ such that $r = 1$. The following fact extends to BVAS runs the usual monotony property of VAS runs. Its proof is obtained by immediate induction on $\rho$.
\begin{fact}
  \label{fact:monotony}
  For every $\vec{c}_1, \ldots, \vec{c}_k, \vec{x}_1, \ldots, \vec{x}_k \in \setN^d$ and
  every run $\rho$ of $\B$ with $\src{\rho} = \vec{c}_1 \cdots \vec{c}_k$,
  there exists a run $\sigma$ of $\B$ with
  $\src{\sigma} = (\vec{c}_1 + \vec{x}_1) \cdots (\vec{c}_k + \vec{x}_k)$ and
  $\tgt{\sigma} = \tgt{\rho} + \vec{x}_1 + \cdots + \vec{x}_k$.
\end{fact}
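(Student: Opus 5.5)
The plan is a structural induction on the run $\rho$, following the inductive definition of $\Runs{\B}$. The statement will be proved for every run $\rho$ simultaneously for all choices of $k$, of the decomposition $\src{\rho} = \vec{c}_1 \cdots \vec{c}_k$, and of the vectors $\vec{x}_1, \ldots, \vec{x}_k \in \setN^d$. In every case the new run $\sigma$ is built with the same tree shape as $\rho$. It uses the same actions at every internal node, and the shift vectors are added to the leaves and propagated upward.

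\emph{Base case.} Suppose $\ra{\rho} = 0$, so $\rho = (\vec{c}, ())$ and $\src{\rho} = \vec{c}$. Then $k = 1$ and $\vec{c}_1 = \vec{c}$. We take $\sigma \coloneqq (\vec{c} + \vec{x}_1, ())$. This is a run, because $\vec{c} + \vec{x}_1 \in \setN^d$ and there is no arity constraint for $n = 0$. Its source is $\vec{c}_1 + \vec{x}_1$ and its target is $\tgt{\rho} + \vec{x}_1$, as required.

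\emph{Inductive step.} Suppose $\rho = (\vec{c}, (\rho_1, \ldots, \rho_n))$ with $n \geq 1$. Since $\src{\rho} = \src{\rho_1} \cdots \src{\rho_n}$ and each $\src{\rho_j}$ is a non-empty word, the word $\vec{c}_1 \cdots \vec{c}_k$ splits into consecutive non-empty blocks: there are indices $0 = k_0 < k_1 < \cdots < k_n = k$ such that $\src{\rho_j} = \vec{c}_{k_{j-1}+1} \cdots \vec{c}_{k_j}$. We apply the induction hypothesis to each $\rho_j$ with the matching block of shifts $\vec{x}_{k_{j-1}+1}, \ldots, \vec{x}_{k_j}$. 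This gives runs $\sigma_j$ with source $(\vec{c}_{k_{j-1}+1} + \vec{x}_{k_{j-1}+1}) \cdots (\vec{c}_{k_j} + \vec{x}_{k_j})$ and target $\tgt{\rho_j} + \vec{y}_j$, where $\vec{y}_j \coloneqq \sum_{i = k_{j-1}+1}^{k_j} \vec{x}_i$. We then set $\sigma \coloneqq (\vec{c} + \sum_{j=1}^n \vec{y}_j, (\sigma_1, \ldots, \sigma_n))$.

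Three checks remain for $\sigma$.
\begin{itemize}
\item Its target lies in $\setN^d$, since $\vec{c} \in \setN^d$ and all the $\vec{x}_i$ are in $\setN^d$.
\item Its action is
$\bigl(\vec{c} + \sum_j \vec{y}_j\bigr) - \sum_j \bigl(\tgt{\rho_j} + \vec{y}_j\bigr) = \act{\rho} \in \vec{\Delta}_n$,
so $\sigma$ is a run.
\item Its source is the concatenation of the sources of the $\sigma_j$, which is exactly the shifted word. Its target is $\tgt{\rho} + \sum_j \vec{y}_j = \tgt{\rho} + \vec{x}_1 + \cdots + \vec{x}_k$.
\end{itemize}

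The only point needing care is the bookkeeping in the inductive step, namely splitting the source word and the shift vectors into the blocks belonging to each child. This split is determined uniquely by the lengths $|\src{\rho_j}|$, which are all at least one. Beyond that, everything is a direct verification.
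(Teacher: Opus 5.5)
Your proposal is correct and follows the same approach as the paper, which simply states that the fact follows by immediate structural induction on $\rho$. You keep the tree shape and actions of $\rho$ and add the shift vectors at the leaves, propagating them upward. Your inductive step (splitting the source into non-empty blocks for the children and checking that the action is unchanged) is the routine verification the paper leaves implicit.
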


An \emph{initialized BVAS} (or \emph{IBVAS} for short)
is a pair $\IB \coloneqq (\vec{\Delta}_0, \B)$ where
$\vec{\Delta}_0 \subseteq \setN^d$ is a finite set of \emph{initial} configurations and
$\B$ is a BVAS.
An \emph{initialized run} (or \emph{irun} for short) of $\IB$ is a run $\rho$
of $\B$ such that $\source{\rho} \in \vec{\Delta}_0^+$.
We let $\IRuns{\IB}$ denote the set of iruns of $\IB$.
The \emph{reachability set} of $\IB$ is the set $\reach{\IB} \coloneqq \reach{\vec{\Delta}_0^+}$.
We also introduce the binary relation $\bvasrel{\IB}$ on configurations defined by
$\vec{x} \bvasrel{\IB} \vec{y}$ if $\vec{y} \in \reach{\vec{\Delta}_0^* \, \vec{x} \, \vec{\Delta}_0^*}$.

\smallskip

The \emph{BVAS reachability problem} takes as input an IBVAS $\IB$ and a configuration $\vec{c}$, and
determines whether $\vec{c}$ is a member of $\reach{\IB}$.
Clearly,
if $\vec{c}\in\reach{\IB}$ then there exists an irun $\rho$ such that $\tgt{\rho}=\vec{c}$.
It follows that the BVAS reachability problem is recursively enumerable.
In order to prove that the problem is decidable,
we show in this paper that if $\vec{c}\not\in\reach{\IB}$ then
there exists a semilinear inductive invariant witnessing that property.
We first recall some definitions.

\smallskip

An \emph{inductive invariant} for an IBVAS $\IB = (\vec{\Delta}_0, (\vec{\Delta}_1, \ldots, \vec{\Delta}_r))$
is a set $\Inv \subseteq \setN^d$ such that
for every $n\in \{0,\ldots,r\}$ and for every $\vec{a}\in\vec{\Delta}_n$,
the following property holds:
\[
  \vec{a}+\underbrace{\Inv+\cdots+\Inv}_{\text{$n$ times}} \quad \subseteq \quad \Inv\cup(\setZ^d\setminus \setN^d)
  \:.
\]
Notice that a set $\Inv\subseteq \setN^d$ is an inductive invariant if, and only if,
$\vec{\Delta}_0 \subseteq \Inv$ and $\reach{\Inv^+}\subseteq \Inv$.
An immediate consequence of \cref{fact:reach-shuffle} below is that
the reachability set itself is an inductive invariant
(in fact,
$\reach{\IB}$ is the $\subseteq$-least inductive invariant).
It follows that
$\vec{c}\not\in \reach{\IB}$
if, and only if,
there exists an inductive invariant $\Inv\subseteq \setN^d$ such that $\vec{c}\not\in\Inv$.
We can effectively decide whether a given semilinear set is an inductive invariant.
So the following theorem, applied with the semilinear set $\vec{\Phi}\coloneqq\setN^d\setminus\{\vec{c}\}$, immediately entails that the BVAS reachability problem is decidable. This theorem is the main result of the paper.

\begin{theorem}\label{thm:main}
  For every IBVAS $\IB$ and for every semilinear set $\vec{\Phi} \subseteq \setN^d$ such that $\reach{\IB} \subseteq \vec{\Phi}$, there exists a semilinear inductive invariant $\Inv$ for $\IB$ such that $\Inv \subseteq \vec{\Phi}$.
\end{theorem}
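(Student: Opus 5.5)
We will build the invariant by forward saturation. We maintain a semilinear set $\vec{A}\subseteq\vec{\Phi}$ that is an \emph{attractor}: every run with one leaf in $\vec{A}$ and all other leaves in $\vec{A}\cup\reach{\IB}$ has its root in $\vec{A}$. We start from $\vec{A}\coloneqq\emptyset$ and enlarge $\vec{A}$ until it contains $\vec{\Delta}_0$. At that point $\vec{A}$ is an inductive invariant, because an attractor containing the initial configurations is closed under $\reach{\vec{A}^+}$. Two facts follow directly from the definition and will be used throughout. First, if $\vec{A}$ is an attractor then so is $\vec{A}\cup\reach{\IB}$. Second, an attractor containing $\vec{\Delta}_0$ is an inductive invariant.

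To decide what to add, we maintain a finite abstract graph $\mathcal{G}$ whose nodes are pairs $(\rho,\vec{C})$. Here $\rho$ is an irun and $\vec{C}$ is a finitely-generated cone. The concretization of $(\rho,\vec{C})$ is the set of iruns dominating $\rho$ for a wqo on runs extending the VAS run ordering, whose target lies in $\tgt{\rho}+\vec{C}$. Edges are given by an $\exists\exists$-abstraction of single BVAS steps. The invariant of the construction is that every irun whose target lies outside $\vec{A}$ is represented by some node of $\mathcal{G}$. We also impose a \emph{homogeneity condition}: all nodes in a strongly connected component of $\mathcal{G}$ share the same cone.

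One iteration proceeds in three steps.
\begin{itemize}
\item[(1)] Pick a bottom SCC $\Gamma$ of $\mathcal{G}$.
\item[(2)] Using the almost-semilinearity results for BVAS reachability sets \cite{DBLP:conf/fossacs/BiziereLS26}, compute a semilinear over-approximation of the configurations generated by $\Gamma$ that still lies inside $\vec{\Phi}$. It will have the shape $\vec{B}+\clin{\vec{P}}$, whose error with respect to the true set has strictly smaller dimension. To show that adding this set (intersected appropriately) to $\vec{A}$ keeps the attractor property, we simulate runs consisting of a main branch with side subtrees whose roots lie in $\vec{A}\cup\reach{\IB}$. These are runs of a VAS with infinitely many (well-quasi-ordered) actions, which we call a \emph{well-structured VAS}. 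We prove for it the wqo and amalgamation properties of runs, in the style of Leroux's VAS argument. This gives the pumping needed to show that configurations in the dense part of the linearization are genuinely reachable "modulo $\vec{A}$". Bottomness of $\Gamma$ ensures that no run escapes the component, and homogeneity avoids the mix-approximation problem.
\item[(3)] Remove $\Gamma$ and add new nodes for the remaining iruns concretized by $\Gamma$ whose targets fall outside the enlarged $\vec{A}$. These lie in the difference of two semilinear sets of the form $\vec{b}+\vec{C}$ minus $\vec{B}+\clin{\vec{P}}$. To keep the graph homogeneous, we prove a \emph{face stripping} decomposition: the difference splits into finitely many pieces $\vec{b}_i+\vec{C}_i$, where the $\vec{C}_i$ are proper faces or lower-dimensional cones. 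The pieces can be ordered compatibly with $\bvasrel{\IB}$, so the new edges do not create cycles mixing different cones.
\end{itemize}

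Termination follows from a well-founded measure on graphs, for example a multiset ordering of node (dimension of cone, run under the wqo). Step (3) replaces $\Gamma$ by nodes that are either of strictly smaller cone dimension or strictly above in a wqo that cannot increase forever. When $\mathcal{G}$ becomes empty, no irun has its target outside $\vec{A}$. In particular every initial configuration, being the target of a leaf irun, lies in $\vec{A}$, so $\vec{A}$ is a semilinear inductive invariant contained in $\vec{\Phi}$.

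The main obstacle will be step (2)–(3) together. We must guarantee that the over-approximation added to $\vec{A}$ never leaves $\vec{\Phi}$ and never breaks the attractor property, while the leftover is strictly smaller. I would first try to establish the amalgamation/pumping lemma for well-structured VAS, since both the soundness of the enlargement and the face stripping order rely on it.
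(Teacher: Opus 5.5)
Your architecture (attractors, directed iruns with SCC-homogeneity, a bottom SCC, a WSVAS simulation, face stripping) is the paper's. However, step (2) adds the wrong kind of set. An over-approximation of what $\Gamma$ generates, even intersected with $\vec{\Phi}$, need not be closed under BVAS steps whose side leaves lie in $\vec{A}\cup\reach{\IB}$, so adding it does not preserve the attractor property. Your claim that configurations deep inside $\vec{b}+\clin{\vec{P}}$ are genuinely reachable is also false in general. For the periodic set $\vec{P}=\{(x,y)\in\setN^2\mid y\leq x^2\}$ one has $\clin{\vec{P}}=\setN^2$, yet no translate $\vec{p}+\setN^2$ lies in $\vec{P}$. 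The enlarged attractor really does contain unreachable configurations.

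The paper instead adds $\vec{X}\coloneqq\vec{R}\cap\vec{Y}_\Gamma$, where $\vec{Y}_\Gamma\coloneqq\{\tgt{\rho_\gamma}\mid\gamma\in\Gamma\}+\vec{Q}_\Gamma$ and $\vec{R}\coloneqq\reach{(\tgt{\rho_w}+\vec{p}+\vec{Q}_\Gamma)^+\,{\shuffle}\,\Inv^*}$ is a \emph{forward closure}, with $\Inv\coloneqq\vec{A}\cup\reach{\IB}$. The attractor property of $\vec{A}\cup\vec{R}$ is then free by \cref{lem:incremental-attractor}. The real content is the \emph{safety} inclusion $\vec{R}\subseteq\vec{T}\coloneqq\vec{A}\cup(\vec{Y}_\Gamma\cap\vec{\Phi})$, proved by contradiction. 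Unsafe WSVAS runs from $\vec{b}+n\vec{p}+\vec{Q}$ for every $n$ are combined into an unsafe run from the true set $\vec{b}+\vec{P}$. This uses the wqo and amalgamation of runs, almost-fullness of the relation $\vec{x}+\setN(\vec{y}-\vec{x})\subseteq\setN^d\setminus\vec{T}$, and the central vector of \cref{lem:pushin}. Your sketch also omits that the WSVAS only covers runs with one source outside $\Inv$. Lifting to $(\cdot)^+$ needs \cref{lem:bvas-branch-like} with the hypothesis $\vec{T}\subseteq\Inv+\vec{Q}_\Gamma$, which is where the common $\vec{Q}_\Gamma$ supplied by homogeneity is used. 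The premise $\reach{\tgt{\IRuns{w}}\,{\shuffle}\,\Inv^*}\subseteq\vec{T}$ additionally needs bottomness, via \cref{lem:safety-witness-succ}.

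Second, your termination measure is not well-founded. A wqo admits infinite strictly increasing chains (already $(\setN,\leq)$ does), so allowing replacement nodes of the same cone dimension that are ``strictly above'' proves nothing. The measure that works is just the number of nodes per cone dimension, compared lexicographically from the top. It decreases only if \emph{every} new node has as cone a proper face of $\cone{\vec{Q}_\Gamma}\subseteq\vec{C}_\Gamma$. That means the full-dimensional stripe $\vec{S}_1$ of the face-stripping decomposition of $\vec{Y}_\Gamma\setminus\vec{X}$ must be empty, which requires $\vec{Y}_\Gamma\subseteq\vec{X}-\vec{Q}_\Gamma$. You never establish this. The paper derives it because $\vec{R}$ is a $\vec{Q}_\Gamma$-cylindric attractor containing $\tgt{\rho_w}+\vec{p}$. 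Strong connectivity of $\Gamma$, together with $\vec{Q}_u=\vec{Q}_v$, then propagates the intersection along edges to every $\tgt{\rho_\gamma}+\vec{Q}_\Gamma$ (\cref{lem:attractors-intersect-SCC}). Similarly, ordering the stripes compatibly with $\bvasrel{\IB}$ (\cref{lem:diagonalstripping}) requires $\bvasrel{\IB}\cap(\vec{X}\times(\vec{Y}_\Gamma\setminus\vec{X}))=\emptyset$. This again holds only because $\vec{X}$ is cut out of the forward-closed set $\vec{R}$, not because it over-approximates anything. Finally, to keep the graph finite and covering, the new nodes should be the pairs $(\rho,\vec{F}_i)$ with $\rho$ minimal among iruns of $\Gamma$ whose target lies in $\vec{S}_i$. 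Minimality is taken for $\trianglelefteq$ combined with $\leq_{\vec{Q}_\Gamma\cap\vec{F}_i}$ on targets.
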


\begin{remark}
  It is known~\cite{DBLP:conf/fossacs/BiziereLS26} that the reachability set $\reach{\IB}$ is semilinear when $d\leq 5$. It follows that the previous theorem immediately holds with $\Inv\coloneqq\reach{\IB}$ in that case.
\end{remark}

\begin{example}
  \label{exa:ind-invariants}
  Consider the ($3$-dimensional) initialized BVAS $\IB \coloneqq (\vec{\Delta}_0, \B)$ where
  $\B$ is the BVAS of \cref{exa:BVAS} and $\vec{\Delta}_0 \coloneqq \{(1, 0, 0)\}$.
  The set $\vec{I} \coloneqq \setN^3 \setminus \{(0, 0, 0)\}$
  is easily seen to be an inductive invariant for $\IB$.
  We derive that $(0, 0, 0) \not\in \reach{\IB}$.
  But $\vec{I}$ contains $(0, 1, 0)$,
  so it is not precise enough to conclude that $(0, 1, 0) \not\in \reach{\IB}$.
  The set $\setN^3 \setminus \{(0, 1, 0)\}$ is not an inductive invariant,
  since $(0, 0, -1) + \Inv + \Inv$ contains $(0, 1, 0)$ which is not in $\Inv$.
  It is routinely checked that $\vec{J} \coloneqq \setN^3 \setminus \{(0, 0, 0), (0, 1, 0)\}$ is an inductive invariant.
  This entails, in particular, that $(0, 1, 0) \not\in \reach{\IB}$.
\end{example}

Our definition of $\reach{L}$ above is for arbitrary languages $L\subseteq (\setN^d)^+$,
but we actually only care about reachability sets from commutative languages.
Recall that a language $L$ is \emph{commutative} if $uxyv \in L \Rightarrow uyxv \in L$
for every words $u, v, x, y$.
We also recall the definition of the shuffle operator,
which provides an analogue of language concatenation in the commutative setting.
The \emph{shuffle} of two languages $K$ and $L$,
written $K \,{\shuffle}\, L$,
is the set of words $u_1 v_1 \cdots u_k v_k$ such that $u_1 \cdots u_k \in K$ and $v_1 \cdots v_k \in L$.
Notice that $K \,{\shuffle}\, L$ is commutative when $K$ and $L$ are commutative.

\begin{fact}
  \label{fact:reach-shuffle}
  The inclusion $\reach{K \,{\shuffle}\, \reach{L}^+} \subseteq \reach{K \,{\shuffle}\, L^+}$ holds
  for every languages $K, L \subseteq (\setN^d)^+$.
\end{fact}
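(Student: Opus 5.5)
We take an arbitrary run $\rho$ with $\source{\rho}\in K \,{\shuffle}\, \reach{L}^+$ and build a run $\rho'$ with the same target and $\source{\rho'}\in K \,{\shuffle}\, L^+$. The idea is to graft, at each leaf of $\rho$ whose label comes from the $\reach{L}^+$ component, a run witnessing that this label is reachable from $L$. Grafting preserves targets, so the target of $\rho$ is unchanged.

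First we record a substitution lemma, proved by induction on $\rho$. Let $\rho$ be a run with $\source{\rho}=\vec{c}_1\cdots\vec{c}_k$, and for each $i$ in some set $I\subseteq\{1,\ldots,k\}$ let $\sigma_i$ be a run with $\tgt{\sigma_i}=\vec{c}_i$. Then there is a run $\rho'$ with $\tgt{\rho'}=\tgt{\rho}$ and $\source{\rho'}=w_1\cdots w_k$, where $w_i=\source{\sigma_i}$ for $i\in I$ and $w_i=\vec{c}_i$ otherwise.

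The proof of the lemma splits into two cases.
\begin{itemize}
\item If $\ra{\rho}=0$, then $k=1$. We take $\rho'\coloneqq\sigma_1$ if $1\in I$, and $\rho'\coloneqq\rho$ otherwise.
\item If $\rho=(\vec{c},(\rho_1,\ldots,\rho_n))$ with $n\geq1$, then $\source{\rho}=\source{\rho_1}\cdots\source{\rho_n}$. We split the index set into consecutive blocks, one per child, and apply the induction hypothesis to each $\rho_j$ with the $\sigma_i$ of its block. This gives runs $\rho_j'$ with the same targets as $\rho_j$. Hence $(\vec{c},(\rho_1',\ldots,\rho_n'))$ is a run: the action $\vec{c}-\sum_j\tgt{\rho_j'}$ is unchanged and lies in $\vec{\Delta}_n$. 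Its source is the concatenation of the new sources, as required.
\end{itemize}

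Now for the main argument. Write $\source{\rho}=u_1v_1\cdots u_mv_m$ with $u_1\cdots u_m\in K$ and $v_1\cdots v_m\in\reach{L}^+$. Each letter $\vec{c}_i$ occurring in some $v$-block lies in $\reach{L}$, so we fix a run $\sigma_i$ with target $\vec{c}_i$ and $\source{\sigma_i}\in L$. Let $I$ be the set of positions lying in the $v$-blocks. Applying the substitution lemma yields a run $\rho'$ with $\tgt{\rho'}=\tgt{\rho}$ and $\source{\rho'}=u_1v_1'\cdots u_mv_m'$. Here each $v_j'$ is obtained from $v_j$ by replacing each letter with the corresponding word $\source{\sigma_i}\in L$.

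It remains to check that $v_1'\cdots v_m'\in L^+$. This word is a concatenation of the $\source{\sigma_i}$ over the positions of $v_1\cdots v_m$, and there is at least one such position because $v_1\cdots v_m\in\reach{L}^+$ is non-empty. So $v_1'\cdots v_m'\in L^+$, and therefore $\source{\rho'}\in K \,{\shuffle}\, L^+$.

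The shuffle decomposition is preserved because the substitution replaces letters in place, so the interleaving of $u$-blocks and $v'$-blocks keeps the required shape. We do not need commutativity anywhere. The only mildly delicate point is the index bookkeeping in the inductive step of the substitution lemma.
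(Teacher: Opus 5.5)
Your proof is correct. Grafting, at every leaf whose label comes from the $\reach{L}^+$ component, a run with the same target and source in $L$ (via your target-preserving substitution lemma) is the routine argument behind this fact, which the paper states without proof; the paper uses the same leaf-replacement idea elsewhere, e.g.\ in the proof of \cref{lem:safety-witness-succ}.
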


We prove \cref{thm:main} by iteratively transforming so-called safety witnesses.
To define these safety witnesses,
we need to introduce the notions of attractors and directed iruns.
An \emph{attractor} for an IBVAS $\IB = (\vec{\Delta}_0, \B)$ is a set $\vec{A} \subseteq \setN^d$ such that
$\reach{\vec{A}^+ \,{\shuffle}\, \vec{\Delta}_0^*} \subseteq \vec{A}$.
In particular, both $\emptyset$ and $\setN^d$ are attractors.
The notion of attractor is closely related to the notion of inductive invariant.
Indeed,
every inductive invariant is an attractor, and
every attractor $\vec{A}$ that contains $\vec{\Delta}_0$ is an inductive invariant.
Notice that,
by \cref{fact:reach-shuffle},
a set $\vec{A} \subseteq \setN^d$ is an attractor if, and only if,
$\reach{\vec{A}^+ \,{\shuffle}\, \reach{\IB}^*} \subseteq \vec{A}$.
It follows that $\vec{A} \cup \reach{\IB}$ is an inductive invariant for every attractor $\vec{A}$.
The following lemma shows how to incrementally construct attractors.

\begin{lemma}
  \label{lem:incremental-attractor}
  Let $\vec{A}$ be an attractor for $\IB$ and
  $\Inv$ be an inductive invariant for $\IB$ that contains $\vec{A}$.
  For every set $\vec{X} \subseteq \setN^d$,
  the set
  $\vec{A} \cup \reach{\vec{X}^+ \,{\shuffle}\, \Inv^*}$ is an attractor for $\IB$.
\end{lemma}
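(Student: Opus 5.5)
We must show that $\vec{B} \coloneqq \vec{A} \cup \reach{\vec{X}^+ \,{\shuffle}\, \Inv^*}$ satisfies $\reach{\vec{B}^+ \,{\shuffle}\, \vec{\Delta}_0^*} \subseteq \vec{B}$.

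Take a run $\rho$ whose source lies in $\vec{B}^+ \,{\shuffle}\, \vec{\Delta}_0^*$. Then the source is a shuffle of a nonempty word over $\vec{A}$, a word over $\reach{\vec{X}^+ \,{\shuffle}\, \Inv^*}$, and a word over $\vec{\Delta}_0$. We split into two cases according to whether some source letter comes from $\vec{A}$.

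\textbf{Case 1: some letter of the source lies in $\vec{A}$.} Every other letter is then either in $\vec{A}$, in $\vec{\Delta}_0$, or in $\reach{\vec{X}^+ \,{\shuffle}\, \Inv^*}$. Since $\vec{\Delta}_0 \subseteq \Inv$ and $\Inv$ is closed under reachability from $\Inv^+$, the set $\reach{\vec{X}^+ \,{\shuffle}\, \Inv^*}$ is not obviously contained in $\reach{\IB}$, so we cannot conclude directly from the attractor property of $\vec{A}$. This case is the main obstacle. The remedy is to reorganise the case split along the run tree rather than the source. Look at the subterms of $\rho$ and call a node \emph{$A$-tainted} if its subtree has a leaf labelled by a letter taken from $\vec{A}$. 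Untainted maximal subtrees have sources in $(\reach{\vec{X}^+ \,{\shuffle}\, \Inv^*}\cup\vec{\Delta}_0)^+$.

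The first thing to try is to prove that $\reach{\vec{X}^+ \,{\shuffle}\, \Inv^*}\subseteq \Inv$ and that $\vec{A}$ absorbs $\Inv$-leaves. The first part is fine, since $\vec{X}$-leaves need not be in $\Inv$. The second part fails, because $\vec{A}$ only absorbs $\reach{\IB}$. So a tainted subtree mixing $\vec{A}$ and $\Inv$-leaves has no reason to land in $\vec{A}$. However, the conclusion only requires landing in $\vec{B}$, and the useful observation is the following. If a tainted node also has a leaf from $\vec{X}$ (through an untainted child), then the whole subtree's source is in $\vec{X}^+\,{\shuffle}\,\Inv^*$. This holds provided $\vec{A} \subseteq \Inv$ (given), $\reach{\IB}\subseteq\Inv$, and $\vec{\Delta}_0\subseteq \Inv$. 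By \cref{fact:reach-shuffle}, flattening the untainted $\reach{\vec{X}^+ \,{\shuffle}\, \Inv^*}$ children into their own sources, the target then lies in $\reach{\vec{X}^+ \,{\shuffle}\, \Inv^*}\subseteq\vec{B}$.

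Given this observation, we refine the cases as follows. We flatten every leaf of $\rho$ labelled by an element of $\reach{\vec{X}^+ \,{\shuffle}\, \Inv^*}$ into the run witnessing it, which by \cref{fact:reach-shuffle} (applied via commutativity, with $K$ the remaining letters) is legitimate. This yields a run with the same target and source in $(\vec{A}\cup\vec{X}\cup\Inv\cup\vec{\Delta}_0)^+$, more precisely in $\vec{A}^* \,{\shuffle}\, (\vec{X}^+\,{\shuffle}\,\Inv^*)^{*}\,{\shuffle}\,\vec{\Delta}_0^*$.

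\textbf{Case 2: the source contains an $\vec{X}$-letter.} All other letters lie in $\vec{A}\cup\Inv\cup\vec{\Delta}_0\subseteq\Inv$, so the source is in $\vec{X}^+\,{\shuffle}\,\Inv^*$ and the target lies in the second part of $\vec{B}$.

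\textbf{Case 3: no $\vec{X}$-letter appears.} Then the source lies in $\vec{A}^+\,{\shuffle}\,\vec{\Delta}_0^*$, so the target is in $\vec{A}$ because $\vec{A}$ is an attractor.

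With this refinement, the only subtle point is justifying the flattening step (the commutative rearrangement inside \cref{fact:reach-shuffle}). Once that is granted, Cases 2 and 3 already cover every source, and the original Case 1 no longer needs separate treatment.
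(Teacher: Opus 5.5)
Your final argument is correct and is the paper's proof: flatten the source letters taken from $\reach{\vec{X}^+ \,{\shuffle}\, \Inv^*}$ into their witnessing runs via \cref{fact:reach-shuffle}, and split on whether any such letter occurs. If none occurs, the attractor property of $\vec{A}$ applies; otherwise $\vec{A} \cup \vec{\Delta}_0 \subseteq \Inv$ puts the flattened source in $\vec{X}^+ \,{\shuffle}\, \Inv^*$ (the paper phrases the same split as $(\vec{Y}\cup\vec{Z})^+ = \vec{Y}^+ \cup (\vec{Y}^* \,{\shuffle}\, \vec{Z}^+)$).

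In a write-up, drop the abandoned Case~1 and tainted-node detour, which includes the self-contradictory remark that $\reach{\vec{X}^+ \,{\shuffle}\, \Inv^*}\subseteq\Inv$ is ``fine''. Also note that the nonemptiness requirement applies to the combined $\vec{A}$- and reach-letters, not to the $\vec{A}$-word alone.
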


\begin{proof}
  Let $\vec{A}$ and $\Inv$ be as in the lemma.
  Note that $\vec{A} \cup \vec{\Delta}_0 \subseteq \Inv$ by assumption.
  For short,
  let us write $\vec{E} \coloneqq \vec{A} \cup \reach{\vec{X}^+ \,{\shuffle}\, \Inv^*}$.
  We need to show that $\vec{E}$ is an attractor,
  i.e.,
  that $\reach{\vec{E}^+ \,{\shuffle}\, \vec{\Delta}_0^*} \subseteq \vec{E}$.
  Observe that $(\vec{Y} \cup \vec{Z})^+ = \vec{Y}^+ \cup (\vec{Y}^* \,{\shuffle}\, \vec{Z}^+)$
  for every sets $\vec{Y}, \vec{Z} \subseteq \setN^d$.
  We get,
  by distributivity of shuffle over union,
  that
  $\reach{\vec{E}^+ \,{\shuffle}\, \vec{\Delta}_0^*}$ is the union of the set
  $\reach{\vec{A}^+ \,{\shuffle}\, \vec{\Delta}_0^*}$ and the set
  $\reach{\vec{A}^* \,{\shuffle}\, \reach{\vec{X}^+ \,{\shuffle}\, \Inv^*}^+ \,{\shuffle}\, \vec{\Delta}_0^*}$.
  The first set is contained in $\vec{A} \subseteq \vec{E}$ since $\vec{A}$ is an attractor.
  According to \cref{fact:reach-shuffle},
  the second set is contained in
  $\reach{\Inv^* \,{\shuffle}\, (\vec{X}^+ \,{\shuffle}\, \Inv^*)^+}$
  since $\vec{A} \cup \vec{\Delta}_0 \subseteq \Inv$.
  The observation that
  $\Inv^* \,{\shuffle}\, (\vec{X}^+ \,{\shuffle}\, \Inv^*)^+ = \vec{X}^+ \,{\shuffle}\, \Inv^*$
  entails that $\reach{\Inv^* \,{\shuffle}\, (\vec{X}^+ \,{\shuffle}\, \Inv^*)^+} \subseteq \vec{E}$.
  We have shown that $\reach{\vec{E}^+ \,{\shuffle}\, \vec{\Delta}_0^*} \subseteq \vec{E}$.
  This concludes the proof of the lemma.
\end{proof}

\begin{example}
  \label{exa:attractors}
  For every $k \in \setN$,
  the set $\vec{A}_k \coloneqq \{(x, y, z) \in \setN^3 \mid x+y+z \geq k\}$ is
  an attractor for the initialized BVAS $\IB$ of \cref{exa:ind-invariants}.
  Indeed,
  the sets
  $\vec{\Delta}_1 + \vec{A}_k$,
  $\vec{\Delta}_2 + \vec{A}_k + \vec{A}_k$ and
  $\vec{\Delta}_2 + \vec{A}_k + \reach{\IB}$
  are all contained in $\vec{A}_k \cup (\setZ^d \setminus \setN^d)$.
  The last inclusion holds because $(0, 0, 0) \not\in \reach{\IB}$,
  see \cref{exa:ind-invariants}.
  It follows, by induction on $\rho$,
  that $\tgt{\rho} \in \vec{A}_k$ for every run $\rho$ with $\src{\rho} \in \vec{A}_k^+ \,{\shuffle}\, \vec{\Delta}_0^*$.
  This means that $\vec{A}_k$ is an attractor.
  While $\vec{A}_0$ and $\vec{A}_1$ are inductive invariants,
  the attractors $\vec{A}_2, \vec{A}_3, \ldots$ are not inductive invariants
  as they do not contain $(1, 0, 0) \in \vec{\Delta}_0$.
\end{example}

The notion of directed iruns relies
on cones (see \cref{sec:prelim}) and
on a wpo on iruns introduced in~\cite{DBLP:conf/fossacs/BiziereLS26} with slightly different notations.
Let us define the partial-order $\sqsubseteq$ on runs of a BVAS $\B$ inductively by $\alpha \sqsubseteq \beta$ if $\alpha = \beta$ or there exists $j \in \{1, \ldots, \ra{\beta}\}$ such that $\alpha \sqsubseteq \beta[j]$. %
We define the partial-order $\trianglelefteq$ on runs of a BVAS $\B$ inductively by $\alpha \trianglelefteq \beta$ if there exists $\beta' \sqsubseteq \beta$ such that
$\tgt{\alpha} \leq \tgt{\beta}, \tgt{\beta'}$,
$\ra{\alpha} = \ra{\beta'}$,
$\act{\alpha} = \act{\beta'}$, and
$\alpha[j] \trianglelefteq \beta'[j]$ for every $j$. %

\begin{example}[Continued from \cref{exa:BVAS}]
  \label{exa:partial-orders-on-runs}
  Consider the runs $\rho$, $\sigma$ and $\tau$ depicted in \cref{fig:BVAS}.
  It holds that $\rho \sqsubseteq \sigma$ and $\rho \not\sqsubseteq \tau$.
  We also have $\rho \trianglelefteq \sigma$ and $\rho \trianglelefteq \tau$.
  The first assertion holds because $\rho \sqsubseteq \sigma$ and $\tgt{\rho} \leq \tgt{\sigma}$.
  Let us explain why the second assertion holds.
  We have $\rho[1] \trianglelefteq \tau[1][1]$ since
  $\rho[1] \sqsubseteq \tau[1][1]$ and $\tgt{\rho[1]} = (0, 1, 0) = \tgt{\tau[1][1]}$.
  Similarly, we have $\rho[2] \trianglelefteq \tau[1][2]$ since
  $\rho[2] \sqsubseteq \tau[1][2]$ and $\tgt{\rho[2]} = (0, 1, 0) \leq (2, 1, 1) = \tgt{\tau[1][2]}$.
  It follows that $\rho \trianglelefteq \tau$ since
  $\tgt{\rho} \leq \tgt{\tau}, \tgt{\tau[1]}$,
  $\ra{\rho} = 2 = \ra{\tau[1]}$,
  $\act{\rho} = (0, -2, 1) = \act{\tau[1]}$, and
  $\rho[j] \trianglelefteq \tau[1][j]$ for every $j$.
  To conclude this example,
  we observe that $\sigma \ntrianglelefteq \tau$ for several reasons.
  One of them is that
  $(1, 0, 0)$ occurs in $\source{\sigma}$ and does not occur in $\source{\tau}$.
\end{example}

Our approach crucially relies on the following lemma,
which was proved in~\cite{DBLP:conf/fossacs/BiziereLS26} with slightly different notations.
For the sake of completeness,
we reprove this lemma in \cref{app:lem:ibvas-wpo-and-amalgamation}.

\begin{restatable}[\cite{DBLP:conf/fossacs/BiziereLS26}]{lemma}{lemIbvasWpoAndAmalgamation}
  \label{lem:ibvas-wpo-and-amalgamation}
  The pair $(\IRuns{\IB}, \trianglelefteq)$ is a wpo and satisfies the amalgamation property,
  i.e.,
  for every iruns $\rho \trianglelefteq \alpha, \beta$,
  there exists an irun $\sigma$ such that
  $\alpha, \beta \trianglelefteq \sigma$ and $\tgt{\rho} + \tgt{\sigma} = \tgt{\alpha} + \tgt{\beta}$.
\end{restatable}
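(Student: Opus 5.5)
The proof has two independent parts. The first is that $\trianglelefteq$ is a wpo on $\IRuns{\IB}$. The second is amalgamation, which I would prove by structural induction on $\rho$, inserting "contexts" and using monotony (\cref{fact:monotony}).

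\emph{Partial order.} Reflexivity is immediate by taking $\beta' \coloneqq \beta$ and using induction. For transitivity, suppose $\alpha \trianglelefteq \beta$ via $\beta' \sqsubseteq \beta$ and $\beta \trianglelefteq \gamma$ via $\gamma' \sqsubseteq \gamma$. Then $\beta'$ is $\beta$ or lies inside some $\beta[j]$. In the second case we recurse into $\gamma'[j]$, using $\beta[j] \trianglelefteq \gamma'[j]$. This yields $\gamma'' \sqsubseteq \gamma' \sqsubseteq \gamma$ witnessing $\alpha \trianglelefteq \gamma$; the target inequalities compose because targets only increase along the chain. For antisymmetry, note that $\alpha \trianglelefteq \beta$ implies that the number of nodes of $\alpha$ is at most that of $\beta$, with equality only when every $\beta'$ chosen in the recursion is the current node itself. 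So $\alpha \trianglelefteq \beta \trianglelefteq \alpha$ forces $\beta' = \beta$ throughout, equal targets at every node, and hence $\alpha = \beta$ by induction.

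\emph{Well-quasi-order.} View an irun as a finite tree whose nodes are labelled by the triple (arity, action, target). Leaves have arity $0$ and act equal to their target, so their label lies in the finite set $\vec{\Delta}_0$. Actions of internal nodes come from the finite sets $\vec{\Delta}_n$, and targets are ordered by $\leq$ on $\setN^d$, which is a wqo by Dickson. The relation $\trianglelefteq$ is then a Kruskal-style embedding. A node is mapped to a descendant node with the same arity and action, children are mapped to children in order, and the target of the image node and of every node on the path above it dominates the source target. I would run the Nash-Williams minimal bad sequence argument. Take a minimal bad sequence $(\alpha_i)$ by size. The set of immediate subruns $\alpha_i[j]$ is then wqo. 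Applying Higman's lemma to the tuples of children, together with Dickson and finiteness of the action sets, gives $i<k$ with equal arity and action, $\alpha_i[j] \trianglelefteq \alpha_k[j]$ for all $j$, and $\tgt{\alpha_i} \leq \tgt{\alpha_k}$. Taking $\beta' \coloneqq \alpha_k$ then gives $\alpha_i \trianglelefteq \alpha_k$, a contradiction. The children have fixed arity at most $r$, so a finite product of wqos suffices and Higman is not even needed in full generality.

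\emph{Amalgamation.} I would prove, by induction on $\rho$, that for $\rho \trianglelefteq \alpha,\beta$ there is an irun $\sigma$ with $\alpha,\beta \trianglelefteq \sigma$ and $\tgt{\sigma} = \tgt{\alpha}+\tgt{\beta}-\tgt{\rho}$.
\begin{itemize}
\item Fix witnesses $\alpha' \sqsubseteq \alpha$ and $\beta' \sqsubseteq \beta$. If $\rho$ is a leaf, $\alpha'$ and $\beta'$ are leaves with the same target as $\rho$.
\item Otherwise, apply the induction hypothesis to $\rho[j] \trianglelefteq \alpha'[j], \beta'[j]$ to get $\sigma_j$.
\item Set $\sigma' \coloneqq (\act{\rho}+\sum_j \tgt{\sigma_j}, (\sigma_1,\ldots,\sigma_n))$. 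Its target equals $\tgt{\alpha'}+\tgt{\beta'}-\tgt{\rho}$, which is $\geq \vec{0}$.
\item Re-insert the path (context) of $\alpha$ from its root down to $\alpha'$, replacing $\alpha'$ by $\sigma'$. Every node on this path has its target shifted by $\tgt{\beta'}-\tgt{\rho} \geq \vec{0}$, so it stays a valid run with the same side subruns (this is monotony, \cref{fact:monotony}). Its target is $\tgt{\alpha}+\tgt{\beta'}-\tgt{\rho}$.
\item Wrap the result in the context of $\beta$ above $\beta'$ in the same way, shifting by $\tgt{\alpha}-\tgt{\rho} \geq \vec{0}$. This gives $\sigma$ with the required target.
\end{itemize}
Sources stay in $\vec{\Delta}_0^+$ because only internal targets change, so $\sigma$ is an irun.

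The main obstacle is checking $\alpha,\beta \trianglelefteq \sigma$. For $\alpha$, the witness is the copy of the $\alpha$-context inside $\sigma$. Its nodes dominate those of $\alpha$, side subruns are unchanged, and at the node replacing $\alpha'$ we use $\alpha'[j] \trianglelefteq \sigma_j$. The symmetric argument for $\beta$ needs an auxiliary lemma: if $\gamma \trianglelefteq \delta$ and $\delta$ is enlarged by wrapping it in a context and adding nonnegative increments along a path, then $\gamma$ still embeds. This is proved by a direct induction, and it is where I expect most of the bookkeeping to lie.
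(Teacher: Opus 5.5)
Your partial-order argument is sound. Your amalgamation construction is the paper's: re-inserting the $\alpha$-context with $\sigma'$ plugged in gives its $\tau$, and wrapping $\tau$ in the $\beta$-context gives its $\sigma$.

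The ``auxiliary lemma'' you expect to be heavy is easy. You misdescribe $\trianglelefteq$: in $\alpha\trianglelefteq\beta$ via $\beta'\sqsubseteq\beta$, only $\tgt{\beta}$ and $\tgt{\beta'}$ are constrained, not the intermediate nodes of the path. So $\beta'\trianglelefteq\tau$ holds directly with witness $\sigma'$, since $\tgt{\beta'}\le\tgt{\sigma'}$ and $\tgt{\beta'}\le\tgt{\tau}$. Lifting this through the shifted $\beta$-context is a short induction.

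The genuine gap is in the well-quasi-order part. The Nash-Williams step ``the set of immediate subruns $\alpha_i[j]$ is then wqo'' needs the subterm property: $\alpha_a\trianglelefteq\alpha_i[j]$ must imply $\alpha_a\trianglelefteq\alpha_i$. Without it, the spliced sequence $\alpha_0,\ldots,\alpha_{i_0-1},\gamma_0,\gamma_1,\ldots$ need not be bad, and minimality of the bad sequence gives you nothing. This property fails for $\trianglelefteq$. The definition demands $\tgt{\alpha}\le\tgt{\beta}$ at the root of the host $\beta$, and a parent can have a smaller target than its child. Concretely, take $d=1$, $\vec{\Delta}_0=\{1\}$, $\vec{\Delta}_1=\{-1\}$, $\lambda\coloneqq(1,())$ and $\mu\coloneqq(0,(\lambda))$. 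Then $\lambda\trianglelefteq\mu[1]$ but $\lambda\ntrianglelefteq\mu$, since $1\not\le 0$.

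The missing idea is to split off this root condition.
\begin{itemize}
\item Let $\alpha\trianglelefteq^{\circ}\beta$ mean the definition of $\trianglelefteq$ with $\tgt{\alpha}\le\tgt{\beta}$ dropped at the top level only; children are still compared by the full $\trianglelefteq$.
\item Then $\trianglelefteq^{\circ}$ does have the subterm property, because the witness $\beta'\sqsubseteq\beta[j]$ is also a subrun of $\beta$. So your minimal bad sequence argument goes through for $\trianglelefteq^{\circ}$.
\item In its last step you must also compare the children's targets with Dickson's lemma. This is needed because the children have to be related by the full $\trianglelefteq$, which is $\trianglelefteq^{\circ}$ intersected with the order on targets.
\item Finally, $\trianglelefteq$ is itself $\trianglelefteq^{\circ}$ intersected with the order on targets, hence a wqo.
\end{itemize}
The paper does exactly this, packaged through Kruskal's theorem. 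It labels each node of $\rho$ by $(\tgt{\rho},\act{\rho},(\tgt{\rho[1]},\ldots,\tgt{\rho[n]}))$. Storing the children's targets in the label turns the recursive root conditions into label comparisons. It then shows that $\rho\trianglelefteq\sigma$ holds if and only if $f(\rho)\preceq_T f(\sigma)$ and $\tgt{\rho}\le\tgt{\sigma}$.
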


\begin{remark}
  The partial-order $\trianglelefteq$ on runs is a wqo on the set of iruns (see \cref{lem:ibvas-wpo-and-amalgamation}), but it is not a wqo on the set of runs.
  For instance,
  in dimension one,
  the runs $\rho_n\coloneqq (n,())$ where $n\in\setN$ are incomparable for $\trianglelefteq$, i.e., $\rho_m \ntrianglelefteq \rho_n$ when $m \neq n$.
\end{remark}

Consider an IBVAS $\IB$.
For each irun $\rho$ of $\IB$,
we let $\vec{P}_\rho$ denote the periodic subset of $\setN^d$ defined by
$\vec{P}_\rho \coloneqq \{\tgt{\sigma} - \tgt{\rho} \mid \sigma \in \IRuns{\IB}\wedge \rho\trianglelefteq \sigma\}$.
The fact that $\vec{P}_\rho$ is periodic immediately follows from the amalgamation property of $\trianglelefteq$.
A \emph{directed irun} of $\IB$ is a pair $w \coloneqq (\rho_w, \vec{C}_w)$ where
$\rho_w$ is an irun of $\IB$ and $\vec{C}_w \subseteq \setQ_{\geq 0}^d$ is a finitely-generated cone.
We let $\IRuns{w}$ denote the set of iruns $\sigma$ such that $\rho_w \trianglelefteq \sigma$ and $\tgt{\sigma} \in \tgt{\rho_w} + \vec{C}_w$.
We also introduce,
for each directed irun $w \coloneqq (\rho_w, \vec{C}_w)$,
the periodic sets $\vec{P}_w, \vec{Q}_w \subseteq \setN^d$ defined by
$\vec{P}_w \coloneqq \vec{P}_{\rho_w} \cap \vec{C}_w$ and
$\vec{Q}_w \coloneqq \clin{\vec{P}_w}$.
Note that\footnote{It is understood that $\tgt{R}\coloneqq\{\tgt{\rho}\mid \rho\in R\}$ for every set of runs $R$.}
$\tgt{\IRuns{w}} = \tgt{\rho_w} + \vec{P}_w$.
The following lemma will be shown in \cref{subsec:pavas}.

\begin{lemma}\label{lem:Qfinite}
  The periodic set $\vec{Q}_w$ is finitely-generated for every directed irun $w$ of $\IB$.
\end{lemma}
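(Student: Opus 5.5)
The plan is to reduce the statement to a purely geometric claim about the cone $\overline{\setQ_{\geq 0}\vec{P}_w}$. By definition $\vec{Q}_w = (\vec{P}_w-\vec{P}_w)\cap\overline{\setQ_{\geq 0}\vec{P}_w}$. The group $\vec{G}\coloneqq\vec{P}_w-\vec{P}_w$ is finitely generated, since every subgroup of $\setZ^d$ is. Suppose we show that the topological closure $\vec{D}_w\coloneqq\overline{\setQ_{\geq 0}\vec{P}_w}$ is a finitely-generated cone. Then $\vec{Q}_w=\vec{G}\cap\vec{D}_w$ is the intersection of a lattice with a rational polyhedral cone. By Gordan's lemma (choose a $\setZ$-basis of $\vec{G}$, write $\vec{D}_w$ in these coordinates as a rational polyhedral cone, and take the integer points), this set is a finitely-generated periodic set. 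Equivalently, via \cref{lem:fgper-wqo-on-Zd}, $\leq_{\vec{Q}_w}$ is a wqo, which follows from Dickson's lemma once $\vec{Q}_w$ is embedded in $\setN^k$ by the finitely many integer linear forms defining the facets of $\vec{D}_w$. So everything hinges on proving that $\vec{D}_w$ is polyhedral.

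For $\vec{C}_w=\setQ_{\geq 0}^d$, I would first show that $\overline{\setQ_{\geq 0}\vec{P}_{\rho}}$ is polyhedral for every irun $\rho$, with three steps.
\begin{itemize}
\item \emph{Finite skeleton.} Attach to $\rho$ a finite structure, in the spirit of a KLM/Karp--Miller unfolding. Its nodes are the nodes of $\rho$. At each node one may graft \emph{pumping contexts}: runs with one distinguished hole whose other leaves are initial, and which can be inserted above that node of $\rho$ while keeping the $\trianglelefteq$ relation.
\item \emph{Polyhedral candidate.} Define a linear (Kirchhoff-like) system over this skeleton whose nonnegative rational solutions project onto a polyhedral cone $\vec{D}_\rho$.
\item \emph{Two inclusions.} For $\setQ_{\geq 0}\vec{P}_\rho\subseteq\vec{D}_\rho$, decompose any $\sigma\trianglerighteq\rho$ into $\rho$ plus pumped contexts, read off a solution of the system, and use that $\vec{D}_\rho$ is closed. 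For $\vec{D}_\rho\subseteq\overline{\setQ_{\geq 0}\vec{P}_\rho}$, take a rational point in the relative interior of $\vec{D}_\rho$, scale it to an integer solution, and realise it by gluing contexts. Nonnegativity is ensured by first pumping positive "slack" contexts, and the pieces are combined by the amalgamation property of \cref{lem:ibvas-wpo-and-amalgamation}.
\end{itemize}
In fact the realisation argument should give a stronger statement: every integer point of the relative interior of $\vec{D}_\rho$ has a multiple lying in $\vec{P}_\rho$.

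The step I expect to be hardest is passing from $\vec{P}_\rho$ to $\vec{P}_w=\vec{P}_\rho\cap\vec{C}_w$. In general $\overline{\setQ_{\geq 0}(\vec{P}\cap\vec{C})}$ differs from $\overline{\setQ_{\geq 0}\vec{P}}\cap\vec{C}$.
\begin{itemize}
\item If $\vec{C}_w$ meets the relative interior of $\vec{D}_\rho$, the stronger property above applies. Every point of $\vec{D}_\rho\cap\vec{C}_w$ is a limit of rational points in $\mathrm{relint}(\vec{D}_\rho)\cap\vec{C}_w$, and these have multiples in $\vec{P}_\rho\cap\vec{C}_w$. Hence $\vec{D}_w=\vec{D}_\rho\cap\vec{C}_w$, which is polyhedral.
\item Otherwise, $\vec{D}_\rho\cap\vec{C}_w=\vec{F}\cap\vec{C}_w$ for the smallest face $\vec{F}$ of $\vec{D}_\rho$ meeting $\vec{C}_w$. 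Faces of $\vec{D}_\rho$ are cut out by linear forms $h$ with $h\geq 0$ on $\vec{D}_\rho$. The vectors of $\vec{P}_\rho$ lying on $\vec{F}$ should be exactly those produced by the sub-skeleton of contexts whose $h$-weight is $0$. The same analysis applied to this sub-skeleton gives a polyhedral cone equal to $\vec{F}$, with the relative-interior property.
\end{itemize}
Iterating over faces, which terminates by dimension, then gives that $\vec{D}_w$ is finitely generated. Making this face-by-face correspondence precise is where I expect the real work to lie.
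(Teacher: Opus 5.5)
Your first paragraph is sound, and it matches the last step of the paper: once $\overline{\setQ_{\geq 0}\vec{P}_w}$ is a finitely-generated cone, intersecting it with the group $\vec{P}_w-\vec{P}_w$ gives a finitely-generated periodic set. The gap is in everything that is supposed to establish that polyhedrality.

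\emph{The skeleton is unjustified.} The pumping contexts you graft at a node of $\rho$ with target $\vec{c}$ form an infinite family. Their displacements are exactly the transformer relation $\transformer{\IB}{\vec{c}}$, and the whole difficulty is to show that the cone of this relation is tame. A finite Kirchhoff system capturing it, together with a realisation step that ``pumps positive slack contexts'', amounts to a KLM-style decomposition with its $\theta$-condition. Such a decomposition is not known for BVAS, and for VAS it is the hard part of the theory. If you instead define $\vec{D}_\rho$ from the contexts themselves, the argument is circular. The paper builds nothing of this kind:
\begin{itemize}
\item it imports from \cite{DBLP:conf/fossacs/BiziereLS26} that BVAS reachability sets intersected with semilinear sets are almost semilinear;
\item via the doubling-counter construction (\cref{lem:almostrel}) and \cref{lem:almostperiodic}, it deduces that $\cone{\transformer{\IB}{\vec{c}}}$ is definable;
\item by structural induction on $\rho$, using $\vec{P}_\rho=\{\vec{y}\mid \exists \vec{x}\in\vec{P}_{\rho_1}+\cdots+\vec{P}_{\rho_n}:\ \vec{x}\transformer{\IB}{\vec{c}}\vec{y}\}$ and \cref{lem:coneperstable}, it concludes that $\cone{\vec{P}_\rho}$ is definable.
\end{itemize}

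\emph{Closures are the wrong invariant for intersecting with $\vec{C}_w$,} as you sensed. Even granting that $\vec{D}_\rho=\overline{\cone{\vec{P}_\rho}}$ is polyhedral, your argument does not go through. Your ``relative-interior property'' is no extra information from the skeleton: it holds for every periodic set, by the same ball argument as in \cref{lem:centercylindric}.

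Your face step is false as stated. Take $\vec{P}=\{(0,0)\}\cup\setN_{>0}^2$. Then $\overline{\cone{\vec{P}}}=\setQ_{\geq 0}^2$, yet $\cone{\vec{P}}$ meets the face $\setQ_{\geq 0}\times\{0\}$ only in $\vec{0}$. So the part of $\vec{P}$ lying on a face need not be dense in that face. With $\vec{C}_w\coloneqq\setQ_{\geq 0}\times\{0\}$, your procedure would return $\vec{C}_w$ instead of $\{\vec{0}\}$.

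Iterating correctly over faces requires $\overline{\cone{\vec{P}_\rho}\cap\vec{V}}$ to be finitely generated for \emph{every} vector space $\vec{V}$. That is exactly definability of the non-closed cone $\cone{\vec{P}_\rho}$, by \cite[Theorem 3.5]{DBLP:conf/popl/Leroux11}. The paper carries this stronger invariant throughout. It has $\cone{\vec{P}_w}=\cone{\vec{P}_{\rho_w}}\cap\vec{C}_w$ by \cref{lem:coneperstable}, which is definable as an intersection of definable cones. \cref{lem:clin-of-asymp-def-is-fg} then yields that $\vec{Q}_w$ is finitely generated, with no face-by-face analysis.
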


We introduce a binary relation $\edge$ on directed iruns of $\IB$,
defined by $u \edge v$ if $\alpha \sqsubseteq \beta$
for some $\alpha \in \IRuns{u}$ and $\beta \in \IRuns{v}$.
A set $W$ of directed iruns induces a (directed) graph $(W, \edge[W])$,
where $\edge[W]$ is the intersection of $\edge$ and $W \times W$.
Let $\edgestar[W]$ denote the reflexive and transitive closure of $\edge[W]$.
We say that $W$ is \emph{homogeneous} when
$\vec{C}_u = \vec{C}_v$ for every $u, v \in W$ such that
$u \edgestar[W] v \edgestar[W] u$.
In that case,
we also have
$\vec{Q}_u = \vec{Q}_v$ for every $u, v \in W$ such that
$u \edgestar[W] v \edgestar[W] u$.
This property is an immediate consequence of \cref{lem:Qinclusion} below.
Assuming that $W$ is homogeneous,
for each SCC $\Gamma$ of the graph $(W, \edge[W])$,
we let $\vec{C}_\Gamma$ and $\vec{Q}_\Gamma$ denote the common $\vec{C}_\gamma$ and $\vec{Q}_\gamma$
where $\gamma$ ranges over $\Gamma$.

\begin{lemma}\label{lem:Qinclusion}
  It holds that
  $\vec{Q}_u \subseteq \vec{Q}_v$ for every directed iruns $u, v$ of $\IB$ such that
  $u \edge v$ and $\vec{C}_u \subseteq \vec{C}_v$.
\end{lemma}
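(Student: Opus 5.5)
The plan is to prove a slightly stronger, purely set-theoretic statement: there is a fixed vector $\vec{q} \in \vec{P}_v$ such that $\vec{P}_u \subseteq \vec{P}_v - \vec{q}$. The inclusion $\vec{Q}_u \subseteq \vec{Q}_v$ then follows from the definition of the closed linearization.

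We fix iruns $\alpha \in \IRuns{u}$ and $\beta \in \IRuns{v}$ with $\alpha \sqsubseteq \beta$, as given by $u \edge v$. Since $\rho_v \trianglelefteq \beta$ and $\tgt{\beta} \in \tgt{\rho_v} + \vec{C}_v$, the vector $\vec{q} \coloneqq \tgt{\beta} - \tgt{\rho_v}$ lies in $\vec{P}_v$.

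Now let $\vec{p} \in \vec{P}_u$. By definition there is an irun $\sigma$ with $\rho_u \trianglelefteq \sigma$ and $\tgt{\sigma} = \tgt{\rho_u} + \vec{p}$, and moreover $\vec{p} \in \vec{C}_u$. We apply the amalgamation property of \cref{lem:ibvas-wpo-and-amalgamation} to $\rho_u \trianglelefteq \alpha, \sigma$. This yields an irun $\tau$ with $\alpha \trianglelefteq \tau$ and $\tgt{\tau} = \tgt{\alpha} + \vec{p}$.

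The key step is a \emph{substitution lemma}. Given $\alpha \sqsubseteq \beta$ and $\alpha \trianglelefteq \tau$ with $\tgt{\tau} = \tgt{\alpha} + \vec{p}$, $\vec{p} \in \setN^d$, replacing the occurrence of $\alpha$ inside $\beta$ by $\tau$ yields an irun $\beta'$ with $\beta \trianglelefteq \beta'$ and $\tgt{\beta'} = \tgt{\beta} + \vec{p}$. I would prove this by induction on the depth of the occurrence of $\alpha$ in $\beta$.
\begin{itemize}
\item At each ancestor of the occurrence, we keep the same arity and action and add $\vec{p}$ to the target. Since $\vec{p} \geq \vec{0}$, all targets stay in $\setN^d$, so this is a legal run.
\item Its source is still in $\vec{\Delta}_0^+$, because $\tau$ is an irun.
\item The relation $\beta \trianglelefteq \beta'$ is witnessed at each level by taking the node itself as $\beta'' \sqsubseteq \beta'$. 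The targets increase, the action and arity are unchanged, the unchanged children are related by reflexivity, and the modified child is related by the induction hypothesis. The base case is $\alpha \trianglelefteq \tau$.
\end{itemize}
This is the only place where real care is needed, namely checking that the inductive definition of $\trianglelefteq$ is met. I expect it to be routine.

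Consequently $\rho_v \trianglelefteq \beta \trianglelefteq \beta'$, using transitivity of the partial order. Also $\tgt{\beta'} - \tgt{\rho_v} = \vec{q} + \vec{p}$, and this vector lies in $\vec{C}_v$ because $\vec{q} \in \vec{C}_v$, $\vec{p} \in \vec{C}_u \subseteq \vec{C}_v$, and $\vec{C}_v$ is a cone. Hence $\vec{q} + \vec{p} \in \vec{P}_v$, which proves $\vec{P}_u \subseteq \vec{P}_v - \vec{q}$.

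It remains to compare the two linearizations.
\begin{itemize}
\item From $\vec{P}_u \subseteq \vec{P}_v - \vec{q}$ with $\vec{q} \in \vec{P}_v$, we get $\vec{P}_u - \vec{P}_u \subseteq \vec{P}_v - \vec{P}_v$.
\item For $\vec{p} \in \vec{P}_u$ and $k \in \setN_{>0}$, periodicity gives $k\vec{p} \in \vec{P}_u$, so $k\vec{p} + \vec{q} \in \vec{P}_v$. Therefore $\vec{p} + \vec{q}/k \in \setQ_{\geq 0}\vec{P}_v$, and letting $k \to \infty$ shows $\vec{p} \in \overline{\setQ_{\geq 0}\vec{P}_v}$.
\item Since $\overline{\setQ_{\geq 0}\vec{P}_v}$ is a topologically closed cone, it contains $\overline{\setQ_{\geq 0}\vec{P}_u}$.
\end{itemize}
Intersecting the first and last inclusions gives $\vec{Q}_u = \clin{\vec{P}_u} \subseteq \clin{\vec{P}_v} = \vec{Q}_v$.
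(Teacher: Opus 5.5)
Your proof is correct and follows essentially the same route as the paper. Both use amalgamation at $\alpha$, then graft the amalgam into $\beta$ to obtain the translate inclusion $\vec{q}+\vec{P}_u\subseteq\vec{P}_v$ with $\vec{q}=\tgt{\beta}-\tgt{\rho_v}$, and finally compare the generated groups and the closed cones. The differences are cosmetic: the paper first proves the translate inclusion for $\vec{P}_{\rho_u}$ and $\vec{P}_{\rho_v}$ and intersects with the cones afterwards, while you carry the cone condition throughout. You also spell out the induction for the substitution step, which the paper leaves as routine.
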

\begin{proof}
  Let $u$ and $v$ be directed iruns satisfying
  $u \edge v$ and $\vec{C}_u \subseteq \vec{C}_v$.
  By definition of $\edge$,
  there exist $\alpha \in \IRuns{u}$ and $\beta \in \IRuns{v}$ such that
  $\alpha \sqsubseteq \beta$.
  It holds that $\rho_u \trianglelefteq \alpha$ and $\rho_v \trianglelefteq \beta$.
  We first show that $\vec{z} + \vec{P}_{\rho_u} \subseteq \vec{P}_{\rho_v}$ where
  $\vec{z} \equaldef \tgt{\beta} - \tgt{\rho_v}$.
  Let $\vec{x} \in \vec{P}_{\rho_u}$.
  We have
  $\vec{x} = \tgt{\sigma} - \tgt{\rho_u}$ for some irun $\sigma$ with $\rho_u \trianglelefteq \sigma$.
  As $\rho_u \trianglelefteq \alpha, \sigma$,
  we get from the amalgamation property of \cref{lem:ibvas-wpo-and-amalgamation} that
  there exists an irun $\hat{\alpha}$ such that
  $\alpha, \sigma \trianglelefteq \hat{\alpha}$ and $\tgt{\rho_u} + \tgt{\hat{\alpha}} = \tgt{\alpha} + \tgt{\sigma}$.
  Since $\alpha \sqsubseteq \beta$ and $\alpha \trianglelefteq \hat{\alpha}$,
  there exists an irun $\hat{\beta}$ that satisfies
  $\beta \trianglelefteq \hat{\beta}$ and
  $\tgt{\hat{\beta}} - \tgt{\beta} = \tgt{\hat{\alpha}} - \tgt{\alpha}$.
  This claim can formally be proved by induction on $\beta$.
  Intuitively,
  $\hat{\beta}$ is obtained from $\beta$ by replacing $\alpha$ by $\hat{\alpha}$ and
  translating the target configuration of each ``ancestor'' of $\alpha$ in $\beta$ by
  the vector $\tgt{\hat{\alpha}} - \tgt{\alpha}$.
  The resulting irun $\hat{\beta}$ satisfies $\beta \trianglelefteq \hat{\beta}$ by construction.
  It is routinely checked that
  $\vec{z} + \vec{x} = \tgt{\beta} - \tgt{\rho_v} + \tgt{\sigma} - \tgt{\rho_u} = \tgt{\hat{\beta}} - \tgt{\rho_v}$,
  hence,
  $\vec{z} + \vec{x}$ is in $\vec{P}_{\rho_v}$ since $\rho_v \trianglelefteq \beta \trianglelefteq \hat{\beta}$.
  We have shown that $\vec{z} + \vec{P}_{\rho_u} \subseteq \vec{P}_{\rho_v}$.

  Now observe that $\vec{z} = \tgt{\beta} - \tgt{\rho_v}$ is in $\vec{C}_v$
  since $\beta \in \IRuns{v}$.
  Recall also that $\vec{C}_u \subseteq \vec{C}_v$ by assumption.
  It follows that
  $\vec{z} + \vec{P}_u
  =
  \vec{z} + (\vec{P}_{\rho_u} \cap \vec{C}_u)
  \subseteq
  \vec{z} + (\vec{P}_{\rho_u} \cap \vec{C}_v)
  \subseteq
  (\vec{z} + \vec{P}_{\rho_u}) \cap \vec{C}_v
  \subseteq
  \vec{P}_{\rho_v} \cap \vec{C}_v
  =
  \vec{P}_v$.
  From $\vec{z} + \vec{P}_u \subseteq \vec{P}_v$,
  we derive firstly that
  $(\vec{P}_u - \vec{P}_u) \subseteq (\vec{P}_v - \vec{P}_v)$,
  and secondly that
  $\vec{z} + k\vec{p} \in \vec{P}_v$
  for every $\vec{p} \in \vec{P}_u$ and $k \in \setN$,
  hence,
  $\vec{P}_u \subseteq \overline{\setQ_{\geq 0} \vec{P}_v}$.
  The latter inclusion entails that
  $\overline{\setQ_{\geq 0} \vec{P}_u} \subseteq \overline{\setQ_{\geq 0} \vec{P}_v}$.
  We have shown that
  $\clin{\vec{P}_u} \subseteq \clin{\vec{P}_v}$,
  which concludes the proof of the lemma
  as $\vec{Q}_u = \clin{\vec{P}_u}$ and $\vec{Q}_v = \clin{\vec{P}_v}$.
\end{proof}

Given an IBVAS $\IB$ and a semilinear set $\vec{\Phi} \subseteq \setN^d$ such that $\reach{\IB} \subseteq \vec{\Phi}$,
a \emph{safety witness} for $\IB$ and $\vec{\Phi}$ is
a pair $(\vec{A}, W)$ where $\vec{A} \subseteq \vec{\Phi}$ is a semilinear attractor for $\IB$ and
$W$ is an homogeneous finite set of directed iruns of $\IB$ such that
for every irun $\rho \in \IRuns{\IB}$, we have $\tgt{\rho} \in \vec{A}$ or $\rho \in \IRuns{w}$ for some $w \in W$.
One may wonder about the existence of safety witnesses.
Consider the set $W_{min}$ of directed iruns $(\rho, \setQ_{\geq 0}^d)$ such that $\rho \in \Min{\IRuns{\IB}}{\trianglelefteq}$.
The set $W_{min}$ is finite by \cref{lem:ibvas-wpo-and-amalgamation},
hence,
the pair $(\emptyset, W_{min})$ is a safety witness for $\IB$ and $\vec{\Phi}$.

\smallskip

To prove \cref{thm:main},
we iteratively transform safety witnesses until we get a safety witness of the form $(\vec{A}, \emptyset)$,
in which case the semilinear attractor $\vec{A} \subseteq \vec{\Phi}$ is an inductive invariant for $\IB$.
More precisely,
we show the key property that
for every safety witness $(\vec{A}, W)$ with $W \neq \emptyset$,
there exists a safety witness $(\vec{A}', W')$ such that $W'$ is smaller than $W$
for a natural well-founded relation on finite sets of directed iruns.
This entails that a safety witness of the form $(\vec{A}, \emptyset)$ exists,
and concludes the proof of \cref{thm:main}.
The remainder of the paper is devoted to the proof of this key property.
To give a taste of the proof,
we conclude this section with an overview of the construction of $(\vec{A}', W')$ from $(\vec{A}, W)$.
This overview will be reformulated with full details in \cref{sec:wrapup}.

\smallskip

Consider a safety witness $(\vec{A}, W)$ with $W \neq \emptyset$.
Pick an arbitrary bottom SCC $\Gamma$ of the graph $(W, \edge[W])$.
We introduce the finitary $\vec{Q}_\Gamma$-cylindric set
$\vec{Y}_\Gamma \coloneqq \{\tgt{\rho_\gamma} \mid \gamma \in \Gamma\}  + \vec{Q}_\Gamma$.
Our approach is decomposed into two main steps:
\begin{enumerate}
\item
  Using \cref{lem:incremental-attractor},
  we show how to construct a finitary $\vec{Q}_\Gamma$-cylindric
  set $\vec{X} \subseteq \vec{Y}_{\Gamma}$ such that
  firstly $\vec{A}' \coloneqq \vec{A} \cup \vec{X}$ is an attractor contained in $\vec{\Phi}$,
  secondly $\vec{Y}_\Gamma \subseteq \vec{X} - \vec{Q}_\Gamma$, and
  thirdly $\bvasrel{\IB}$ has an empty intersection with $\vec{X} \times (\vec{Y}_\Gamma \setminus \vec{X})$.
\item
  We show that there exist
  some finitely-generated cones $\vec{C}_1, \ldots, \vec{C}_n$ contained in $\vec{C}_\Gamma$ and
  a disjoint decomposition\footnote{%
    \label{footnote:disjoint-decomposition}
    A disjoint decomposition of a set $X$ is a partition of $X$ that allows the emptyset, i.e., a finite sequence $X_1,\ldots,X_n$ of subsets of $X$ such that $X=X_1\cup\cdots\cup X_n$ and $X_i\cap X_j=\emptyset$ for every $i\neq j$.
  } $\vec{S}_1, \ldots, \vec{S}_n$ of $\vec{Y}_\Gamma \setminus \vec{X}$
  such that
  $\dim{\vec{C}_i-\vec{C}_i}<\dim{\vec{C}_\Gamma-\vec{C}_\Gamma}$ and
  $\vec{S}_i$ is finitary $(\vec{Q}\cap\vec{C}_i)$-cylindric
  for every $1\leq i\leq n$.
  The homogeneous set $W'$ of directed iruns is then defined as $W' \coloneqq (W\setminus \Gamma)\cup (W_1\cup\ldots W_n)$
  where each $W_i$ is a finite set of directed iruns obtained from~$\vec{S}_i$.
\end{enumerate}
By construction,
the pair $(\vec{A}', W')$ defined in this way is a safety witness and $W'$ is smaller than $W$.

\smallskip

The rest of the paper presents the details of our proof.
\cref{sec:wsvas} is devoted to the first step (construction of $\vec{A}'$) and
\cref{sec:stripping} is devoted to the second step (construction of $W'$).
Before that,
we provide in the next section some geometrical properties of periodic sets and their linearizations,
and we prove \cref{lem:Qfinite} along the way.

\section{Geometry of Branching VAS Reachability Sets}\label{sec:geo}
In \cite{DBLP:conf/popl/Leroux11}, reachability sets of initialized VAS (a VAS is a BVAS $(\vec{\Delta}_1,\ldots,\vec{\Delta}_r$) such that $r=1$) was proved to be \emph{almost semilinear}, a class of sets extending the class of semilinear sets. This result was recently extended to the class of IBVAS in~\cite{DBLP:conf/fossacs/BiziereLS26} by observing that IBVAS reachability sets are \emph{sections} of initialized VAS reachability sets. 

In \cref{sub:geolin}, we recall the class of almost semilinear sets, based on the class of definable cones and asymptotically-definable periodic sets. We explain why the two linearization operators introduced in \cref{sec:prelim} provide  ways to transfer properties from cones to periodic sets. In~\cref{subsec:pavas} we prove by structural induction on a irun $\rho$ that the periodic set $\vec{P}_\rho \coloneqq \{\tgt{\sigma} - \tgt{\rho} \mid \sigma \in \IRuns{\IB}\wedge \rho\trianglelefteq \sigma\}$ is asymptotically-definable. From this property, we derive a proof of \cref{lem:Qfinite}.

\subsection{Some Geometric Properties of Periodic Sets and Their Linearizations}\label{sub:geolin}
In \cref{sec:prelim}, we introduced the linearization and the closed linearization of a periodic set $\vec{P}\subseteq \setZ^d$ respectively by $\lin{\vec{P}}\coloneqq(\vec{P}-\vec{P})\cap\cone{\vec{P}}$ and $\clin{\vec{P}}\coloneqq(\vec{P}-\vec{P})\cap \overline{\cone{\vec{P}}}$. As $\vec{P}-\vec{P}$ is a group, it is finitely-generated. It follows that properties on cones $\cone{\vec{P}}$ and $\overline{\cone{\vec{P}}}$ can be transfered to properties on the full periodic sets $\lin{\vec{P}}$ and $\clin{\vec{P}}$.

\medskip

A cone $\vec{C}\subseteq\setQ^d$ is said to be \emph{definable}~\cite{DBLP:conf/popl/Leroux11,Turing-100:Vector_Addition_Systems_Reachability} if it can be denoted by a formula in $\fo{\setQ,+,\leq}$. It follows that finitely-generated cones are definable, and the intersection as well as the sum of two definable cones are definable. Recall~\cite[Theorem 3.5]{DBLP:conf/popl/Leroux11} that a cone $\vec{C}\subseteq \setQ^d$ is definable if, and only if, $\overline{\vec{C}\cap\vec{V}}$ is a finitely-generated cone for every vector space $\vec{V}\subseteq\setQ^d$. A periodic set $\vec{P}\subseteq \setZ^d$ is said to be \emph{asymptotically-definable}~\cite{DBLP:conf/popl/Leroux11} if the cone $\cone{\vec{P}}$ is definable. It follows that finitely-generated periodic sets are asymptotically-definable. 

\begin{remark}
  Since $\fo{\setQ,+,\leq}$ admits quantification elimination, a subset of $\setQ^d$ is definable if, and only if, it is a boolean combination of \emph{open and closed half-spaces}, i.e. sets of the form $\{\vec{x}\in\setQ^d \mid \scalar{\vec{a}}{\vec{x}}\sim 0\}$ where $\vec{a}\in\setZ^d$ and $\sim$ is either $>$ or $\geq$. One may wrongly think that a cone $\vec{C}\subseteq \setQ^d$ is definable if, and only if, $\vec{C}\setminus\{\vec{0}\}$ is a conjunction of open and closed half-spaces (as claimed in~\cite[second paragraph after Lemma 2.1]{DBLP:conf/concur/GuttenbergRE23}). For a counter-example, just pick the definable cone $\vec{C}\coloneqq\{(x,y)\in\setQ_{\geq 0}\times\setQ \mid x=0\Rightarrow y\geq 0\}$.
\end{remark}

\begin{example}
  The periodic set $\vec{P}\coloneqq \{(0,0)\}\cup \setN_{>0}^2$ is asymptotically-definable. More generally, semilinear periodic sets are asymptotically-definable (see \cref{lem:almostperiodic} below).
\end{example}

The following lemma shows that asymptotically-definable periodic sets are stable by intersection, sum, and projections.
\begin{lemma}\label{lem:coneperstable}
  For every periodic sets $\vec{P},\vec{Q}\subseteq \setZ^d$ and for every periodic set $\vec{R}\subseteq \setZ^k\times\setZ^d$ we have:
  \begin{itemize}
  \item $\cone{\vec{P}\cap\vec{Q}}=\cone{\vec{P}}\cap\cone{\vec{Q}}$,
  \item $\cone{\vec{P}+\vec{Q}}=\cone{\vec{P}}+\cone{\vec{Q}}$, and
  \item $\cone{\{\vec{y}\in\setZ^d \mid \exists \vec{x}\in\setZ^k,\, (\vec{x},\vec{y})\in\vec{R}\}}= \{\vec{y}\in\setQ^d \mid \exists \vec{x}\in\setQ^k,\, (\vec{x},\vec{y})\in\cone{\vec{R}}\}$.
  \end{itemize}
\end{lemma}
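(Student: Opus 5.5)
The plan is to prove each of the three identities by double inclusion, using repeatedly the fact noted in \cref{sec:prelim} that $\cone{\vec{P}}=\setQ_{\geq 0}\vec{P}$ for every periodic set $\vec{P}$. The only real work is clearing denominators: every rational vector in a cone spanned by a periodic set can be written as $\frac{1}{k}\vec{p}$ with $k\in\setN_{>0}$ and $\vec{p}$ in the periodic set.

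\emph{Intersection.} The inclusion $\cone{\vec{P}\cap\vec{Q}}\subseteq\cone{\vec{P}}\cap\cone{\vec{Q}}$ follows from monotonicity of $\cone{\cdot}$. Conversely, let $\vec{v}\in\cone{\vec{P}}\cap\cone{\vec{Q}}$, with $\vec{v}=\lambda\vec{p}=\mu\vec{q}$, $\vec{p}\in\vec{P}$, $\vec{q}\in\vec{Q}$ and $\lambda,\mu\in\setQ_{\geq 0}$. If $\vec{v}=\vec{0}$ the claim is trivial, so assume $\lambda,\mu>0$. Write $\lambda=a/m$ and $\mu=b/m$ with $a,b,m\in\setN_{>0}$. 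Then $b\,a\vec{p}=a\,b\vec{q}$, i.e. $ab\cdot m\vec{v}$ equals $b(a\vec{p})$, which lies in $\vec{P}$ since $\vec{P}$ is closed under sums. It also equals $a(b\vec{q})\in\vec{Q}$. Hence $abm\vec{v}\in\vec{P}\cap\vec{Q}$, and therefore $\vec{v}\in\setQ_{\geq 0}(\vec{P}\cap\vec{Q})$.

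\emph{Sum.} The inclusion $\cone{\vec{P}+\vec{Q}}\subseteq\cone{\vec{P}}+\cone{\vec{Q}}$ holds because the right-hand side is a cone containing $\vec{P}+\vec{Q}$. Conversely, take $\lambda\vec{p}+\mu\vec{q}$ and write $\lambda=a/m$, $\mu=b/m$ over a common denominator. Then
\[\lambda\vec{p}+\mu\vec{q}=\tfrac{1}{m}(a\vec{p}+b\vec{q}),\]
and $a\vec{p}+b\vec{q}\in\vec{P}+\vec{Q}$ because $a\vec{p}\in\vec{P}$ and $b\vec{q}\in\vec{Q}$.

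\emph{Projection.} Let $\pi$ denote the projection onto the last $d$ coordinates. The set $\pi(\vec{R})$ is periodic, and the claim is that $\cone{\pi(\vec{R})}=\pi(\cone{\vec{R}})$, where the right-hand side is exactly the displayed set of $\vec{y}$ admitting a rational $\vec{x}$ with $(\vec{x},\vec{y})\in\cone{\vec{R}}$. This holds because $\pi$ is linear and commutes with $\setQ_{\geq 0}$-scaling:
\[\setQ_{\geq 0}\pi(\vec{R})=\pi(\setQ_{\geq 0}\vec{R}).\]

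None of the steps is a genuine obstacle. The only point needing care is the intersection case, where the two representations $\lambda\vec{p}$ and $\mu\vec{q}$ of the same vector must be brought to a common integer multiple. This works only because periodic sets are closed under multiplication by natural numbers.
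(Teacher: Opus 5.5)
Your proof is correct and follows essentially the same route as the paper's: double inclusion, with the nontrivial direction obtained by clearing denominators using $\cone{\vec{P}}=\setQ_{\geq 0}\vec{P}$. The paper picks $n,m\in\setN_{>0}$ with $n\vec{c}\in\vec{P}$ and $m\vec{c}\in\vec{Q}$ and uses $nm\vec{c}$, and it handles the projection case via linearity of $\pi$, as you do.

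There is one harmless arithmetic slip in your intersection case: $b(a\vec{p})$ equals $bm\vec{v}$, not $abm\vec{v}$. In fact $m\vec{v}=a\vec{p}=b\vec{q}$ already lies in $\vec{P}\cap\vec{Q}$, so your conclusion stands and no further multiple is needed.
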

\begin{proof}
  As $\vec{P}\cap\vec{Q}$ is included in the cone $\cone{\vec{P}}\cap\cone{\vec{Q}}$, by minimality of the spanning cone, we deduce that $\cone{\vec{P}\cap\vec{Q}}\subseteq \cone{\vec{P}}\cap\cone{\vec{Q}}$. For the converse inclusions, let $\vec{c}\in \cone{\vec{P}}\cap\cone{\vec{Q}}$. There exists $n,m\in\setN_{>0}$ such that $n\vec{c}\in\vec{P}$ and $m\vec{c}\in \vec{Q}$. As $\vec{P}$ and $\vec{Q}$ are periodic, we deduce that $nm\vec{c}\in\vec{P}\cap\vec{Q}$. Hence $\vec{c}\in \cone{\vec{P}\cap\vec{Q}}$. We have prove the first equality.

  As $\vec{P}+\vec{Q}$ is included in the cone $\cone{\vec{P}}+\cone{\vec{Q}}$, by minimality of the spanning cone, we deduce that $\cone{\vec{P}+\vec{Q}}\subseteq \cone{\vec{P}}+\cone{\vec{Q}}$. For the converse inclusions, let $\vec{c}\in \cone{\vec{P}}+\cone{\vec{Q}}$. There exists $\vec{x}\in \cone{\vec{P}}$ and $\vec{y}\in\cone{\vec{Q}}$ such that $\vec{c}=\vec{x}+\vec{y}$. It follows that there exists $n,m\in\setN_{>0}$ such that $n\vec{x}\in\vec{P}$ and $m\vec{y}\in \vec{Q}$. As $\vec{P}$ and $\vec{Q}$ are periodic, we deduce that $nm\vec{x}\in\vec{P}$ and $nm\vec{y}\in\vec{Q}$. From $nm\vec{x}+nm\vec{y}\in \vec{P}+\vec{Q}$, we derive $\vec{c}\in\cone{\vec{P}+\vec{Q}}$. We have proved the second equality.

  Finally, let us introduce the projection $\pi:\setQ^k\times\setQ^d\rightarrow\setQ^d$ defined by $\pi(\vec{x},\vec{y})=\vec{y}$. As $\pi(\vec{R})$ is included in the cone $\pi(\cone{\vec{R}})$, by minimality of the spanning cone we deduce that $\cone{\pi(\vec{R})}\subseteq \pi(\cone{\vec{R}})$. For the converse inclusion, let $\vec{y}\in \pi(\cone{\vec{R}})$. There exist $\vec{x}\in\setQ^k$ such that $(\vec{x},\vec{y})\in\cone{\vec{R}}$. It follows that there exists $n\in\setN_{>0}$ such that $n(\vec{x},\vec{y})\in \vec{R}$. Hence $n\vec{y}\in \pi(\vec{R})$ and we deduce that $\vec{y}\in\cone{\pi(\vec{R})}$. We have proved\footnote{The same proof shows that $\cone{\pi(\vec{R})}=\pi(\cone{\vec{R}})$ for any linear mapping $\pi$.} the equality $\cone{\pi(\vec{R})}=\pi(\cone{\vec{R}})$. That is the third equality.
\end{proof}

As the intersection of a group (which is necessarily finitely-generated) with a definable cone is Presburger-definable, hence, semilinear, we easily derive that if $\vec{P}\subseteq\setZ^d$ is a periodic set, then $\lin{\vec{P}}$ is semilinear, if and only if, $\vec{P}$ is asymptotically-definable. Moreover, as the intersection of a group with a finitely-generated cone is a finitely-generated periodic set, we also deduce that a periodic set $\vec{P}\subseteq \setZ^d$ is asymptotically-definable if, and only if, $\clin{\vec{P}\cap\vec{V}}$ is a finitely-generated periodic set for every vector space $\vec{V}\subseteq\setQ^d$. In particular, the following result holds.
\begin{lemma}
  \label{lem:clin-of-asymp-def-is-fg}
  For every asymptotically-definable periodic set $\vec{P}\subseteq \setN^d$,
  the periodic set $\clin{\vec{P}}$ is finitely-generated.
\end{lemma}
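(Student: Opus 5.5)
The plan is to show that $\clin{\vec{P}}$ is the intersection of a group with a finitely-generated cone, and then use the fact recalled above that such an intersection is a finitely-generated periodic set.

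First, $\vec{G}\coloneqq\vec{P}-\vec{P}$ is a subgroup of $\setZ^d$, hence finitely-generated. Next, set $\vec{C}\coloneqq\cone{\vec{P}}$; it is definable because $\vec{P}$ is asymptotically-definable. Apply \cite[Theorem 3.5]{DBLP:conf/popl/Leroux11} to $\vec{C}$ with the vector space $\vec{V}\coloneqq\setQ^d$. This shows that $\overline{\vec{C}\cap\setQ^d}=\overline{\cone{\vec{P}}}$ is a finitely-generated cone. Therefore $\clin{\vec{P}}=\vec{G}\cap\overline{\cone{\vec{P}}}$ is the intersection of a group with a finitely-generated cone $\vec{D}$.

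It remains to justify that $\vec{G}\cap\vec{D}$ is a finitely-generated periodic set. I would argue as follows. The set $\vec{D}$ is a finite intersection of closed rational half-spaces $\{\vec{x}\mid\scalar{\vec{a}_i}{\vec{x}}\geq 0\}$ by the Minkowski--Weyl theorem. Write $\vec{G}=M\setZ^k$ for an integer matrix $M$. Then $\vec{G}\cap\vec{D}$ is the image under $M$ of the integer points of the rational polyhedral cone $\{\vec{z}\in\setQ^k\mid \scalar{\vec{a}_i}{M\vec{z}}\geq 0\}$. By Gordan's lemma (Hilbert basis), the integer points of a rational polyhedral cone form a finitely-generated monoid. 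Images of finitely-generated monoids under linear maps are finitely-generated, so $\vec{G}\cap\vec{D}$ is finitely-generated.

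Alternatively, one can use \cref{lem:fgper-wqo-on-Zd}: $\vec{G}\cap\vec{D}$ is Presburger-definable, hence semilinear, and its cone is finitely generated. Gordan's lemma is the cleaner route, though.

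There is no genuine obstacle here. The only point needing care is that the cited characterization yields closedness and finite generation of the closure of the cone, not of the cone itself, which is exactly why the statement uses $\clin{\cdot}$ rather than $\lin{\cdot}$. The hypothesis $\vec{P}\subseteq\setN^d$ plays no essential role.
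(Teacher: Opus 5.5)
Your proof is correct and follows the same route as the paper. Leroux's characterization of definable cones, applied with $\vec{V}=\setQ^d$, shows that $\overline{\cone{\vec{P}}}$ is finitely generated, and then $\clin{\vec{P}}=(\vec{P}-\vec{P})\cap\overline{\cone{\vec{P}}}$ is a group intersected with a finitely-generated cone, hence finitely generated. The paper takes that last fact for granted, and your Gordan's-lemma argument is a valid justification of it, since Gordan's lemma holds for non-pointed rational polyhedral cones as well.
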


\medskip

An \emph{almost linear set} is a set $\vec{L}\subseteq \setZ^d$ of the form $\vec{b}+\vec{P}$ where $\vec{b}\in\setZ^d$ and $\vec{P}\subseteq \setZ^d$ is an asymptotically-definable periodic set. An \emph{almost semilinear set} is a finite union of almost linear sets.
\begin{lemma}\label{lem:almostperiodic}
  Every almost semilinear periodic set is asymptotically-definable. 
\end{lemma}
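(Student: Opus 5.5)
Let $\vec{P}\subseteq\setZ^d$ be a periodic set that is almost semilinear, so $\vec{P}=\bigcup_{i=1}^{k}(\vec{b}_i+\vec{P}_i)$ with each $\vec{P}_i$ asymptotically-definable. The plan is to show that $\cone{\vec{P}}$ equals a finite sum of definable cones, namely
\[
  \cone{\vec{P}} \;=\; \sum_{i\in I}\bigl(\setQ_{\geq 0}\vec{b}_i+\cone{\vec{P}_i}\bigr)\quad\text{with } I\coloneqq\{1,\ldots,k\},
\]
from which definability follows. This works because finitely-generated cones (here the rays $\setQ_{\geq 0}\vec{b}_i$) are definable and sums of definable cones are definable. (If $k=0$ then $\vec{P}$ would be empty, contradicting $\vec{0}\in\vec{P}$, so $k\geq 1$.)

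\emph{Inclusion $\subseteq$.} Each $\vec{b}_i+\vec{p}$ with $\vec{p}\in\vec{P}_i$ lies in $\setQ_{\geq 0}\vec{b}_i+\cone{\vec{P}_i}$. So $\vec{P}$ is contained in the right-hand side, which is a cone. By minimality of the spanning cone, $\cone{\vec{P}}$ is contained in the right-hand side.

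\emph{Inclusion $\supseteq$.} It suffices to show that $\vec{b}_i\in\cone{\vec{P}}$ and $\cone{\vec{P}_i}\subseteq\cone{\vec{P}}$ for every $i$, since $\cone{\vec{P}}$ is closed under sums and nonnegative scaling.

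The first fact is immediate: $\vec{b}_i=\vec{b}_i+\vec{0}\in\vec{P}$ because $\vec{0}\in\vec{P}_i$.

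For the second, take $\vec{c}\in\cone{\vec{P}_i}$ and pick $n\geq 1$ with $n\vec{c}\in\vec{P}_i$. Since $\vec{P}_i$ is periodic, $\vec{b}_i+mn\vec{c}\in\vec{P}$ for every $m\in\setN$. Hence
\[
  \tfrac{1}{mn}\vec{b}_i+\vec{c}\in\cone{\vec{P}}\quad\text{for every } m\geq 1,
\]
and letting $m\to\infty$ gives only $\vec{c}\in\overline{\cone{\vec{P}}}$. This limiting step is the main obstacle, because $\cone{\vec{P}}$ need not be closed. The fix is to use periodicity of $\vec{P}$ itself rather than a limit. We show $\vec{c}=\tfrac{1}{n}\bigl((\vec{b}_i+n\vec{c})-\vec{b}_i\bigr)$ lies in $\cone{\vec{P}}$ as follows.
\begin{itemize}
  \item The vector $\vec{b}_i+n\vec{c}$ lies in $\vec{P}$.
  \item Its double $2\vec{b}_i+2n\vec{c}$ lies in $\vec{P}$, since $\vec{P}$ is periodic.
  \item Also $\vec{b}_i+2n\vec{c}\in\vec{P}$, since $2n\vec{c}\in\vec{P}_i$.
\end{itemize}
Comparing the last two, the cone generated contains $\vec{b}_i+2n\vec{c}$ and $\vec{b}_i$, but a difference is still not directly available. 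Instead, we avoid differences entirely and use the sum structure. The right-hand side is unchanged if we replace the generators $\setQ_{\geq 0}\vec{b}_i+\cone{\vec{P}_i}$ by the cones $\vec{D}_i\coloneqq\cone{\vec{b}_i+\vec{P}_i}$. Indeed, $\vec{D}_i$ contains $\vec{b}_i$ and every $\vec{b}_i+\vec{p}$, and each element $\lambda\vec{b}_i+\vec{x}$ with $\lambda>0$ can be written as $\lambda(\vec{b}_i+\vec{x}/\lambda)$ up to rescaling $\vec{x}$ into $\vec{P}_i$. The edge case $\lambda=0$ is handled separately below.

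So we establish the cleaner identity
\[
  \cone{\vec{P}}=\sum_i\cone{\vec{b}_i+\vec{P}_i},
\]
where the inclusion $\subseteq$ is as before and $\supseteq$ holds since each $\vec{b}_i+\vec{P}_i\subseteq\vec{P}$. It then remains to show each $\cone{\vec{b}_i+\vec{P}_i}$ is definable. We claim
\[
  \cone{\vec{b}_i+\vec{P}_i}=\setQ_{>0}\vec{b}_i+\cone{\vec{P}_i}\;\cup\;\{\vec{0}\}.
\]
This uses the fact that any positive combination $\sum_j\lambda_j(\vec{b}_i+\vec{p}_j)$ equals $\Lambda\vec{b}_i+\sum_j\lambda_j\vec{p}_j$ with $\Lambda=\sum_j\lambda_j>0$, and conversely $\Lambda\vec{b}_i+\vec{x}=\Lambda\bigl(\vec{b}_i+\tfrac{1}{\Lambda}\vec{x}\bigr)$, where $\tfrac{1}{\Lambda}\vec{x}=\tfrac{1}{N}\vec{q}$ with $\vec{q}\in\vec{P}_i$. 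This rewrites as a combination of $\vec{b}_i+\vec{q}$ and $\vec{b}_i$ with positive weights summing to $\Lambda$ (choose $N$ so that $N\Lambda\geq 1$).

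The set $\setQ_{>0}\vec{b}_i+\cone{\vec{P}_i}\cup\{\vec{0}\}$ is first-order definable in $\fo{\setQ,+,\leq}$ from a formula defining $\cone{\vec{P}_i}$, by existential quantification over $\Lambda$ and $\vec{x}$. Finite sums of definable cones are definable, which concludes the proof.
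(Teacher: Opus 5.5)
Your final argument is correct, but the first half of the proposal should be deleted rather than patched. The opening identity $\cone{\vec P}=\sum_i(\setQ_{\geq 0}\vec b_i+\cone{\vec P_i})$ is false in general. So is the later sentence claiming this right-hand side is unchanged when the summands are replaced by $\cone{\vec b_i+\vec P_i}$. For a counterexample, take $\vec P\coloneqq\{(0,0)\}\cup((1,0)+\setN^2)$, with $\vec b_1=(0,0)$, $\vec P_1=\{(0,0)\}$, $\vec b_2=(1,0)$, $\vec P_2=\setN^2$. Then $\cone{\vec P}=\{\vec 0\}\cup(\setQ_{>0}\times\setQ_{\geq 0})$ does not contain $(0,1)$, whereas $\setQ_{\geq 0}(1,0)+\cone{\setN^2}=\setQ_{\geq 0}^2$.

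What remains is a complete proof. First, $\cone{\vec P}=\sum_i\cone{\vec b_i+\vec P_i}$, which holds for the cone spanned by any finite union. Second, $\cone{\vec b_i+\vec P_i}=\{\vec 0\}\cup(\setQ_{>0}\vec b_i+\cone{\vec P_i})$, which is definable. In the converse half of that last identity the weights are $\Lambda/N$ on $\vec b_i+\vec q$ and $\Lambda(N-1)/N$ on $\vec b_i$. These are nonnegative (not necessarily positive) for every $N\geq 1$, so the condition $N\Lambda\geq 1$ is unnecessary.

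This differs from the paper, which asserts $\vec P=\per{\{\vec b_1,\ldots,\vec b_n\}}+\vec P_1+\cdots+\vec P_n$ and then applies \cref{lem:coneperstable}. That equality needs $\vec P_i\subseteq\vec P$, which fails in general. It fails already for $\{0\}\cup(2+\setN)$ written as $(0+\{0\})\cup(2+\setN)$. Its cone-level consequence $\cone{\vec P}=\cone{\{\vec b_1,\ldots,\vec b_n\}}+\sum_i\cone{\vec P_i}$ is exactly your abandoned first plan, and the two-dimensional example above refutes it. The obstruction is the one you spotted: knowing $\vec b_i+mn\vec c\in\vec P$ for all $m\in\setN$ only places $\vec c$ in $\overline{\cone{\vec P}}$. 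Your route avoids this by keeping the strict coefficient $\setQ_{>0}$ on $\vec b_i$. So it is a genuine repair of the paper's argument, not just an alternative. As a side benefit, it never uses periodicity of $\vec P$, so it shows that $\cone{\vec S}$ is definable for every almost semilinear set $\vec S$.
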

\begin{proof}
  Assume that $\vec{P}\coloneqq (\vec{b}_1+\vec{P}_1)\cup (\vec{b}_n+\vec{P}_n)$ is a periodic set such that $\vec{b}_1,\ldots,\vec{b}_n\in\setZ^d$, and $\vec{P}_1,\ldots,\vec{P}_n\subseteq \setZ^d$ are asymptotically-definable periodic set. Since $\vec{P}$ is periodic, we deduce that $\vec{P}=\per{\{\vec{b}_1,\ldots,\vec{b}_n\}}+\vec{P}_1+\cdots+\vec{P}_n$. It follows that $\vec{P}$ is an asymptotically-definable periodic set from \cref{lem:coneperstable}.
\end{proof}

Let us recall from~\cite{DBLP:conf/fossacs/BiziereLS26} that the intersection of IBVAS reachability sets with semilinear sets are almost semilinear. This result can be easily extended to the relation $\bvasrel{\IB}$ of an IBVAS by using the doubling counter technique, well-known for VAS, and formally defined in the proof of the following lemma.
Recall from \cref{sec:big-picture} that ${\bvasrel{\IB}} \subseteq \setN^d \times \setN^d$ is defined by
$\vec{x} \bvasrel{\IB} \vec{y}$ if $\vec{y} \in \reach{\vec{\Delta}_0^* \, \vec{x} \, \vec{\Delta}_0^*}$.
\begin{lemma}\label{lem:almostrel}
  For every IBVAS $\IB$, the intersection of $\bvasrel{\IB}$ with a semilinear relation included in $\setN^d\times\setN^d$ is an almost semilinear relation.
\end{lemma}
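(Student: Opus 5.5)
The plan is to reduce the statement to the already known result of~\cite{DBLP:conf/fossacs/BiziereLS26}: the intersection of an IBVAS reachability set with a semilinear set is almost semilinear. We do this by building, from $\IB=(\vec{\Delta}_0,\B)$, an IBVAS $\IB'$ in dimension $2d$ (the doubling counter technique) whose reachability set encodes $\bvasrel{\IB}$. The natural idea is to keep in the first $d$ coordinates an untouched copy of the configuration $\vec{x}$ used as a distinguished leaf, and to run $\B$ on the last $d$ coordinates.

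The naive construction fails. Actions of $\B$ act only on the second component, so summing children would also add up the first components. We therefore need the first component to be nonzero on exactly one leaf, namely the distinguished one. Leaves labelled by $\vec{\Delta}_0$ are encoded as $(\vec{0},\vec{a})$, and the distinguished leaf as $(\vec{x},\vec{x})$. Since $\vec{x}$ ranges over all of $\setN^d$, we cannot put all such leaves in a finite initial set. Instead, we generate them with extra unary actions. Concretely:
\begin{itemize}
\item use the initial configuration $(\vec{0},\vec{0})$ together with unary actions $(\vec{e}_i,\vec{e}_i)$ for $i\in\{1,\ldots,d\}$, which pump the distinguished leaf up to $(\vec{x},\vec{x})$;
\item add one extra control coordinate (so the dimension is $2d+1$, still fine since the target dimension is arbitrary) to ensure that the pumping happens only on the distinguished branch before any $\B$-action, and that exactly one distinguished leaf is used.
\end{itemize}

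A cleaner implementation uses control coordinates $(f,g)$:
\begin{itemize}
\item initial leaves are $(\vec{0},\vec{a},0,1)$ for $\vec{a}\in\vec{\Delta}_0$ (ordinary leaves, flag $g=1$) and $(\vec{0},\vec{0},1,0)$ (the distinguished seed, flag $f=1$);
\item the pumping actions are $(\vec{e}_i,\vec{e}_i,0,0)$ and stay available, but they require the flag configuration $f=1,g=0$, i.e.\ they are performed before merging with any ordinary leaf;
\item since BVAS actions cannot test for zero, we instead use a unary ``commit'' action $(\vec{0},\vec{0},-1,1)$ turning the seed into a committed node, and $\B$-actions of arity $n$ are lifted to $(\vec{0},\vec{v},0,-(n-1))$, which consumes $g$ tokens consistently.
\end{itemize}
With this bookkeeping, an irun of $\IB'$ with target $(\vec{x},\vec{y},0,1)$ has exactly one distinguished leaf. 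Its pumping phase consists of unary steps before the commit, which is what we need. Ordinary subtrees carry a zero first component. Hence such an irun corresponds to a run of $\B$ with source in $\vec{\Delta}_0^*\vec{x}\vec{\Delta}_0^*$ and target $\vec{y}$.

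The step I expect to be the main obstacle is checking this correspondence precisely. In the $\Leftarrow$ direction, we lift a run of $\B$ with any position of $\vec{x}$ among the leaves, using commutativity of the lifted actions. In the $\Rightarrow$ direction, we must argue that pumping actions applied after the commit would break the token count or the invariant ``first component equals the contribution of the distinguished leaf''. If the token accounting becomes messy, I would fall back on restricting the $\B$-actions to nodes whose $f$-coordinate is zero. Monotonicity issues do not arise, since the construction only needs an exact encoding on the final semilinear slice $\{(\vec{x},\vec{y},0,1)\}$.

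Once this is done, $\bvasrel{\IB}=\pi(\reach{\IB'}\cap\vec{L})$, where $\vec{L}=\setN^{2d}\times\{(0,1)\}$ and $\pi$ is the projection onto the first $2d$ coordinates. For a semilinear relation $\vec{R}\subseteq\setN^d\times\setN^d$ we have ${\bvasrel{\IB}}\cap\vec{R}=\pi(\reach{\IB'}\cap(\vec{R}\times\{(0,1)\}))$. By~\cite{DBLP:conf/fossacs/BiziereLS26}, the set inside $\pi$ is almost semilinear. It remains to observe that the image of an almost linear set $\vec{b}+\vec{P}$ under a projection $\pi$ is $\pi(\vec{b})+\pi(\vec{P})$, and that $\pi(\vec{P})$ is asymptotically-definable by the third item of \cref{lem:coneperstable}, since projections of definable cones are definable by quantifier elimination. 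Hence the relation is almost semilinear.
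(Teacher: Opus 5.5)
Your overall plan is the paper's. You encode $\bvasrel{\IB}$ as a projected slice of the reachability set of a doubled IBVAS, invoke~\cite{DBLP:conf/fossacs/BiziereLS26}, and conclude with the third item of \cref{lem:coneperstable}. That last step is fine, and intersecting with $\vec{R}\times\{(0,1)\}$ before projecting is the right way to handle an arbitrary semilinear $\vec{R}$.

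The gap is the gadget itself: for the $\IB'$ you describe, the identity $\bvasrel{\IB}=\pi(\reach{\IB'}\cap\vec{L})$ is false. First, a BVAS action is blocked only when some coordinate would become negative. So $(\vec{e}_i,\vec{e}_i,0,0)$ can fire at every node, including on subtrees built only from ordinary leaves; nothing makes it ``require'' $f=1,g=0$. Second, your $(f,g)$ bookkeeping does not count seeds. Sum the leaf labels and actions, and use that a tree has $1+\sum(n-1)$ leaves, the sum ranging over its $\B$-nodes of arity $n$. Then an irun with $s$ seed leaves and $c$ commit nodes has $f=s-c$ and $g=1+c-s$ at the root. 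So the slice $(f,g)=(0,1)$ only says $c=s$, for any $s\geq 0$. Thus an ordinary initial leaf $(\vec{0},\vec{a},0,1)$ is by itself an irun in your slice, and two committed seeds can be merged by a lifted binary action whenever $\B$ has one. For instance, take $d=1$, $\vec{\Delta}_0=\{1\}$ and $\B=(\{0\})$. Every run of $\B$ has a single leaf, so $0\bvasrel{\IB}y$ forces $y=0$, yet $(0,1,0,1)\in\reach{\IB'}$. Your fallback (allow $\B$-actions only where $f=0$) is a zero test, which a BVAS cannot perform.

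The missing idea is to count seeds with a quantity that \emph{every} action preserves, and to guard pumping by moving a token rather than testing. The paper uses two control coordinates $(a,b)$ and the following ingredients:
\begin{itemize}
\item seeds $((1,0),\vec{0},\vec{0})$ and $((0,1),\vec{0},\vec{0})$;
\item ordinary leaves $((0,0),\vec{0},\vec{a})$ with $\vec{a}\in\vec{\Delta}_0$;
\item lifted actions $((0,0),\vec{0},\vec{v})$;
\item pumping actions $((-1,1),\vec{e}_i,\vec{e}_i)$ and $((1,-1),\vec{e}_i,\vec{e}_i)$.
\end{itemize}
Since $a+b$ is invariant, its value at the root is the number of seeds, and the slice $a+b=1$ forces exactly one. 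A pumping action needs $a\geq 1$ or $b\geq 1$ at its child, so it only fires on the path from that seed to the root. Hence seed-free subtrees keep first block $\vec{0}$. No commit or ordering discipline is needed. In the projected $\B$-run, a $+\vec{e}_i$ step taken higher on that path can be moved down to the seed leaf, since this only adds $\vec{e}_i$ to the intermediate configurations (\cref{fact:monotony}). With this gadget, the rest of your argument goes through.
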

\begin{proof}
  Let $\IB\coloneqq(\vec{\Delta}_0,(\vec{\Delta}_1,\ldots,\vec{\Delta}_r))$ be an IBVAS and let us prove that there exists an IBVAS $\IB'$ such that a section of $\reach{\IB'}$ is exactly $\bvasrel{\IB}$. The number of counters of $\IB'$ is $2+2d$ and vectors in $\setZ^{2+2d}$ are denoted by tuples in $\setZ^2\times \setZ^d\times\setZ^d$.

  We introduce the set $\vec{E}$ of tuples $((-1,1),\vec{e}_i,\vec{e}_i)$ and $((1,-1),\vec{e}_i,\vec{e}_i)$ where $i\in\{1,\ldots,n\}$ and $\vec{e}_i$ is the $i$th unit vector of $\setN^d$ defined by $\vec{e}_i(i)=1$ and $\vec{e}_i(j)=0$ if $j\not=i$.
  We introduce the set $\vec{T}_n\coloneqq \{((0,0),\vec{0})\}\times\vec{\Delta}_n$ for each $n\in\{0,\ldots,r\}$ and the IBVAS $\IB'\coloneqq (\vec{\Delta}_0',(\vec{\Delta}_1',\ldots,\vec{\Delta}_r'))$ where $\vec{\Delta}'_0\coloneqq \vec{T}_0\cup \{((1,0),\vec{0},\vec{0}), ((0,1),\vec{0},\vec{0})\}$, $\vec{\Delta}'_1\coloneqq  \vec{T}_1\cup \vec{E}$, and $\vec{\Delta}_n'\coloneqq\vec{T}_n$ if $n\geq 2$.
  
  Observe by structural induction on the runs of $\IB$ and $\IB'$ that for every $(a,b)\in\setN^2$ and $(\vec{x},\vec{y})\in \setN^d\times\setN^d$ we have:
  $$((a,b),\vec{x},\vec{y})\in \reach{\IB'}\Longleftrightarrow
    \begin{cases}
    \vec{x}=\vec{0}\wedge \vec{y}\in \reach{\IB} & \text{ if }(a,b)=(0,0)\\
    \vec{x} \bvasrel{\IB} \vec{y}& \text{ if }a+b=1\\
\end{cases}$$
Since the intersections of IBVAS reachability sets with semilinear sets are almost semilinear~\cite{DBLP:conf/fossacs/BiziereLS26}, we deduce that $\reach{\IB'}\cap (\{(1,0),(0,1)\}\times\setN^d\times\setN^d)$ is almost semilinear. This set is exactly $\{(1,0),(0,1)\}\times (\bvasrel{\IB})$. Finally, let us introduce the projection function $\pi:\setZ^2\times\setZ^d\times\setZ^d \rightarrow\setZ^d\times\setZ^d$ defined by $\pi((a,b),\vec{x},\vec{y})\coloneqq(\vec{x},\vec{y})$. \cref{lem:coneperstable} shows that the image by $\pi$ of an asymptotically-definable periodic set is an asymptotically-definable periodic set. We deduce that the image by $\pi$ of an almost semilinear set is an almost semilinear set. In particular, since $\pi(\{(1,0),(0,1)\}\times (\bvasrel{\IB}))$ is equal to $\bvasrel{\IB}$, we are done.
\end{proof}

Almost semilinear sets $\vec{S}\coloneqq\vec{b}_1+\vec{P}_1\cup\ldots\cup\vec{b}_n+\vec{P}_n$ can be over-approximated by semilinear sets $\vec{b}_1+\vec{Q}_1\cup\ldots\cup\vec{b}_n+\vec{Q}_n$ obtained by replacing each asymptotically-definable periodic set $\vec{P}_i$ by the finitely-generated periodic set $\vec{Q}_i\coloneqq\clin{\vec{P}_i}$. In order to recover from vectors in this over-approximation, vectors in $\vec{S}$, the following lemma will be central.
\begin{lemma}
  \label{lem:pushin}
  Let $\vec{P} \subseteq \setN^d$ be a periodic set and
  let $\vec{Q} \coloneqq \clin{\vec{P}}$.
  There exists a vector $\vec{p} \in \vec{P}$ such that
  for every $\vec{g} \in \vec{P}-\vec{P}$ and every $\vec{x} \in \vec{p} + \vec{Q}$,
  there exists $r \in \setN$ verifying
  $\vec{g} + (r + \setN)\vec{x} \subseteq \vec{P}$.
\end{lemma}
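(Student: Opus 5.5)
The plan is to reduce the statement to a standard conductor property of finitely-generated periodic sets, by choosing $\vec{p}$ so that every $\vec{x}\in\vec{p}+\vec{Q}$ lies in the relative interior of the cone $\vec{C}\coloneqq\cone{\vec{P}}$. Let $\vec{G}\coloneqq\vec{P}-\vec{P}$ and $\vec{V}\coloneqq\vechull{\vec{P}}=\vec{C}-\vec{C}$.

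\emph{Choice of $\vec{p}$.} Since $\vec{G}$ is finitely generated, each of its finitely many group generators is a difference of two elements of $\vec{P}$. Collecting all these elements gives a finite set $\vec{F}\subseteq\vec{P}$ that generates $\vec{G}$ as a group, so in particular $\vechull{\vec{F}}=\vec{V}$ and $\vec{G}=\per{\vec{F}}-\per{\vec{F}}$. Put $\vec{p}\coloneqq\sum_{\vec{f}\in\vec{F}}\vec{f}\in\vec{P}$. Because $\vec{p}$ is a strictly positive combination of vectors spanning $\vec{V}$, it lies in the relative interior of $\vec{C}$ in $\vec{V}$. The standard convexity fact $\mathrm{ri}(\vec{C})+\overline{\vec{C}}\subseteq\mathrm{ri}(\vec{C})$ then shows that every $\vec{x}\in\vec{p}+\vec{Q}\subseteq\vec{p}+\overline{\vec{C}}$ lies in $\vec{G}\cap\mathrm{ri}(\vec{C})$.

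\emph{Conductor lemma.} For a finite set $\vec{F}'\subseteq\setZ^d$ with $\vec{B}\coloneqq\per{\vec{F}'}$ and $\vec{H}\coloneqq\vec{B}-\vec{B}$, I will show that there is $\vec{h}\in\vec{B}$ with $\vec{h}+(\vec{H}\cap\cone{\vec{F}'})\subseteq\vec{B}$. Write $\vec{y}=\sum_i\lambda_i\vec{f}_i$ with $\lambda_i\ge0$. Splitting each coefficient into its integer part and fractional part gives $\vec{y}=\vec{b}+\vec{r}$, where $\vec{b}\in\vec{B}$ and $\vec{r}$ belongs to the finite set $\vec{R}\coloneqq\vec{H}\cap\{\sum_i\mu_i\vec{f}_i\mid\mu_i\in[0,1)\}$. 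Each $\vec{r}\in\vec{R}$ can be written $\vec{a}_r-\vec{h}_r$ with $\vec{a}_r,\vec{h}_r\in\vec{B}$, so we may take $\vec{h}\coloneqq\sum_{\vec{r}\in\vec{R}}\vec{h}_r$.

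\emph{Main argument.} Fix $\vec{g}\in\vec{G}$ and $\vec{x}\in\vec{p}+\vec{Q}$, and write $\vec{g}=\vec{u}-\vec{v}$ with $\vec{u},\vec{v}\in\per{\vec{F}}\subseteq\vec{P}$. Since $\vec{P}+\vec{P}\subseteq\vec{P}$, it suffices to find $r$ such that $k\vec{x}-\vec{v}\in\vec{P}$ for all $k\ge r$. Because $\vec{x}\in\mathrm{ri}(\vec{C})$, $\vec{x}$ is a positive rational combination of finitely many elements of $\vec{P}$. Moreover, $\vec{x}-\varepsilon\sum_{\vec{f}\in\vec{F}}\vec{f}\in\vec{C}$ for small $\varepsilon>0$. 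Hence there is a finite $\vec{F}_x$ with $\vec{F}\subseteq\vec{F}_x\subseteq\vec{P}$ such that $\vec{x}$ lies in the relative interior of $\cone{\vec{F}_x}$, and this cone again spans $\vec{V}$. The group of $\per{\vec{F}_x}$ is still $\vec{G}$, since $\vec{F}\subseteq\vec{F}_x\subseteq\vec{P}$.

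Apply the conductor lemma to $\vec{F}_x$ to get $\vec{h}$. Since $\vec{v}+\vec{h}\in\vec{V}$ and $\vec{x}$ is interior, $k\vec{x}-\vec{v}-\vec{h}\in\cone{\vec{F}_x}$ for all sufficiently large $k$. This vector also lies in $\vec{G}$, so $k\vec{x}-\vec{v}\in\vec{h}+(\vec{G}\cap\cone{\vec{F}_x})\subseteq\per{\vec{F}_x}\subseteq\vec{P}$. Note that $r$ depends on $\vec{g}$ and $\vec{x}$, but $\vec{p}$ does not, as the statement requires.

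\emph{Main obstacle.} The delicate point is that $\overline{\vec{C}}$ is only a closure and $\vec{C}$ need not be finitely generated. The step requiring care is therefore showing that $\vec{p}+\overline{\vec{C}}$ lands in the relative interior of the possibly non-closed cone $\vec{C}$, and then that such an interior point lies in the relative interior of a finitely-generated subcone spanned by elements of $\vec{P}$. I would prove the first via the segment principle: if $\vec{c}\in\overline{\vec{C}}$, pick $\vec{c}'\in\vec{C}$ close enough that $\vec{p}+\vec{c}-\vec{c}'$ is still in $\mathrm{ri}(\vec{C})$. The second then follows by a Carathéodory-type argument.
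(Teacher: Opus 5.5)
Your proof is correct, and its overall architecture matches the paper's. The genuine difference is in the key lemma for the final step.

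\textbf{What is shared.} The paper defines the center $\cent{\vec{P}}$ as the set of vectors with exactly the eventual-membership property you need. In \cref{lem:center} it characterizes the center as the vectors of $\vec{P}-\vec{P}$ that are strictly positive combinations of elements of $\vec{P}$ spanning $\vechull{\vec{P}}$; in your notation this is $\vec{G}\cap\mathrm{ri}(\vec{C})$. It then proves $\cent{\vec{P}}+\clin{\vec{P}}=\cent{\vec{P}}$ in \cref{lem:centercylindric} by the same approximation argument you sketch: replace $\vec{q}\in\overline{\vec{C}}$ by a nearby point of $\vec{C}$ and absorb the small error, which lies in $\vec{V}$, into the positive coefficients. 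Finally it takes $\vec{p}$ to be a sum of spanning vectors, as you do.

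\textbf{Where you differ.} The difference is in showing that interior points of $\vec{G}$ have the eventual-membership property. The paper uses a direct arithmetic argument. Given $\vec{g}+\vec{p}'\in\vec{P}$ and $\vec{v}+\vec{q}\in\vec{P}$ with $\vec{p}',\vec{q}\in\vec{P}$, it finds $r$ such that $r\vec{v}$ lies in $\vec{P}$, in $\vec{p}'+\vec{P}$ and in $\vec{q}+\vec{P}$. It then treats every $n\geq r^2+r$ by Euclidean division by $r$, absorbing the residue $i\vec{v}$ into $i(\vec{v}+\vec{q})$. This only needs the chosen vectors to span $\vechull{\vec{P}}$ as a vector space, and involves no lattice-point counting. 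You instead use a conductor lemma for finitely generated periodic sets, based on integer/fractional splitting and the finiteness of lattice points in the half-open parallelepiped.

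\textbf{What each buys.} Your conductor lemma is standard and conceptually transparent. However, it obliges you to pick $\vec{F}$ generating $\vec{G}$ as a group. It also obliges you to enlarge $\vec{F}$ to $\vec{F}_x$, so that $\vec{x}$ is interior to a finitely generated subcone whose periodic set still has group $\vec{G}$. The paper's trick sidesteps this bookkeeping.
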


\newcommand{\cent}[1]{\operatorname{Cent}(#1)}
The previous lemma is a corollary of some geometrical results about the \emph{center} of a periodic set $\vec{P}\subseteq \setZ^d$, written $\cent{\vec{P}}$ and defined as the set of vectors $\vec{v}\in \setZ^d$ such that for every $\vec{g}\in\vec{P}-\vec{P}$ there exists $r\in\setN$ satisfying $\vec{g}+(r+\setN)\vec{v}\subseteq \vec{P}$.
\begin{lemma}\label{lem:center}
  The center of a periodic set $\vec{P}\subseteq \setZ^d$ coincides with the set of vectors in $\vec{P}-\vec{P}$ of the form $\setQ_{>0}\vec{p}_1+\cdots+\setQ_{>0}\vec{p}_n$ where $\vec{p}_1,\ldots,\vec{p}_n$ is a sequence of vectors in $\vec{P}$ spanning the vector space $\vechull{\vec{P}}$.
\end{lemma}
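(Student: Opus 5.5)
We prove the two inclusions separately. Throughout, write $\vec{G} \coloneqq \vec{P}-\vec{P}$ and $\vec{V} \coloneqq \vechull{\vec{P}}$, and let $\vec{Z}$ be the set of vectors of $\vec{G}$ of the form $\sum_i \lambda_i \vec{p}_i$ with $\lambda_i \in \setQ_{>0}$, where $\vec{p}_1,\ldots,\vec{p}_n \in \vec{P}$ span $\vec{V}$. The claim is $\cent{\vec{P}}=\vec{Z}$.

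\emph{Inclusion $\vec{Z}\subseteq\cent{\vec{P}}$.} Let $\vec{v}=\sum_i\lambda_i\vec{p}_i\in\vec{G}$ with all $\lambda_i>0$, and fix $\vec{g}\in\vec{G}$.
\begin{enumerate}
\item Since the $\vec{p}_i$ span $\vec{V}$ and $\vec{g}\in\vec{V}$, write $\vec{g}=\sum_i\mu_i\vec{p}_i$ with rational $\mu_i$.
\item For $k$ large, $\vec{g}+k\vec{v}=\sum_i(\mu_i+k\lambda_i)\vec{p}_i$ has all coefficients large and positive, so it lies in the interior (relative to $\vec{V}$) of the finitely generated cone $\cone{\vec{p}_1,\ldots,\vec{p}_n}$, and deep inside it.
\item Let $\vec{P}'\coloneqq\per{\{\vec{p}_1,\ldots,\vec{p}_n\}}\subseteq\vec{P}$ and $\vec{G}'\coloneqq\vec{P}'-\vec{P}'$. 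Then $\vec{G}'$ has finite index in $\vec{G}$, since both are full-rank lattices in $\vec{V}$. Choose coset representatives $\vec{h}_1,\ldots,\vec{h}_m\in\vec{G}$ of $\vec{G}/\vec{G}'$, and write each $\vec{h}_j=\vec{a}_j-\vec{b}_j$ with $\vec{a}_j,\vec{b}_j\in\vec{P}$.
\item Given $\vec{y}\in\vec{G}$, write $\vec{y}=\vec{h}_j+\vec{y}'$ with $\vec{y}'\in\vec{G}'$. Then $\vec{y}=\vec{a}_j+(\vec{y}'-\vec{b}_j)$.
\item It remains to show $\vec{y}'-\vec{b}_j\in\vec{P}'$ when $\vec{y}$ is deep enough in the cone. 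This is the classical fact that, for a finitely generated $\vec{P}'$, every point of $\vec{G}'$ lying sufficiently deep in the relative interior of $\cone{\vec{P}'}$ belongs to $\vec{P}'$. To see it, write such a point as $\sum_i\nu_i\vec{p}_i$ with all $\nu_i\geq N$. Subtract $\sum_i\lfloor\nu_i\rfloor\vec{p}_i$; the remainder lies in the bounded set $\{\sum_i t_i\vec{p}_i \mid t_i\in[0,1)\}\cap\vec{G}'$, which is finite. Each element of this finite set is a difference of elements of $\vec{P}'$, and the negative part can be absorbed because all $\lfloor\nu_i\rfloor\geq N$ are large.
\item Steps 2–5 give $\vec{g}+(r+\setN)\vec{v}\subseteq\vec{P}$ for some $r$, i.e. $\vec{v}\in\cent{\vec{P}}$.
\end{enumerate}

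\emph{Inclusion $\cent{\vec{P}}\subseteq\vec{Z}$.} Let $\vec{v}\in\cent{\vec{P}}$.
\begin{enumerate}
\item Taking $\vec{g}=\vec{0}$ gives some $k\vec{v}\in\vec{P}$ with $k\geq 1$, so $\vec{v}\in\vec{V}$. Since $\vec{g}+k\vec{v}\in\vec{P}\subseteq\vec{G}$ and $k\vec{v}\in\vec{G}$, also $\vec{v}\in\vec{G}$.
\item Choose $\vec{q}_1,\ldots,\vec{q}_m\in\vec{P}$ spanning $\vec{V}$.
\item For each $i$, apply the center property with $\vec{g}=-\vec{q}_i$: there is $r_i$ with $\vec{u}_i\coloneqq r_i\vec{v}-\vec{q}_i\in\vec{P}$ (taking $r_i\geq1$).
\item Set $R\coloneqq\sum_i r_i$. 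Then
\[
R\vec{v}=\sum_i(\vec{u}_i+\vec{q}_i),
\qquad\text{so}\qquad
\vec{v}=\sum_i\tfrac{1}{R}\vec{u}_i+\sum_i\tfrac{1}{R}\vec{q}_i .
\]
\item This expresses $\vec{v}$ as a strictly positive combination of the vectors $\vec{u}_1,\ldots,\vec{u}_m,\vec{q}_1,\ldots,\vec{q}_m\in\vec{P}$. This family already spans $\vec{V}$ because it contains the $\vec{q}_i$. Together with $\vec{v}\in\vec{G}$, this gives $\vec{v}\in\vec{Z}$.
\end{enumerate}

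The main obstacle is step 5 of the first inclusion: lattice points of $\vec{G}$ deep in the cone lie in $\vec{P}$. The key is to pass to the finitely generated subperiodic set $\vec{P}'$ and use the finite index of $\vec{G}'$ in $\vec{G}$; I would check carefully that the bounded remainder set is finite and that "deep enough" is uniform over the finitely many coset representatives.
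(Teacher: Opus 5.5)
\textbf{The error in your first inclusion.} Steps 4--5 of your proof of $\vec{Z}\subseteq\operatorname{Cent}(\vec{P})$ fail as written. You write $\vec{y}=\vec{a}_j+(\vec{y}'-\vec{b}_j)$ and then apply the fact about points of $\vec{G}'$ lying deep in $\cone{\vec{P}'}$ to $\vec{y}'-\vec{b}_j$. But this vector lies in $\vec{G}'$ only if $\vec{b}_j\in\vec{G}'$, and nothing guarantees that. Concretely, take $\vec{P}=\setN$ and $\vec{p}_1=2$, so that $\vec{G}'=2\setZ$, and make the admissible choice $\vec{h}_2=1=2-1$. Every odd $\vec{y}$ then gives $\vec{y}'-\vec{b}_2=\vec{y}-2\notin 2\setZ$, so this vector is never in $\vec{P}'$, however deep $\vec{y}$ is.

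The repair is to take the coset representatives inside $\vec{P}$, which is always possible. Let $N$ be the index of $\vec{G}'$ in $\vec{G}$ and write $\vec{h}=\vec{a}-\vec{b}$ with $\vec{a},\vec{b}\in\vec{P}$. Then $N\vec{b}\in\vec{G}'$, so $\vec{a}+(N-1)\vec{b}$ is an element of $\vec{P}$ in the coset $\vec{h}+\vec{G}'$. With $\vec{h}_j\in\vec{P}$, the vector $\vec{y}-\vec{h}_j\in\vec{G}'$ is still deep for $\vec{y}=\vec{g}+k\vec{v}$ and $k$ large. This holds uniformly, since there are finitely many $\vec{h}_j$, each a fixed rational combination of the $\vec{p}_i$. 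Your step 5 then yields $\vec{y}-\vec{h}_j\in\vec{P}'$, hence $\vec{y}\in\vec{h}_j+\vec{P}'\subseteq\vec{P}$.

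\textbf{Your second inclusion.} This is essentially the paper's argument; the paper uses the single test vector $\vec{g}=-(\vec{p}_1+\cdots+\vec{p}_n)$ instead of one $-\vec{q}_i$ per spanning vector. There are two small slips:
\begin{itemize}
\item A single multiple $k\vec{v}\in\vec{G}$ does not give $\vec{v}\in\vec{G}$. Use instead that $(r+\setN)\vec{v}\subseteq\vec{P}$, so $\vec{v}=(r+1)\vec{v}-r\vec{v}\in\vec{P}-\vec{P}$.
\item You need $m\geq 1$ (allow $\vec{q}_1=\vec{0}$) so that $R>0$ when $\vec{P}=\{\vec{0}\}$.
\end{itemize}

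\textbf{Comparison with the paper.} Once repaired, your first inclusion takes a different route from the paper, which uses no lattice theory. The paper picks $\vec{p},\vec{q}\in\vec{P}$ with $\vec{g}+\vec{p},\vec{v}+\vec{q}\in\vec{P}$. It chooses $r$ so that $r\vec{v}$ minus a positive multiple of $\vec{p}+\vec{q}$ lies in $\setN\vec{p}_1+\cdots+\setN\vec{p}_n$, whence $r\vec{v}\in(\vec{p}+\vec{P})\cap(\vec{q}+\vec{P})$. For $k=r^2+r+zr+i$ with $0\leq i<r$, it absorbs the leftover $i\vec{v}$ via $i\vec{v}+r\vec{q}=i(\vec{v}+\vec{q})+(r-i)\vec{q}$, the surplus $r\vec{q}$ coming from $r(r\vec{v})$. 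Your version needs the finite-index fact and finiteness of lattice points in a bounded set. In exchange, it proves the stronger statement that every point of $\vec{G}$ lying sufficiently deep in $\cone{\{\vec{p}_1,\ldots,\vec{p}_n\}}$ belongs to $\vec{P}$.
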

\begin{proof}
  Assume first that $\vec{v}$ is a vector in the center of $\vec{P}$ and let $\vec{p}_1,\ldots,\vec{p}_n$ be a sequence of vectors in $\vec{P}$ spanning the vector space $\vechull{\vec{P}}$, and let $\vec{p}\coloneqq\vec{p}_1+\cdots+\vec{p}_n$. There exists $r\in\setN$ such that $-\vec{p}+(r+\setN)\vec{v}\subseteq \vec{P}$. It follows that $(r+1)\vec{v}$ and $r\vec{v}$ are
 both in $\vec{P}$. From $\vec{v}=(r+1)\vec{v}-r\vec{v}$, we derive  $\vec{v}\in \vec{P}-\vec{P}$.
  Let $\vec{p}_{n+1}\coloneqq (r+1)\vec{v}-\vec{p}$ and observe that $\vec{v}=\frac{1}{r+1}(\vec{p}_1+\cdots+\vec{p}_{n+1})$. Since $\vec{p}_1,\ldots,\vec{p}_{n+1}$ is a sequence of vectors in $\vec{P}$ spanning $\vechull{\vec{P}}$, we are done. Now, let us assume that $\vec{v}\in\vec{P}-\vec{P}$ is such that there exists a sequence $\vec{p}_1,\ldots,\vec{p}_n$ of vectors in $\vec{P}$ spanning $\vechull{\vec{P}}$ and such that $\vec{v}\in \setQ_{>0}\vec{p}_1+\cdots+\setQ_{>0}\vec{p}_n$. Let $\vec{g}\in\vec{P}-\vec{P}$. There exists $\vec{p}\in\vec{P}$ such that $\vec{g}+\vec{p}\in\vec{P}$. Since $\vec{v}\in\vec{P}-\vec{P}$, there exists $\vec{q}\in\vec{P}$ such that $\vec{v}+\vec{q}\in \vec{P}$. As $-(\vec{p}+\vec{q})\in \vechull{\vec{P}}=\setQ\vec{p}_1+\cdots+\setQ\vec{p}_n$, there exists $m\in \setN_{>0}$ such that $-m(\vec{p}+\vec{q})\in \setZ\vec{p}_1+\cdots+\setZ\vec{p}_n$. There also exists $r\in\setN_{>0}$ such that $r\vec{v}\in \setN_{>0}\vec{p}_1+\cdots+\setN_{>0}\vec{p}_n$. By replacing $r$ by a multiple of $r$, we can also assume that $-m(\vec{p}+\vec{q})+r\vec{v}\in\setN\vec{p}_1+\cdots+\setN\vec{p}_n$. It follows that $-m(\vec{p}+\vec{q})+r\vec{v}\in\vec{P}$. We deduce that $r\vec{v}\in \vec{p}+\vec{P}$, $r\vec{v}\in \vec{q}+\vec{P}$, and $r\vec{v}\in\vec{P}$. Now, let $n\geq r^2+r$. The Euclidean divisor of $n-(r^2+r)$ by $r$ shows that there exists $z\in \setN$ and $i\in\{0,\ldots,r-1\}$ such that $n=r^2+r+zr+i$. It follows that $n\vec{v}=r(r\vec{v})+r\vec{v}+zr\vec{v}+i\vec{v}\in r\vec{q}+\vec{P}+\vec{p}+\vec{P}+\vec{P}+i\vec{v}\subseteq (r-i)\vec{q}+\vec{p}+i(\vec{v}+\vec{q})+\vec{P}\subseteq \vec{p}+\vec{P}$. From $\vec{g}+\vec{p}\in \vec{P}$ we deduce that $\vec{g}+n\vec{v}\subseteq \vec{P}$. We have proved that $\vec{g}+(r^2+r+\setN)\vec{v}\subseteq \vec{P}$. Hence $\vec{v}\in\cent{\vec{P}}$. 
\end{proof}

\begin{lemma}\label{lem:centercylindric}
  For every periodic set $\vec{P}\subseteq \setZ^d$, it holds that $\cent{\vec{P}}+\clin{\vec{P}}=\cent{\vec{P}}$.
\end{lemma}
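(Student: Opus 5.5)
The inclusion $\cent{\vec{P}} \subseteq \cent{\vec{P}}+\clin{\vec{P}}$ is immediate, since $\vec{0}\in\clin{\vec{P}}$. For the converse, I will not work with the definition of the center directly. Instead I will use the characterization of \cref{lem:center}: a vector is in $\cent{\vec{P}}$ if, and only if, it lies in $\vec{P}-\vec{P}$ and is a combination with strictly positive rational coefficients of a finite sequence of vectors of $\vec{P}$ spanning $\vechull{\vec{P}}$.

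So let $\vec{v}\in\cent{\vec{P}}$ and $\vec{q}\in\clin{\vec{P}}=(\vec{P}-\vec{P})\cap\overline{\cone{\vec{P}}}$. Since $\vec{P}-\vec{P}$ is a group and both vectors belong to it, $\vec{v}+\vec{q}\in\vec{P}-\vec{P}$. It remains to show that $\vec{v}+\vec{q}$ is a strictly positive combination of a spanning sequence of vectors of $\vec{P}$. By \cref{lem:center}, write $\vec{v}=\lambda_1\vec{p}_1+\cdots+\lambda_n\vec{p}_n$ with $\lambda_i\in\setQ_{>0}$, $\vec{p}_i\in\vec{P}$, and $\vec{p}_1,\ldots,\vec{p}_n$ spanning $\vec{V}\coloneqq\vechull{\vec{P}}$.

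The key observation is that $\vec{v}$ lies in the relative interior of $\cone{\{\vec{p}_1,\ldots,\vec{p}_n\}}$ inside $\vec{V}$. Since the $\vec{p}_i$ span $\vec{V}$, the surjective linear map $\setQ^n\to\vec{V}$, $(\mu_i)\mapsto\sum\mu_i\vec{p}_i$, has a rational linear right inverse $L$. This map $L$ is continuous, so there is $\varepsilon\in\setQ_{>0}$ such that every $\vec{w}\in\vec{V}$ with $\norm{\vec{w}}<\varepsilon$ satisfies $|L(\vec{w})_i|<\lambda_i$ for all $i$. For such $\vec{w}$, the vector $\vec{v}+\vec{w}=\sum_i(\lambda_i+L(\vec{w})_i)\vec{p}_i$ has strictly positive rational coefficients.

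Now use $\vec{q}\in\overline{\cone{\vec{P}}}$ to pick $\vec{c}\in\cone{\vec{P}}$ with $\norm{\vec{q}-\vec{c}}<\varepsilon$. Set $\vec{w}\coloneqq\vec{q}-\vec{c}$. Both $\vec{q}$ and $\vec{c}$ lie in $\vec{V}$: indeed $\vec{V}$ is topologically closed and contains $\cone{\vec{P}}$, hence also its closure. So $\vec{w}\in\vec{V}$, and $\vec{v}+\vec{w}$ is a strictly positive combination of the $\vec{p}_i$. Write $\vec{c}=\nu_1\vec{p}'_1+\cdots+\nu_m\vec{p}'_m$ with $\nu_j\in\setQ_{>0}$ and $\vec{p}'_j\in\vec{P}$, dropping zero coefficients. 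Then
\[
\vec{v}+\vec{q}=(\vec{v}+\vec{w})+\vec{c}
\]
is a strictly positive combination of the sequence $\vec{p}_1,\ldots,\vec{p}_n,\vec{p}'_1,\ldots,\vec{p}'_m$. This sequence still spans $\vec{V}$, since its first $n$ vectors already do. By \cref{lem:center}, $\vec{v}+\vec{q}\in\cent{\vec{P}}$, which proves the inclusion $\cent{\vec{P}}+\clin{\vec{P}}\subseteq\cent{\vec{P}}$.

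The only delicate step is this relative-interior argument: we must absorb the perturbation $\vec{w}$, which comes from approximating a limit point of the cone, into the strictly positive coefficients of $\vec{v}$. This is handled by the bounded right inverse $L$, and everything stays rational because all the linear algebra is over $\setQ$.
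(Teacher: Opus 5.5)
Your proof is correct and follows the paper's argument essentially step for step. The paper also reduces to the characterization of \cref{lem:center}, absorbs the perturbation $\vec{q}-\vec{x}$ into the strictly positive coefficients of $\vec{v}$ using a linear coordinate map on $\vechull{\vec{P}}$ (obtained via Cramer's rule, which plays the role of your right inverse $L$), and then appends a vector of $\vec{P}$ generating the approximant $\vec{x}\in\cone{\vec{P}}$; the only cosmetic difference is that the paper uses $\cone{\vec{P}}=\setQ_{\geq 0}\vec{P}$ to write $\vec{x}$ as a positive multiple of a single $\vec{p}_{n+1}\in\vec{P}$, whereas you allow several vectors.
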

\begin{proof}
  Since $\vec{0}\in \clin{\vec{P}}$, the inclusion $\cent{\vec{P}}+\clin{\vec{P}}\supseteq\cent{\vec{P}}$ is trivial. Let us prove the converse inclusion. We introduce the vector space $\vec{V}\coloneqq\vechull{\vec{P}}$. Let $\vec{v}\in \cent{\vec{P}}$ and $\vec{q}\in \clin{\vec{P}}$. From \cref{lem:center}, we deduce that $\vec{v}\in\vec{P}-\vec{P}$ and there exists a sequence $\vec{p}_1,\ldots,\vec{p}_n$ of vectors in $\vec{P}$ spanning $\vec{V}$ and such that $\vec{v}=\lambda_1\vec{p}_1+\cdots+\lambda_n\vec{p}_n$ for some rational numbers $\lambda_1,\ldots,\lambda_n\in\setQ_{>0}$. Since $\vec{q}\in\clin{\vec{P}}$, we deduce that $\vec{q}\in\vec{P}-\vec{P}$. It follows that $\vec{v}+\vec{q}\in\vec{P}-\vec{P}$. Since $\vec{p}_1,\ldots,\vec{p}_n$ is spanning the vector space $\vec{V}$, from Cramer's rules, there exists a sequence $\vec{a}_1,\ldots,\vec{a}_n\in\setQ^d$ such that for every $\vec{w}\in\vec{V}$, we have:
  $$\vec{w}=\sum_{i=1}^n(\scalar{\vec{a}_i}{\vec{w}})\vec{p}_i$$
  Observe that there exists $\varepsilon\in\setQ_{>0}$ such that $d\norm{\vec{a}_i}\varepsilon<\lambda_i$ for every $1\leq i\leq n$. Since $|\scalar{\vec{a}_i}{\vec{w}}|\leq d\norm{\vec{a}_i}\epsilon<\lambda_i$ for every $\vec{w}\in \ball{\vec{0},\varepsilon}$, we deduce the following inclusion:
  \begin{equation}\label{eq:ballinclusion}
  \ball{\vec{0},\varepsilon}\cap\vec{V}\subseteq (-\lambda_1,\lambda_1)\vec{p}_1+\cdots+(-\lambda_n,\lambda_n)\vec{p}_n
  \end{equation}
  As $\vec{q}\in \overline{\cone{\vec{P}}}$, there exists
   $\vec{x}\in \ball{\vec{q},\varepsilon}\cap\cone{\vec{P}}$.
    Let $\vec{w}\coloneqq\vec{q}-\vec{x}$. Since $\vec{w}\in \ball{\vec{0},\varepsilon}\cap\vec{V}$ and $\vec{v}=\lambda_1\vec{p}_1+\cdots+\lambda_n\vec{p}_n$, we deduce from \cref{eq:ballinclusion} that $\vec{v}+\vec{w}\in \setQ_{>0}\vec{p}_1+\cdots+\setQ_{>0}\vec{p}_n$. Since $\vec{x}\in\cone{\vec{P}}$, there exists $\vec{p}_{n+1}\in \vec{P}$ such that $\vec{x}\in\setQ_{>0}\vec{p}_{n+1}$. From $\vec{v}+\vec{q}=\vec{v}+\vec{w}+\vec{x}$, we derive $\vec{v}+\vec{q}\in \setQ_{>0}\vec{p}_1+\cdots+\setQ_{>0}\vec{p}_{n+1}$. \cref{lem:center} shows that $\vec{v}+\vec{q}\in \cent{\vec{P}}$.
\end{proof}

Finally, let us prove \cref{lem:pushin}. Let $\vec{P} \subseteq \setN^d$ be a periodic set and let $\vec{Q} \coloneqq \clin{\vec{P}}$. Let $\vec{p}$ be a vector obtained as a sum of vectors in $\vec{P}$ spanning the vector space $\vechull{\vec{P}}$. Lemma~\ref{lem:center} shows that $\vec{p}\in \cent{\vec{P}}$. Let $\vec{g}\in\vec{P}-\vec{P}$ and $\vec{x}\in \vec{p}+\vec{Q}$. Lemma~\ref{lem:centercylindric} shows that $\vec{x}\in\cent{\vec{P}}$. It follows that there exists $r\in\setN$ such that $\vec{g}+(r+\setN)\vec{x}\subseteq \vec{P}$. We have proved the lemma.

\subsection{Asymptotic Definability of the Periodic Sets Associated with Initialized Runs}
\label{subsec:pavas}

It was shown in \cite{DBLP:conf/fossacs/BiziereLS26} that the reachability set of any IBVAS is almost semilinear.

In this section,
we refine this result by exhibiting an almost semilinear decomposition.
More precisely,
we show that for every irun $\rho$ of an IBVAS $\IB$,
the periodic set\footnote{%
  Recall that by definition $\vec{P}_\rho = \{\tgt{\sigma} - \tgt{\rho} \mid \sigma \in \IRuns{\IB}\wedge \rho\trianglelefteq \sigma\}$.
} $\vec{P}_\rho$ is asymptotically-definable.
This property entails that the reachability set of $\IB$ is the almost semilinear set
$\bigcup_{\rho \in M} \tgt{\rho} + \vec{P}_\rho$ where $M$ is the finite set $M \coloneqq \Min{\IRuns{\IB}}{\trianglelefteq}$.
More importantly for us,
the asymptotic definability of the periodic sets $\vec{P}_\rho$ will allow us to prove \cref{lem:Qfinite}.

We introduce for a configurations $\vec{c}\in\setN^d$ the so-called \emph{transformer relation}~\cite{Turing-100:Vector_Addition_Systems_Reachability} $\transformer{\IB}{\vec{c}}$ defined over $\setN^d$ by $\vec{x}\transformer{\IB}{\vec{c}}\vec{y}$ if $\vec{c}+\vec{x}\bvasrel{\IB} \vec{c}+\vec{y}$.
\begin{lemma}\label{lem:transformerper}
  The transformer relation $\transformer{\IB}{\vec{c}}$ is an asymptotically-definable periodic relation. 
\end{lemma}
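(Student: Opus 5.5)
We have to show two things about the relation $R \coloneqq {\transformer{\IB}{\vec{c}}} \subseteq \setN^d\times\setN^d$, viewed as a subset of $\setZ^{2d}$: first that $R$ is periodic, and second that the cone $\cone{R}$ is definable. The plan is to get periodicity directly from the combinatorics of runs. Definability will then follow from \cref{lem:almostrel} combined with \cref{lem:almostperiodic}.

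\emph{Periodicity.} Since $\vec{c}\bvasrel{\IB}\vec{c}$ via the trivial run $(\vec{c},())$, whose source $\vec{c}$ lies in $\vec{\Delta}_0^*\,\vec{c}\,\vec{\Delta}_0^*$, we have $(\vec{0},\vec{0})\in R$. Now assume $\vec{x}_1 \transformer{\IB}{\vec{c}} \vec{y}_1$ and $\vec{x}_2 \transformer{\IB}{\vec{c}} \vec{y}_2$. We need a run from $\vec{c}+\vec{x}_1+\vec{x}_2$ to $\vec{c}+\vec{y}_1+\vec{y}_2$, with the extra leaves taken from $\vec{\Delta}_0$.
\begin{enumerate}
\item Take a run $\rho_1$ whose source is $u_1(\vec{c}+\vec{x}_1)v_1$ with $u_1,v_1\in\vec{\Delta}_0^*$ and whose target is $\vec{c}+\vec{y}_1$.
\item Apply \cref{fact:monotony} to $\rho_1$, adding $\vec{x}_2$ to the distinguished leaf. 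This gives a run with source $u_1(\vec{c}+\vec{x}_1+\vec{x}_2)v_1$ and target $\vec{c}+\vec{x}_2+\vec{y}_1$.
\item Take a run $\rho_2$ from $u_2(\vec{c}+\vec{x}_2)v_2$ to $\vec{c}+\vec{y}_2$, and apply \cref{fact:monotony} with $\vec{y}_1$ added to its distinguished leaf. This gives a run from $u_2(\vec{c}+\vec{x}_2+\vec{y}_1)v_2$ to $\vec{c}+\vec{y}_1+\vec{y}_2$.
\item Graft the run of step~2 in place of the distinguished leaf of the run of step~3. Formally this is an easy induction on the run: replace the leaf subterm by a run with the same target. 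The result has source $u_2u_1(\vec{c}+\vec{x}_1+\vec{x}_2)v_1v_2$, which lies in $\vec{\Delta}_0^*(\vec{c}+\vec{x}_1+\vec{x}_2)\vec{\Delta}_0^*$.
\end{enumerate}
Hence $R+R\subseteq R$, and $R$ is periodic.

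\emph{Asymptotic definability.} Consider the semilinear relation $\vec{L}\coloneqq(\vec{c}+\setN^d)\times(\vec{c}+\setN^d)\subseteq\setN^d\times\setN^d$. By \cref{lem:almostrel}, ${\bvasrel{\IB}}\cap\vec{L}$ is almost semilinear. By definition of the transformer relation, $R = ({\bvasrel{\IB}}\cap\vec{L}) - (\vec{c},\vec{c})$. A translate of an almost semilinear set is almost semilinear: just shift each base vector $\vec{b}_i$ by $-(\vec{c},\vec{c})$ and keep the periodic parts. So $R$ is an almost semilinear periodic subset of $\setZ^{2d}$, and \cref{lem:almostperiodic} shows that it is asymptotically-definable.

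The only step needing care is the grafting in step~4. There we must check that replacing a leaf of a run by a run with the same target still yields a run, and that the sources concatenate as claimed. Both follow from the term-based definition by a routine induction, since the actions at ancestor nodes depend only on the targets of their children. Everything else is direct bookkeeping.
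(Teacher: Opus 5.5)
Your proof is correct and follows the same route as the paper. Your grafting step is transitivity of $\bvasrel{\IB}$ combined with monotony, which is exactly the paper's chain $\vec{x}_1+\vec{x}_2 \transformer{\IB}{\vec{c}} \vec{y}_1+\vec{x}_2 \transformer{\IB}{\vec{c}} \vec{y}_1+\vec{y}_2$; for asymptotic definability you likewise write the relation as the translate $(\bvasrel{\IB}\cap((\vec{c},\vec{c})+\setN^d\times\setN^d))-(\vec{c},\vec{c})$ and apply \cref{lem:almostrel} and then \cref{lem:almostperiodic}.
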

\begin{proof}
  Since $\transformer{\IB}{\vec{c}}$ is equal to $(\bvasrel{\IB}\cap ((\vec{c},\vec{c})+\setN^d\times\setN^d))-(\vec{c},\vec{c})$, we deduce from \cref{lem:almostrel} that $\transformer{\IB}{\vec{c}}$ is an almost semilinear relation.
  Clearly $\transformer{\IB}{\vec{c}}$ is transitive and reflexive. Moreover, it is periodic since $\vec{x}_1 \transformer{\IB}{\vec{c}}\vec{y}_1$ and $\vec{x}_2 \transformer{\IB}{\vec{c}}\vec{y}_2$ implies $\vec{x}_1+\vec{x}_2 \transformer{\IB}{\vec{c}}\vec{y}_1+\vec{x}_2\transformer{\IB}{\vec{c}}\vec{y}_1+\vec{y}_2$.
 It follows from \cref{lem:almostperiodic} that $\transformer{\IB}{\vec{c}}$ is asymptotically-definable.
\end{proof}

We are now ready to prove the following lemma.
\begin{lemma}\label{lem:Passymptotic}
  The periodic set $\vec{P}_\rho$ is asymptotically-definable for every irun $\rho\in\IRuns{\IB}$.
\end{lemma}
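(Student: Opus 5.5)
We argue by structural induction on $\rho$, as announced at the beginning of \cref{subsec:pavas}, and build $\vec{P}_\rho$ from the transformer relations of \cref{lem:transformerper} using the closure properties of \cref{lem:coneperstable}.

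\emph{Base case.} Let $\ra{\rho}=0$, so $\rho=(\vec{c},())$ with $\vec{c}\in\vec{\Delta}_0$. An irun $\sigma$ with $\rho\trianglelefteq\sigma$ is exactly an irun having some subrun $\beta'\sqsubseteq\sigma$ that is a leaf labelled $\vec{c}'$ with $\vec{c}\le\vec{c}'$, $\vec{c}'\in\vec{\Delta}_0$, and with $\vec{c}\le\tgt{\sigma}$. Replacing this leaf by a leaf carrying $\vec{c}$ and shifting its ancestors, via monotony (\cref{fact:monotony}), shows that such a $\sigma$ exists with target $\vec{c}+\vec{y}$ whenever $\vec{0}\transformer{\IB}{\vec{c}}\vec{y}$. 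Conversely, any such $\sigma$ witnesses $\vec{c}\bvasrel{\IB}\tgt{\sigma}$. The extra constraint $\tgt{\beta'}\ge\tgt{\rho}$ is automatic. Hence
\[
\vec{P}_\rho=\{\vec{y}\mid \vec{0}\transformer{\IB}{\vec{c}}\vec{y}\},
\]
which is a section of an asymptotically-definable periodic relation. Intersecting with $\{\vec{0}\}\times\setZ^d$ and projecting, \cref{lem:coneperstable} shows that it is asymptotically-definable.

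\emph{Inductive step.} Let $\ra{\rho}=n\ge1$, with children $\rho_1,\ldots,\rho_n$ and $\vec{a}=\act{\rho}$. By unfolding the definition of $\trianglelefteq$, the iruns $\sigma\trianglerighteq\rho$ are obtained as follows. Take $\sigma_j\trianglerighteq\rho_j$ and form $\beta'=(\vec{a}+\sum_j\tgt{\sigma_j},(\sigma_1,\ldots,\sigma_n))$, whose target is $\tgt{\rho}+\sum_j\vec{x}_j$ with $\vec{x}_j\in\vec{P}_{\rho_j}$. Then embed $\beta'$ as a subrun of an irun $\sigma$ with $\tgt{\sigma}\ge\tgt{\rho}$. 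The context above $\beta'$ amounts to $\tgt{\beta'}\bvasrel{\IB}\tgt{\sigma}$. The direction from a context to this relation is immediate. The converse direction uses that an irun with source in $\vec{\Delta}_0^*\,\tgt{\beta'}\,\vec{\Delta}_0^*$ has a leaf that can be replaced by $\beta'$ itself. This gives
\[
\vec{P}_\rho=\{\vec{y}\mid \exists \vec{x}\in \vec{P}_{\rho_1}+\cdots+\vec{P}_{\rho_n},\ \vec{x}\transformer{\IB}{\tgt{\rho}}\vec{y}\}.
\]
The inequality $\tgt{\sigma}\ge\tgt{\rho}$ is encoded by taking the transformer relation at base $\tgt{\rho}$ on $\setN^d$.

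\emph{Cone computation.} Write $\vec{R}\subseteq\setZ^d\times\setZ^d$ for the transformer relation and $\vec{P}'=\sum_j\vec{P}_{\rho_j}$. Then $\vec{P}_\rho$ is the projection onto the second component of the periodic set $\vec{R}\cap(\vec{P}'\times\setZ^d)$. By the induction hypothesis and the second item of \cref{lem:coneperstable}, $\cone{\vec{P}'}$ is definable. The first and third items of \cref{lem:coneperstable} give
\[
\cone{\vec{P}_\rho}=\pi\bigl(\cone{\vec{R}}\cap(\cone{\vec{P}'}\times\setQ^d)\bigr).
\]
This set is definable in $\fo{\setQ,+,\leq}$ because $\cone{\vec{R}}$ is definable by \cref{lem:transformerper}.

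\emph{Main obstacle.} The delicate step is proving the displayed characterisation of $\vec{P}_\rho$ exactly, in both directions. The issue is the interplay between $\beta'\sqsubseteq\sigma$ and the requirement that the context above $\beta'$ corresponds precisely to $\bvasrel{\IB}$ with the other leaves in $\vec{\Delta}_0$. I would prove this as a separate claim by induction on the depth of $\beta'$ in $\sigma$, reusing the replace-and-shift argument from the proof of \cref{lem:Qinclusion}.
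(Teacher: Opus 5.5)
Your proof follows the paper's route. You use the same recursive description $\vec{P}_\rho=\{\vec{y}\mid\exists\vec{x}\in\vec{P}_{\rho_1}+\cdots+\vec{P}_{\rho_n},\ \vec{x}\transformer{\IB}{\tgt{\rho}}\vec{y}\}$ and the same cone computation via \cref{lem:coneperstable,lem:transformerper}. Your justification of this equality in the inductive step is correct and more explicit than the paper's one-line ``notice that''. That justification cuts $\beta'$ out of $\sigma$ as a leaf, or conversely plugs $\beta'$ into a leaf. The shifting argument from \cref{lem:Qinclusion} that you anticipate in your last paragraph is not needed, since both operations preserve all targets.

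There is, however, an error in your base case. You drop the condition $\act{\rho}=\act{\beta'}$ from the definition of $\trianglelefteq$. For a leaf $\rho=(\vec{c},())$ one has $\act{\rho}=\vec{c}$, so $\rho\trianglelefteq\sigma$ forces $\sigma$ to contain a leaf labelled exactly $\vec{c}$, not some $\vec{c}'\geq\vec{c}$.

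Under your looser reading, the ``conversely'' step fails, and the displayed equality fails with it. Take $d=1$, $\vec{\Delta}_0=\{1,2\}$ and $\B=(\{-2\})$. The leaf $(2,())$ would then dominate $(1,())$ and put $1$ into $\vec{P}_{(1,())}$. But $\{\vec{y}\mid 0\transformer{\IB}{1}\vec{y}\}=\{0\}$, and under your reading the resulting set $\{0,1\}$ is not even periodic. Likewise, replacing a leaf by a smaller one and shifting its ancestors down is not something \cref{fact:monotony} provides.

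With the correct reading, both inclusions are immediate. A run witnessing $\vec{c}\bvasrel{\IB}\vec{c}+\vec{y}$ is itself an irun containing the leaf $\vec{c}$, and conversely. In fact no separate base case is needed. It is the $n=0$ instance of the general formula, with the empty sum equal to $\{\vec{0}\}$, which is how the paper handles it.
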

\begin{proof}
  Notice that for every irun $\rho\coloneqq(\vec{c},\rho_1,\ldots,\rho_n)$ the following equality holds:
  $$\vec{P}_\rho=\{\vec{y}\in \setN^d \mid \exists \vec{x}\in \vec{P}_{\rho_1}+\cdots+\vec{P}_{\rho_n} \mid \vec{x} \transformer{\IB}{\vec{c}}\vec{y}\}$$
  From \cref{lem:coneperstable}, we derive the following equality:
  $$\cone{\vec{P}_\rho}=\{\vec{y}\in \setQ_{\geq 0}^d \mid \exists \vec{x}\in \cone{\vec{P}_{\rho_1}}+\cdots+\cone{\vec{P}_{\rho_n}} \mid (\vec{x},\vec{y}) \in\cone{\transformer{\IB}{\vec{c}}}\}$$
  Let us recall from \cref{lem:transformerper} that $\cone{\transformer{\IB}{\vec{c}}}$ is a definable cone. So, by induction on the structure of $\rho$, the previous equality shows that $\cone{\vec{P}_\rho}$ is a definable cone. 
\end{proof}

We conclude this subsection with the proof of \cref{lem:Qfinite}.
Recall that this lemma claims that the periodic set $\vec{Q}_w$ is finitely-generated for every directed irun $w$ of an IBVAS $\IB$.
Let us also recall that,
by definition,
$\vec{P}_w = \vec{P}_{\rho_w} \cap \vec{C}_w$ and
$\vec{Q}_w = \clin{\vec{P}_w}$.
We derive that
$\cone{\vec{P}_w} = \cone{\vec{P}_{\rho_w}} \cap \vec{C}_w$ is definable,
since $\cone{\vec{P}_{\rho_w}}$ is definable and $\vec{C}_w$ is a finitely-generated cone.
It follows from \cref{lem:clin-of-asymp-def-is-fg} that $\vec{Q}_w$ is finitely-generated.
We also observe that $\tgt{\IRuns{w}} = \tgt{\rho_w} + \vec{P}_w$ is almost linear.

\section{Safe Linearization for Branching VAS}\label{sec:wsvas}
In this section,
we present the main ingredient for the first step of our proof,
namely the step where we expand the attractor
(see the overview at the end of \cref{sec:big-picture}).
This main ingredient is provided by the following technical lemma,
that we call \emph{safe linearization} lemma for BVAS.

\begin{lemma}
  \label{lem:bvas-safelin}
  Let $\Inv$ be an inductive invariant for an IBVAS $\IB \coloneqq (\vec{\Delta}_0, \B)$
  of the form $\Inv = \vec{A} \cup \reach{\IB}$ for some semilinear set $\vec{A} \subseteq \setN^d$, and
  let $\vec{b} + \vec{P} \subseteq \setN^d$ be an almost linear set.
  Define $\vec{Q} \coloneqq \clin{\vec{P}}$.
  For every semilinear set $\vec{T} \subseteq \setN^d$ such that
  $\reach{(\vec{b} + \vec{P}) \,{\shuffle}\, \Inv^*} \subseteq \vec{T} \subseteq \Inv + \vec{Q}$,
  there exists $p \in \vec{P}$ such that
  $\reach{(\vec{b} + \vec{p} + \vec{Q})^+ \,{\shuffle}\, \Inv^*} \subseteq \vec{T}$.
\end{lemma}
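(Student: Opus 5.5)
The plan is to argue by induction on the number of leaves of a run that are labeled by $\vec{b}+\vec{p}+\vec{Q}$, reducing each step to a statement about a single ``main branch''. A run $\rho$ with $\src{\rho} \in (\vec{b}+\vec{p}+\vec{Q})^+ \,{\shuffle}\, \Inv^*$ has a path from a distinguished leaf $\vec{b}+\vec{p}+\vec{q}$ to the root. Every subtree hanging off this path has its source in $(\vec{b}+\vec{p}+\vec{Q})^* \,{\shuffle}\, \Inv^*$ with fewer distinguished leaves. By the induction hypothesis, together with $\reach{\Inv^+} \subseteq \Inv$, the targets of these side subtrees lie in $\Inv \cup \vec{T}$. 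The gap between $\vec{T}$ and $\Inv$ is the first difficulty; I return to it below. Ignoring it, we reduce to the following relation: $\vec{x} \to_{\Inv} \vec{y}$ when $\vec{y}$ is obtained from $\vec{x}$ along a single branch whose side leaves are labeled by configurations of $\Inv$.

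This single-branch system is exactly a well-structured VAS with infinitely many actions, namely the actions $\vec{a} + \vec{i}_1 + \cdots + \vec{i}_{n-1}$ with $\vec{a} \in \vec{\Delta}_n$ and $\vec{i}_j \in \Inv$. Since $\Inv = \vec{A} \cup \reach{\IB}$ with $\vec{A}$ semilinear and $\reach{\IB}$ almost semilinear, I expect the relation $\to_{\Inv}$ to inherit monotony: if $\vec{x} \to_{\Inv} \vec{y}$ then $\vec{x}+\vec{z} \to_{\Inv} \vec{y}+\vec{z}$ for every $\vec{z} \in \setN^d$ (by \cref{fact:monotony}). This monotony is the only property of $\to_{\Inv}$ the pumping argument below needs.

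The core step is a pumping argument based on \cref{lem:pushin}. Let $\vec{p}_0 \in \vec{P}$ be the vector provided by \cref{lem:pushin}, and set $\vec{p} \coloneqq m\vec{p}_0$ for an integer $m$ to be fixed later; by \cref{lem:centercylindric}, $\vec{p}$ still lies in the center of $\vec{P}$. Suppose $\vec{b}+\vec{p}+\vec{q} \to_{\Inv} \vec{t}$ with $\vec{q} \in \vec{Q}$, and put $\vec{x} \coloneqq \vec{p}+\vec{q}$. Since $\vec{Q} \subseteq \vec{P}-\vec{P}$, the vector $\vec{g} \coloneqq \vec{p}+\vec{q}$ lies in $\vec{P}-\vec{P}$. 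Hence \cref{lem:pushin} gives $\vec{g}+k\vec{x} \in \vec{P}$ for all large $k$. Monotony then yields $\vec{b}+\vec{g}+k\vec{x} \to_{\Inv} \vec{t}+k\vec{x}$, so $\vec{t}+k\vec{x} \in \reach{(\vec{b}+\vec{P}) \,{\shuffle}\, \Inv^*} \subseteq \vec{T}$ for all large $k$.

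It remains to push this membership back to $\vec{t}$ itself, and this is where I expect the main obstacle. A semilinear set can contain $\vec{t}+k\vec{x}$ for all large $k$ without containing $\vec{t}$. I would handle this in two ways.
\begin{itemize}
\item Choose $m$ as a common multiple of the periods of the linear components of $\vec{T}$, and of its complement restricted to $\Inv+\vec{Q}$. Membership along direction $\vec{x}$ then becomes $m$-periodic.
\item Use $\vec{t} \in \Inv+\vec{Q}$ and the fact that $\vec{t}$ is already ``deep'' in the direction of $\vec{p}$, because $\vec{p}$ was chosen large. Then $\vec{t}$ falls in the eventually periodic regime.
\end{itemize}
Concretely, I would decompose $\vec{T}$ and $\setN^d \setminus \vec{T}$ into linear sets. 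I would then show that for $\vec{p}$ far enough inside the relative interior of $\cone{\vec{Q}}$, no linear component of the complement can meet $\vec{t}$ while missing $\vec{t}+k\vec{x}$ for all large $k$ with $k \equiv 0 \bmod m$. The same choice of $m$ and depth must work uniformly for all $\vec{q} \in \vec{Q}$. I expect this to follow from a wqo argument on $\vec{Q}$-cylindric sets, since $\vec{Q}$ is finitely generated by \cref{lem:Qfinite}, combined with the finiteness of the semilinear description of $\vec{T}$.

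Finally, the gap between $\vec{T}$ and $\Inv$ in the induction must be closed. When the targets of side subtrees lie in $\vec{T}$ rather than $\Inv$, I would replace $\Inv$ by $\Inv \cup \vec{T}$ in the single-branch system. The same pumping argument still applies, using the hypothesis $\vec{T} \subseteq \Inv+\vec{Q}$: it lets such side targets be absorbed into the $\vec{Q}$-shift of the distinguished leaf.
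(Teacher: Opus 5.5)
\textbf{There is a genuine gap in your core pumping step.} You say monotony is the only property of the single-branch relation that you need. But monotony (\cref{fact:monotony}) only lets you add the \emph{same} vector to both the source and the target of a branch. What you extract is therefore only that $\vec{t}+k\vec{x}\in\vec{T}$ for all large $k$, and this carries no information about $\vec{t}$ itself.

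Consider the case that actually arises: $\setN^d\setminus\vec{T}$ contains an isolated point $\vec{c}$, as happens when $\vec{\Phi}=\setN^d\setminus\{\vec{c}\}$. If the branch from $\vec{b}+\vec{p}+\vec{q}$ reaches $\vec{t}=\vec{c}$, then $\vec{c}+k\vec{x}\in\vec{T}$ for large $k$ is perfectly consistent. You get no contradiction, whatever modulus $m$ or depth of $\vec{p}$ you choose.

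Your remedy assumes that $\vec{t}$ is ``deep in the direction of $\vec{p}$'' because the source is. This is false: runs can decrease counters, so a run from a deep source can end at an arbitrarily small target. Moreover, the linear component $\{\vec{c}\}$ of the complement is exactly one that meets $\vec{t}$ while missing every $\vec{t}+k\vec{x}$. The whole content of the lemma is that such bad targets are unreachable, and an argument that translates source and target together cannot detect this.

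\textbf{The missing idea} is to give the single-branch system a well-quasi-order on runs with the amalgamation property. This is exactly what the WSVAS structure provides (\cref{lem:bvas-to-wsvas}):
\begin{itemize}
\item each transition records its side iruns, ordered by $\trianglelefteq$;
\item each transition records its side elements of the linear components $\vec{b}_i+\vec{Q}_i$ of $\vec{A}$, ordered by $\leq_{\vec{Q}_i}$;
\item the diamond property comes from \cref{lem:ibvas-wpo-and-amalgamation}.
\end{itemize}

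The paper (\cref{lem:wsvas-safelin}) then argues by contradiction, using a whole sequence of bad runs $\rho_n$ from $\vec{s}_n\in\vec{b}+n\vec{p}+\vec{Q}$ to $\vec{t}_n\notin\vec{T}$.
\begin{itemize}
\item It extracts $m<n$ with $\rho_m\trianglelefteq\rho_n$, $\vec{q}_m\leq_{\vec{Q}}\vec{q}_n$, and $\vec{t}_m+\setN(\vec{t}_n-\vec{t}_m)\subseteq\setN^d\setminus\vec{T}$. The last condition uses almost-fullness of this relation on semilinear sets.
\item It amalgamates $r$ times (\cref{cor:wsvas-amalgamation}). This gives a run from $\vec{s}_m+r(\vec{s}_n-\vec{s}_m)$ to $\vec{t}_m+r(\vec{t}_n-\vec{t}_m)\notin\vec{T}$.
\item By \cref{lem:pushin}, the source $\vec{s}_m+r(\vec{s}_n-\vec{s}_m)$ lies in $\vec{b}+\vec{P}$, contradicting the hypothesis on $\vec{T}$.
\end{itemize}
The essential point is that source and target are pumped by \emph{different} vectors, with the target pushed along a ray that stays outside $\vec{T}$. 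Monotony can never provide this.

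\textbf{A secondary issue: the multi-leaf reduction.} Your idea of absorbing the $\vec{Q}$-parts of side targets into the distinguished leaf is the right one; it is \cref{lem:bvas-branch-like}. However, it only works if the induction tracks membership in the single-leaf set $\reach{\vec{X}\,{\shuffle}\,\Inv^*}$, with $\vec{X}=\vec{b}+\vec{p}+\vec{Q}$. Tracking mere membership in $\vec{T}$ does not suffice, because you need an actual single-leaf run for the main child in order to push the $\vec{Q}$-parts onto it. There is also no need to enlarge $\Inv$ to $\Inv\cup\vec{T}$.
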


The remainder of this section is devoted to the proof of \cref{lem:bvas-safelin}.
Apart from this lemma,
the material presented in this section is not used in the sequel.
Firstly,
we introduce a natural extension of VAS,
shortly called WSVAS,
and prove a safe linearization lemma for this extension.
Secondly,
we show how to simulate BVAS runs
with source in $\vec{X} \,{\shuffle}\, \Inv^*$,
where $\vec{X} \subseteq \setN^d$,
by WSVAS runs with source in $\vec{X}$.
Thirdly,
we prove that $\reach{\vec{X} \,{\shuffle}\, \Inv^*} = \reach{\vec{X}^+ \,{\shuffle}\, \Inv^*}$
under a suitable condition on $\vec{X} \subseteq \setN^d$.
Lastly,
we combine these three results to prove \cref{lem:bvas-safelin}.

\subsection{Well-Structured VAS}
\label{subsec:wsvas}

In this subsection,
we prove a safe linearization lemma for an extension of VAS that permits infinitely many actions.
The definition of this extension relies on the notion of
diamond maps.
We call a map $f$ from a qoset $(X, \preceq)$ to $(\setZ^d, \leq)$
\emph{diamond}\footnote{%
  The name \emph{diamond} comes from the observation that
  if $x \sqsubseteq a, b \sqsubseteq y$ and
  $f(x) + f(y) = f(a) + f(b)$ then
  the differences
  $d_{xa} = f(a) - f(x)$, $d_{xb} = f(b) - f(x)$,
  $d_{ay} = f(y) - f(a)$ and $d_{by} = f(y) - f(b)$ satisfy
  $d_{xa} = d_{by}$ and $d_{xb} = d_{ay}$.
} when for every $x, a, b \in X$ such that $x \sqsubseteq a, b$,
there exists $y \in X$ such that $a, b \sqsubseteq y$ and
$f(x) + f(y) = f(a) + f(b)$.

\smallskip

A \emph{well-structured VAS} (or \emph{WSVAS} for short)
is a triple $\V \equaldef (T, \sqsubseteq, \delta)$ where
$(T, \sqsubseteq)$ is a wqo and
$\delta$ is an order-preserving and diamond map
from $(T, \sqsubseteq)$ to $(\setZ^d, \leq)$.
Elements of $T$ are called \emph{transitions}.
Given a transition $t \in T$,
the vector $\delta(t) \in \setZ^d$ is called the \emph{action} of $t$.
Notice that every plain VAS $\vec{A} \subseteq \setZ^d$ can be viewed as a WSVAS $(T, \sqsubseteq, \delta)$
by letting
$T$ be the finite set $\vec{A}$,
$\sqsubseteq$ be the equality relation on $\vec{A}$, and
$\delta$ be the identity function on $\vec{A}$.

\smallskip

We extend to WSVAS the usual semantic notions associated with VAS.
This extension is natural,
but we provide the details for the sake of completeness.
Let us consider a WSVAS $\V \equaldef (T, \sqsubseteq, \delta)$ for the rest of this subsection.
A \emph{configuration} of $\V$ is (as usual) a vector in $\setN^d$.
A \emph{run} of $\V$ is a finite, non-empty, alternating sequence
$\rho \equaldef (\vec{c}_0, t_1, \vec{c}_1, \ldots, t_k, \vec{c}_k)$
of configurations $\vec{c}_i$ and transitions $t_i$,
satisfying $\vec{c}_i = \vec{c}_{i-1} + \delta(t_i)$ for all $i \in \{1, \ldots, k\}$.
We call $k$ the \emph{length} of $\rho$,
$\vec{c}_0$ the \emph{source} of $\rho$, written $\src{\rho}$, and
$\vec{c}_k$ the \emph{target} of $\rho$, written $\tgt{\rho}$.
The \emph{direction} of $\rho$ is the pair $\dir{\rho} \equaldef (\src{\rho}, \tgt{\rho})$.
The set of runs of $\V$ is written $\Runs{\V}$.
Given a set $\vec{X} \subseteq \setN^d$,
we let $\reach{\vec{X}}$ denote the set of targets of runs $\rho$ such that $\source{\rho} \in \vec{X}$.

\smallskip

It is well-known that VAS runs can be equipped with a well-quasi-order~\cite{DBLP:journals/tcs/Jancar90,DBLP:conf/popl/Leroux11}
that satisfies the amalgamation property~\cite{LS15}.
This well-quasi-order can be lifted to WSVAS as follows.
We introduce the binary relation $\trianglelefteq$ on runs of $\V$ defined by
$\rho \trianglelefteq \sigma$ if
$\rho = (\vec{c}_0, t_1, \vec{c}_1, \ldots, t_k, \vec{c}_k)$ and
$\sigma = \sigma_0 u_1 \sigma_1 \cdots u_k \sigma_k$
for
some configurations $\vec{c}_0, \ldots, \vec{c}_k \in \setN^d$,
some runs $\sigma_0, \ldots, \sigma_k \in \Runs{\V}$ and
some transitions $t_1, \ldots, t_k, u_1, \ldots, u_k \in T$
such that
$\vec{c}_i \leq \src{\sigma_i}, \tgt{\sigma_i}$ for all $i \in \{0, \ldots, k\}$ and
$t_i \sqsubseteq u_i$ for all $i \in \{1, \ldots, k\}$.
By lifting to WSVAS the proof arguments for VAS~\cite{DBLP:journals/tcs/Jancar90,DBLP:conf/popl/Leroux11,LS15},
we obtain that $\trianglelefteq$ is a well-quasi-order that satisfies the amalgamation property,
see \cref{lem:wsvas-wqo-and-amalgamation} for the precise statement.
The detailed proof of this lemma is deferred to \cref{app:wsvas} to avoid disrupting the flow of the paper.

\begin{restatable}{lemma}{lemWsvasWqoAndAmalgamation}
  \label{lem:wsvas-wqo-and-amalgamation}
  The pair $(\Runs{\V}, \trianglelefteq)$ is a wqo and satisfies the amalgamation property,
  i.e.,
  for every runs $\rho \trianglelefteq \alpha, \beta$,
  there exists a run $\sigma$ such that
  $\alpha, \beta \trianglelefteq \sigma$ and $\dir{\rho} + \dir{\sigma} = \dir{\alpha} + \dir{\beta}$.
\end{restatable}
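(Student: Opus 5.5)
We need to prove that $(\Runs{\V},\trianglelefteq)$ is a wqo with amalgamation, lifting the classical VAS arguments (Jan\v{c}ar, Leroux, Leroux--Schmitz).

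\textbf{Well-quasi-ordering.} We encode each run as a word over the qoset $\Sigma \coloneqq (\setN^d\times\setN^d) \cup T$. A run $\rho=(\vec{c}_0,t_1,\vec{c}_1,\ldots,t_k,\vec{c}_k)$ is read as an alternating word of configurations and transitions. Three ingredients are combined.
\begin{itemize}
\item By Dickson's lemma, $(\setN^d,\leq)$ is a wqo.
\item $(T,\sqsubseteq)$ is a wqo by assumption.
\item By Higman's lemma, the subword embedding over a wqo alphabet is a wqo.
\end{itemize}
Higman alone is not enough, because the definition of $\trianglelefteq$ requires each configuration $\vec{c}_i$ to be dominated by both the source and the target of a whole run $\sigma_i$, not by a single letter. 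I would therefore follow Jan\v{c}ar's and Leroux's proof and argue by induction on a ranking.

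For each run $\rho$, consider its sequence of configurations. Split on whether a configuration repeats under $\leq$ along $\rho$, in the style of the Karp--Miller / Rackoff argument. A cleaner route is to label each position $i$ of $\rho$ with a tuple in a wqo that contains three components:
\begin{itemize}
\item the configuration $\vec{c}_i$;
\item the transition $t_i$;
\item the set $\{\vec{c}_j : j\ge i\}$ compared by the Hoare (or Smyth) order.
\end{itemize}
The hoped-for conclusion is that a Higman embedding of these labelled words yields $\trianglelefteq$.

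I expect this to be the main obstacle. An embedding of positions $i\mapsto \phi(i)$ must be complemented by showing that the segment of $\sigma$ between $\phi(i)$ and $\phi(i+1)$, minus the transition $u_{i+1}$ at $\phi(i+1)$, is a run $\sigma_i$ whose source and target both dominate $\vec{c}_i$. The standard resolution is to work with \emph{unrolled} data: decorate each run with, for each position, the pair (current configuration, minimal configuration on the suffix up to the next embedded point). The cleanest approach is probably the "ideal/decomposition" proof of Leroux--Schmitz, by induction on $d$ over the coordinates that are bounded versus unbounded. Only wqo properties of $T$ and $\setN^d$ are used, so replacing a finite action set by the wqo $T$ should go through verbatim, with monotony replaced by order-preservation of $\delta$.

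\textbf{Amalgamation.} Take $\rho\trianglelefteq\alpha,\beta$ with decompositions $\alpha=\alpha_0 a_1\alpha_1\cdots a_k\alpha_k$ and $\beta=\beta_0 b_1\beta_1\cdots b_k\beta_k$, where $t_i\sqsubseteq a_i,b_i$. Apply the diamond property of $\delta$ to each triple $t_i\sqsubseteq a_i,b_i$. This gives $s_i$ with $a_i,b_i\sqsubseteq s_i$ and $\delta(t_i)+\delta(s_i)=\delta(a_i)+\delta(b_i)$.

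Next, define $\sigma \coloneqq \gamma_0 s_1\gamma_1\cdots s_k\gamma_k$, where $\gamma_i$ is the concatenation $\alpha_i$ shifted by $\src{\beta_i}-\vec{c}_i$, followed by $\beta_i$ shifted by $\tgt{\alpha_i}-\vec{c}_i$. Both shifts are nonnegative, so by monotony these are runs in $\setN^d$.

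We then check three things.
\begin{itemize}
\item \emph{Consistency of configurations.} The source of $\gamma_i$ equals $\src{\alpha_i}+\src{\beta_i}-\vec{c}_i$ and its target equals $\tgt{\alpha_i}+\tgt{\beta_i}-\vec{c}_i$. Together with the diamond identity, this makes consecutive configurations match.
\item \emph{$\alpha\trianglelefteq\sigma$ and $\beta\trianglelefteq\sigma$.} Each configuration of $\alpha_i$ is embedded position-wise into the first half of $\gamma_i$, with single-configuration subruns, and each $a_i$ maps to $s_i$. The symmetric embedding handles $\beta$.
\item \emph{Direction identity.} The equation $\dir{\rho}+\dir{\sigma}=\dir{\alpha}+\dir{\beta}$ follows by reading off the sources and targets computed above.
\end{itemize}
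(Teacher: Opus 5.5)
\textbf{Amalgamation.} This half is the paper's construction: diamond elements $s_i$, and $\gamma_i = \uprun{\alpha_i}{(\src{\beta_i}-\vec{c}_i)} \runcat \uprun{\beta_i}{(\tgt{\alpha_i}-\vec{c}_i)}$. It works, with one correction to the embedding of $\alpha$. The last configuration $\tgt{\alpha_i}$ cannot receive a single-configuration subrun. It must absorb the whole shifted copy of $\beta_i$, whose source and target dominate it because $\vec{c}_i \le \src{\beta_i}, \tgt{\beta_i}$. Symmetrically, the first configuration of $\beta_i$ must absorb the shifted copy of $\alpha_i$.

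\textbf{The gap: the wqo half is not proved.} You list possible strategies (Hoare/Smyth orders on suffix sets, Karp--Miller style splitting, an ideal-based induction on $d$), but none is carried out. Moreover, the obstacle you identify disappears with the right alphabet.

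The missing idea is to encode $\rho = (\vec{c}_0, t_1, \vec{c}_1, \ldots, t_k, \vec{c}_k)$ by its word of \emph{steps}
$\enc{\rho} = (\vec{c}_0, t_1, \vec{c}_1) \cdots (\vec{c}_{k-1}, t_k, \vec{c}_k)$.
This is a word over $\setN^d \times T \times \setN^d$, ordered componentwise.

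\begin{itemize}
\item Suppose $\enc{\rho} \preceq^* \enc{\sigma}$ maps step $i$ to a step $(\vec{x}_i, u_i, \vec{y}_i)$ of $\sigma$. Let $\sigma_i$ be the segment of $\sigma$ strictly between the images of steps $i$ and $i+1$. Then $\sigma_i$ is a run with $\src{\sigma_i} = \vec{y}_i \ge \vec{c}_i$ and $\tgt{\sigma_i} = \vec{x}_{i+1} \ge \vec{c}_i$.
\item The reason is that each inner configuration of $\rho$ occurs in two consecutive letters, once as target and once as source. This gives exactly the double domination you believed Higman's lemma could not deliver.
\item The end configurations $\vec{c}_0$ and $\vec{c}_k$ are handled by additionally requiring $\dir{\rho} \le \dir{\sigma}$.
\item Conversely, $\rho \trianglelefteq \sigma$ implies both conditions. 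Hence $\rho \trianglelefteq \sigma$ holds if, and only if, $\dir{\rho} \le \dir{\sigma}$ and $\enc{\rho} \preceq^* \enc{\sigma}$.
\end{itemize}

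The right-hand side is a wqo by Dickson's and Higman's lemmas. The characterization also gives transitivity of $\trianglelefteq$, which your proposal never addresses. Without this step, or some completed replacement, the first half of the lemma remains unproved.
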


\begin{corollary}
  \label{cor:wsvas-amalgamation}
  For every runs $\rho \trianglelefteq \sigma$
  and for every $r \in \setN_{>0}$,
  there exists a run $\sigma \trianglelefteq \tau$ such that
  $\dir{\tau} = \dir{\rho} + r (\dir{\sigma} - \dir{\rho})$.
\end{corollary}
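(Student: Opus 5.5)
Corollary~\ref{cor:wsvas-amalgamation} follows from the amalgamation property of Lemma~\ref{lem:wsvas-wqo-and-amalgamation}, iterated $r-1$ times. We proceed by induction on $r \in \setN_{>0}$ and prove the slightly stronger statement that there is a run $\tau_r$ with $\sigma \trianglelefteq \tau_r$ and $\dir{\tau_r} = \dir{\rho} + r(\dir{\sigma} - \dir{\rho})$.

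For $r = 1$ we take $\tau_1 \coloneqq \sigma$; reflexivity of $\trianglelefteq$ (it is a quasi-order) gives $\sigma \trianglelefteq \tau_1$.

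For the inductive step, assume $\tau_r$ has been built. By transitivity, $\rho \trianglelefteq \sigma \trianglelefteq \tau_r$ gives $\rho \trianglelefteq \tau_r$. Together with $\rho \trianglelefteq \sigma$, we apply the amalgamation property of Lemma~\ref{lem:wsvas-wqo-and-amalgamation} to the pair $\alpha \coloneqq \sigma$ and $\beta \coloneqq \tau_r$. This yields a run $\tau_{r+1}$ with $\sigma, \tau_r \trianglelefteq \tau_{r+1}$ and $\dir{\rho} + \dir{\tau_{r+1}} = \dir{\sigma} + \dir{\tau_r}$. Hence
\[
  \dir{\tau_{r+1}} = \dir{\sigma} - \dir{\rho} + \dir{\rho} + r(\dir{\sigma} - \dir{\rho}) = \dir{\rho} + (r+1)(\dir{\sigma} - \dir{\rho}),
\]
and $\sigma \trianglelefteq \tau_{r+1}$ holds by construction.

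There is no real obstacle here. The only point to check is that $\trianglelefteq$ on WSVAS runs is reflexive and transitive, which is part of it being a wqo as asserted in Lemma~\ref{lem:wsvas-wqo-and-amalgamation}. Directions are pairs in $\setN^d \times \setN^d$, and the arithmetic on them is componentwise, so the computation above is legitimate.
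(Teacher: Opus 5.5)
Your proof is correct and takes the same approach as the paper, which states only that the corollary follows from \cref{lem:wsvas-wqo-and-amalgamation} by induction on $r$. Your inductive step supplies exactly that argument: transitivity gives $\rho \trianglelefteq \tau_r$, and you then amalgamate $\sigma$ and $\tau_r$ over $\rho$.
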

\begin{proof}
  The corollary follows from \cref{lem:wsvas-wqo-and-amalgamation} by induction on $r$.
\end{proof}

Notice that $\dir{\rho} \leq \dir{\sigma}$ for every runs $\rho \trianglelefteq \sigma$.
So an equivalent formulation of \cref{lem:wsvas-wqo-and-amalgamation} is that
the triple $(\Runs{\V}, \trianglelefteq, \dir{\cdot})$ is itself a WSVAS
(but of dimension $2d$).
We conclude this subsection with a safe linearization lemma for WSVAS.
This lemma will be used in \cref{subsec:proof:lem:bvas-safelin} to prove our safe linearization lemma for BVAS.

\begin{lemma}
  \label{lem:wsvas-safelin}
  Let $\vec{b} + \vec{P} \subseteq \setN^d$ be an almost linear set and
  define $\vec{Q} \coloneqq \clin{\vec{P}}$.
  For every semilinear set $\vec{T} \subseteq \setN^d$ such that
  $\reach{\vec{b} + \vec{P}} \subseteq \vec{T}$,
  there exists $\vec{p} \in \vec{P}$ such that
  $\reach{\vec{b} + \vec{p} + \vec{Q}} \subseteq \vec{T}$.
\end{lemma}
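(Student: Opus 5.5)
We argue by contradiction, following the classical pumping argument of Leroux for VAS but using only the wqo and amalgamation properties of \cref{lem:wsvas-wqo-and-amalgamation}. Let $\vec{p}_0 \in \vec{P}$ be the vector provided by \cref{lem:pushin}. Assume that for every $\vec{p} \in \vec{P}$ we have $\reach{\vec{b} + \vec{p} + \vec{Q}} \not\subseteq \vec{T}$. Applying this to $\vec{p} \coloneqq n\vec{p}_0$ for every $n \in \setN_{>0}$, we get runs $\rho_n$ of $\V$ with $\src{\rho_n} = \vec{b} + n\vec{p}_0 + \vec{q}_n$ for some $\vec{q}_n \in \vec{Q}$, and $\tgt{\rho_n} \in \setN^d \setminus \vec{T}$. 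The complement $\setN^d \setminus \vec{T}$ is semilinear, so it is a finite union of linear sets $\vec{c}_j + \vec{R}_j$ with $\vec{R}_j$ finitely generated. Passing to a subsequence, we may assume that all targets lie in a single $\vec{c} + \vec{R}$. We then write $\tgt{\rho_n} = \vec{c} + \vec{r}_n$ with $\vec{r}_n \in \vec{R}$.

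The next step is to extract simultaneously comparable indices. By \cref{lem:fgper-wqo-on-Zd}, $\leq_{\vec{Q}}$ is a wqo on $\vec{Q}$ and $\leq_{\vec{R}}$ is a wqo on $\vec{R}$, since both periodic sets are finitely generated; $\vec{Q}$ is finitely generated by \cref{lem:clin-of-asymp-def-is-fg}, as $\vec{P}$ is asymptotically-definable. Also $\trianglelefteq$ is a wqo on runs by \cref{lem:wsvas-wqo-and-amalgamation}. The product of finitely many wqos is a wqo, so there exist $i < j$ with
\[
\rho_i \trianglelefteq \rho_j,\qquad \vec{q}_j - \vec{q}_i \in \vec{Q},\qquad \vec{r}_j - \vec{r}_i \in \vec{R}.
\]
By \cref{cor:wsvas-amalgamation}, for every $r \in \setN_{>0}$ there is a run $\tau_r$ with $\dir{\tau_r} = \dir{\rho_i} + r(\dir{\rho_j} - \dir{\rho_i})$. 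Writing $\vec{g} \coloneqq i\vec{p}_0 + \vec{q}_i$ and $\vec{x} \coloneqq (j-i)\vec{p}_0 + \vec{q}_j - \vec{q}_i$, this gives $\src{\tau_r} = \vec{b} + \vec{g} + r\vec{x}$ and $\tgt{\tau_r} = \vec{c} + \vec{r}_i + r(\vec{r}_j - \vec{r}_i) \in \vec{c} + \vec{R}$, hence $\tgt{\tau_r} \notin \vec{T}$ for every $r$.

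It remains to show that $\src{\tau_r} \in \vec{b} + \vec{P}$ for some $r$, which contradicts $\reach{\vec{b} + \vec{P}} \subseteq \vec{T}$. First, $\vec{g} \in \vec{P} - \vec{P}$, because $\vec{p}_0 \in \vec{P}$ and $\vec{Q} \subseteq \vec{P} - \vec{P}$. Second, $\vec{x} \in \vec{p}_0 + \vec{Q}$: indeed $(j-i-1)\vec{p}_0 \in \vec{P} \subseteq \vec{Q}$ and $\vec{q}_j - \vec{q}_i \in \vec{Q}$, and $\vec{Q}$ is periodic. \cref{lem:pushin} then yields $r_0$ such that $\vec{g} + (r_0 + \setN)\vec{x} \subseteq \vec{P}$. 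Taking $r \geq \max(r_0, 1)$ gives the contradiction.

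The delicate point is choosing the sequence and the ordering so that the pumped sources land in the center-cylinder $\vec{p}_0 + \vec{Q}$. The pumped targets must at the same time stay inside the fixed linear component of the complement of $\vec{T}$. This is why the $\leq_{\vec{Q}}$ and $\leq_{\vec{R}}$ comparisons are folded into the same wqo extraction, rather than only comparing the runs themselves.
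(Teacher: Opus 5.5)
Your proof is correct and follows the paper's argument. The steps match: choose the vector from \cref{lem:pushin}, take counterexample runs from $\vec{b}+n\vec{p}_0+\vec{Q}$, extract a comparable pair in a product wqo, pump it with \cref{cor:wsvas-amalgamation}, and land the pumped source in $\vec{b}+\vec{P}$. The only difference is cosmetic: you pigeonhole the targets into one linear component $\vec{c}+\vec{R}$ of $\setN^d\setminus\vec{T}$ and compare them with $\leq_{\vec{R}}$, whereas the paper packages exactly this argument as the almost-fullness of the relation $\vec{x}+\setN(\vec{y}-\vec{x})\subseteq\setN^d\setminus\vec{T}$.
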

\begin{proof}
  We introduce,
  for every semilinear set $\vec{S} \subseteq \setN^d$,
  the binary relation $\preceq_{\vec{S}}$ on $\vec{S}$ defined by
  $\vec{x} \preceq_{\vec{S}} \vec{y}$ if $\vec{x} + \setN(\vec{y} - \vec{x}) \subseteq \vec{S}$.
  This relation need not be transitive,
  but it is \emph{almost-full}.\footnote{%
    A binary relation $R$ on a set $X$ is \emph{almost-full}~\cite{VeldmanBezem93} if
    for every infinite sequence $(x_n)_{n\in\setN}$ of elements in $X$,
    there exists $m<n$ such that $(x_m, x_n) \in R$.
    Note that a qoset $(X, \preceq)$ is a wqo if, and only if, $\preceq$ is almost-full.
  }
  Indeed,
  for every infinite sequence $\vec{x}_0, \vec{x}_1, \vec{x}_2, \ldots$ of elements in $\vec{S}$,
  there exist $m, n \in \setN$ with $m < n$ such that $\vec{x}_m \preceq_{\vec{S}} \vec{x}_n$.
  The reason is twofold.
  First,
  there is a linear set
  $\vec{L} \coloneqq \vec{b}_{\vec{L}} + \vec{Q}_{\vec{L}} \subseteq \vec{S}$
  that contains infinitely many $\vec{x}_i$.
  Second,
  $(\vec{L}, \leq_{\vec{Q}_{\vec{L}}})$ is a wqo (see \cref{par:cylindric})
  and $\vec{x} \preceq_{\vec{L}} \vec{y}$ for every
  $\vec{x}, \vec{y} \in \vec{L}$ with $\vec{x} \leq_{\vec{Q}_{\vec{L}}} \vec{y}$.

  \smallskip

  Let us now prove the lemma.
  Consider a semilinear set $\vec{T} \subseteq \setN^d$ such that
  $\reach{\vec{b} + \vec{P}} \subseteq \vec{T}$.
  According to \cref{lem:pushin},
  there exists a vector $\vec{p} \in \vec{P}$ such that
  for every $\vec{q} \in \vec{Q}$ and every $\vec{x} \in \vec{p} + \vec{Q}$,
  there exists $r \in \setN_{> 0}$ verifying
  $\vec{q} + r \vec{x} \in \vec{P}$.
  We introduce the set $\vec{S}_n \equaldef \vec{b} + n \vec{p} + \vec{Q}$
  for each $n \in \setN$.
  Note that $n \vec{p} \in \vec{P}$ for every $n \in \setN$,
  since $\vec{P}$ is periodic.
  It follows that to prove the lemma,
  we only need to show that $\reach{\vec{S}_n} \subseteq \vec{T}$ for some $n \in \setN$.

  \smallskip

  Suppose by contradiction that
  $\reach{\vec{S}_n} \not\subseteq \vec{T}$ for every $n \in \setN$.
  This means that for every $n \in \setN$,
  there exists a run $\rho_n$ from some $\vec{s}_n \in \vec{S}_n$ to some $\vec{t}_n \in (\setN^d \setminus \vec{T})$.
  We can write each $\vec{s}_n$ under the form
  $\vec{s}_n = \vec{b} + n \vec{p} + \vec{q}_n$
  for
  some $\vec{q}_n \in \vec{Q}$.

  \smallskip

  Recall that $(\Runs{\V}, \trianglelefteq)$ is a wqo by \cref{lem:wsvas-wqo-and-amalgamation}.
  Recall also that $(\vec{Q}, \leq_{\vec{Q}})$ is a wqo
  since $\vec{Q}$ is a finitely-generated periodic set
  (see \cref{par:cylindric}).
  As $\setN^d \setminus \vec{T}$ is semilinear,
  the relation $\preceq_{\setN^d \setminus \vec{T}}$ is almost-full (see above).
  So we can find an increasing pair $m, n \in \setN$ with $m < n$ such that
  $\rho_m \trianglelefteq \rho_n$,
  $\vec{q}_m \leq_{\vec{Q}} \vec{q}_n$ and
  $\vec{t}_m \preceq_{\setN^d \setminus \vec{T}} \vec{t}_n$.
  Let us define
  $\vec{y} \equaldef m \vec{p} + \vec{q}_m$ and
  $\vec{x} \equaldef \vec{q}_n - \vec{q}_m + (n-m) \vec{p}$.
  Observe that $\vec{y} \in \vec{Q}$ and
  that $\vec{x} \in \vec{p} + \vec{Q}$
  since $\vec{p} \in \vec{P} \subseteq \vec{Q}$.
  We derive that
  there exists $r \in \setN_{> 0}$ such that
  $\vec{y} + r \vec{x} \in \vec{P}$.

  \smallskip

  According to \cref{cor:wsvas-amalgamation} applied to $\rho_m \trianglelefteq \rho_n$,
  there exists a run $\tau$ such that
  $\src{\tau} = \vec{s}_m + r(\vec{s}_n - \vec{s}_m)$ and
  $\tgt{\tau} = \vec{t}_m + r(\vec{t}_n - \vec{t}_m)$.
  Observe that $\tgt{\tau} \in (\setN^d \setminus \vec{T})$ since
  $\vec{t}_m, \vec{t}_n \in (\setN^d \setminus \vec{T})$ and
  $\vec t_m \preceq_{\setN^d \setminus \vec{T}} \vec t_n$.
  Moreover,
  it is readily seen that $\src{\tau} = \vec{b} + \vec{y} + r \vec{x}$,
  hence,
  $\src{\tau} \in \vec{b} + \vec{P}$.
  This contradicts the assumption that
  $\reach{\vec{b} + \vec{P}} \subseteq \vec{T}$.
\end{proof}

\subsection{From Branching VAS to Well-Structured VAS}

This subsection shows how to simulate BVAS runs by WSVAS runs
(see \cref{lem:bvas-to-wsvas} below for a precise statement).
We first introduce additional notations to clarify what we mean by simulation.
The \emph{step} relation of a WSVAS $\V \equaldef (T, \sqsubseteq, \delta)$ is
the binary relation $\steprel{\V}$ on $\setN^d$ defined by
$\vec{x} \steprel{\V} \vec{y}$ if $\vec{y} = \vec{x} + \delta(t)$ for some $t \in T$.
Similarly,
given a BVAS $\B \equaldef (\vec{\Delta}_1, \ldots, \vec{\Delta}_r)$ and a set $\Inv \subseteq \setN^d$,
the \emph{step} relation of $\B$ and $\Inv$ is
the binary relation $\steprel{\B, \Inv}$ on $\setN^d$ defined by
$\vec{x} \steprel{\B, \Inv} \vec{y}$ if
there exist $n \in \{1, \ldots, r\}$ and $\vec{a} \in \vec{\Delta}_n$ such that
$\vec{y} = \vec{a} + \vec{x} + \sum_{j=2}^n \vec{c}_j$ for some
$\vec{c}_2, \ldots, \vec{c}_n \in \Inv$.
We let $\steprel[*]{\V}$ and $\steprel[*]{\B, \Inv}$ denote
the reflexive and transitive closure of $\steprel{\V}$ and $\steprel{\B, \Inv}$,
respectively.
Previously defined notations such as $\reach{\cdot}$ and $\trianglelefteq$ left
the underlying system implicit.
When the system under consideration is not clear from the context,
we add it as a subscript to these notations to prevent confusion.

\begin{lemma}
  \label{lem:steprel-reach}
  For every set $\vec{X} \subseteq \setN^d$,
  it holds that
  $\reach[\V]{\vec{X}} = \{\vec{y} \in \setN^d \mid \exists \vec{x} \in \vec{X} : \vec{x} \steprel[*]{\V} \vec{y}\}$
  and that
  $\reach[\B]{\vec{X} \,{\shuffle}\, \Inv^*} = \{\vec{y} \in \setN^d \mid \exists \vec{x} \in \vec{X} : \vec{x} \steprel[*]{\B, \Inv} \vec{y}\}$
  when $\reach{\Inv^+} \subseteq \Inv$.
\end{lemma}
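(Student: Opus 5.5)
The first equality is immediate from the definitions; the second I will prove by two inclusions.

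\emph{The WSVAS equality.} A run $(\vec{c}_0, t_1, \vec{c}_1, \ldots, t_k, \vec{c}_k)$ of $\V$ is exactly a chain $\vec{c}_0 \steprel{\V} \vec{c}_1 \steprel{\V} \cdots \steprel{\V} \vec{c}_k$ of configurations in $\setN^d$, where $t_i$ witnesses the $i$-th step. Conversely, any such chain lifts to a run by choosing a witnessing transition for each step. Taking sources in $\vec{X}$ gives the first equality.

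\emph{The BVAS equality, inclusion $\supseteq$.} I argue by induction on the length of a chain $\vec{x} = \vec{y}_0 \steprel{\B,\Inv} \cdots \steprel{\B,\Inv} \vec{y}_k$ that some run $\rho$ of $\B$ has $\tgt{\rho} = \vec{y}_k$ and $\src{\rho} \in \vec{x} \,{\shuffle}\, \Inv^*$.
\begin{itemize}
\item For $k=0$, the leaf run $(\vec{x},())$ works.
\item For the inductive step, suppose $\vec{y}_{k+1} = \vec{a} + \vec{y}_k + \sum_{j=2}^n \vec{c}_j$ with $\vec{a} \in \vec{\Delta}_n$ and $\vec{c}_j \in \Inv$. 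Let $\rho_1$ be the run given by induction, and let $\rho_j \coloneqq (\vec{c}_j,())$ for $j \ge 2$. Then $(\vec{y}_{k+1}, (\rho_1, \ldots, \rho_n))$ is a run, because $\vec{y}_{k+1} \in \setN^d$. Its source is $\src{\rho_1}\,\vec{c}_2 \cdots \vec{c}_n$, which lies in $(\vec{x} \,{\shuffle}\, \Inv^*)\,\Inv^* \subseteq \vec{x} \,{\shuffle}\, \Inv^*$.
\end{itemize}
This direction does not use the hypothesis on $\Inv$.

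\emph{The BVAS equality, inclusion $\subseteq$.} Take a run $\rho$ with $\src{\rho} \in \vec{x} \,{\shuffle}\, \Inv^*$ for some $\vec{x} \in \vec{X}$. I show $\vec{x} \steprel[*]{\B,\Inv} \tgt{\rho}$ by structural induction on $\rho$, proving simultaneously the companion fact that every run $\sigma$ with $\src{\sigma} \in \Inv^+$ satisfies $\tgt{\sigma} \in \Inv$. The companion fact is exactly the hypothesis $\reach{\Inv^+} \subseteq \Inv$.
\begin{itemize}
\item If $\rho$ is a leaf, its source is the single letter $\vec{x}$, so $\tgt{\rho} = \vec{x}$ and the claim holds with zero steps.
\item Otherwise $\rho = (\vec{c}, (\rho_1, \ldots, \rho_n))$ and $\src{\rho} = \src{\rho_1} \cdots \src{\rho_n}$. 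The distinguished letter $\vec{x}$ occurs in exactly one factor, say $\src{\rho_i}$; hence $\src{\rho_i} \in \vec{x} \,{\shuffle}\, \Inv^*$, and every other factor $\src{\rho_j}$ lies in $\Inv^+$. By induction, $\vec{x} \steprel[*]{\B,\Inv} \tgt{\rho_i}$. By the hypothesis, $\tgt{\rho_j} \in \Inv$ for all $j \neq i$.
\item Since the step relation sums the side children commutatively, the index of the distinguished child is irrelevant. So $\vec{c} = \act{\rho} + \tgt{\rho_i} + \sum_{j \neq i} \tgt{\rho_j}$ gives $\tgt{\rho_i} \steprel{\B,\Inv} \vec{c}$, and therefore $\vec{x} \steprel[*]{\B,\Inv} \vec{c}$.
\end{itemize}

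The only delicate points are bookkeeping ones. First, in the inductive step one must check that the distinguished occurrence of $\vec{x}$ lands in exactly one subtree. Second, the side subtrees have sources in $\Inv^+$ rather than $\Inv^*$, because sources of runs are nonempty words; this is precisely why the hypothesis $\reach{\Inv^+} \subseteq \Inv$ suffices.
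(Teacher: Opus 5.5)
Your proof is correct and follows the paper's route. The paper reduces both equalities to the observation that $\vec{x} \steprel[*]{\V} \vec{y}$ (resp.\ $\vec{x} \steprel[*]{\B,\Inv} \vec{y}$) holds if, and only if, there is a run from $\vec{x}$ (resp.\ with source in $\{\vec{x}\} \,{\shuffle}\, \Inv^*$) to $\vec{y}$, and says only that both directions are ``routinely proved by induction''; your two inductions, including the bookkeeping on the distinguished occurrence of $\vec{x}$ and on side subtrees with sources in $\Inv^+$, supply exactly those omitted details.
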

\begin{proof}
  The first assertion follows from the easy observation that
  $\vec{x} \steprel[*]{\V} \vec{y}$
  if, and only if,
  there is a run $\rho$ of $\V$ with $\src{\rho} = \vec{x}$ and $\tgt{\rho} = \vec{y}$.
  Similarly,
  assuming that $\reach{\Inv^+} \subseteq \Inv$,
  the second assertion follows from the observation that
  $\vec{x} \steprel[*]{\B, \Inv} \vec{y}$
  if, and only if,
  there is a run $\rho$ of $\B$ with $\src{\rho} \in (\{\vec{x}\} \,{\shuffle}\, \Inv^*)$ and $\tgt{\rho} = \vec{y}$.
  Both directions of this second observation are routinely proved by induction.
\end{proof}

\begin{lemma}
  \label{lem:bvas-to-wsvas}
  For every IBVAS $\IB \coloneqq (\vec{\Delta}_0, \B)$ and semilinear set $\vec{A} \subseteq \setN^d$,
  there exists a WSVAS $\V$ such that $\steprel{\B, \Inv}$ and $\steprel{\V}$ coincide,
  where $\Inv \coloneqq \vec{A} \cup \reach{\IB}$.
\end{lemma}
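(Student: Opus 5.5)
We need a WSVAS whose steps are exactly $\vec{x} \mapsto \vec{a} + \vec{x} + \vec{c}_2 + \cdots + \vec{c}_n$ with $\vec{a} \in \vec{\Delta}_n$ and $\vec{c}_j \in \Inv$. The plan is to take as transitions the tuples $(n, \vec{a}, \vec{c}_2, \ldots, \vec{c}_n)$, but to equip the $\Inv$-components with a wqo that has the diamond property and under which the sum map is order-preserving. A plain component-wise order on $\setN^d$ is a wqo and order-preserving, but it is not diamond inside $\Inv$, since a join of two elements of $\Inv$ need not lie in $\Inv$. We therefore index the elements of $\Inv = \vec{A} \cup \reach{\IB}$ by structured objects.

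For $\reach{\IB}$ we use iruns ordered by $\trianglelefteq$. \Cref{lem:ibvas-wpo-and-amalgamation} says that $(\IRuns{\IB}, \trianglelefteq)$ is a wpo with amalgamation. This is precisely the statement that $\tgt{\cdot}$ is a diamond map on it, and $\tgt{\cdot}$ is order-preserving since $\rho \trianglelefteq \sigma$ implies $\tgt{\rho} \leq \tgt{\sigma}$.

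For the semilinear set $\vec{A}$ we write $\vec{A} = \bigcup_{k} (\vec{b}_k + \vec{P}_k)$ as a finite union of linear sets with finitely-generated periodic $\vec{P}_k \subseteq \setN^d$; these lie in $\setN^d$ because $\vec{A} \subseteq \setN^d$, and a periodic subset of $\setN^d$ whose translate lies in $\setN^d$ must be in $\setN^d$ up to trimming. We index elements by pairs $(k, \vec{m})$ with $\vec{m} \in \setN^{g_k}$, where $g_k$ is the number of generators, and map $(k, \vec{m})$ to $\vec{b}_k + \sum_i \vec{m}(i)\vec{p}_{k,i}$. We order pairs with equal $k$ by $\leq$ on $\setN^{g_k}$; different $k$ are incomparable. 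This is a wqo by Dickson and finiteness of the index set. The map is affine with nonnegative generators, hence order-preserving into $(\setN^d, \leq)$. It is diamond by taking $\vec{y} = \vec{a} + \vec{b} - \vec{x}$ coordinatewise in $\setN^{g_k}$, which is $\geq \vec{a}, \vec{b}$ and satisfies the sum identity by affineness.

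The disjoint union $X_{\Inv}$ of these two qosets (components incomparable) is a wqo with an order-preserving diamond map $f$ onto $\Inv$. Set $T \coloneqq \bigcup_{n=1}^r \{n\} \times \vec{\Delta}_n \times X_{\Inv}^{n-1}$, ordered by equality on $(n, \vec{a})$ and product order on $X_{\Inv}^{n-1}$, with $\delta(n, \vec{a}, x_2, \ldots, x_n) \coloneqq \vec{a} + \sum_j f(x_j)$.

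It remains to check three things. First, $T$ is a wqo: it is a finite disjoint union of finite products of wqos. Second, $\delta$ is order-preserving, by monotonicity of $f$. Third, $\delta$ is diamond, by applying the diamond property componentwise: the first coordinates agree, and the sum identity adds up. Coincidence of $\steprel{\B,\Inv}$ with $\steprel{\V}$ is then immediate from surjectivity of $f$ onto $\Inv$.

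The main subtlety is the diamond property. The apparent obstacle is that we need a diamond wqo presentation of $\Inv$, and this is resolved by the irun amalgamation lemma together with the linear parametrization of $\vec{A}$. Apart from that, only routine verification remains.
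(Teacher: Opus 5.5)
Your proof is correct and follows the paper's construction: you index $\Inv$ by iruns under $\trianglelefteq$, which is diamond via amalgamation, together with the linear components of $\vec{A}$, and you take transitions $(n,\vec{a},\vec{h})$ with $\delta = \vec{a}+\sum_j f(\vec{h}(j))$. The only differences are cosmetic: you parametrize each linear set by coefficient vectors in $\setN^{g_k}$, whereas the paper uses its elements ordered by $\leq_{\vec{Q}_i}$. Also, your ``up to trimming'' caveat is unnecessary, since $\vec{b}_k+\vec{P}_k\subseteq\setN^d$ already forces $\vec{b}_k\in\setN^d$ and every generator of $\vec{P}_k$ to be nonnegative.
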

\begin{proof}
  Consider an IBVAS $\IB \coloneqq (\vec{\Delta}_0, \B)$
  with $\B \coloneqq (\vec{\Delta}_1, \ldots, \vec{\Delta}_r)$, and
  a semilinear set $\vec{A} \subseteq \setN^d$.
  We have $\vec{A} = \bigcup_{i=1}^k \vec{b}_i + \vec{Q}_i$
  for some vectors $\vec{b}_1, \ldots, \vec{b}_k \in \setN^d$ and
  some finitely-generated periodic sets $\vec{Q}_1, \ldots, \vec{Q}_k \subseteq \setN^d$.
  Define $\Inv \coloneqq \vec{A} \cup \reach{\IB}$.
  We construct the desired WSVAS $\V \equaldef (T, \sqsubseteq, \delta)$ as follows.
  Let us first introduce the set
  $\Theta \equaldef \{0\} \times \IRuns{\IB} \cup \bigcup_{i=1}^k \{i\} \times (\vec{b}_i + \vec{Q}_i)$
  and the binary relation $\preceq$ on $\Theta$ defined by
  $\eta \preceq \theta$ if
  \begin{itemize}
  \item
    there exist $\rho, \sigma \in \IRuns{\IB}$ such that
    $\eta = (0, \rho)$, $\theta = (0, \sigma)$ and $\rho \trianglelefteq_{\IB} \sigma$, or
  \item
    there exists $i \in \{1, \ldots, k\}$ and $\vec{x}, \vec{y} \in \vec{b}_i + \vec{Q}_i$ such that
    $\eta = (i, \vec{x})$, $\theta = (i, \vec{y})$ and $\vec{x} \leq_{\vec{Q}_i} \vec{y}$.
  \end{itemize}
  Recall that $(\IRuns{\IB}, \trianglelefteq)$ is a wqo by \cref{lem:ibvas-wpo-and-amalgamation}.
  Recall also that each pair $(\vec{b}_i + \vec{Q}_i, \leq_{\vec{Q}_i})$ is a wqo
  since $\vec{Q}_i$ is a finitely-generated periodic set
  (see \cref{par:cylindric}).
  It follows that $(\Theta, \preceq)$ is a wqo.

  \smallskip

  We also introduce the map $\zeta : \Theta \rightarrow \setZ^d$ defined by
  $\zeta((0, \rho)) \equaldef \tgt{\rho}$ and
  $\zeta((i, \vec{x})) \equaldef \vec{x}$
  for all $\rho \in \IRuns{\IB}$, $i \in \{1, \ldots, k\}$ and
  $\vec{x} \in (\vec{b}_i + \vec{Q}_i)$.
  It is routinely checked that $\zeta$ is an order-preserving and diamond map
  from $(\Theta, \preceq)$ to $(\setZ^d, \leq)$.
  This follows, firstly, from
  the amalgamation property of $(\IRuns{\IB}, \trianglelefteq)$
  obtained from \cref{lem:ibvas-wpo-and-amalgamation}, and, secondly,
  from the following ``amalgamation property'' of $(\vec{b}_i + \vec{Q}_i, \leq_{\vec{Q}_i})$.
  For every $\vec{x}, \vec{u}, \vec{v} \in \vec{b}_i + \vec{Q}_i$,
  if $\vec{x} \leq_{\vec{Q}_i} \vec{u}, \vec{v}$ then the vector
  $\vec{y} \equaldef \vec{u} + \vec{v} - \vec{x}$ is in $\vec{b}_i + \vec{Q}_i$ and verifies
  $\vec{u}, \vec{v} \leq_{\vec{Q}_i} \vec{y}$.

  \smallskip

  The set $T$ is defined as the set of triples $(n, \vec{a}, \vec{h})$ with
  $n \in \{1, \ldots, r\}$,
  $\vec{a} \in \vec{\Delta}_n$ and
  $\vec{h} \in \Theta^{n-1}$.
  It is understood that $\Theta^0$ is the singleton set $\{()\}$.
  The binary relation $\sqsubseteq$ on $T$ is defined by
  $(n, \vec{a}, \vec{h}) \sqsubseteq (n', \vec{a}', \vec{h}')$ if
  $n = n'$, $\vec{a} = \vec{a}'$ and
  $\vec{h}(j) \preceq \vec{h}'(j)$ for all $j \in \{1, \ldots, n-1\}$.
  Since $\vec{\Delta}_1, \ldots, \vec{\Delta}_r$ are finite,
  we derive from Dickson's Lemma that $(T, \sqsubseteq)$ is a wqo.
  The map $\delta : T \rightarrow \setZ^d$ is defined by
  $\delta(n, \vec{a}, \vec{h}) \equaldef \vec{a} + \sum_{j=1}^{n-1} \zeta(\vec{h}(j))$.
  Since $\zeta$ is order-preserving and diamond, so is $\delta$.

  \smallskip

  We have shown that $\V = (T, \sqsubseteq, \delta)$ is a WSVAS.
  It remains to show that
  $\steprel{\B, \Inv}$ and $\steprel{\V}$ coincide.
  We first observe that,
  by construction,
  $\zeta(\Theta) = \vec{A} \cup \reach{\IB} = \Inv$.
  Assume that $\vec{x} \steprel{\B, \Inv} \vec{y}$.
  There exist $n \in \{1, \ldots, r\}$ and $\vec{a} \in \vec{\Delta}_n$ such that
  $\vec{y} = \vec{a} + \vec{x} + \sum_{j=2}^{n} \vec{c}_j$ for some
  $\vec{c}_2, \ldots, \vec{c}_n \in \Inv$.
  So for each $j \in \{2, \ldots, n\}$,
  there exists $\theta_j \in \Theta$ such that $\zeta(\theta_j) = \vec{c}_j$.
  Define $t \equaldef (n, \vec{a}, \vec{h})$
  where $\vec{h} \in \Theta^{n-1}$ is defined by
  $\vec{h}(j) \equaldef \theta_{j+1}$ for all $j \in \{1, \ldots, n-1\}$.
  We observe that $t \in T$ and $\vec{y} = \vec{x} + \delta(t)$,
  hence,
  $\vec{x} \steprel{\V} \vec{y}$.
  Conversely,
  assume that $\vec{x} \steprel{\V} \vec{y}$.
  There exists a transition $t \equaldef (n, \vec{a}, \vec{h})$ in $T$ such that
  $\vec{y} = \vec{x} + \delta(t) = \vec{x} + \vec{a} + \sum_{j=1}^{n-1} \zeta(\vec{h}(j))$.
  Since $\zeta(\vec{h}(j)) \in \Inv$ for each $j \in \{1, \ldots, n-1\}$,
  we derive that $\vec{x} \steprel{\B, \Inv} \vec{y}$.
  We have shown that $\steprel{\B, \Inv}$ and $\steprel{\V}$ coincide.
\end{proof}

\subsection{One Source Outside the Inductive Invariant Is Enough}

In this subsection,
we show that,
under a technical condition,
BVAS runs with multiple sources outside an inductive invariant are
equivalent, reachability-wise, to runs with only one such source.
This property is formally stated in \cref{lem:bvas-branch-like} below.
We consider a BVAS $\B \coloneqq (\vec{\Delta}_1, \ldots, \vec{\Delta}_r)$ for the rest of this subsection.

\begin{lemma}
  \label{lem:bvas-branch-like}
  For every sets $\Inv, \vec{P}, \vec{X} \subseteq \setN^d$ such that
  $\reach{\Inv^+} \subseteq \Inv$,
  $\vec{P}$ is periodic and
  $\vec{X} + \vec{P} \subseteq \vec{X}$,
  if $\reach{\vec{X} \,{\shuffle}\, \Inv^*} \subseteq \Inv + \vec{P}$ then
  it holds that
  $\reach{\vec{X} \,{\shuffle}\, \Inv^*} = \reach{\vec{X}^+ \,{\shuffle}\, \Inv^*}$.
\end{lemma}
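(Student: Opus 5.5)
The inclusion $\reach{\vec{X} \,{\shuffle}\, \Inv^*} \subseteq \reach{\vec{X}^+ \,{\shuffle}\, \Inv^*}$ is immediate, since $\vec{X} \,{\shuffle}\, \Inv^* \subseteq \vec{X}^+ \,{\shuffle}\, \Inv^*$. For the converse, write $\vec{R} \coloneqq \reach{\vec{X} \,{\shuffle}\, \Inv^*}$. I would prove by structural induction on runs $\rho$ that $\src{\rho} \in \vec{X}^+ \,{\shuffle}\, \Inv^*$ implies $\tgt{\rho} \in \vec{R}$. Along the way I will use two facts. First, a run with source in $\Inv^+$ has its target in $\Inv$, by the hypothesis $\reach{\Inv^+} \subseteq \Inv$. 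Second, $\vec{R} \subseteq \Inv + \vec{P}$, by hypothesis.

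The leaf case is trivial: the source is a single configuration of $\vec{X}$, so $\tgt{\rho} \in \vec{X} \subseteq \vec{R}$. For an internal node $\rho = (\vec{c}, (\rho_1, \ldots, \rho_n))$, the source factors as $\src{\rho_1} \cdots \src{\rho_n}$. Each child either has source in $\Inv^+$, and then its target lies in $\Inv$, or has source in $\vec{X}^+ \,{\shuffle}\, \Inv^*$, and then its target lies in $\vec{R}$ by induction. At least one child is of the second kind. If exactly one child, say $\rho_1$, is of the second kind, I would build a run with source in $\vec{X} \,{\shuffle}\, \Inv^*$ and target $\vec{c}$. I take a run for $\tgt{\rho_1}$ witnessing membership in $\vec{R}$, and replace every other child $\rho_j$ by the leaf run $(\tgt{\rho_j}, ())$, whose source $\tgt{\rho_j}$ lies in $\Inv$. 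Applying the same action $\act{\rho}$ yields a run whose source lies in $\vec{X} \,{\shuffle}\, \Inv^*$ and whose target is $\vec{c}$. (Equivalently, one can argue through the step relation $\steprel{\B, \Inv}$ of \cref{lem:steprel-reach}.)

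The key step is reducing to this situation when several children, say $\rho_1$ and $\rho_2$, have targets $\vec{y}_1, \vec{y}_2 \in \vec{R}$. Since $\vec{R} \subseteq \Inv + \vec{P}$, write $\vec{y}_2 = \vec{i} + \vec{p}$ with $\vec{i} \in \Inv$ and $\vec{p} \in \vec{P}$. Take a run witnessing $\vec{y}_1 \in \vec{R}$; its source has a unique letter $\vec{x} \in \vec{X}$, all other letters being in $\Inv$. By \cref{fact:monotony}, adding $\vec{p}$ to that leaf gives a run with source in $(\vec{x} + \vec{p}) \,{\shuffle}\, \Inv^*$ and target $\vec{y}_1 + \vec{p}$. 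We have $\vec{x} + \vec{p} \in \vec{X}$ because $\vec{X} + \vec{P} \subseteq \vec{X}$, and all configurations involved are in $\setN^d$, so this is a genuine run and $\vec{y}_1 + \vec{p} \in \vec{R}$. Now $\vec{y}_1 + \vec{y}_2 = (\vec{y}_1 + \vec{p}) + \vec{i}$. So I replace the children $\rho_1, \rho_2$ by a witnessing run for $\vec{y}_1 + \vec{p}$ and the leaf $(\vec{i}, ())$. The sum of the children's targets, and hence the node's target $\vec{c}$ under the same action, is unchanged. Repeating this once for each extra child in $\vec{R}$ leaves exactly one child whose target lies in $\vec{R}$, and the previous case concludes.

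I expect this absorbing step to be the main obstacle: it is the only place where the hypotheses $\vec{R} \subseteq \Inv + \vec{P}$ and $\vec{X} + \vec{P} \subseteq \vec{X}$ are used. Beyond that, the proof only needs routine bookkeeping of which letters of the source come from $\vec{X}$ and which from $\Inv$.
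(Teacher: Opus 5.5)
Your proof is correct and follows the paper's argument. Both proceed by structural induction, use $\vec{R} \subseteq \Inv + \vec{P}$ to split the targets of the extra $\vec{R}$-children, push their $\vec{P}$-parts into the $\vec{X}$-leaf of a single witness run via monotony and $\vec{X} + \vec{P} \subseteq \vec{X}$, and replace all other children by leaves in $\Inv$. The differences are cosmetic: the paper absorbs $\sum_{i \neq j} \vec{p}_i$ in one step rather than one child at a time, and your explicit treatment of the leaf case spells out a step the paper leaves implicit.
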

\begin{proof}
  Let $\Inv, \vec{P}$ and $\vec{X}$ be subsets of $\setN^d$ such that
  $\reach{\Inv^+} \subseteq \Inv$,
  $\vec{P}$ is periodic and
  $\vec{X} + \vec{P} \subseteq \vec{X}$.
  We put $\vec{R} \equaldef \reach{\vec{X} \,{\shuffle}\, \Inv^*}$ and
  assume that $\vec{R} \subseteq \Inv + \vec{P}$.
  We only need to show that
  $\reach{\vec{X}^+ \,{\shuffle}\, \Inv^*} \subseteq \vec{R}$
  as the converse inclusion trivially holds.

  \smallskip

  Let us show by induction that for every run $\rho \in \Runs{\B}$,
  if $\src{\rho} \in (\vec{X}^+ \,{\shuffle}\, \Inv^*)$ then $\tgt{\rho} \in \vec{R}$.
  Consider a run $\rho \coloneqq (\vec{c}, (\rho_1, \ldots, \rho_n))$
  and assume by induction that
  $\src{\rho_i} \in (\vec{X}^+ \,{\shuffle}\, \Inv^*)$ implies $\tgt{\rho_i} \in \vec{R}$,
  for all $i \in \{1, \ldots, n\}$.
  Suppose that $\src{\rho} \in (\vec{X}^+ \,{\shuffle}\, \Inv^*)$ and
  let us show that $\tgt{\rho} \in \vec{R}$.
  Notice
  that $\src{\rho_i} \in (\vec{X}^* \,{\shuffle}\, \Inv^*)$ for all $i$, and
  that $\src{\rho_j} \in (\vec{X}^+ \,{\shuffle}\, \Inv^*)$ for some $j$.
  To shorten notation,
  let us write $\vec{c}_i \equaldef \tgt{\rho_i}$ for each $i \in \{1, \ldots, n\}$.
  We make the following easy observations:
  \begin{itemize}
  \item
    if $\src{\rho_i} \in (\vec{X}^+ \,{\shuffle}\, \Inv^*)$ then
    $\vec{c}_i \in \vec{R}$ by induction hypothesis.
    It follows that $\vec{c}_i \in \Inv + \vec{P}$
    since $\vec{R} \subseteq \Inv + \vec{P}$ by assumption.
  \item
    otherwise,
    $\src{\rho_i} \in \Inv^+$ and it follows that $\vec{c}_i \in \Inv$,
    since $\reach{\Inv^+} \subseteq \Inv$.
  \end{itemize}
  These observations entail
  that $\vec{c}_i \in \Inv + \vec{P}$ for all $i$ and
  that $\vec{c}_j \in \vec{R}$ for some $j$.
  By definition of $\vec{R}$,
  we get that $\vec{c}_j = \tgt{\sigma_j}$ for some run $\sigma_j$ of $\B$
  such that $\src{\sigma_j} \in (\vec{X} \,{\shuffle}\, \Inv^*)$.
  Let us define $\rho'_j \equaldef \sigma_j$ and
  $\rho'_i \equaldef (\vec{c}_i, ())$ for all $i \neq j$.
  Clearly,
  $\rho'_i$ is a run of $\B$ with same target as $\rho_i$,
  for every $i \in \{1, \ldots, n\}$.
  We derive that
  $\rho' \coloneqq (\vec{c}, (\rho'_1, \ldots, \rho'_n))$ is also a run of $\B$.
  Moreover,
  the two following properties hold by construction.
  Firstly,
  $\src{\rho'_j} = u \vec{x} v$ for some $\vec{x} \in \vec{X}$ and $u, v \in \Inv^*$.
  Secondly,
  for every $i \neq j$,
  $\tgt{\rho'_i} = \vec{h}_i + \vec{p}_i$
  for some $\vec{h}_i \in \Inv$ and $\vec{p}_i \in \vec{P}$.
  Note that $\vec{h}_i, \vec{p}_i \in \setN^d$.

  \smallskip

  We now transform
  $\rho'$ into a run $\tau$ with $\tgt{\tau} = \tgt{\rho'}$ by
  removing $\vec{p}_i$ from $\rho'_i$ for each $i \neq j$ and
  adding $\vec{p} \equaldef \sum_{i \neq j} \vec{p}_i$ to $\rho'_j$.
  Formally,
  we define $\tau_i \equaldef (\vec{h}_i, ())$ for all $i \neq j$, and
  we construct $\tau_j$ from $\rho'_j$ as follows.
  According to \cref{fact:monotony},
  there exists a run $\tau_j$ of $\B$ with
  $\src{\tau_j} = u (\vec{x} + \vec{p}) v$ and
  $\tgt{\tau_j} = \tgt{\rho'_j} + \vec{p}$.
  The assumption that $\vec{X} + \vec{P} \subseteq \vec{X}$
  guarantees that $\src{\tau_j} \in (\vec{X} \,{\shuffle}\, \Inv^*)$.
  Observe that each $\tau_i$ is a run of $\B$ and that
  $\sum_{i=1}^n \tgt{\tau_i} = \sum_{i=1}^n \tgt{\rho'_i}$.
  We derive that
  $\tau \coloneqq (\vec{c}, (\tau_1, \ldots, \tau_n))$ is also a run of $\B$.
  Moreover,
  it holds by construction that
  $\src{\tau} \in (\vec{X} \,{\shuffle}\, \Inv^*)$.
  Therefore,
  $\tgt{\rho} = \tgt{\tau} \in \vec{R}$.
  This concludes the induction and the proof of the lemma.
\end{proof}

\subsection{Proof of \cref{lem:bvas-safelin}}
\label{subsec:proof:lem:bvas-safelin}

We now have the required ingredients to prove \cref{lem:bvas-safelin}.
Consider an IBVAS $\IB \coloneqq (\vec{\Delta}_0, \B)$,
two semilinear sets $\vec{A}, \vec{T} \subseteq \setN^d$ and
an almost linear set $\vec{b} + \vec{P} \subseteq \setN^d$.
Define $\Inv \coloneqq \vec{A} \cup \reach{\IB}$ and
$\vec{Q} \coloneqq \clin{\vec{P}}$.
Assume
that $\reach[\B]{\Inv^+} \subseteq \Inv$ and
that $\reach[\B]{(\vec{b} + \vec{P}) \,{\shuffle}\, \Inv^*} \subseteq \vec{T} \subseteq \Inv + \vec{Q}$.
To prove \cref{lem:bvas-safelin},
we need to show that
there exists $p \in \vec{P}$ such that
$\reach[\B]{(\vec{b} + \vec{p} + \vec{Q})^+ \,{\shuffle}\, \Inv^*} \subseteq \vec{T}$.

\smallskip

According to \cref{lem:bvas-to-wsvas},
there exists a WSVAS $\V$ such that $\steprel{\B, \Inv}$ and $\steprel{\V}$ coincide.
By \cref{lem:steprel-reach},
it holds that
$\reach[\V]{\vec{b} + \vec{P}} = \reach[\B]{(\vec{b} + \vec{P}) \,{\shuffle}\, \Inv^*} \subseteq \vec{T}$.
We derive from \cref{lem:wsvas-safelin} that
there exists $\vec{p} \in \vec{P}$ satisfying
$\reach[\V]{\vec{b} + \vec{p} + \vec{Q}} \subseteq \vec{T}$.
Define $\vec{X} \equaldef \vec{b} + \vec{p} + \vec{Q}$.
Using \cref{lem:steprel-reach} once more,
we get that
$\reach[\B]{\vec{X} \,{\shuffle}\, \Inv^*} = \reach[\V]{\vec{X}} \subseteq \vec{T}$.
Note that $\vec{X} + \vec{Q} \subseteq \vec{X}$ and recall that $\vec{T} \subseteq \Inv + \vec{Q}$.
It follows from \cref{lem:bvas-branch-like} that
$\reach[\B]{\vec{X} \,{\shuffle}\, \Inv^*} = \reach[\B]{\vec{X}^+ \,{\shuffle}\, \Inv^*}$,
hence,
$\reach[\B]{\vec{X}^+ \,{\shuffle}\, \Inv^*} = \reach[\V]{\vec{X}} \subseteq \vec{T}$.
This concludes the proof of \cref{lem:bvas-safelin}.

\section{Face-Stripping Theorem}\label{sec:stripping}
Let $\vec{Q}\subseteq\setZ^d$ be a full\footnote{See \cref{rem:full} for details.} finitely-generated periodic set. In this section, we state and prove a way to decompose the difference of two finitary $\vec{Q}$-cylindric sets $\vec{X}\subseteq\vec{Y}\subseteq \setZ^d$. This decomposition, so-called the \emph{face-stripping theorem}, is obtained by stripping $\vec{Y}\setminus \vec{X}$ following the \emph{faces} (a classical notion recalled below) of the cone spanned by $\vec{Q}$. We first provide some definitions and then provide the statement of the main theorem and a central application of that result.

\medskip

 The \emph{orthogonal} $\vec{X}^\perp$ of a set $\vec{X}\subseteq\setQ^d$ is the set of vectors $\vec{e}\in\setQ^d$ such that $\scalar{\vec{e}}{\vec{x}}=0$ for every $\vec{x}\in\vec{X}$ where $\scalar{\vec{e}}{\vec{x}}$ is the \emph{dot-product} (see~\cref{sec:prelim}). Notice that $\vec{X}^\perp$ is a vector space. We say that a set $\vec{F}\subseteq \setQ^d$ is a face of a finitely-generated cone $\vec{C}\subseteq\setQ^d$ if $\vec{F}=\vec{C}\cap\vec{L}^\perp$ for a finite set $\vec{L}\subseteq \setQ^d$ of vectors $\vec{e}\in\setQ^d$ satisfying $\scalar{\vec{e}}{\vec{c}}\geq 0$ for every $\vec{c}\in\vec{C}$. Since the intersection of a finitely-generated cone with a vector space is a finitely-generated cone, we deduce that a face is a finitely-generated cone. We denote by $\mathcal{F}(\vec{C})$ the set of \emph{faces} of $\vec{C}$. Recall that $\mathcal{F}(\vec{C})$ is a finite set~\cite[Chapter 8]{DBLP:books/daglib/0090562}, $\vec{C}\in\mathcal{F}(\vec{C})$ and for every face $\vec{F}\in \mathcal{F}(\vec{C})\setminus\{\vec{C}\}$, we have $\dim{\vec{F}-\vec{F}}<\dim{\vec{C}-\vec{C}}$\footnote{Since $\vec{F}=\vec{C}\cap\vec{L}^\perp$ and $\vec{F}\not=\vec{C}$ we deduce that the vector space $\vec{V}\coloneqq\vec{C}-\vec{C}$ cannot be included in $\vec{L}^\perp$. In particular the vector space $\vec{W}\coloneqq(\vec{F}-\vec{F})\cap\vec{L}^\perp$ is strictly included in $\vec{V}$. As $\vec{W}\subsetneq \vec{V}$ we deduce that $\dim{\vec{W}}<\dim{\vec{V}}$.}.

\medskip

Let $\vec{Q}$ be a full finitely-generated periodic set, let $\vec{C}\coloneqq\setQ_{\geq 0}\vec{Q}$ be the finitely-generated cone spanned by $\vec{Q}$, and let $\vec{F}_1,\ldots,\vec{F}_n$ be a linearization of the set of faces of $\vec{C}$ with respect to $\supseteq$. It follows that $n=|\mathcal{F}(\vec{C})|$, $\mathcal{F}(\vec{C})=\{\vec{F}_1,\ldots,\vec{F}_n\}$, $\vec{F}_1=\vec{C}$, and $\vec{F}_i\supseteq \vec{F}_j$ implies $i\leq j$.

\begin{theorem}[Face-Stripping Theorem]\label{thm:leaf}
  Let $\vec{X}\subseteq \vec{Y}$ be two finitary $\vec{Q}$-cylindric subsets of $\setZ^d$, and let $\vec{S}_1,\ldots,\vec{S}_n$ be subsets of $\setZ^d$ defined inductively for every $i\in \{1,\ldots,n\}$ as follows where $\vec{S}_{<i}\coloneqq\vec{S}_1\cup\ldots\cup\vec{S}_{i-1}$:
  $$\vec{S}_i\coloneqq\{\vec{s}\in\setZ^d \mid \vec{s}+(\vec{Q}\cap\vec{F}_i)\subseteq (\vec{Y}\setminus\vec{S}_{<i})\setminus \vec{X}\}$$
  Then $\vec{S}_1,\ldots,\vec{S}_n$ is a disjoint decomposition\footnotemark[\getrefnumber{footnote:disjoint-decomposition}] of $\vec{Y}\setminus\vec{X}$ such that for every $i\in\{1,\ldots,n\}$ we have:
  \begin{itemize}
  \item $\vec{S}_i$ is a finitary $(\vec{Q}\cap\vec{F}_i)$-cylindric set, and
  \item $\vec{Y}\setminus \vec{S}_{<i}$ is a finitary $\vec{Q}$-cylindric set that contains $\vec{X}$.
  \end{itemize}
  Moreover, if $\vec{Y}\subseteq \vec{X}-\vec{Q}$ then $\vec{S}_1$ is empty.
\end{theorem}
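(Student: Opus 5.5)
The plan is to prove the statements by induction on $i$, one after another, in this order: disjointness, the cylindricity claims, finitariness, covering, and finally the last claim.

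\textbf{Disjointness and easy cylindricity.} Disjointness is immediate: taking $\vec{0}\in\vec{Q}\cap\vec{F}_i$ in the definition gives $\vec{S}_i\subseteq(\vec{Y}\setminus\vec{S}_{<i})\setminus\vec{X}$, so $\vec{S}_i$ misses $\vec{S}_{<i}$. The set $\vec{S}_i$ is $(\vec{Q}\cap\vec{F}_i)$-cylindric directly from the definition, because $\vec{Q}\cap\vec{F}_i$ is periodic: if $\vec{s}+(\vec{Q}\cap\vec{F}_i)$ lies in the target set, then so does $\vec{s}+\vec{f}+(\vec{Q}\cap\vec{F}_i)$ for every $\vec{f}\in\vec{Q}\cap\vec{F}_i$. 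I would also note at the outset that the smallest face $\vec{F}_n$ is the lineality space $\vec{C}\cap-\vec{C}$. Consequently $\vec{G}\coloneqq\vec{Q}\cap\vec{F}_n$ is a group (this uses that $\vec{Q}$ is full) and $\vec{G}\subseteq\vec{Q}\cap\vec{F}_i$ for every $i$. Hence $\vec{X}$, $\vec{Y}$ and every $\vec{S}_i$ are invariant under $\pm\vec{G}$.

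\textbf{The key claim: $\vec{Y}\setminus\vec{S}_{<i}$ is $\vec{Q}$-cylindric.} This is proved by induction on $i$, and I expect it to be the main obstacle. Suppose $\vec{y}\in\vec{Y}\setminus\vec{S}_{<i}$, $\vec{q}\in\vec{Q}$, and, for a contradiction, $\vec{y}+\vec{q}\in\vec{S}_j$ for some $j<i$. Because $\vec{y}\notin\vec{S}_j$, there is $\vec{z}\in\vec{Q}\cap\vec{F}_j$ with $\vec{y}+\vec{z}\notin(\vec{Y}\setminus\vec{S}_{<j})\setminus\vec{X}$.
\begin{itemize}
\item $\vec{y}+\vec{z}\notin\vec{Y}$ is impossible, since $\vec{Y}$ is $\vec{Q}$-cylindric.
\item $\vec{y}+\vec{z}\in\vec{X}$ is impossible, since it would put $\vec{y}+\vec{q}+\vec{z}$ in $\vec{X}$, contradicting $\vec{y}+\vec{q}\in\vec{S}_j$.
\item So $\vec{y}+\vec{z}\in\vec{S}_k$ for some $k<j$.
\end{itemize}
The remaining work is to push this to a contradiction. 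Since $\vec{F}_k\not\subseteq\vec{F}_j$ is allowed by the ordering, I would exploit the face structure. Pick a linear form $\vec{e}$ that is nonnegative on $\vec{C}$ and vanishes exactly on $\vec{F}_j$. Then translate along $\vec{Q}\cap\vec{F}_k$ from $\vec{y}+\vec{z}$ and along $\vec{Q}\cap\vec{F}_j$ from $\vec{y}+\vec{q}$, aiming to find a common point that lies both in $\vec{S}_k$ and in $\vec{S}_j$, using the cylindricity of $\vec{S}_k$ and $\vec{S}_j$ along their own faces. If this direct argument fails, the fallback is a minimal-counterexample argument: choose the triple $(\vec{y},\vec{q},j)$ minimizing $j$, then use that $\vec{Y}\setminus\vec{S}_{<j}$ is $\vec{Q}$-cylindric (induction hypothesis) to rule out $\vec{y}+\vec{z}\in\vec{S}_{<j}$ having been created ``before'' $\vec{y}$. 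Once cylindricity holds, $\vec{Y}\setminus\vec{S}_{<i}$ is finitary $\vec{Q}$-cylindric because it is $\vec{Q}$-cylindric and included in the finitary $\vec{Q}$-cylindric set $\vec{Y}$, by the remark in \cref{par:cylindric}. It contains $\vec{X}$ since each $\vec{S}_j$ avoids $\vec{X}$.

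\textbf{Finitariness of $\vec{S}_i$.} Here I would show that $\vec{S}_i$ is bounded modulo $\vec{F}_i$ in the directions of $\vec{C}$ transverse to $\vec{F}_i$. Concretely, write $\vec{Y}\setminus\vec{X}$ as a Presburger-definable set. If points of $\vec{S}_i$ went arbitrarily far in a direction of $\vec{Q}$ outside $\vec{F}_i$, a wqo argument on $\leq_{\vec{Q}}$ would produce $\vec{s}\leq_{\vec{Q}}\vec{s}'$ in $\vec{S}_i$. Then $\vec{s}$ would belong to an earlier $\vec{S}_k$ for a larger face, which is a contradiction. This yields $\vec{S}_i=\vec{B}_i+(\vec{Q}\cap\vec{F}_i)$ with $\vec{B}_i$ finite.

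\textbf{Covering and the last claim.} Covering uses the last face. Let $\vec{s}\in\vec{Y}\setminus\vec{X}$ with $\vec{s}\notin\vec{S}_{<n}$. By $\pm\vec{G}$-invariance of $\vec{X}$, $\vec{Y}$ and $\vec{S}_{<n}$, we get $\vec{s}+\vec{G}\subseteq(\vec{Y}\setminus\vec{S}_{<n})\setminus\vec{X}$, so $\vec{s}\in\vec{S}_n$. For the final statement, assume $\vec{Y}\subseteq\vec{X}-\vec{Q}$. Then every $\vec{y}\in\vec{Y}$ has some $\vec{q}\in\vec{Q}=\vec{Q}\cap\vec{F}_1$ with $\vec{y}+\vec{q}\in\vec{X}$, so no point satisfies the defining condition of $\vec{S}_1$, and $\vec{S}_1=\emptyset$.
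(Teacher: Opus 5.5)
Your arguments for disjointness, for the $(\vec{Q}\cap\vec{F}_i)$-cylindricity of $\vec{S}_i$, for covering via the group $\vec{Q}\cap\vec{F}_n$, and for $\vec{S}_1=\emptyset$ are correct. The step you single out as the main obstacle is in fact routine, and your fallback settles it. Since $\vec{y}\in\vec{Y}\setminus\vec{S}_{<j}$, and this set is $\vec{Q}$-cylindric by the induction hypothesis, $\vec{y}+\vec{z}\notin\vec{S}_{<j}$. This is exactly the paper's argument (\cref{lem:cyclicpreserve} applied to $\vec{X}\subseteq\vec{Y}\setminus\vec{S}_{<i}$ at each step), so the detour through linear forms is not needed.

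The genuine gap is the finitariness of $\vec{S}_i$, which is where the substance of the theorem lies (the paper's \cref{lem:extractB}). Your sketch does not establish it.

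A comparable pair $\vec{s}\leq_{\vec{Q}}\vec{s}'$ in $\vec{S}_i$ with $\vec{s}'-\vec{s}\notin\vec{F}_i$ says nothing by itself. Take $\vec{Q}=\vec{Y}=\setN^2$, $\vec{X}=(2,0)+\setN^2$ and the face order of \cref{ex:striping1}. Then $\vec{S}_1=\vec{S}_2=\emptyset$ and $\vec{S}_3=\{0,1\}\times\setN$, which contains $(0,0)\leq_{\vec{Q}}(1,0)$, yet $(0,0)$ lies in no earlier $\vec{S}_k$. What is needed is an argument on a whole sequence:
\begin{itemize}
\item Fix a finite set $\vec{E}$ of inequalities defining $\vec{C}$. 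From an arbitrary sequence in $\vec{S}_i\subseteq\vec{Y}$, extract a $\leq_{\vec{Q}}$-nondecreasing subsequence $(\vec{s}_n)$ along which each $\vec{e}\cdot\vec{s}_n$ ($\vec{e}\in\vec{E}$) is constant or unbounded.
\item Let $\vec{K}$ be the set of $\vec{e}\in\vec{E}\cap\vec{F}_i^\perp$ that stay constant, and put $\vec{H}\coloneqq\vec{C}\cap\vec{K}^\perp$, a face containing $\vec{F}_i$.
\item Prove $\vec{s}_0+(\vec{Q}\cap\vec{H})\subseteq(\vec{Y}\setminus\vec{S}_{<i})\setminus\vec{X}$. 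Staying in $\vec{Y}\setminus\vec{S}_{<i}$ follows from your key claim, but avoiding $\vec{X}$ needs a real idea. Suppose $\vec{s}_0+\vec{p}\in\vec{X}$ with $\vec{p}\in\vec{Q}\cap\vec{H}$. Choose $n$ large and a suitably scaled vector $\vec{q}\in\vec{Q}\cap\vec{F}_i$ in the relative interior of $\vec{F}_i$ (\cref{lem:centerF}), so that $\vec{s}_n+\vec{q}-\vec{s}_0-\vec{p}$ satisfies every inequality of $\vec{E}$. This vector also lies in $\vec{Q}-\vec{Q}$, so fullness of $\vec{Q}$ puts it in $\vec{Q}$. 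Hence $\vec{s}_n+\vec{q}\in\vec{X}$, contradicting $\vec{s}_n\in\vec{S}_i$.
\item Only now does your intended contradiction apply. If $\vec{F}_i\subsetneq\vec{H}$, then $\vec{s}_0$ would lie in $\vec{S}_k$ for the index $k<i$ of $\vec{H}$, contradicting disjointness. Hence $\vec{H}=\vec{F}_i$, and so $\vec{s}_n-\vec{s}_0\in\vec{Q}\cap\vec{C}\cap\vec{K}^\perp=\vec{Q}\cap\vec{F}_i$ for all $n$. This is the wqo property of $\leq_{\vec{Q}\cap\vec{F}_i}$ on $\vec{S}_i$ that you need.
\end{itemize}
Your proposal contains none of these ingredients: the face $\vec{H}$ built from the bounded constraints, the relative-interior vector, and the use of fullness of $\vec{Q}$ outside the covering step. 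So the finitariness of $\vec{S}_i$ remains unproved.
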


\begin{figure}[t]
  \centering
\hspace{1cm}
  \begin{tikzpicture}[scale=0.3]
    \draw[step=1cm, gray!30, very thin] (0,0) grid (8,8);

    \foreach \x in {6,7,8} \foreach \y in {6,7,8}\draw[fill=black] (\x,\y) circle (2pt);

    \foreach \x in {0,1,...,8}
        \draw (\x, 2pt) -- (\x, -2pt) node[below] {\small $\x$};
        
    \foreach \y in {0,1,...,8}
        \draw (2pt, \y) -- (-2pt, \y) node[left] {\small $\y$};

\end{tikzpicture}
\hfill
\begin{tikzpicture}[scale=0.3]
    \draw[step=1cm, gray!30, very thin] (0,0) grid (8,8);

    \foreach \x in {2,3,4,5,6,7,8} \draw[fill=black] (\x,2) circle (2pt);
    \foreach \x in {2,3,4,5,6,7,8} \draw[fill=black] (\x,3) circle (2pt);
    \foreach \x in {0,1,2,3,4,5,6,7,8} \foreach \y in {4,5,6,7,8}\draw[fill=black] (\x,\y) circle (2pt);

    \foreach \x in {0,1,...,8}
        \draw (\x, 2pt) -- (\x, -2pt) node[below] {\small $\x$};
        
    \foreach \y in {0,1,...,8}
        \draw (2pt, \y) -- (-2pt, \y) node[left] {\small $\y$};

\end{tikzpicture}
\hfill
\begin{tikzpicture}[scale=0.3]
    \draw[step=1cm, gray!30, very thin] (0,0) grid (8,8);

    \draw[blue, ultra thick] (2,2) -- (8,2);
    \draw[blue, ultra thick] (2,3) -- (8,3);
    \draw[blue, ultra thick] (0,4) -- (8,4);
    \draw[blue, ultra thick] (0,5) -- (8,5);

    \foreach \x in {0,1,2,3,4,5} {
        \draw[orange, ultra thick] (\x,6) -- (\x,8);
    }

    \foreach \x in {0,1,...,8}
        \draw (\x, 2pt) -- (\x, -2pt) node[below] {\small $\x$};
        
    \foreach \y in {0,1,...,8}
        \draw (2pt, \y) -- (-2pt, \y) node[left] {\small $\y$};

\end{tikzpicture}
\hspace{1cm}
\caption{From left to right : (1) $\vec{X}\coloneqq (6,6)+\setN^2$, (2) $\vec{Y}\coloneqq \{(2,2),(0,4)\}+\setN^2$, and (3) $S_2$ in blue and $S_3$ in orange.
}\label{fig:striping}
\end{figure}

\begin{example}\label{ex:striping1}
  Let $\vec{Q}$ be the full finitely-generated periodic set $\setN^2$. Notice that $\vec{C}\coloneqq\cone{\vec{P}}$ is the cone $\setQ_{\geq 0}^2$. It follows that $\mathcal{F}(\vec{C})=\{\vec{F}_1,\vec{F}_2,\vec{F}_3,\vec{F}_4\}$ where $\vec{F}_1\coloneqq \setQ_{\geq 0}^2$, $\vec{F}_2\coloneqq\setQ_{\geq 0}\times\{0\}$, $\vec{F}_3\coloneqq\{0\}\times\setQ_{\geq 0}$, and $\vec{F}_4=\{(0,0)\}$. Notice that $\vec{F}_i\subseteq \vec{F}_j$ implies $i\leq j$. Let us introduce the finitary $\vec{Q}$-cylindric sets $\vec{X}\coloneqq (6,6)+\setN^2$ and $\vec{Y}\coloneqq \{(2,2),(0,4)\}+\setN^2$. Those sets as well as the sets $\vec{S}_2,\vec{S}_3$ introduced by \cref{thm:leaf} are depicted in \cref{fig:striping}. Notice that $\vec{S}_1=\emptyset$, $\vec{S}_2=\{(2,2),(2,3),(0,4),(0,5)\}+(\setN\times\{0\})$, $\vec{S}_3=\{(0,6), (1,6),(2,6),(3,6),(4,6),(5,6)\}+(\{0\}\times\setN)$, and $\vec{S}_4=\emptyset$.
\end{example}

We now prove the following central result satisfied by the decomposition given by the face-stripping theorem. 
\begin{lemma}\label{lem:diagonalstripping} 
  Let $R$ be a $\vec{Q}$-diagonal\footnote{See \cref{sec:prelim} for the definition.} binary relation over $\setZ^d$ such that $R\cap (\vec{X}\times(\vec{Y}\setminus \vec{X}))$ is empty. Then $R\cap (\vec{S}_j \times\vec{S}_i)$ is empty for every $0\leq i<j\leq n$.
\end{lemma}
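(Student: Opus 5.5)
The plan is a direct contradiction argument using three ingredients: the diagonality of $R$, the defining property of $\vec{S}_i$, and the cylindricity statement of \cref{thm:leaf}. Interestingly, the order on the faces should not be needed at all; only $i<j$ and the disjointness of the decomposition matter. Since the indices of the decomposition range over $\{1,\ldots,n\}$, the case $i=0$ is vacuous, so assume $1\leq i<j\leq n$.

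Suppose for contradiction that $(\vec{s},\vec{t})\in R$ with $\vec{s}\in\vec{S}_j$ and $\vec{t}\in\vec{S}_i$.

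First I will show that $\vec{s}$ can be pushed into $\vec{X}$ along the face $\vec{F}_i$.
\begin{itemize}
\item By \cref{thm:leaf}, the sets $\vec{S}_1,\ldots,\vec{S}_n$ form a disjoint decomposition of $\vec{Y}\setminus\vec{X}$. Hence $\vec{s}\in\vec{Y}$ and $\vec{s}\notin\vec{S}_{<j}\supseteq\vec{S}_{<i}$, so $\vec{s}\in\vec{Y}\setminus\vec{S}_{<i}$.
\item Again by \cref{thm:leaf}, the set $\vec{Y}\setminus\vec{S}_{<i}$ is $\vec{Q}$-cylindric. Therefore $\vec{s}+\vec{q}\in\vec{Y}\setminus\vec{S}_{<i}$ for every $\vec{q}\in\vec{Q}$, and in particular for every $\vec{q}\in\vec{Q}\cap\vec{F}_i$.
\item Since $j\neq i$, disjointness gives $\vec{s}\notin\vec{S}_i$. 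By the definition of $\vec{S}_i$, this means $\vec{s}+(\vec{Q}\cap\vec{F}_i)\not\subseteq(\vec{Y}\setminus\vec{S}_{<i})\setminus\vec{X}$.
\item Combining the last two points, there exists $\vec{q}\in\vec{Q}\cap\vec{F}_i$ with $\vec{s}+\vec{q}\in\vec{X}$.
\end{itemize}

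Second, I will use the same $\vec{q}$ on the other coordinate. Because $\vec{t}\in\vec{S}_i$ and $\vec{q}\in\vec{Q}\cap\vec{F}_i$, the definition of $\vec{S}_i$ gives directly $\vec{t}+\vec{q}\in(\vec{Y}\setminus\vec{S}_{<i})\setminus\vec{X}\subseteq\vec{Y}\setminus\vec{X}$.

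Finally, since $R$ is $\vec{Q}$-diagonal and $\vec{q}\in\vec{Q}$, we get $(\vec{s}+\vec{q},\vec{t}+\vec{q})\in R$. This pair lies in $R\cap(\vec{X}\times(\vec{Y}\setminus\vec{X}))$, contradicting the hypothesis that this set is empty.

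The only delicate step is producing $\vec{q}$ with $\vec{s}+\vec{q}\in\vec{X}$. It rests on two facts: $\vec{s}$ already lies in the $\vec{Q}$-cylindric set $\vec{Y}\setminus\vec{S}_{<i}$, and $\vec{s}\notin\vec{S}_i$. The failure of the inclusion defining $\vec{S}_i$ can then only come from leaving the complement of $\vec{X}$, not from leaving $\vec{Y}\setminus\vec{S}_{<i}$.
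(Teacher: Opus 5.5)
Your proof is correct and uses the same ingredients as the paper's: disjointness of the decomposition, $\vec{Q}$-cylindricity of $\vec{Y}\setminus\vec{S}_{<i}$, the defining inclusion of $\vec{S}_i$, and $\vec{Q}$-diagonality of $R$. Only the order of the contradiction differs. The paper first uses diagonality to show that $\vec{s}_j+(\vec{Q}\cap\vec{F}_i)$ avoids $\vec{X}$, then concludes $\vec{s}_j\in\vec{S}_i$; you instead extract from $\vec{s}\notin\vec{S}_i$ a witness $\vec{q}$ with $\vec{s}+\vec{q}\in\vec{X}$ and feed that into diagonality.
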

\begin{proof}
  Assume by contradiction that there exists $(\vec{s}_j,\vec{s}_i)\in R\cap (\vec{S}_j\times \vec{S}_i)$ for some $0\leq i<j\leq n$. Let $\vec{x}\in \vec{s}_j+(\vec{Q}\cap\vec{F}_i)$ and let us prove that $\vec{x}\not\in\vec{X}$. There exists $\vec{q}\in\vec{Q}\cap\vec{F}_i$ such that $\vec{x}=\vec{s}_j+\vec{q}$. As $\vec{S}_i$ is $(\vec{Q}\cap\vec{F}_i)$-cylindric, we deduce that $\vec{y}\coloneqq\vec{s}_i+\vec{q}$ is in $\vec{S}_i$, and in particular in $\vec{Y}\setminus\vec{X}$. As $(\vec{s}_j,\vec{s}_i)\in R$ and $R$ is $\vec{Q}$-diagonal, we deduce that $(\vec{x},\vec{y})\in R$. As $R$ has an empty intersection with $\vec{X}\times(\vec{Y}\setminus \vec{X})$ and $\vec{y}\in \vec{Y}\setminus\vec{X}$, we deduce that $\vec{x}\not\in\vec{X}$. We have proved that $\vec{s}_j+(\vec{Q}\cap\vec{F}_i)$ has an empty intersection with $\vec{X}$. Since $\vec{s}_j\in \vec{S}_j$ and $\vec{S}_1,\ldots,\vec{S}_n$ is a disjoint decomposition of $\vec{Y}\setminus\vec{X}$, we deduce that $\vec{s}_j\in\vec{Y}\setminus \vec{S}_{<i}$. As $\vec{Y}\setminus \vec{S}_{< i}$ is $\vec{Q}$-cylindric, we deduce that $\vec{s}_j+(\vec{Q}\cap\vec{F}_i)\subseteq \vec{Y}\setminus \vec{S}_{< i}$. Since $\vec{s}_j+(\vec{Q}\cap\vec{F}_i)$ has an empty intersection with $\vec{X}$, it is included in $(\vec{Y}\setminus \vec{S}_{< i})\setminus \vec{X}$. Hence $\vec{s}_j\in \vec{S}_i$ and we get a contradiction since $\vec{S}_i$ and $\vec{S}_j$ are disjoint. Therefore $R\cap (\vec{S}_j\times \vec{S}_i)$ is empty.
\end{proof}

\begin{example}\label{ex:striping2}
  Let us come back to \cref{ex:striping1}. Assume that $R$ is a $\setN^2$-diagonal relation such that there exists $(\vec{x},\vec{y})\in R\cap (\vec{S}_3\times\vec{S}_2)$. From $\vec{x}\in \vec{S}_3$ and $\vec{y}\in\vec{S}_2$, we deduce that $\vec{x}'\coloneqq\vec{x}+(6,0)\in \vec{X}$ and $\vec{y}'\coloneqq\vec{y}+(6,0)\in \vec{Y}\setminus\vec{X}$. As $R$ is $\setN^2$-diagonal, we deduce that $(\vec{x}',\vec{y}')$ is in $R\cap (\vec{X}\times(\vec{Y}\setminus\vec{X}))$.
\end{example}

The remainder of this section is devoted to the proof of \cref{thm:leaf}. Apart from this theorem and \cref{lem:diagonalstripping}, the material presented in this section is not used in the sequel.

\medskip

Let us fix a full finitely-generated periodic set $\vec{Q}$, and let $\vec{C}\coloneqq\setQ_{\geq 0}\vec{Q}$ be the cone spanned by $\vec{Q}$. Since $\vec{C}$ is a finitely-generated cone, the Farkas-Minkowski-Weyl theorem~\cite[Corollary 7.1a]{DBLP:books/daglib/0090562} shows that there exists a finite set $\vec{E}\subseteq \setZ^d$ satisfying the following equality.
\begin{equation}\label{eq:dual}\vec{C}=\{\vec{c}\in\setQ^d \mid \bigwedge_{\vec{e}\in \vec{E}}\scalar{\vec{e}}{\vec{c}}\geq 0\}
\end{equation}
Recall from the Farkas lemma~\cite[Corollary 7.1d]{DBLP:books/daglib/0090562} that a set $\vec{F}\subseteq \setQ^d$ is a face of $\vec{C}$ if, and only if, $\vec{F}=\vec{C}\cap\vec{L}^\perp$ where $\vec{L}\coloneqq\vec{E}\cap\vec{F}^\perp$. It follows that $\mathcal{F}(\vec{C})$ is finite since it contains at most $2^{|\vec{E}|}$ faces, $\vec{C}$ is the $\subseteq$-maximal face of $\vec{C}$, and $\vec{E}^\perp$ is the $\subseteq$-minimal face of $\vec{C}$. Notice that the minimal face is a vector space.

\medskip

We first introduce some folklore results.
\begin{lemma}\label{lem:centerF}
    For every face $\vec{F}$, there exists a vector $\vec{q}\in \vec{Q}\cap\vec{F}$ such that $\vec{E}\cap\vec{F}^\perp=\vec{E}\cap\vec{q}^\perp$.
\end{lemma}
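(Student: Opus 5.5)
Fix a face $\vec{F}$ and put $\vec{L}\coloneqq\vec{E}\cap\vec{F}^\perp$, so that $\vec{F}=\vec{C}\cap\vec{L}^\perp$ by the Farkas characterization recalled above. The plan is to build a single $\vec{q}\in\vec{Q}\cap\vec{F}$ that is as ``interior'' to $\vec{F}$ as possible. Then the only constraints of $\vec{E}$ that vanish on $\vec{q}$ are those in $\vec{L}$.

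The inclusion $\vec{E}\cap\vec{F}^\perp\subseteq\vec{E}\cap\vec{q}^\perp$ holds for any $\vec{q}\in\vec{F}$, since $\vec{q}\in\vec{F}$ means it is orthogonal to every vector of $\vec{F}^\perp$. So the work is in the reverse inclusion. It suffices to find, for each $\vec{e}\in\vec{E}\setminus\vec{L}$, a vector $\vec{q}_{\vec{e}}\in\vec{Q}\cap\vec{F}$ with $\scalar{\vec{e}}{\vec{q}_{\vec{e}}}\neq 0$. Such a vector then satisfies $\scalar{\vec{e}}{\vec{q}_{\vec{e}}}>0$, because $\vec{q}_{\vec{e}}\in\vec{C}$ and $\vec{C}$ is defined by \cref{eq:dual}.

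To produce $\vec{q}_{\vec{e}}$, note that $\vec{e}\notin\vec{F}^\perp$, so there is some $\vec{c}\in\vec{F}$ with $\scalar{\vec{e}}{\vec{c}}\neq 0$. Since $\vec{c}\in\vec{C}=\setQ_{\geq 0}\vec{Q}$, there is an integer $m\in\setN_{>0}$ with $m\vec{c}\in\vec{Q}$. Moreover $m\vec{c}\in\vec{F}$, because $\vec{F}$ is a cone. We take $\vec{q}_{\vec{e}}\coloneqq m\vec{c}$.

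Finally, set
\[
\vec{q}\coloneqq\sum_{\vec{e}\in\vec{E}\setminus\vec{L}}\vec{q}_{\vec{e}},
\]
with $\vec{q}\coloneqq\vec{0}$ if the sum is empty. This $\vec{q}$ lies in $\vec{Q}\cap\vec{F}$, since both $\vec{Q}$ and $\vec{F}$ are closed under addition and contain $\vec{0}$. For each $\vec{e}\in\vec{E}\setminus\vec{L}$ we have $\scalar{\vec{e}}{\vec{q}}\geq\scalar{\vec{e}}{\vec{q}_{\vec{e}}}>0$. This holds because every summand lies in $\vec{C}$ and so has nonnegative dot product with $\vec{e}$. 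Hence $\vec{e}\notin\vec{q}^\perp$, which gives $\vec{E}\cap\vec{q}^\perp\subseteq\vec{L}$.

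The only delicate point is the positivity argument: a sum of vectors can only avoid cancellation in $\scalar{\vec{e}}{\cdot}$ because $\vec{e}$ is nonnegative on all of $\vec{C}$. Everything else is bookkeeping with cones and periodic sets.
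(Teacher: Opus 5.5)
Your proof is correct and follows the paper's argument. For each $\vec{e}\in\vec{E}\setminus\vec{F}^\perp$ you pick a witness in $\vec{F}$ with nonzero (hence positive) dot product, sum the witnesses, and use the nonnegativity of $\vec{E}$ on $\vec{C}$ to rule out cancellation; the only cosmetic difference is that you scale each witness into $\vec{Q}$ before summing, whereas the paper sums in $\vec{F}$ first and then scales the sum once by some $n\in\setN_{>0}$.
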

\begin{proof}
    For every $\vec{e}\in\vec{E}\setminus\vec{F}^\perp$ there exists $\vec{f}_{\vec{e}}\in \vec{F}$ such that $\scalar{\vec{e}}{\vec{f}_{\vec{e}}}\not=0$. Since $\vec{F}\subseteq \vec{C}$, we have $\scalar{\vec{e}'}{\vec{f}_{\vec{e}}}\geq 0$ for every $\vec{e},\vec{e}'\in\vec{E}$. Now, just observe that $\vec{f}\coloneqq \sum_{\vec{e}\in \vec{E}\setminus\vec{F}^\perp}\vec{f}_{\vec{e}}$ is a vector in $\vec{F}$ such that $\vec{E}\cap\vec{F}^\perp=\vec{E}\cap\vec{f}^\perp$. Since $\vec{F}\subseteq\vec{C}=\setQ_{\geq 0}\vec{Q}$, there exists $n\in\setN_{>0}$ such that $q\coloneqq n\vec{f}$ is in $\vec{Q}$. Notice that $\vec{q}$ satisfies the lemma.
\end{proof}

\begin{lemma}[{\cite[Corollary 7.1b]{DBLP:books/daglib/0090562}}]\label{lem:system}
  Let $(\vec{a}_j,c_j)_{1\leq j\leq k}$ be a sequence of pairs in $\setZ^d\times\setZ$. There exists a finite set $\vec{B}\subseteq\setZ^d$ such that the following equality holds.
  $$
  \{\vec{s}\in\setZ^d \mid \bigwedge_{j=1}^k\scalar{\vec{a}_j}{\vec{s}}\geq c_j\}
  =
  \vec{B}+ 
  \{\vec{s}\in\setZ^d \mid \bigwedge_{j=1}^k\scalar{\vec{a}_j}{\vec{s}}\geq 0\}
  $$
\end{lemma}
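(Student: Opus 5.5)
This is the classical decomposition of a polyhedron's integer points: a finite set of integer points plus the integer points of its recession cone. I would not reprove it from scratch but obtain it from standard integer programming facts, in three steps.

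First, introduce the polyhedron
\[
\vec{K}\coloneqq\{\vec{x}\in\setQ^d \mid \textstyle\bigwedge_j \scalar{\vec{a}_j}{\vec{x}}\geq c_j\}
\]
and its recession cone $\vec{D}\coloneqq\{\vec{x}\in\setQ^d\mid \bigwedge_j\scalar{\vec{a}_j}{\vec{x}}\geq 0\}$. If $\vec{K}\cap\setZ^d$ is empty, take $\vec{B}\coloneqq\emptyset$ and we are done. Otherwise, by the decomposition theorem for polyhedra (Minkowski--Weyl), write $\vec{K}=\vec{K}_0+\vec{D}$ with $\vec{K}_0$ a polytope. Since the data are integral, $\vec{D}$ is generated by finitely many integer vectors $\vec{g}_1,\ldots,\vec{g}_m$. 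By Gordan's lemma, the periodic set $\vec{D}\cap\setZ^d$ is finitely generated; fix a finite generating set $\vec{H}$ for it.

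Second, show $\vec{K}\cap\setZ^d=\vec{B}+(\vec{D}\cap\setZ^d)$ for a finite $\vec{B}$. Let $\vec{s}\in\vec{K}\cap\setZ^d$ and write
\[
\vec{s}=\vec{k}+\textstyle\sum_i\mu_i\vec{g}_i \quad\text{with } \vec{k}\in\vec{K}_0,\ \mu_i\geq 0 .
\]
Split each $\mu_i=\lfloor\mu_i\rfloor+\{\mu_i\}$. Then $\vec{s}-\sum_i\lfloor\mu_i\rfloor\vec{g}_i$ is an integer vector and equals $\vec{k}+\sum_i\{\mu_i\}\vec{g}_i$. This point lies in $\vec{K}$ (a point of $\vec{K}_0$ plus an element of $\vec{D}$) and also in the bounded set $\vec{K}_0+\sum_i[0,1]\vec{g}_i$. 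So take $\vec{B}$ to be the set of integer points of that bounded set lying in $\vec{K}$; it is finite. Since $\sum_i\lfloor\mu_i\rfloor\vec{g}_i\in\vec{D}\cap\setZ^d$, this gives $\subseteq$.

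Third, the reverse inclusion is immediate: for $\vec{b}\in\vec{B}$ and $\vec{d}\in\vec{D}\cap\setZ^d$, each $\scalar{\vec{a}_j}{(\vec{b}+\vec{d})}\geq c_j+0$. Note that the set on the right of the statement is exactly $\vec{D}\cap\setZ^d$, so no generators are needed in the final equality.

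The only delicate point is the existence of the decomposition $\vec{K}=\vec{K}_0+\vec{D}$ with an integral description of $\vec{D}$. This is precisely what the cited Corollary~7.1b of Schrijver provides, so I would invoke it directly rather than reprove Fourier--Motzkin elimination.
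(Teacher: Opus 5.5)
Your proof is correct and follows the paper's route. The paper gives no argument beyond citing Schrijver's Corollary~7.1b, the decomposition $\vec{K}=\vec{K}_0+\vec{D}$ of a polyhedron into a polytope plus its recession cone; your step of rounding the cone coefficients $\mu_i$ down to their integer parts is the standard way to get the integer-point statement from that decomposition (the appeal to Gordan's lemma and the generating set $\vec{H}$ is superfluous, as you note yourself).
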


\begin{corollary}\label{lem:solsystem}
  Let $\vec{K}\subseteq\vec{E}$, let $\vec{H}$ be the face $\vec{C}\cap \vec{K}^\perp$, and let $\vec{P}$ be the finitely-generated periodic set $\vec{Q}\cap\vec{H}$. For every mapping $\lambda\in\setQ^{\vec{E}}$, the following set $\vec{S}$ is a finitary $\vec{P}$-cylindric set.
  $$
    \vec{S}\coloneqq\{\vec{s}\in\vec{Q} \mid \bigwedge_{\vec{e}\in\vec{E}}\scalar{\vec{e}}{\vec{s}}\geq \lambda(\vec{e})\wedge\bigwedge_{\vec{e}\in\vec{K}}\scalar{\vec{e}}{\vec{s}}=\lambda(\vec{e})\}
  $$
\end{corollary}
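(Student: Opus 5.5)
The plan is to reduce the statement to \cref{lem:system}. The only extra work is to encode membership in $\vec{Q}$ as a system of linear inequalities over integer vectors. Since $\vec{Q}$ is full, we have $\vec{Q}=\vec{G}\cap\vec{C}$ where $\vec{G}\coloneqq\vec{Q}-\vec{Q}$. Moreover, by \cref{eq:dual}, $\vec{C}$ is exactly the set of $\vec{c}\in\setQ^d$ with $\scalar{\vec{e}}{\vec{c}}\geq 0$ for all $\vec{e}\in\vec{E}$.

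The group $\vec{G}$ is a subgroup of $\setZ^d$, hence free of some rank $m$. So we may fix an integer $d\times m$ matrix $M$ such that $\vec{z}\mapsto M\vec{z}$ is a bijection from $\setZ^m$ onto $\vec{G}$. Through this parametrization, $\vec{s}=M\vec{z}$ lies in $\vec{S}$ if and only if $\vec{z}\in\setZ^m$ satisfies the following finite system of constraints:
\begin{itemize}
\item $\scalar{(M^{T}\vec{e})}{\vec{z}}\geq 0$ for each $\vec{e}\in\vec{E}$, which encodes membership in $\vec{C}$;
\item $\scalar{(M^{T}\vec{e})}{\vec{z}}\geq \lceil\lambda(\vec{e})\rceil$ for each $\vec{e}\in\vec{E}$;
\item $\scalar{(-M^{T}\vec{e})}{\vec{z}}\geq \lceil-\lambda(\vec{e})\rceil$ for each $\vec{e}\in\vec{K}$.
\end{itemize}
The rounding of $\lambda$ to integers is harmless because $\vec{e}\in\setZ^d$ and $\vec{s}\in\setZ^d$, so $\scalar{\vec{e}}{\vec{s}}$ is an integer. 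The constraints from the last two families encode the equalities on $\vec{K}$. If some $\lambda(\vec{e})$ with $\vec{e}\in\vec{K}$ is not an integer, the system has no solution; this case is covered automatically, since $\emptyset=\emptyset+\vec{P}$ is finitary $\vec{P}$-cylindric.

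Next we apply \cref{lem:system} in dimension $m$ to this system. It yields a finite set $\vec{B}\subseteq\setZ^m$ such that the solution set equals $\vec{B}$ plus the set of integer solutions of the homogeneous system, in which every right-hand side is replaced by $0$. The homogeneous system says that $\scalar{\vec{e}}{M\vec{z}}\geq 0$ for all $\vec{e}\in\vec{E}$ and $\scalar{\vec{e}}{M\vec{z}}=0$ for all $\vec{e}\in\vec{K}$. In other words, $M\vec{z}\in\vec{G}\cap\vec{C}\cap\vec{K}^\perp=\vec{Q}\cap\vec{H}=\vec{P}$. Since $M$ is linear, applying it gives $\vec{S}=M\vec{B}+\vec{P}$ with $M\vec{B}$ finite, which is exactly the statement that $\vec{S}$ is finitary $\vec{P}$-cylindric.

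The only delicate point is making sure that membership in $\vec{Q}$, and not merely in $\vec{C}\cap\setZ^d$, is captured by the system. This is where fullness is used, through the parametrization of $\vec{G}$ by $\setZ^m$. One must also keep the constraints $\scalar{\vec{e}}{\vec{s}}\geq 0$ as separate rows of the system, so that after homogenization the periodic part is $\vec{Q}\cap\vec{H}$ and not something larger. Everything else is a direct instantiation of \cref{lem:system}.
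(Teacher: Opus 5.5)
Your proof is correct and takes the paper's route: you split each equality on $\vec{K}$ into two opposite inequalities and invoke \cref{lem:system}. You also make explicit two points the paper's one-line proof leaves implicit, namely encoding $\vec{s}\in\vec{Q}$ via fullness and a $\setZ$-basis of $\vec{Q}-\vec{Q}$, and rounding the rational bounds $\lambda(\vec{e})$ to integers. One small correction to your closing remark: the rows $\scalar{\vec{e}}{\vec{s}}\geq 0$ are needed so that the inhomogeneous system captures $\vec{s}\in\vec{C}$ exactly when some $\lambda(\vec{e})<0$, not to keep the periodic part small, because the $\lambda$-rows already homogenize to those same constraints.
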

\begin{proof}
  By observing that an equality of the form $\scalar{\vec{e}}{\vec{s}}=\lambda(\vec{e})$ is equivalent to the conjunction of $\scalar{\vec{e}}{\vec{s}}\geq \lambda(\vec{e})$ and $\scalar{-\vec{e}}{\vec{s}}\geq -\lambda(\vec{e})$, the proof follows from \cref{lem:system}.
\end{proof}

Let $\vec{X}\subseteq \vec{Y}\subseteq \setZ^d$ and $\vec{F}$ be a face of $\vec{C}$. We introduce the following $(\vec{Q}\cap\vec{F})$-cylindric set:
$$\extract{\vec{F}}{\vec{X}}{\vec{Y}}\coloneqq\{\vec{s}\in\setZ^d \mid \vec{s}+(\vec{Q}\cap\vec{F})\subseteq \vec{Y}\setminus\vec{X}\}$$
\begin{lemma}\label{lem:cyclicpreserve}
  The set $\vec{Y}\setminus \extract{\vec{F}}{\vec{X}}{\vec{Y}}$ is $\vec{Q}$-cylindric for every $\vec{Q}$-cylindric sets $\vec{X}\subseteq \vec{Y}$ and for every face $\vec{F}$ of $\vec{C}$.
\end{lemma}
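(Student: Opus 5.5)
The plan is to verify the cylindricity condition directly from the definitions: fix $\vec{y}\in\vec{Y}\setminus\extract{\vec{F}}{\vec{X}}{\vec{Y}}$ and $\vec{q}\in\vec{Q}$, and show that $\vec{y}+\vec{q}$ lies in $\vec{Y}$ but not in $\extract{\vec{F}}{\vec{X}}{\vec{Y}}$. The only inputs needed are that $\vec{X}$ and $\vec{Y}$ are $\vec{Q}$-cylindric and that $\vec{Q}\cap\vec{F}\subseteq\vec{Q}$. No geometric property of the face $\vec{F}$ seems to be required.

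\emph{Step 1 (membership in $\vec{Y}$).} Since $\vec{Y}+\vec{Q}\subseteq\vec{Y}$, we immediately get $\vec{y}+\vec{q}\in\vec{Y}$.

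\emph{Step 2 (a witness that $\vec{y}$ is not extracted).} Since $\vec{y}\notin\extract{\vec{F}}{\vec{X}}{\vec{Y}}$, there is some $\vec{f}\in\vec{Q}\cap\vec{F}$ with $\vec{y}+\vec{f}\notin\vec{Y}\setminus\vec{X}$. Because $\vec{f}\in\vec{Q}$ and $\vec{Y}$ is $\vec{Q}$-cylindric, we have $\vec{y}+\vec{f}\in\vec{Y}$, and therefore $\vec{y}+\vec{f}\in\vec{X}$.

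\emph{Step 3 (transport the witness).} Since $\vec{X}$ is $\vec{Q}$-cylindric, it follows that $(\vec{y}+\vec{q})+\vec{f}=(\vec{y}+\vec{f})+\vec{q}\in\vec{X}$. So the same $\vec{f}\in\vec{Q}\cap\vec{F}$ witnesses that $(\vec{y}+\vec{q})+(\vec{Q}\cap\vec{F})\not\subseteq\vec{Y}\setminus\vec{X}$. Hence $\vec{y}+\vec{q}\notin\extract{\vec{F}}{\vec{X}}{\vec{Y}}$, and combining with Step 1 we conclude $\vec{y}+\vec{q}\in\vec{Y}\setminus\extract{\vec{F}}{\vec{X}}{\vec{Y}}$.

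There is no real obstacle here. The one point to get right is Step 2: we must use the cylindricity of $\vec{Y}$ to rule out the alternative $\vec{y}+\vec{f}\notin\vec{Y}$, which leaves $\vec{y}+\vec{f}\in\vec{X}$ as the only possibility. Note that the inclusion $\vec{X}\subseteq\vec{Y}$ is not even needed for this argument.
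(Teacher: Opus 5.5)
Your proof is correct and matches the paper's argument: take a witness $\vec{p}\in\vec{Q}\cap\vec{F}$ with $\vec{y}+\vec{p}\notin\vec{Y}\setminus\vec{X}$, use that $\vec{Y}$ is $\vec{Q}$-cylindric to force $\vec{y}+\vec{p}\in\vec{X}$, then use that $\vec{X}$ is $\vec{Q}$-cylindric to transfer the witness to $\vec{y}+\vec{q}$. Your side remarks are also accurate: the inclusion $\vec{X}\subseteq\vec{Y}$ is not used, and the only property of $\vec{F}$ needed is $\vec{Q}\cap\vec{F}\subseteq\vec{Q}$.
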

\begin{proof}
  Let us denote by $\vec{S}$ the set $\extract{\vec{F}}{\vec{X}}{\vec{Y}}$. Let $\vec{y}\in \vec{Y}\setminus \vec{S}$ and $\vec{q}\in \vec{Q}$, and let us prove that $\vec{y}+\vec{q}\in \vec{Y}\setminus \vec{S}$. Since $\vec{y}\not\in\vec{S}$, there exists $\vec{p}\in \vec{Q}\cap\vec{F}$ such that $\vec{y}+\vec{p}\not\in \vec{Y}\setminus\vec{X}$. As $\vec{Y}$ is $\vec{Q}$-cylindric and $\vec{p}\in \vec{Q}$, we deduce that $\vec{y}+\vec{p}\in \vec{Y}$. It follows that $\vec{y}+\vec{p}\in\vec{X}$. Hence $\vec{y}+\vec{q}+\vec{p}\in\vec{X}$ since $\vec{X}$ is $\vec{Q}$-cylindric. We derive $\vec{y}+\vec{q}\not\in \vec{S}$. Since $\vec{y}\in \vec{Y}$ and $\vec{Y}$ is $\vec{Q}$-cylindric, we also get $\vec{y}+\vec{q}\in \vec{Y}$. Therefore $\vec{y}+\vec{q}\in \vec{Y}\setminus\vec{S}$.
\end{proof}

\begin{lemma}\label{lem:emptyset}
  For every $\vec{Q}$-cylindric sets $\vec{X}\subseteq \vec{Y}$, the $\subseteq$-minimal face $\vec{F}_{min}$ satisfies $\extract{\vec{F}_{min}}{\vec{X}}{\vec{Y}}=\vec{Y}\setminus \vec{X}$.
\end{lemma}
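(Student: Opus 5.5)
The plan is to identify the minimal face concretely. It will turn out to be the lineality space, so that $\vec{Q}\cap\vec{F}_{min}$ is a group contained in $\vec{Q}$. Once this is known, the extraction over $\vec{F}_{min}$ removes nothing from $\vec{Y}\setminus\vec{X}$.

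From the discussion after \cref{eq:dual}, $\vec{F}_{min}=\vec{E}^\perp$, and this is contained in $\vec{C}$. It is a vector space, so $\vec{F}_{min}=-\vec{F}_{min}$. Set $\vec{G}\coloneqq\vec{Q}\cap\vec{F}_{min}$. I will show that $\vec{G}$ is a group, i.e.\ $-\vec{G}\subseteq\vec{G}$. Take $\vec{q}\in\vec{G}$. Then $-\vec{q}\in\vec{F}_{min}\subseteq\vec{C}$. Moreover $-\vec{q}\in\vec{Q}-\vec{Q}$, since $\vec{Q}-\vec{Q}$ is a group containing $\vec{q}$. Because $\vec{Q}$ is full and finitely generated, \cref{rem:full} gives $\vec{Q}=(\vec{Q}-\vec{Q})\cap\vec{C}$. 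Hence $-\vec{q}\in\vec{Q}$, and so $-\vec{q}\in\vec{G}$. This is the only step that uses fullness, and it is the one real point of the proof.

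For the inclusion $\extract{\vec{F}_{min}}{\vec{X}}{\vec{Y}}\subseteq\vec{Y}\setminus\vec{X}$, take $\vec{s}$ in the extract. Since $\vec{0}\in\vec{Q}\cap\vec{F}_{min}$, we get $\vec{s}=\vec{s}+\vec{0}\in\vec{Y}\setminus\vec{X}$.

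For the converse inclusion, take $\vec{s}\in\vec{Y}\setminus\vec{X}$ and $\vec{q}\in\vec{G}$; we must show $\vec{s}+\vec{q}\in\vec{Y}\setminus\vec{X}$. Since $\vec{Y}$ is $\vec{Q}$-cylindric and $\vec{q}\in\vec{Q}$, we have $\vec{s}+\vec{q}\in\vec{Y}$. Suppose for contradiction that $\vec{s}+\vec{q}\in\vec{X}$. We know $-\vec{q}\in\vec{G}\subseteq\vec{Q}$, and $\vec{X}$ is $\vec{Q}$-cylindric, so $\vec{s}=(\vec{s}+\vec{q})+(-\vec{q})\in\vec{X}$. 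This contradicts $\vec{s}\notin\vec{X}$. Therefore $\vec{s}+\vec{G}\subseteq\vec{Y}\setminus\vec{X}$, which means $\vec{s}$ lies in the extract, and the two sets are equal.
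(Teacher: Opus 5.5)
Your proof is correct and follows the paper's argument: the trivial inclusion comes from $\vec{0}\in\vec{Q}\cap\vec{F}_{min}$, and for the converse you use that $\vec{F}_{min}$ is a vector space and that $\vec{Q}$ is full to get $-\vec{q}\in\vec{Q}$, then $\vec{Q}$-cylindricity of $\vec{X}$ yields the contradiction. The only difference is that you make the fullness step explicit via $\vec{Q}=(\vec{Q}-\vec{Q})\cap\vec{C}$, which the paper leaves implicit.
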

\begin{proof}
  Trivially, $\extract{\vec{F}_{min}}{\vec{X}}{\vec{Y}}$ is a subset of $\vec{Y}\setminus \vec{X}$. For the converse inclusion, let $\vec{y}\in \vec{Y}\setminus\vec{X}$ and let us prove that for every $\vec{q}\in\vec{Q}\cap\vec{F}_{min}$ we have $\vec{y}+\vec{q}\in \vec{Y}\setminus\vec{X}$. Since $\vec{Y}$ is $\vec{Q}$-cylindric we deduce that $\vec{y}+\vec{q}\in\vec{Y}$. Assume by contradiction that $\vec{y}+\vec{q}\in \vec{X}$. Since $\vec{q}\in \vec{F}_{min}$ we deduce that $-\vec{q}\in \vec{F}_{min}$ since the minimal face $\vec{F}_{min}$ is a vector space. As $\vec{Q}$ is full, we deduce that $-\vec{q}\in \vec{Q}$. As $\vec{X}$ is $\vec{Q}$-cylindric, it follows that $(\vec{y}+\vec{q})-\vec{q}\in \vec{X}$. Hence $\vec{y}\in\vec{X}$ and we get a contradiction. Hence $\vec{y}+\vec{q}\not\in\vec{X}$. We have proved that $\vec{y}+(\vec{Q}\cap\vec{F}_{min})\subseteq \vec{Y}\setminus\vec{X}$. Hence $\vec{y}\in \extract{\vec{F}_{min}}{\vec{X}}{\vec{Y}}$ and we have proved the converse inclusion.
\end{proof}

\begin{lemma}\label{lem:extractB}
  For every finitary $\vec{Q}$-cylindric sets $\vec{X}\subseteq \vec{Y}$ and for every $\subseteq$-maximal face $\vec{F}$ such that $\extract{\vec{F}}{\vec{X}}{\vec{Y}}$ is non-empty, the set $\extract{\vec{F}}{\vec{X}}{\vec{Y}}$ is finitary $(\vec{Q}\cap\vec{F})$-cylindric.
\end{lemma}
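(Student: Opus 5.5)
The plan is to reduce the statement to \cref{lem:solsystem}. We write $\vec{X}=\vec{B}_X+\vec{Q}$ and $\vec{Y}=\vec{B}_Y+\vec{Q}$ with $\vec{B}_X,\vec{B}_Y$ finite, and put $\vec{S}\coloneqq\extract{\vec{F}}{\vec{X}}{\vec{Y}}$ and $\vec{K}\coloneqq\vec{E}\cap\vec{F}^\perp$, so that $\vec{F}=\vec{C}\cap\vec{K}^\perp$. By definition $\vec{S}$ is $(\vec{Q}\cap\vec{F})$-cylindric. Recall from \cref{sec:prelim} that a $(\vec{Q}\cap\vec{F})$-cylindric set contained in a finitary $(\vec{Q}\cap\vec{F})$-cylindric set is itself finitary. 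It therefore suffices to exhibit a finitary $(\vec{Q}\cap\vec{F})$-cylindric superset of $\vec{S}$.

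Every $\vec{s}\in\vec{S}$ lies in $\vec{Y}$, so $\vec{s}=\vec{b}+\vec{q}_s$ with $\vec{b}\in\vec{B}_Y$ and $\vec{q}_s\in\vec{Q}$. The key claim is a boundedness statement: there is $N\in\setN$ such that $\scalar{\vec{e}}{\vec{q}_s}\leq N$ for every $\vec{e}\in\vec{K}$ and every $\vec{s}\in\vec{S}$. Granting it, the values $\lambda(\vec{e})=\scalar{\vec{e}}{\vec{q}_s}$ for $\vec{e}\in\vec{K}$ range over a finite set. Extending $\lambda$ by $0$ on $\vec{E}\setminus\vec{K}$, \cref{lem:solsystem} shows that each set $\{\vec{q}\in\vec{Q}\mid \bigwedge_{\vec{e}\in\vec{E}}\scalar{\vec{e}}{\vec{q}}\geq\lambda(\vec{e})\wedge\bigwedge_{\vec{e}\in\vec{K}}\scalar{\vec{e}}{\vec{q}}=\lambda(\vec{e})\}$ is finitary $(\vec{Q}\cap\vec{F})$-cylindric. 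Since $\vec{q}_s\in\vec{Q}\subseteq\vec{C}$, the inequalities $\scalar{\vec{e}}{\vec{q}_s}\geq 0$ hold automatically, so each $\vec{q}_s$ lies in one of these sets. Taking the finite union over $\vec{b}\in\vec{B}_Y$ and over the finitely many admissible $\lambda$ then gives the required finitary superset of $\vec{S}$.

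The main obstacle is the boundedness claim, and this is where the maximality of $\vec{F}$ enters. Suppose the claim fails. Then there are $\vec{s}_m\in\vec{S}$ and some $\vec{e}_0\in\vec{K}$ with $\scalar{\vec{e}_0}{\vec{q}_{s_m}}\to\infty$, and after extracting a subsequence we may assume the base point $\vec{b}$ is fixed. Since $(\vec{Q},\leq_{\vec{Q}})$ is a wqo, we may further assume the sequence is $\leq_{\vec{Q}}$-increasing, and we may also stabilize the set $\vec{K}'\subseteq\vec{K}$ of those $\vec{e}$ along which $\scalar{\vec{e}}{\vec{q}_{s_m}}$ stays bounded. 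Note that $\vec{K}'\subsetneq\vec{K}$ because $\vec{e}_0\notin\vec{K}'$. Let $\vec{F}'\coloneqq\vec{C}\cap\vec{K}'^\perp$. If the inclusion $\vec{F}'\supsetneq\vec{F}$ is strict, then $\vec{F}'$ is a face strictly containing $\vec{F}$, so maximality gives $\extract{\vec{F}'}{\vec{X}}{\vec{Y}}=\emptyset$. (If the inclusion is not strict, the argument below still applies with $\vec{F}'=\vec{F}$, since then the directions in $\vec{K}\setminus\vec{K}'$ already lie in $\vec{F}^\perp$.) Hence for each $m$ there is $\vec{p}_m\in\vec{Q}\cap\vec{F}'$ and $\vec{x}_m\in\vec{B}_X$ with $\vec{s}_m+\vec{p}_m\in\vec{x}_m+\vec{Q}$.

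The idea is to use these witnesses against $\vec{s}_m$ itself. We would pick $\vec{p}_m$ minimal, so that its size is controlled by the finitely many $\vec{x}\in\vec{B}_X$ via \cref{lem:system} applied to the system $\vec{s}_m+\vec{p}-\vec{x}\in\vec{Q}=\vec{G}\cap\vec{C}$. The components of $\vec{p}_m$ transverse to $\vec{F}$, namely those measured by $\vec{e}\in\vec{K}\setminus\vec{K}'$, should then be absorbed by the unbounded growth of $\scalar{\vec{e}}{\vec{q}_{s_m}}$. Concretely, for $m'$ large and $\vec{q}_{s_{m'}}-\vec{q}_{s_m}$ large along $\vec{K}\setminus\vec{K}'$, we aim to show that $\vec{s}_{m'}+\vec{f}\in\vec{X}$ for some $\vec{f}\in\vec{Q}\cap\vec{F}$, with $\vec{f}$ obtained by projecting $\vec{p}_m$. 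The vector $\vec{q}\in\vec{Q}\cap\vec{F}$ from \cref{lem:centerF} is meant to supply the slack needed for this. Such an $\vec{f}$ would contradict $\vec{s}_{m'}\in\vec{S}$. Making this absorption precise, that is, showing that the excess of $\vec{q}_{s_{m'}}$ in the directions $\vec{K}\setminus\vec{K}'$ dominates the part of $\vec{p}_m$ lying outside $\vec{F}$ while staying inside $\vec{Q}$, is the step I expect to require the most care.
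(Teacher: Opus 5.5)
\textbf{Verdict: the plan matches the paper, but its decisive step is missing.}

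Your reduction is sound. Boundedness of $\scalar{\vec{e}}{\vec{s}}$ on $\vec{S}\coloneqq\extract{\vec{F}}{\vec{X}}{\vec{Y}}$ for $\vec{e}\in\vec{K}=\vec{E}\cap\vec{F}^\perp$, combined with \cref{lem:solsystem} for $\vec{H}=\vec{F}$, does give the conclusion. It is essentially the paper's argument repackaged. The paper takes a $\leq_{\vec{Q}}$-increasing sequence in $\vec{S}$, lets $\vec{H}$ be the face cut out by the directions of $\vec{E}\cap\vec{F}^\perp$ that stay constant, and shows $\vec{H}=\vec{F}$ by maximality.

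The gap is that your boundedness claim carries the entire content of the lemma, and you leave its decisive step open. The mechanism you sketch for it is not what works: choosing minimal witnesses $\vec{p}_m$ via \cref{lem:system}, and taking $\vec{f}$ as a projection of $\vec{p}_m$ onto $\vec{F}$. You also never invoke fullness of $\vec{Q}$, which is exactly what places the correcting vector inside $\vec{Q}$.

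\textbf{How to close the gap.} No uniformity in $m$ is needed; fix one index $m$ and one witness $\vec{p}\in\vec{Q}\cap\vec{F}'$ with $\vec{s}_m+\vec{p}\in\vec{X}$.
\begin{enumerate}
\item Choose $m'>m$ with $\scalar{\vec{e}}{(\vec{s}_{m'}-\vec{s}_m)}\geq\scalar{\vec{e}}{\vec{p}}$ for all $\vec{e}\in\vec{K}\setminus\vec{K}'$. This is possible because these directions are unbounded.
\item Take $\vec{q}\in\vec{Q}\cap\vec{F}$ from \cref{lem:centerF}. Then $\scalar{\vec{e}}{\vec{q}}>0$ for every $\vec{e}\in\vec{E}\setminus\vec{K}$.
\item Put $\vec{g}\coloneqq\vec{s}_{m'}-\vec{s}_m-\vec{p}+\lambda\vec{q}$. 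Check each block of $\vec{E}$:
\begin{itemize}
\item for $\vec{e}\in\vec{K}'$, $\scalar{\vec{e}}{\vec{g}}=0$, by constancy along the sequence, $\vec{p}\in\vec{F}'\subseteq\vec{K}'^\perp$ and $\vec{q}\in\vec{F}$;
\item for $\vec{e}\in\vec{K}\setminus\vec{K}'$, $\scalar{\vec{e}}{\vec{g}}\geq 0$ by the choice of $m'$;
\item for $\vec{e}\in\vec{E}\setminus\vec{K}$, $\scalar{\vec{e}}{\vec{g}}\geq 0$ once $\lambda\in\setN$ is large, since $\scalar{\vec{e}}{(\vec{s}_{m'}-\vec{s}_m)}\geq 0$.
\end{itemize}
\item Hence $\vec{g}\in\vec{C}\cap(\vec{Q}-\vec{Q})=\vec{Q}$ by fullness.
\item So $\vec{s}_{m'}+\lambda\vec{q}=(\vec{s}_m+\vec{p})+\vec{g}\in\vec{X}$ with $\lambda\vec{q}\in\vec{Q}\cap\vec{F}$, contradicting $\vec{s}_{m'}\in\vec{S}$.
\end{enumerate}
Thus your $\vec{f}$ is a large multiple of $\vec{q}$, not a projection of $\vec{p}_m$; all of $\vec{p}$ is absorbed into $\vec{g}$.

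\textbf{Two smaller corrections.}
\begin{itemize}
\item Your parenthetical on the case $\vec{F}'=\vec{F}$ is wrong as stated. In that case no witness $\vec{p}_m$ exists, precisely because $\vec{s}_m\in\vec{S}$. The case simply cannot occur: $\vec{s}_m-\vec{s}_0\in\vec{C}\cap\vec{K}'^\perp=\vec{F}'$, so $\vec{F}'=\vec{F}$ would force $\scalar{\vec{e}_0}{(\vec{s}_m-\vec{s}_0)}=0$, contradicting unboundedness.
\item You need the directions in $\vec{K}'$ to be constant, not merely bounded. You get this by discarding an initial segment, since these integer sequences are nondecreasing.
\end{itemize}
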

\begin{proof}
  It is sufficient to prove that $\leq_{\vec{Q}\cap\vec{F}}$ is a wqo on $\extract{\vec{F}}{\vec{X}}{\vec{Y}}$. So, let us consider a sequence $(\vec{s}_n)_{n\in\setN}$ of vectors in that set. Since this sequence is in the finitary $\vec{Q}$-cylindric set $\vec{Y}$, by extracting a subsequence we can assume w.l.o.g. that $(\vec{s}_n)_{n\in\setN}$ is non-decreasing for $\leq_{\vec{Q}}$. It follows that for every $\vec{e}\in \vec{E}$, the sequence $(\scalar{\vec{e}}{\vec{s}_n})_{n\in\setN}$ is a non-decreasing sequence of integers. By extracting a subsequence, we can assume w.l.o.g. that for every $\vec{e}\in \vec{E}$, the sequence $(\scalar{\vec{e}}{\vec{s}_n})_{n\in\setN}$ is either constant or unbounded. We denote by $\vec{U}$ the set of $\vec{e}\in\vec{E}$ such that this sequence is unbounded.
  
  Since $\vec{F}$ is a face of $\vec{C}$, we have $\vec{F}=\vec{C}\cap\vec{L}^\perp$ where $\vec{L}\coloneqq \vec{E}\cap\vec{F}^\perp$. We introduce the set $\vec{K}\coloneqq\vec{L}\setminus \vec{U}$, the face $\vec{H}\coloneqq \vec{C}\cap \vec{K}^\perp$, the finitely-generated periodic set $\vec{P}\coloneqq\vec{Q}\cap\vec{H}$, the mapping $\lambda\in\setZ^{\vec{E}}$ defined by $\lambda(\vec{e})\coloneqq\scalar{\vec{e}}{\vec{s}_0}$. The set $\vec{S}$ introduced by \cref{lem:solsystem} is thus a finitary $\vec{P}$-cylindric set. In particular $\leq_{\vec{P}}$ is a wqo on $\vec{S}$. As $(\vec{s}_n)_{n\in\setN}$ is a sequence of vector in  $\vec{S}$, by extracting a subsequence, we can assume w.l.o.g. that the sequence is non-decreasing for $\leq_{\vec{P}}$.

  Assume by contradiction that $\vec{s}_0+\vec{P}\not\subseteq \vec{Y}\setminus\vec{X}$. There exists $\vec{p}\in\vec{P}$ such that $\vec{s}_0+\vec{p}\not\in \vec{Y}\setminus\vec{X}$. Since $\vec{Y}$ is $\vec{Q}$-cylindric, $\vec{s}_0\in\vec{Y}$, and $\vec{p}\in\vec{Q}$, we deduce that $\vec{s}_0+\vec{p}\in\vec{Y}$. It follows that $\vec{s}_0+\vec{p}\in \vec{X}$. There exists $n\in\setN$ such that for every $\vec{e}\in\vec{U}$ we have $\scalar{\vec{e}}{\vec{s}_n}\geq \scalar{\vec{e}}{(\vec{s}_0+\vec{p})}$. \cref{lem:centerF} shows that there exists a vector $\vec{q}\in \vec{Q}\cap\vec{F}$ such that $\vec{E}\cap\vec{F}^\perp=\vec{E}\cap\vec{q}^\perp$. By replacing $\vec{q}$ by a multiple of $\vec{q}$, we can assume w.l.o.g. that for every $\vec{e}\in \vec{E}\setminus\vec{F}^\perp$, we have $\scalar{\vec{e}}{\vec{q}_n}\geq 0$ where $\vec{q}_n\coloneqq \vec{s}_n-\vec{s}_0-\vec{p}+\vec{q}$. Moreover, for every $\vec{e}$ in $\vec{E}\cap\vec{F}^\perp=\vec{L}$ either $\vec{e}\in \vec{U}$ and in this case $\scalar{\vec{e}}{\vec{q}_n}\geq \scalar{\vec{e}}{\vec{q}}\geq 0$, or $\vec{e}\in \vec{K}$ and in that case from $\scalar{\vec{e}}{(\vec{s}_n-\vec{s}_0)}=0$, $\scalar{\vec{e}}{-\vec{p}}=0$ since $\vec{p}\in\vec{P}\subseteq \vec{H}\subseteq \vec{K}^\perp$, we deduce that $\scalar{\vec{e}}{\vec{q}_n}\geq \scalar{\vec{e}}{\vec{q}}\geq 0$. We have proved that $\scalar{\vec{e}}{\vec{q}_n}\geq 0$ for every $\vec{e}\in\vec{E}$. Hence $\vec{q}_n\in \vec{C}$. Since $\vec{q}_n\in \vec{Q}-\vec{Q}$ and $\vec{Q}$ is full, we deduce that $\vec{q}_n\in\vec{Q}$. As $\vec{s}_n+\vec{q}=\vec{s}_0+\vec{p}+\vec{q}_n$, $\vec{s}_0+\vec{p}\in\vec{X}$, $\vec{q}_n\in\vec{Q}$, and $\vec{X}$ is $\vec{Q}$-cylindric, we deduce that $\vec{s}_0+\vec{p}+\vec{q}_n\in\vec{X}$. We have proved that $\vec{s}_n+\vec{q}\in \vec{X}$. From $\vec{s}_n+(\vec{Q}\cap\vec{F})\subseteq \vec{Y}\setminus\vec{X}$ and $\vec{q}\in \vec{Q}\cap\vec{F}$ we get a contradiction. Therefore $\vec{s}_0+(\vec{Q}\cap\vec{H})\subseteq \vec{Y}\setminus\vec{X}$ since $\vec{P}=\vec{Q}\cap\vec{H}$. 
  
  It follows that $\extract{\vec{H}}{\vec{X}}{\vec{Y}}$ is non-empty since it contains $\vec{s}_0$. As $\vec{F}\subseteq \vec{H}$, by maximality of $\vec{F}$, we deduce that $\vec{F}=\vec{H}$. In particular $\vec{P}=\vec{Q}\cap\vec{F}$. Since $(\vec{s}_n)_{n\in\setN}$ is non-decreasing for $\leq_{\vec{P}}$, we deduce that this sequence is non-decreasing for $\leq_{\vec{Q}\cap\vec{F}}$. We have proved that $\leq_{\vec{Q}\cap\vec{F}}$ is a wqo on $\extract{\vec{F}}{\vec{X}}{\vec{Y}}$.
\end{proof}

Now, let $\vec{F}_1,\ldots,\vec{F}_n$ be a linearization of the set of faces of $\vec{C}$ with respect to $\supseteq$. Let $\vec{X}\subseteq \vec{Y}$ be two finitary $\vec{Q}$-cylindric subsets of $\setZ^d$, and let $\vec{S}_1,\ldots,\vec{S}_n$ be subsets of $\setZ^d$ defined inductively for every $i\in \{1,\ldots,n\}$ as follows where $\vec{S}_{<i}\coloneqq\vec{S}_1\cup\ldots\cup\vec{S}_{i-1}$:
$$\vec{S}_i\coloneqq\extract{\vec{F}_i}{\vec{Y}\setminus \vec{S}_{<i}}{\vec{X}}$$
  
\medskip

We introduce the set $\mathcal{M}_i$ of faces $\vec{F}$ such that $\extract{\vec{F}}{\vec{Y}\setminus \vec{S}_{< i}}{\vec{X}}$ is non-empty. 
\begin{lemma}
  For every $i\in\{1,\ldots,n+1\}$ the set $\vec{Y}\setminus \vec{S}_{< i}$ is $\vec{Q}$-cylindric and $\mathcal{M}_i\subseteq \{\vec{F}_i,\ldots,\vec{F}_n\}$.
\end{lemma}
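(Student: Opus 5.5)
We argue by induction on $i\in\{1,\ldots,n+1\}$. For readability we write $\vec{Y}_i\coloneqq\vec{Y}\setminus\vec{S}_{<i}$. We read $\vec{S}_i$ as the set of $\vec{s}$ with $\vec{s}+(\vec{Q}\cap\vec{F}_i)\subseteq\vec{Y}_i\setminus\vec{X}$. This is the reading used in the statement of \cref{thm:leaf}: $\vec{Y}_i$ plays the role of the ambient set and $\vec{X}$ is the set being removed. With the same convention, $\mathcal{M}_i$ is the set of faces $\vec{F}$ for which some $\vec{s}$ satisfies $\vec{s}+(\vec{Q}\cap\vec{F})\subseteq\vec{Y}_i\setminus\vec{X}$. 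Along the way we strengthen the induction hypothesis with the extra invariant $\vec{X}\subseteq\vec{Y}_i$, which is needed in order to apply \cref{lem:cyclicpreserve}.

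\emph{Base case.} For $i=1$ we have $\vec{S}_{<1}=\emptyset$, so $\vec{Y}_1=\vec{Y}$. This set is $\vec{Q}$-cylindric by assumption and contains $\vec{X}$. The inclusion $\mathcal{M}_1\subseteq\{\vec{F}_1,\ldots,\vec{F}_n\}=\mathcal{F}(\vec{C})$ is trivial.

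\emph{Cylindricity and the invariant.} Assume that $\vec{Y}_i$ is $\vec{Q}$-cylindric and contains $\vec{X}$, for some $i\leq n$. We have $\vec{Y}_{i+1}=\vec{Y}_i\setminus\vec{S}_i$, and $\vec{S}_i$ is exactly the extract set of the face $\vec{F}_i$ for the pair of $\vec{Q}$-cylindric sets $\vec{X}\subseteq\vec{Y}_i$. \cref{lem:cyclicpreserve} then shows that $\vec{Y}_{i+1}$ is $\vec{Q}$-cylindric. Moreover, $\vec{0}\in\vec{Q}\cap\vec{F}_i$, so every $\vec{s}\in\vec{S}_i$ lies in $\vec{Y}_i\setminus\vec{X}$. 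Hence $\vec{S}_i\cap\vec{X}=\emptyset$, and therefore $\vec{X}\subseteq\vec{Y}_{i+1}$.

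\emph{The inclusion $\mathcal{M}_{i+1}\subseteq\{\vec{F}_{i+1},\ldots,\vec{F}_n\}$.} Let $j\leq i$, and suppose for contradiction that some $\vec{s}$ satisfies $\vec{s}+(\vec{Q}\cap\vec{F}_j)\subseteq\vec{Y}_{i+1}\setminus\vec{X}$. The sets $\vec{Y}_k$ decrease with $k$, so $\vec{Y}_{i+1}\subseteq\vec{Y}_j$ and hence $\vec{s}+(\vec{Q}\cap\vec{F}_j)\subseteq\vec{Y}_j\setminus\vec{X}$. By definition this means $\vec{s}\in\vec{S}_j\subseteq\vec{S}_{<i+1}$. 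On the other hand, taking $\vec{0}\in\vec{Q}\cap\vec{F}_j$ gives $\vec{s}\in\vec{Y}_{i+1}=\vec{Y}\setminus\vec{S}_{<i+1}$, which is a contradiction. Therefore $\vec{F}_j\notin\mathcal{M}_{i+1}$ for every $j\leq i$, which proves the inclusion.

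No step here is a real obstacle. The only points needing care are the argument-order convention for $\operatorname{extract}$ and carrying the invariant $\vec{X}\subseteq\vec{Y}_i$ through the induction, since \cref{lem:cyclicpreserve} requires it.
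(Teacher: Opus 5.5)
Your proof is correct and follows the paper's argument. Both proceed by induction on $i$, use \cref{lem:cyclicpreserve} for the cylindricity of $\vec{Y}\setminus\vec{S}_{<i+1}$, and observe that no face $\vec{F}_j$ with $j\le i$ can lie in $\mathcal{M}_{i+1}$. The paper phrases this last step as $\mathcal{M}_{i+1}\subseteq\mathcal{M}_i\setminus\{\vec{F}_i\}$ combined with the induction hypothesis, whereas you check every $j\le i$ directly. You also supply two details the paper leaves implicit: the intended order of the arguments of $\operatorname{extract}$, and the invariant $\vec{X}\subseteq\vec{Y}\setminus\vec{S}_{<i}$ required by the hypothesis of \cref{lem:cyclicpreserve}.
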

\begin{proof}  
  Let us prove the lemma by induction on $i\in \{1,\ldots,n+1\}$. The rank $i=1$ is trivial. Assume the rank $i$ proved for some $i\in \{1,\ldots,n\}$. Since $\vec{Y}\setminus \vec{S}_{<i}$ is $\vec{Q}$-cylindric, \cref{lem:cyclicpreserve} shows that $(\vec{Y}\setminus \vec{S}_{< i})\setminus \vec{S}_i$, i.e. $\vec{Y}\setminus\vec{S}_{<i+1}$ is $\vec{Q}$-cylindric. By construction observe that $\mathcal{M}_{i+1}\subseteq \mathcal{M}_{i}\setminus\{\vec{F}_i\}\subseteq \{\vec{F}_{i+1},\ldots,\vec{F}_n\}$. We have proved the induction.
\end{proof}

The $\vec{Q}$-cylindric set $\vec{Y}\setminus \vec{S}_{< i}$ is finitary since it is included in the finitary $\vec{Q}$-cylindric set $\vec{Y}$. Observe that if $\vec{S}_i$ is non-empty, since $\mathcal{M}_{i}\subseteq \{\vec{F}_i,\ldots,\vec{F}_n\}$, we deduce that $\vec{F}_i$ is a $\subseteq$-maximal face such that $\extract{\vec{F}_i}{\vec{Y}\setminus \vec{S}_{<i}}{\vec{X}}$ is non-empty. \cref{lem:extractB} shows that $\vec{S}_i$ is a finitary $(\vec{Q}\cap\vec{F}_i)$-cylindric set. Notice that this last property is also trivially true when $\vec{S}_i$ is empty. Clearly the sets $\vec{S}_1,\ldots,\vec{S}_n$ are pairwise-disjoint subsets of $\vec{Y}\setminus \vec{X}$. Moreover, since $\mathcal{M}_{n+1}=\emptyset$, \cref{lem:emptyset} shows that $(\vec{Y}\setminus\vec{S}_{< n+1})\setminus \vec{X}$ is empty. It follows that $\vec{S}_1,\ldots,\vec{S}_n$ is a disjoint decomposition of $\vec{Y}\setminus\vec{X}$.

\medskip

Finally, assume that $\vec{Y}\subseteq \vec{X}-\vec{Q}$ and let us prove that $\vec{S}_1$ is empty. Recall that $\vec{F}_1=\vec{C}$. It follows that if $\vec{S}_1$ is non-empty, there exists $\vec{s}\in\setZ^d$ such that $\vec{s}+\vec{Q}\subseteq \vec{Y}\setminus\vec{X}$. As $\vec{s}\in\vec{Y}\subseteq \vec{X}-\vec{Q}$, there exists $\vec{q}\in\vec{Q}$ such that $\vec{s}+\vec{q}\in \vec{X}$. We get a contradiction. Hence $\vec{S}_1$ is empty.

\medskip

We have proved \cref{thm:leaf}.

\section{From Safety Witnesses to Safe Semilinear Inductive Invariants}\label{sec:wrapup}
We are now equipped with the necessary ingredients to prove our main result,
namely that
every semilinear set $\vec{\Phi} \subseteq \setN^d$ containing the reachability set of an IBVAS $\IB$
also contains a semilinear inductive invariant for $\IB$
(see \cref{thm:main}).
Let us consider,
for the rest of this section,
an IBVAS $\IB \coloneqq (\vec{\Delta}_0, \B)$ and
a semilinear set $\vec{\Phi} \subseteq \setN^d$ such that $\reach{\IB} \subseteq \vec{\Phi}$.

\smallskip

We first introduce a well-founded relation $\Yleft$ on finite sets of directed iruns of $\IB$.
The set $\setN^{d+1}$ is ordered by the strict lexicographic order $<_{lex}$ defined by
$(m_0, \ldots, m_d) <_{lex} (n_0, \ldots, n_d)$ if $(m_0, \ldots, m_d) \neq (n_0, \ldots, n_d)$ and
the maximal\footnote{%
  In the usual definition of $<_{lex}$,
  the condition $m_i < n_i$ must hold for the \emph{minimal} $i$ such that $m_i \neq n_i$.
  In our definition,
  we consider the \emph{maximal} such $i$,
  as it is more convenient for our purposes.
}
$i$ such that $m_i \neq n_i$ satisfies $m_i < n_i$.
We associate with a directed irun $w \coloneqq (\rho_w, \vec{C}_w)$ the vector
$\rank{w} \coloneqq (n_0, \ldots, n_d) \in \setN^{d+1}$ defined by
$n_j = 1$ if $j = \dim{\vec{C}_w - \vec{C}_w}$, and
$n_j = 0$ otherwise.
For every finite set $W$ of directed iruns,
we let $\rank{W} \coloneqq \sum_{w \in W} \rank{w}$.
The binary relation $\Yleft$ on finite sets of directed iruns is defined by
$V \Yleft W$ if $\rank{V} <_{lex} \rank{W}$.
It is readily seen that $\Yleft$ is a well-founded relation
(i.e., there is no infinite sequence $W_0, W_1, W_2, \ldots$ with $W_{i+1} \Yleft W_i$ for all $i$).

\smallskip

We observe that for every semilinear set $\vec{A} \subseteq \setN^d$,
the pair $(\vec{A}, \emptyset)$ is a safety witness for $\IB$ and $\vec{\Phi}$ if,
and only if,
the set $\vec{A}$ is an inductive invariant for $\IB$ such that $\vec{A} \subseteq \vec{\Phi}$.
We have already shown in \cref{sec:big-picture} that there exists a safety witness for $\IB$ and $\vec{\Phi}$.
The remainder of this section proves the key property stated in \cref{lem:safety-witness-transformation}
(and already mentioned in \cref{sec:big-picture}).
As $\Yleft$ is well-founded,
this entails that a safety witness of the form $(\vec{A}, \emptyset)$ exists,
and concludes the proof of \cref{thm:main}.

\begin{lemma}
  \label{lem:safety-witness-transformation}
  For every safety witness $(\vec{A}, W)$ for $\IB$ and $\vec{\Phi}$,
  if $W \neq \emptyset$ then
  there exists a safety witness $(\vec{A}', W')$ for $\IB$ and $\vec{\Phi}$
  such that $W' \Yleft W$.
\end{lemma}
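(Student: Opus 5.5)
The plan is to carry out the two-step construction sketched at the end of \cref{sec:big-picture}. Let $(\vec{A},W)$ be a safety witness with $W\neq\emptyset$. Put $\Inv\coloneqq\vec{A}\cup\reach{\IB}$, which is an inductive invariant. Fix a bottom SCC $\Gamma$ of $(W,\edge[W])$, and set $\vec{Q}\coloneqq\vec{Q}_\Gamma$ and $\vec{Y}_\Gamma\coloneqq\{\tgt{\rho_\gamma}\mid\gamma\in\Gamma\}+\vec{Q}$. By \cref{lem:Qfinite} and \cref{rem:full}, $\vec{Q}$ is a full finitely-generated periodic set.

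\emph{Step 1 (enlarging the attractor).} I first claim that, for every $\gamma\in\Gamma$, the semilinear set $\vec{T}\coloneqq\vec{A}\cup(\vec{\Phi}\cap\vec{Y}_\Gamma)$ satisfies
\[
\reach{(\tgt{\rho_\gamma}+\vec{P}_\gamma)\,{\shuffle}\,\Inv^*}\subseteq\vec{T}\subseteq\Inv+\vec{Q}.
\]
The right inclusion holds because $\vec{A}\subseteq\Inv$ and $\tgt{\rho_\gamma}\in\reach{\IB}$. For the left inclusion, take a run whose source has one leaf $\tgt{\sigma}$ with $\sigma\in\IRuns{\gamma}$ and all other leaves in $\Inv$. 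Replace each leaf lying in $\reach{\IB}$ by an irun, and replace $\tgt{\sigma}$ by $\sigma$ itself.
\begin{itemize}
\item If some leaf lies in $\vec{A}$, the attractor property puts the target in $\vec{A}$.
\item Otherwise we obtain an irun $\tau$ with $\sigma\sqsubseteq\tau$. By the safety-witness property, either $\tgt{\tau}\in\vec{A}$, or $\tau\in\IRuns{w}$ for some $w\in W$. In the latter case $\gamma\edge w$, so $w\in\Gamma$ because $\Gamma$ is bottom. Hence $\tgt{\tau}\in\tgt{\rho_w}+\vec{P}_w\subseteq\vec{Y}_\Gamma$, and $\tgt{\tau}\in\vec{\Phi}$ since it is reachable.
\end{itemize}
Now \cref{lem:bvas-safelin} yields $\vec{p}_\gamma\in\vec{P}_\gamma$ with $\reach{(\tgt{\rho_\gamma}+\vec{p}_\gamma+\vec{Q})^+\,{\shuffle}\,\Inv^*}\subseteq\vec{T}$. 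Set $\vec{X}_0\coloneqq\bigcup_{\gamma\in\Gamma}(\tgt{\rho_\gamma}+\vec{p}_\gamma+\vec{Q})$ and $\vec{X}\coloneqq\reach{\vec{X}_0^+\,{\shuffle}\,\Inv^*}\cap\vec{Y}_\Gamma$. Then $\vec{A}'\coloneqq\vec{A}\cup\vec{X}=\vec{A}\cup\reach{\vec{X}_0^+\,{\shuffle}\,\Inv^*}$, which is an attractor by \cref{lem:incremental-attractor} and is contained in $\vec{T}\subseteq\vec{\Phi}$. Next I establish the properties of $\vec{X}$:
\begin{itemize}
\item $\vec{X}$ is $\vec{Q}$-cylindric: by \cref{fact:monotony}, shifting the single $\vec{X}_0$-leaf by $\vec{q}\in\vec{Q}$ shifts the target by $\vec{q}$.
\item $\vec{X}$ is finitary $\vec{Q}$-cylindric, being a $\vec{Q}$-cylindric subset of the finitary set $\vec{Y}_\Gamma$. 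Hence it is semilinear, and so is $\vec{A}'$.
\item $\vec{Y}_\Gamma\subseteq\vec{X}-\vec{Q}$, since $\tgt{\rho_\gamma}+\vec{p}_\gamma\in\vec{X}$ and $\vec{p}_\gamma\in\vec{Q}$.
\item $\bvasrel{\IB}\cap(\vec{X}\times(\vec{Y}_\Gamma\setminus\vec{X}))=\emptyset$: a run from $\vec{x}\in\vec{X}$ with $\vec{\Delta}_0$-leaves composes with a run witnessing $\vec{x}\in\reach{\vec{X}_0^+\,{\shuffle}\,\Inv^*}$, and $\vec{\Delta}_0\subseteq\Inv$.
\end{itemize}

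\emph{Step 2 (rebuilding $W$).} Apply \cref{thm:leaf} to $\vec{X}\subseteq\vec{Y}_\Gamma$. Because $\vec{Y}_\Gamma\subseteq\vec{X}-\vec{Q}$, we get $\vec{S}_1=\emptyset$. For $i\geq2$, $\vec{F}_i$ is a proper face, so $\dim{\vec{F}_i-\vec{F}_i}<\dim{\vec{Q}-\vec{Q}}\leq\dim{\vec{C}_\Gamma-\vec{C}_\Gamma}$. Moreover $\vec{F}_i\subseteq\setQ_{\geq0}\vec{Q}\subseteq\vec{C}_\Gamma$, because $\vec{C}_\Gamma$ is closed. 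For each $i\geq2$, let $R_i$ be the set of iruns $\sigma$ with $\tgt{\sigma}\in\vec{S}_i$ and $\sigma\in\IRuns{\gamma}$ for some $\gamma\in\Gamma$. Order $R_i$ by $\sigma\preceq\sigma'$ iff $\sigma\trianglelefteq\sigma'$ and $\tgt{\sigma'}-\tgt{\sigma}\in\vec{Q}\cap\vec{F}_i$. This is a product of wqos: $\trianglelefteq$ by \cref{lem:ibvas-wpo-and-amalgamation}, and $\leq_{\vec{Q}\cap\vec{F}_i}$ on the finitary set $\vec{S}_i$. So $R_i$ has finitely many minimal elements, and I set $W_i\coloneqq\{(\sigma,\vec{F}_i)\mid\sigma\in\Min{R_i}{\preceq}\}$ and $W'\coloneqq(W\setminus\Gamma)\cup\bigcup_{i\geq2}W_i$. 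The safety-witness covering property follows:
\begin{itemize}
\item An irun covered by some $w\in W\setminus\Gamma$ is still covered.
\item An irun covered by some $\gamma\in\Gamma$ either has target in $\vec{A}'$, or has target in $\vec{Y}_\Gamma\setminus\vec{X}$. In the second case it lies in some $R_i$, is above a minimal element, and is therefore in $\IRuns{(\sigma,\vec{F}_i)}$.
\end{itemize}
Every new node has strictly smaller dimension, and $\Gamma\neq\emptyset$ has been removed, so $\rank{W'}<_{lex}\rank{W}$.

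The step I expect to be the main obstacle is homogeneity of $W'$, which I would check by ruling out cycles that mix different cones. Suppose a cycle passes through a node of $W\setminus\Gamma$ and a new node $u$ built from $\gamma\in\Gamma$. Since $\IRuns{u}\subseteq\IRuns{\gamma}$, the edge leaving $u$ would give an edge from $\Gamma$ to $W\setminus\Gamma$, contradicting bottomness. Now consider an edge $u\edge v$ with $u\in W_j$ and $v\in W_i$, witnessed by $\alpha\sqsubseteq\beta$. Then $\tgt{\alpha}\bvasrel{\IB}\tgt{\beta}$ with $\tgt{\alpha}\in\vec{S}_j$ and $\tgt{\beta}\in\vec{S}_i$. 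The relation $\bvasrel{\IB}$ is $\setN^d$-diagonal by \cref{fact:monotony}, hence $\vec{Q}$-diagonal, and it avoids $\vec{X}\times(\vec{Y}_\Gamma\setminus\vec{X})$. So \cref{lem:diagonalstripping} forces $j\leq i$. Consequently every SCC of $W'$ lies inside $W\setminus\Gamma$, whose cones are unchanged, or inside a single $W_i$ with common cone $\vec{F}_i$. This gives homogeneity, so $(\vec{A}',W')$ is a safety witness with $W'\Yleft W$.
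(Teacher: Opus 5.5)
Your Step 2 (face stripping, the sets $W_i$, the rank decrease, covering, and homogeneity via \cref{lem:diagonalstripping}) follows the paper's construction. Step 1, however, has a genuine gap.

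\textbf{The gap in Step 1.} Applying \cref{lem:bvas-safelin} separately to each $\gamma\in\Gamma$ gives $\reach{(\tgt{\rho_\gamma}+\vec{p}_\gamma+\vec{Q})^+\,{\shuffle}\,\Inv^*}\subseteq\vec{T}$ only for each $\gamma$ individually. You then use, without argument, $\reach{\vec{X}_0^+\,{\shuffle}\,\Inv^*}\subseteq\vec{T}$ for the union $\vec{X}_0$. This does not follow from the individual inclusions. A run with source in $\vec{X}_0^+\,{\shuffle}\,\Inv^*$ may take leaves from several different seeds $\tgt{\rho_\gamma}+\vec{p}_\gamma+\vec{Q}$, and such mixed runs are covered by none of them. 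This is exactly the mixing phenomenon the introduction warns about: for arbitrary sets, reachability from a union can be strictly larger than the union of the reachability sets.

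Both the attractor property of $\vec{A}'$ (through the equality $\vec{A}\cup\vec{X}=\vec{A}\cup\reach{\vec{X}_0^+\,{\shuffle}\,\Inv^*}$) and the inclusion $\vec{A}'\subseteq\vec{\Phi}$ depend on this missing step.

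\textbf{How to repair it.} Use \cref{lem:bvas-branch-like}. Its hypotheses hold: $\reach{\Inv^+}\subseteq\Inv$, $\vec{X}_0+\vec{Q}\subseteq\vec{X}_0$, and
$\reach{\vec{X}_0\,{\shuffle}\,\Inv^*}=\bigcup_{\gamma\in\Gamma}\reach{(\tgt{\rho_\gamma}+\vec{p}_\gamma+\vec{Q})\,{\shuffle}\,\Inv^*}\subseteq\vec{T}\subseteq\Inv+\vec{Q}$.
The lemma then gives $\reach{\vec{X}_0^+\,{\shuffle}\,\Inv^*}=\reach{\vec{X}_0\,{\shuffle}\,\Inv^*}\subseteq\vec{T}$.

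The paper avoids the issue differently. It applies \cref{lem:bvas-safelin} to a single $w\in\Gamma$, so only one seed ever occurs. It then recovers $\vec{Y}_\Gamma\subseteq\vec{X}-\vec{Q}_\Gamma$ by propagating through the strongly connected $\Gamma$ with \cref{lem:attractors-intersect-SCC}, using that $\vec{R}$ is a $\vec{Q}_\Gamma$-cylindric attractor. Your multi-seed variant makes $\vec{Y}_\Gamma\subseteq\vec{X}-\vec{Q}$ immediate, but only at the cost of this extra mixing argument.

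\textbf{A smaller omission in the homogeneity step.} For $\alpha\in\IRuns{u}$ with $u=(\rho_u,\vec{F}_j)\in W_j$, you only know $\tgt{\alpha}-\tgt{\rho_u}\in\vec{P}_{\rho_u}\cap\vec{F}_j$. Since $\vec{S}_j$ is merely $(\vec{Q}\cap\vec{F}_j)$-cylindric, concluding $\tgt{\alpha}\in\vec{S}_j$ requires $\vec{P}_{\rho_u}\cap\vec{C}_\Gamma\subseteq\vec{Q}$. This follows from \cref{lem:Qinclusion} applied to $(\rho_u,\vec{C}_\Gamma)\edge\gamma$, witnessed by $\rho_u\sqsubseteq\rho_u$, where $\gamma\in\Gamma$ is such that $\rho_u\in\IRuns{\gamma}$. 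The paper records this as \cref{lem:insideGammaQ}, and you should state it.

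Finally, $\dim{\vec{Q}-\vec{Q}}$ should read $\dim{\vec{C}-\vec{C}}$ with $\vec{C}=\setQ_{\geq 0}\vec{Q}$, since dimension is defined for vector spaces, not groups.
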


As we hinted before,
our proof of \cref{lem:safety-witness-transformation} uses \cref{lem:bvas-safelin,thm:leaf}.
We also rely on the two following technical lemmas.
The first one, \cref{lem:safety-witness-succ},
will help us show the premises of \cref{lem:bvas-safelin}.
The second one, \cref{lem:attractors-intersect-SCC},
provides a sufficient condition for an attractor to ``touch'' a directed irun.

\begin{lemma}
  \label{lem:safety-witness-succ}
  Let $(\vec{A}, W)$ be a safety witness for $\IB$ and $\vec{\Phi}$.
  For every directed irun $u \in W$,
  the set
  $\reach{\tgt{\IRuns{u}} \,{\shuffle}\, \vec{\Delta}_0^*}$
  is contained in $\vec{A} \cup \bigcup_{v \in V} \tgt{\IRuns{v}}$,
  where $V \coloneqq \{v \in W \mid u \edge v\}$.
\end{lemma}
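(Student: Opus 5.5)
We take an element $\vec{y}$ of $\reach{\tgt{\IRuns{u}} \,{\shuffle}\, \vec{\Delta}_0^*}$, witnessed by a run $\tau$ of $\B$ with target $\vec{y}$. The source of $\tau$ is $\src{\tau} = w_1\,\tgt{\alpha}\,w_2$, where $w_1, w_2 \in \vec{\Delta}_0^*$ and $\alpha \in \IRuns{u}$. Our first step is to build an irun $\beta$ with $\tgt{\beta} = \vec{y}$ and $\alpha \sqsubseteq \beta$.

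To do this, we replace in $\tau$ the leaf labelled $\tgt{\alpha}$ by the whole irun $\alpha$. This grafting is defined by induction on $\tau$, following the path from the root down to the distinguished leaf. The resulting term is a run of $\B$, because the grafted subterm has target $\tgt{\alpha}$, exactly like the leaf it replaces. All other leaves lie in $\vec{\Delta}_0$, and $\src{\alpha} \in \vec{\Delta}_0^+$, so $\beta$ is an irun. By construction $\alpha$ occurs as a subterm of $\beta$, so $\alpha \sqsubseteq \beta$, and $\tgt{\beta} = \tgt{\tau} = \vec{y}$.

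Next we use the defining property of the safety witness $(\vec{A}, W)$ on the irun $\beta$. Either $\vec{y} = \tgt{\beta} \in \vec{A}$, and we are done. Or $\beta \in \IRuns{v}$ for some $v \in W$. In the second case we have $\alpha \in \IRuns{u}$, $\beta \in \IRuns{v}$ and $\alpha \sqsubseteq \beta$. By the definition of $\edge$ this gives $u \edge v$, so $v \in V$ and $\vec{y} \in \tgt{\IRuns{v}}$.

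The only step requiring care is the grafting claim, in particular tracking which leaf occurrence corresponds to the chosen letter $\tgt{\alpha}$ in the source word. A straightforward induction on $\tau$ handles it. At each internal node, the distinguished letter of $\src{\tau}$ falls in exactly one of the factors $\src{\tau[j]}$, and we recurse into that child. Since the children's targets are unchanged, every action constraint continues to hold.
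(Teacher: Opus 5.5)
Your proof is correct and follows the paper's argument: graft $\alpha$ in place of the leaf labelled $\tgt{\alpha}$ to obtain an irun $\beta$ with $\alpha \sqsubseteq \beta$ and $\tgt{\beta} = \tgt{\tau}$, then apply the safety-witness property to $\beta$ and read off $u \edge v$ from the definition of $\edge$. Your induction on $\tau$, which tracks the distinguished leaf occurrence, just spells out the grafting step that the paper states in one line.
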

\begin{proof}
  Consider a run $\beta$ with $\src{\beta} \in (\tgt{\IRuns{u}} \,{\shuffle}\, \vec{\Delta}_0^*)$.
  There exists an irun $\alpha \in \IRuns{u}$ such that
  $\src{\beta} \in (\{\tgt{\alpha}\} \,{\shuffle}\, \vec{\Delta}_0^*)$.
  By replacing $\tgt{\alpha}$ by $\alpha$ in $\beta$,
  we obtain an irun $\hat{\beta}$ that satisfies
  $\alpha \sqsubseteq \hat{\beta}$ and
  $\tgt{\hat{\beta}} = \tgt{\beta}$.
  Since $(\vec{A}, W)$ is a safety witness,
  we have $\tgt{\hat{\beta}} \in \vec{A}$ or $\hat{\beta} \in \IRuns{v}$ for some $v \in W$.
  In the latter case,
  we have $u \edge v$ since $\alpha \sqsubseteq \hat{\beta}$,
  $\alpha \in \IRuns{u}$ and $\hat{\beta} \in \IRuns{v}$,
  hence,
  $v \in V$.
  We conclude that
  $\tgt{\beta} = \tgt{\hat{\beta}}$ is in
  $\vec{A} \cup \bigcup_{v \in V} \tgt{\IRuns{v}}$.
\end{proof}

\begin{lemma}
  \label{lem:attractors-intersect-SCC}
  Let $\vec{Q} \subseteq \setN^d$ be a finitely-generated periodic set and
  let $\vec{A}$ be a $\vec{Q}$-cylindric attractor for $\IB$.
  For every directed iruns $u, v$ of $\IB$ such that
  $u \edge v$ and $\vec{Q}_u = \vec{Q}_v = \vec{Q}$,
  if $\vec{A}$ intersects $\tgt{\rho_u} + \vec{Q}$
  then $\vec{A}$ intersects $\tgt{\rho_v} + \vec{Q}$.
\end{lemma}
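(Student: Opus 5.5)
The plan is to transport a witness point of $\vec{A}$ along the embedding $\alpha \sqsubseteq \beta$ given by $u \edge v$, using the attractor property for a run with exactly one source in $\vec{A}$. Throughout, only $\vec{P}_u \subseteq \vec{Q}_u$, $\vec{P}_v \subseteq \vec{Q}_v$ and the $\vec{Q}$-cylindricity of $\vec{A}$ are needed.

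First, by definition of $\edge$ there exist iruns $\alpha \in \IRuns{u}$ and $\beta \in \IRuns{v}$ with $\alpha \sqsubseteq \beta$. Since $\tgt{\IRuns{u}} = \tgt{\rho_u} + \vec{P}_u$, we can write $\tgt{\alpha} = \tgt{\rho_u} + \vec{p}$ with $\vec{p} \in \vec{P}_u$. Similarly, $\tgt{\beta} = \tgt{\rho_v} + \vec{p}'$ with $\vec{p}' \in \vec{P}_v$. A periodic set is contained in its closed linearization, so $\vec{p} \in \vec{Q}_u = \vec{Q}$ and $\vec{p}' \in \vec{Q}_v = \vec{Q}$.

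Next, pick $\vec{a} \in \vec{A} \cap (\tgt{\rho_u} + \vec{Q})$ and write $\vec{a} = \tgt{\rho_u} + \vec{q}$ with $\vec{q} \in \vec{Q}$. Because $\vec{A}$ is $\vec{Q}$-cylindric, $\vec{a} + \vec{p} = \tgt{\alpha} + \vec{q}$ lies in $\vec{A}$. We then build a run $\beta'$ from $\beta$ as follows. Replace the occurrence of the subrun $\alpha$ by the leaf $(\tgt{\alpha} + \vec{q}, ())$, and add $\vec{q}$ to the target of every ancestor of that occurrence. This is the same replacement-and-translation used in the proof of \cref{lem:Qinclusion}, and it is justified formally by induction on $\beta$: the actions are unchanged, and all targets stay in $\setN^d$ because $\vec{q} \geq \vec{0}$.

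The resulting run has target $\tgt{\beta} + \vec{q}$. Its source consists of the single configuration $\tgt{\alpha} + \vec{q} \in \vec{A}$ together with the remaining sources of $\beta$, which lie in $\vec{\Delta}_0$. Hence $\src{\beta'} \in \vec{A}^+ \,{\shuffle}\, \vec{\Delta}_0^*$. Since $\vec{A}$ is an attractor, $\tgt{\beta} + \vec{q} \in \vec{A}$. Finally, $\tgt{\beta} + \vec{q} = \tgt{\rho_v} + \vec{p}' + \vec{q} \in \tgt{\rho_v} + \vec{Q}$, because $\vec{Q}$ is periodic. Therefore $\vec{A}$ intersects $\tgt{\rho_v} + \vec{Q}$.

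The only step requiring care is the surgery producing $\beta'$. We must check that cutting at a possibly repeated occurrence of $\alpha$ (choose one occurrence along the $\sqsubseteq$ path) and translating its ancestors yields a valid run with the claimed source word. This is routine by induction along the path witnessing $\alpha \sqsubseteq \beta$; the rest follows directly from the definitions.
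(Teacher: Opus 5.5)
Your proof is correct and follows essentially the same route as the paper. You extract $\alpha \sqsubseteq \beta$, move the witness to $\tgt{\alpha}+\vec{q}\in\vec{A}$ by $\vec{Q}$-cylindricity, and apply the attractor property to a run with a single source in $\vec{A}$ and target $\tgt{\beta}+\vec{q}\in\tgt{\rho_v}+\vec{Q}$. The only difference is cosmetic: the paper first cuts $\alpha$ down to a leaf to get $\tgt{\beta}\in\reach{\{\tgt{\alpha}\}\,{\shuffle}\,\vec{\Delta}_0^*}$ and then applies the monotony fact, whereas you do the cut and the translation by $\vec{q}$ in one surgery.
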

\begin{proof}
  Let $u$ and $v$ be directed iruns of $\IB$ such that
  $u \edge v$ and $\vec{Q}_u = \vec{Q}_v = \vec{Q}$.
  Assume that $\vec{A} \cap (\tgt{\rho_u} + \vec{Q})$ is not empty.
  There exists $\vec{q} \in \vec{Q}$ such that $(\tgt{\rho_u} + \vec{q}) \in \vec{A}$.
  We get from $u \edge v$ that $\alpha \sqsubseteq \beta$
  for some $\alpha \in \IRuns{u}$ and $\beta \in \IRuns{v}$.
  Put $\vec{x} \coloneqq \tgt{\alpha}$ and $\vec{y} \coloneqq \tgt{\beta}$ to reduce clutter.
  By definition,
  we have $\vec{x} \in \tgt{\rho_u} + \vec{P}_u$ and $\vec{y} \in \tgt{\rho_v} + \vec{P}_v$.
  It follows from $\vec{Q}_u = \vec{Q}_v = \vec{Q}$ that
  $\vec{x} \in \tgt{\rho_u} + \vec{Q}$ and $\vec{y} \in \tgt{\rho_v} + \vec{Q}$.
  Note that $\vec{y} + \vec{q}$ is in $\tgt{\rho_v} + \vec{Q}$.
  We now make two observations.
  Firstly, it holds that
  $(\vec{x} + \vec{q}) \in (\tgt{\rho_u} + \vec{q} + \vec{Q}) \subseteq \vec{A}$
  since $\vec{A}$ is $\vec{Q}$-cylindric.
  Secondly, we have
  $\vec{y} \in \reach{\{\vec{x}\} \,{\shuffle}\, \vec{\Delta}_0^*}$
  since $\vec{x} = \tgt{\alpha}$, $\vec{y} = \tgt{\beta}$ and $\alpha \sqsubseteq \beta$.
  It follows by monotony of BVAS runs (see \cref{fact:monotony}) that
  $(\vec{y} + \vec{q}) \in \reach{\{\vec{x} + \vec{q}\} \,{\shuffle}\, \vec{\Delta}_0^*} \subseteq \vec{A}$
  since $(\vec{x} + \vec{q}) \in \vec{A}$ and $\vec{A}$ is an attractor.
  We have shown that $\vec{y} + \vec{q}$ is in $\vec{A} \cap (\tgt{\rho_v} + \vec{Q})$.
\end{proof}

Let us now prove \cref{lem:safety-witness-transformation}.
Consider a safety witness $(\vec{A}, W)$ for $\IB$ and $\vec{\Phi}$, and
assume that $W \neq \emptyset$.
As in the overview presented at the end of \cref{sec:big-picture},
we pick an arbitrary bottom SCC $\Gamma$ of the graph $(W, \edge[W])$ and
we introduce the finitary $\vec{Q}_\Gamma$-cylindric set
$\vec{Y}_\Gamma \coloneqq \{\tgt{\rho_\gamma} \mid \gamma \in \Gamma\} + \vec{Q}_\Gamma$.
Put
$\Inv \coloneqq \vec{A} \cup \reach{\IB}$ and
$\vec{T} \equaldef \vec{A} \cup (\vec{Y}_\Gamma \cap \vec{\Phi})$.
Observe
that $\Inv$ is an inductive invariant for $\IB$ since
$\vec{A}$ is an attractor for $\IB$, and
that $\vec{T}$ is semilinear since $\vec{A}$, $\vec{Y}_\Gamma$ and $\vec{\Phi}$ are semilinear.
Pick a directed irun $w \in \Gamma$.
We will apply \cref{lem:bvas-safelin} to the set
$\tgt{\IRuns{w}} = \tgt{\rho_w} + \vec{P}_w$,
which is almost linear (see \cref{subsec:pavas}).
To do so,
we need to show that
$\reach{\tgt{\IRuns{w}} \,{\shuffle}\, \Inv^*} \subseteq \vec{T}$ and $\vec{T} \subseteq \Inv + \vec{Q}_w$.
The second inclusion follows from the observations that
$\vec{T} \subseteq \vec{A} \cup \vec{Y}_\Gamma$,
$\vec{Y}_\Gamma \subseteq \reach{\IB} + \vec{Q}_\Gamma$ and
$\vec{Q}_w = \vec{Q}_\Gamma$ since $w \in \Gamma$.
The first inclusion follows from the following relationships:
\begin{align*}
  \reach{\tgt{\IRuns{w}} \,{\shuffle}\, \Inv^*}
  & =
    \reach{\vec{A}^* \,{\shuffle}\, \tgt{\IRuns{w}} \,{\shuffle}\, \reach{\IB}^*}
  & [\Inv = \vec{A} \cup \reach{\IB}]
  \\
  & \subseteq
    \vec{A} \cup \reach{\tgt{\IRuns{w}} \,{\shuffle}\, \reach{\IB}^*}
  & [\vec{A} \text{ is an attractor}]
  \\
  & \subseteq
    \vec{A} \cup \reach{\tgt{\IRuns{w}} \,{\shuffle}\, \vec{\Delta}_0^*}
  & [\text{\cref{fact:reach-shuffle}}]
  \\
  & \subseteq
    \vec{A} \cup \bigcup_{\gamma \in \Gamma} \tgt{\IRuns{\gamma}}
  & [\text{\cref{lem:safety-witness-succ}}]
  \\
  & \subseteq
    \vec{A} \cup (\vec{Y}_\Gamma \cap \vec{\Phi})
  & [\text{see below}]
  \\
  & = \vec{T}
\end{align*}
The last inclusion holds because for every $\gamma \in \Gamma$,
we have
$\tgt{\IRuns{\gamma}} \subseteq \reach{\IB} \subseteq \vec{\Phi}$ and
$\tgt{\IRuns{\gamma}} = \tgt{\rho_\gamma} + \vec{P}_\gamma \subseteq \tgt{\rho_\gamma} + \vec{Q}_\gamma \subseteq \vec{Y}_\Gamma$.
According to \cref{lem:bvas-safelin},
there exists $p \in \vec{P}_w$ such that
$\reach{(\tgt{\rho_w} + \vec{p} + \vec{Q}_w)^+ \,{\shuffle}\, \Inv^*} \subseteq \vec{T}$.
We define
$\vec{R} \coloneqq \reach{(\tgt{\rho_w} + \vec{p} + \vec{Q}_w)^+ \,{\shuffle}\, \Inv^*}$ and
$\vec{X} \coloneqq \vec{R} \cap \vec{Y}_\Gamma$.
Let us show that the set $\vec{X}$ enjoys the properties claimed in the overview
presented at the end of \cref{sec:big-picture}.

\begin{itemize}
\item
  By monotony of BVAS runs (see \cref{fact:monotony}),
  the set $\vec{R}$ is $\vec{Q}_w$-cylindric.
  Recall that $\vec{Q}_w = \vec{Q}_\Gamma$ since $w \in \Gamma$.
  It follows that $\vec{X}$ is finitary $\vec{Q}_\Gamma$-cylindric since
  it is the intersection of
  the $\vec{Q}_\Gamma$-cylindric set $\vec{R}$ and
  the finitary $\vec{Q}_\Gamma$-cylindric set $\vec{Y}_\Gamma$.
  This entails, in particular, that $\vec{X}$ is semilinear.
\item
  We have $\vec{X} \subseteq \vec{\Phi}$ since
  $\vec{X} \subseteq \vec{R} \subseteq \vec{T} \subseteq \vec{A} \cup \vec{\Phi}$ and
  $\vec{A} \subseteq \vec{\Phi}$ as $(\vec{A}, W)$ is a safety witness.
\item
  The set $\vec{A} \cup \vec{R}$ is an attractor for $\IB$ by \cref{lem:incremental-attractor}.
  Since $\vec{R} \subseteq \vec{T} \subseteq \vec{A} \cup \vec{Y}_\Gamma$,
  we derive that
  $\vec{A} \cup \vec{R} = \vec{A} \cup (\vec{R} \cap \vec{Y}_\Gamma) = \vec{A} \cup \vec{X}$.
  Therefore,
  the set $\vec{A} \cup \vec{X}$ is an attractor for $\IB$.
\item
  The set $\vec{R}$ is an attractor for $\IB$ by \cref{lem:incremental-attractor}
  (applied with $\vec{A} \coloneqq \emptyset$).
  As mentioned above, $\vec{R}$ is $\vec{Q}_\Gamma$-cylindric.
  Note that $\vec{R}$ intersects $\tgt{\rho_w} + \vec{Q}_\Gamma$ since
  $\vec{R}$ contains $\tgt{\rho_w} + \vec{p}$ and $\vec{p} \in \vec{P}_w \subseteq \vec{Q}_w = \vec{Q}_\Gamma$.
  It follows from \cref{lem:attractors-intersect-SCC} that
  $\tgt{\rho_\gamma}$ is in $\vec{R} - \vec{Q}_\Gamma$ for every $\gamma \in \Gamma$.
  We derive from $\vec{R} + \vec{Q}_\Gamma \subseteq \vec{R}$ and $\vec{X} = \vec{R} \cap \vec{Y}_\Gamma$ that
  $\vec{Y}_\Gamma \subseteq \vec{X} - \vec{Q}_\Gamma$.
\item
  For every $\vec{r} \in \vec{R}$ and $\vec{y} \in \setN^d$ such that
  $\vec{r} \bvasrel{\IB} \vec{y}$,
  we have $\vec{y} \in \reach{\vec{R} \,{\shuffle}\, \vec{\Delta}_0^*}$,
  hence,
  $\vec{y} \in \vec{R}$ as $\vec{R}$ is an attractor (see above).
  Since $\vec{X} = \vec{R} \cap \vec{Y}_\Gamma$,
  we derive that the binary relation $\bvasrel{\IB}$ has an empty intersection with
  $\vec{X} \times (\vec{Y}_\Gamma \setminus \vec{X})$.
\end{itemize}

To summarize what we have done so far,
we have constructed a finitary $\vec{Q}_\Gamma$-cylindric set $\vec{X} \subseteq \vec{Y}_{\Gamma}$ such that
firstly $\vec{A}' \coloneqq \vec{A} \cup \vec{X}$ is a semilinear attractor for $\IB$ contained in $\vec{\Phi}$,
secondly $\vec{Y}_\Gamma \subseteq \vec{X} - \vec{Q}_\Gamma$, and
thirdly $\bvasrel{\IB}$ has an empty intersection with $\vec{X} \times (\vec{Y}_\Gamma \setminus \vec{X})$.

Before introducing the set $W'$ of directed iruns, we first prove the following lemma.
\begin{lemma}\label{lem:insideGammaQ}
  For every directed iruns $w$ and for every iruns $\rho\in \IRuns{w}$, we have:
  \begin{itemize}
  \item $\IRuns{(\rho,\vec{C}_w)}\subseteq \IRuns{w}$, and
  \item $\vec{P}_\rho\cap\vec{C}_w\subseteq \vec{Q}_w$.
  \end{itemize}
\end{lemma}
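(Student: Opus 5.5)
Both items come from transitivity of $\trianglelefteq$, the inclusion $\tgt{\rho}-\tgt{\rho_w}\in\vec{C}_w$, and the amalgamation property of \cref{lem:ibvas-wpo-and-amalgamation}. Throughout, fix a directed irun $w$ and an irun $\rho\in\IRuns{w}$. Then $\rho_w\trianglelefteq\rho$ and $\vec{z}\coloneqq\tgt{\rho}-\tgt{\rho_w}\in\vec{C}_w$. Since $\rho_w\trianglelefteq\rho$, the vector $\vec{z}$ also lies in $\vec{P}_{\rho_w}$, and hence in $\vec{P}_w$.

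For the first item, take $\sigma\in\IRuns{(\rho,\vec{C}_w)}$. Then $\rho\trianglelefteq\sigma$ and $\tgt{\sigma}-\tgt{\rho}\in\vec{C}_w$. Since $\trianglelefteq$ is a partial order, it is transitive, so $\rho_w\trianglelefteq\sigma$. Moreover,
\[
\tgt{\sigma}-\tgt{\rho_w}=(\tgt{\sigma}-\tgt{\rho})+\vec{z}\in\vec{C}_w+\vec{C}_w\subseteq\vec{C}_w .
\]
Therefore $\sigma\in\IRuns{w}$.

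For the second item, take $\vec{x}\in\vec{P}_\rho\cap\vec{C}_w$, so $\vec{x}=\tgt{\sigma}-\tgt{\rho}$ for some irun $\sigma$ with $\rho\trianglelefteq\sigma$. As in the first item, $\sigma\in\IRuns{w}$, so $\vec{z}+\vec{x}\in\vec{P}_w$. Next apply \cref{cor:wsvas-amalgamation}-style iteration, using the amalgamation of \cref{lem:ibvas-wpo-and-amalgamation} inductively on $\rho\trianglelefteq\sigma$. This gives, for every $k\in\setN$, an irun $\sigma_k$ with $\rho\trianglelefteq\sigma_k$ and $\tgt{\sigma_k}=\tgt{\rho}+k\vec{x}$. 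Applying the first item to each $\sigma_k$ yields $\vec{z}+k\vec{x}\in\vec{P}_w$ for all $k$. This is the same device used in the proof of \cref{lem:Qinclusion}.

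It remains to show $\vec{x}\in\clin{\vec{P}_w}=(\vec{P}_w-\vec{P}_w)\cap\overline{\setQ_{\geq0}\vec{P}_w}$. The group condition holds because $\vec{x}=(\vec{z}+\vec{x})-\vec{z}$ with both terms in $\vec{P}_w$. The cone condition holds because $\frac{1}{k}(\vec{z}+k\vec{x})\in\setQ_{\geq0}\vec{P}_w$ converges to $\vec{x}$ as $k\to\infty$. Hence $\vec{x}\in\vec{Q}_w$.

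The only step with any real content is producing $\sigma_k$ from amalgamation. Taking $\sigma_0=\rho$ and $\sigma_1=\sigma$, one amalgamates $\sigma_k$ and $\sigma$ over $\rho$ to obtain $\sigma_{k+1}$. The target identity $\tgt{\rho}+\tgt{\sigma_{k+1}}=\tgt{\sigma_k}+\tgt{\sigma}$ then gives $\tgt{\sigma_{k+1}}=\tgt{\rho}+(k+1)\vec{x}$, and transitivity gives $\rho\trianglelefteq\sigma_{k+1}$. Everything else is routine.
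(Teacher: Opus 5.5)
Your proof is correct. The first item is argued exactly as in the paper. For the second item, the paper applies \cref{lem:Qinclusion} to $u\coloneqq(\rho,\vec{C}_w)$ and $w$, where the edge $u\edge w$ is witnessed by $\rho\sqsubseteq\rho$ and $\vec{C}_u=\vec{C}_w$, and concludes $\vec{P}_\rho\cap\vec{C}_w=\vec{P}_u\subseteq\vec{Q}_u\subseteq\vec{Q}_w$. You instead unfold that lemma's proof in this special case: your first item replaces its run-replacement step, and your iterated amalgamation is just the periodicity of $\vec{P}_\rho$ noted in the paper. Citing \cref{lem:Qinclusion} directly would reduce your second item to one line.
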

\begin{proof}
  Let $u\coloneqq (\rho,\vec{C}_w)$ and $\sigma\in \IRuns{u}$. We have $\rho\trianglelefteq\sigma$ and $\tgt{\sigma}\in \tgt{\rho}+\vec{C}_w$. Moreover notice that $\rho_w \trianglelefteq \rho$ and $\tgt{\rho}\in \tgt{\rho_w}+\vec{C}_w$. We deduce that $\tgt{\sigma}\in \tgt{\rho_w}+\vec{C}_w$. By transitivity of $\trianglelefteq$ we derive $\rho_w\trianglelefteq\sigma$. Hence $\sigma\in \IRuns{w}$ and we have prove the first inclusion.
  
  For the second inclusion, notice that $\vec{C}_u=\vec{C}_w$, $\rho\sqsubseteq\rho$, $\rho\in \IRuns{u}$ and $\rho\in \IRuns{w}$. \cref{lem:Qinclusion} shows that $\vec{Q}_u\subseteq \vec{Q}_w$. Since $\vec{P}_u\subseteq\vec{Q}_u$ and $\vec{P}_u=\vec{P}_\rho\cap\vec{C}_w$ we deduce the inclusion $\vec{P}_\rho\cap\vec{C}_w\subseteq \vec{Q}_w$.
\end{proof}

Let us recall that $\vec{C}_\Gamma$ is the finitely-generated cone satisfying $\vec{C}_\gamma=\vec{C}_\Gamma$ for every $\gamma\in \Gamma$. We introduce the finitely-generated cone $\vec{C}\coloneqq\cone{\vec{Q}_\Gamma}$. From $\vec{Q}_\Gamma\subseteq \vec{C}_\Gamma$ we derive $\vec{C}\subseteq \vec{C}_\Gamma$. Let us fix a linearization $\vec{F}_1,\ldots,\vec{F}_n$ of the set of faces of $\vec{C}$ with respect to $\supseteq$. It follows that $n=|\mathcal{F}(\vec{C})|$, $\mathcal{F}(\vec{C})=\{\vec{F}_1,\ldots,\vec{F}_n\}$, $\vec{F}_1=\vec{C}$, and $\vec{F}_i\supseteq \vec{F}_j$ implies $i\leq j$.
We introduce the sequence $\vec{S}_1,\ldots,\vec{S}_n$ of subsets of $\setZ^d$ defined inductively for every $i\in \{1,\ldots,n\}$ as follows where $\vec{S}_{<i}\coloneqq\vec{S}_1\cup\ldots\cup\vec{S}_{i-1}$:
  $$\vec{S}_i\coloneqq\{\vec{s}\in\setZ^d \mid \vec{s}+(\vec{Q}\cap\vec{F}_i)\subseteq (\vec{Y}\setminus\vec{S}_{<i})\setminus \vec{X}\}$$
Since $\vec{X} \subseteq \vec{Y}_{\Gamma}$ are finitary $\vec{Q}_\Gamma$-cylindric, 
\cref{thm:leaf} shows that $\vec{S}_1,\ldots,\vec{S}_n$ is a disjoint decomposition of $\vec{Y}_\Gamma\setminus\vec{X}$ such that $\vec{S}_i$ is a finitary $(\vec{Q}\cap\vec{F}_i)$-cylindric set for every $1\leq i\leq n$. Moreover, since $\vec{Y}_\Gamma\subseteq\vec{X}-\vec{Q}_\Gamma$, we deduce that $\vec{S}_1$ is empty.

\smallskip

Let $i\in\{2,\ldots,n\}$ and let us introduce the finitely-generated periodic set $\vec{Q}_i\coloneqq\vec{Q}_\Gamma\cap \vec{F}_i$ and the set $R_i\coloneqq\{\rho\in \IRuns{\Gamma}\mid \tgt{\rho}\in \vec{S}_i\}$. We denote by $\leq_i$ the partial order $\leq_{\vec{Q}_i}$. Since $\vec{Q}_i$ is a finitely-generated periodic set and $\vec{S}_i$ is a finitary $\vec{Q}_i$-cylindric set, we deduce that $\leq_i$ is a wpo on $\vec{S}_i$. We also introduce the partial order $\trianglelefteq_i$ over $R_i$ defined by $\rho\trianglelefteq_i\sigma$ if $\rho\trianglelefteq \sigma$ and $\tgt{\rho}\leq_i\tgt{\sigma}$. Since $\trianglelefteq$ is a wpo on $\IRuns{\IB}$ and $\leq_i$ is a wpo on $\vec{S}_i$, we deduce that $\trianglelefteq_i$ is a wpo on $R_i$. We introduce the finite set $W_i$ of directed iruns $(\rho,F_i)$ where $\rho\in\Min{R_i}{\trianglelefteq_i}$. We are going to prove that the pair $(\vec{A}',W')$ is a safety witness satisfying \cref{lem:safety-witness-transformation} where $W'$ is the following set of directed iruns.
$$W'\coloneqq (W\setminus \Gamma)\cup \bigcup_{i=2}^n W_i$$

Let us first prove that $W' \Yleft W$. We introduce $i_\Gamma\coloneqq\dim{\vec{C}_\Gamma-\vec{C}_\Gamma}$, and the natural numbers $n_0,\ldots,n_d,n_0',\ldots,n_d'$ such that $\rank{W}=(n_0,\ldots,n_d)$ and $\rank{W'}=(n_0',\ldots,n_d')$. Recall from \cref{sec:stripping} that $\dim{\vec{F}_i-\vec{F}_i}<\dim{\vec{C}-\vec{C}}$ for every $i\in\{2,\ldots,n\}$. Moreover, since $\vec{C}\subseteq \vec{C}_\Gamma$, we deduce that $\dim{\vec{F}_i-\vec{F}_i}<i_\Gamma$. It follows that $n'_{i_\Gamma}=n_{i_\Gamma}-|\Gamma|$ where $|\Gamma|$ denotes the cardinal of $\Gamma$, and $n'_i=n_i$ for every $i\in \{i_\Gamma+1,\ldots,d\}$. We deduce that $\rank{W'} <_{lex} \rank{W}$. Therefore $W' \Yleft W$ as claimed.

Since $W$ is homogeneous, we deduce from the following lemma that $W'$ is homogeneous as well.
\begin{lemma}
  For every $i\in\{2,\ldots,n\}$, for every $u\in W_i$, and for every $v\in W'$ such that $u\edge v$, there exists $j\in\{i,\ldots,n\}$ such that $v\in W_j$.
\end{lemma}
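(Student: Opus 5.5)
The plan is to unfold $u \edge v$ into a pair of iruns, show that this pair lives inside the stripped pieces $\vec{S}_i$ and $\vec{S}_j$, and then use \cref{lem:diagonalstripping} with the relation $R \coloneqq {\bvasrel{\IB}}$ to force $j \geq i$.

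First step: unfold the data. Fix $u = (\rho, \vec{F}_i) \in W_i$. Then $\rho \in R_i$, so $\rho \in \IRuns{\gamma}$ for some $\gamma \in \Gamma$ and $\tgt{\rho} \in \vec{S}_i$. Since $u \edge v$, there are $\alpha \in \IRuns{u}$ and $\beta \in \IRuns{v}$ with $\alpha \sqsubseteq \beta$.

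Second step (the technical core): show that $\alpha \in \IRuns{\gamma}$ and $\tgt{\alpha} \in \vec{S}_i$.
\begin{itemize}
\item Because $\vec{F}_i \subseteq \vec{C} \subseteq \vec{C}_\Gamma = \vec{C}_\gamma$, we get $\alpha \in \IRuns{(\rho, \vec{C}_\gamma)}$. The first item of \cref{lem:insideGammaQ} then gives $\alpha \in \IRuns{\gamma}$.
\item The difference $\tgt{\alpha} - \tgt{\rho}$ lies in $\vec{P}_\rho \cap \vec{C}_\gamma$. By the second item of \cref{lem:insideGammaQ}, it lies in $\vec{Q}_\gamma = \vec{Q}_\Gamma$.
\item It also lies in $\vec{F}_i$ by definition of $\IRuns{u}$, so it lies in $\vec{Q}_\Gamma \cap \vec{F}_i = \vec{Q}_i$.
\item Since $\vec{S}_i$ is $\vec{Q}_i$-cylindric, we conclude $\tgt{\alpha} \in \vec{S}_i$.
\end{itemize}
This is where I expect the main care to be needed: one must check that the cone $\vec{F}_i$ of $u$ is contained in $\vec{C}_\Gamma$, so that \cref{lem:insideGammaQ} applies.

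Third step: rule out $v \in W \setminus \Gamma$. In that case $v \in W$, $\alpha \in \IRuns{\gamma}$, $\beta \in \IRuns{v}$ and $\alpha \sqsubseteq \beta$, so $\gamma \edge[W] v$. This leaves the bottom SCC $\Gamma$, a contradiction. Hence $v \in W_j$ for some $j \in \{2, \ldots, n\}$, say $v = (\rho', \vec{F}_j)$. Exactly the same argument as in the second step, applied to $\beta \in \IRuns{v}$, gives $\tgt{\beta} \in \vec{S}_j$.

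Finally, relate $\tgt{\alpha}$ and $\tgt{\beta}$ and conclude.
\begin{itemize}
\item Since $\alpha \sqsubseteq \beta$ with both iruns, $\tgt{\beta} \in \reach{\vec{\Delta}_0^* \, \tgt{\alpha} \, \vec{\Delta}_0^*}$, that is, $\tgt{\alpha} \bvasrel{\IB} \tgt{\beta}$.
\item By \cref{fact:monotony}, the relation $\bvasrel{\IB}$ is $\vec{Q}_\Gamma$-diagonal: adding $\vec{q}$ to the distinguished source adds $\vec{q}$ to the target.
\item We established earlier that $\bvasrel{\IB}$ does not meet $\vec{X} \times (\vec{Y}_\Gamma \setminus \vec{X})$.
\item \cref{lem:diagonalstripping} therefore says $\bvasrel{\IB} \cap (\vec{S}_{j'} \times \vec{S}_{i'}) = \emptyset$ whenever $i' < j'$.
\item The pair $(\tgt{\alpha}, \tgt{\beta})$ lies in $\bvasrel{\IB} \cap (\vec{S}_i \times \vec{S}_j)$, so $j < i$ is impossible. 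Hence $j \in \{i, \ldots, n\}$, as required.
\end{itemize}
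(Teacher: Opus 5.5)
Your proposal is correct and follows essentially the same route as the paper. You use \cref{lem:insideGammaQ}, together with $\vec{F}_i \subseteq \vec{C} \subseteq \vec{C}_\Gamma$, to get $\alpha \in \IRuns{\gamma}$ and $\tgt{\alpha} \in \vec{S}_i$ (and likewise $\tgt{\beta} \in \vec{S}_j$), rule out $v \in W \setminus \Gamma$ because $\Gamma$ is a bottom SCC, and then conclude $i \leq j$ from \cref{lem:diagonalstripping} applied to the $\vec{Q}_\Gamma$-diagonal relation $\bvasrel{\IB}$; the only difference from the paper is the order of the steps.
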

\begin{proof}
  Clearly $\bvasrel{\IB}$ is a $\setN^d$-diagonal relation\footnote{See \cref{par:diagonal} for the definition of diagonal relations.}, and in particular a $\vec{Q}_\Gamma$-diagonal relation since $\vec{Q}_\Gamma\subseteq \setN^d$. Moreover, as $\bvasrel{\IB}$ has an empty intersection with $\vec{X}\times(\vec{Y}_\Gamma\setminus \vec{X})$, \cref{lem:diagonalstripping} shows that $\bvasrel{\IB}$ has an empty intersection with $\vec{S}_i\times\vec{S}_j$ for every $1\leq j<i\leq k$.

  Now, let $u\in W_i$ for some $i\in \{2,\ldots,n\}$ and $v\in W'$ such that $u\edge v$. It follows that $\alpha\sqsubseteq\beta$ for some iruns $\alpha\in \IRuns{u}$ and $\beta\in \IRuns{v}$.
  
  Assume by contradiction that $v\in W\setminus \Gamma$. Since $u\in W_i$, we deduce that $u=(\rho_u,\vec{F}_i)$ where $\rho_u\in \IRuns{\gamma}$ for some $\gamma\in\Gamma$. As $\vec{F}_i\subseteq\vec{C}$, we derive from \cref{lem:insideGammaQ} that $\IRuns{u}\subseteq \IRuns{\gamma}$. It follows that $\alpha\in \IRuns{\gamma}$. From $\alpha\in \IRuns{\gamma}$, $\beta\in \IRuns{v}$, and $\alpha\sqsubseteq\beta$, we deduce that $\gamma\edge[W] v$. As $\Gamma$ is a bottom SCC of $W$, $\gamma\in\Gamma$, and $v\in W$, we deduce that $v\in \Gamma$ and we get a contradiction. It follows that there exists $j\in \{2,\ldots,n\}$ such that $v\in W_j$.

  Since $\alpha\in \IRuns{u}$ we deduce that $\tgt{\rho_u}\trianglelefteq_i\tgt{\alpha}$. In particular $\tgt{\alpha}\in \tgt{\rho_u}+(\vec{P}_{\rho_u}\cap \vec{F}_i)$. Since $\vec{F}_i\subseteq \vec{C}\subseteq\vec{C}_\Gamma$, we derive from \cref{lem:insideGammaQ} that $\vec{P}_{\rho_u}\cap\vec{F}_i\subseteq\vec{Q}_\Gamma\cap\vec{F}_i$. As $\tgt{\rho_u}\in\vec{S}_i$ and $\vec{S}_i$ is $(\vec{Q}_\Gamma\cap\vec{F}_i)$-cylindric, we deduce that $\tgt{\alpha}\in\vec{S}_i$. Symmetrically, since $\beta\in\IRuns{v}$, we deduce that $\tgt{\beta}\in \vec{S}_j$. From $\alpha\sqsubseteq \beta$, we deduce that $\tgt{\alpha}\bvasrel{\IB}\tgt{\beta}$. As $\tgt{\alpha}\in\vec{S}_i$ and $\tgt{\beta}\in\vec{S}_j$, we deduce from the first paragraph that $i\leq j$.
\end{proof}

In order to prove that $(\vec{A}',W')$ is a safety witness for $\IB$ and $\vec{\Phi}$, it just remain to prove that for every irun $\rho \in \IRuns{\IB}$, we have $\tgt{\rho} \in \vec{A}'$ or $\rho \in \IRuns{W'}$. So, let $\rho\in \IRuns{\IB}$ such that $\tgt{\rho}\not\in \vec{A}'$. As $(\vec{A},W)$ is a safety witness and $\tgt{\rho}\not\in\vec{A}$ we deduce that there exists $w\in W$ such that $\rho\in \IRuns{w}$. Clearly, if $w\not\in \Gamma$ then $w\in W'$ and in particular $\rho\in \IRuns{W'}$. So, let us assume that $\rho\in \IRuns{\Gamma}$. It follows that $\tgt{\rho}\in \vec{Y}_\Gamma$. If $\tgt{\rho}\in \vec{X}$ then $\tgt{\rho}\in\vec{A}'$, so we can assume that $\tgt{\rho}\not\in \vec{X}$. Hence $\tgt{\rho}\in \vec{Y}_\Gamma\setminus\vec{X}$. As $\vec{S}_2,\ldots,\vec{S}_n$ is a disjoint decomposition of $\vec{Y}_\Gamma\setminus \vec{X}$, there exists $i\in\{2,\ldots,n\}$ such that $\tgt{\rho}\in \vec{S}_i$. Hence $\rho\in R_i$. We deduce that there exists $u\in W_i$ such that $\rho_u\trianglelefteq_i \rho$. In particular $\rho_u\trianglelefteq \rho$ and $\tgt{\rho}\in \tgt{\rho_u}+\vec{F}_i$. We deduce that $\rho\in\IRuns{u}\subseteq \IRuns{W'}$. We have proved that $(\vec{A}',W')$ is a safety witness for $\IB$ and $\vec{\Phi}$. We have proved \cref{lem:safety-witness-transformation}.

\section{Conclusion}\label{sec:conclusion}

We have established the decidability of reachability in branching vector addition systems, resolving a problem that has remained open for more than three decades.

Several questions remain open. First, our argument does not provide any complexity upper bound. For plain VAS, a complexity analysis of the invariant-based approach is known but it is based on the computational complexity of the KLM algorithm. Determining the complexity of BVAS reachability therefore appears to require new ideas, and perhaps fundamentally different algorithms. At present, even in dimension five, where decidability was previously established~\cite{DBLP:conf/fossacs/BiziereLS26}, no non-trivial complexity upper bound is known. Likewise, the best lower bound currently known is the Ackermann lower bound inherited from VAS reachability.

A second natural direction concerns extended BVAS (EBVAS), introduced in~\cite{jacquemard:hal-00769249} as a counter automaton equivalent to two-variable first-order logic over data trees, and later connected in~\cite{DBLP:conf/esop/Cotton-BarrattM17} to the observational equivalence problem for a finitary fragment of ML. The model of EBVAS enriches branching transitions with additional resource-transfer mechanisms, and is believed to be strictly more expressive than BVAS. The decidability of reachability for EBVAS remains open and, to the best of our knowledge, has not yet been systematically investigated. From the perspective of the present work, a first challenge would be to determine whether EBVAS have almost semilinear reachability set and whether this could be proved with the same approach as in~\cite{DBLP:conf/fossacs/BiziereLS26}.

\bibliographystyle{ACM-Reference-Format}
\bibliography{biblio}

\clearpage
\appendix

\section{Proof of \cref{lem:fgper-wqo-on-Zd}}
\label{app:lem:fgper-wqo-on-Zd}
\lemFgperWqoOnZd*

\begin{proof}
  If $\vec{P}$ is a finitely-generated periodic set, there exists a finite sequence $\vec{p}_1,\ldots,\vec{p}_k\in\setZ^d$ such that $\vec{P}=\setN\vec{p}_1+\cdots+\setN\vec{p}_k$. Since $\leq$ is a wqo on $\setN^k$ (Dickson's Lemma), we deduce that $\leq_{\vec{P}}$ is a wqo on~$\vec{P}$. Conversely, assume that $\leq_{\vec{P}}$ is a wqo on $\vec{P}$. We introduce the mapping $f_{\vec{s}}:\setZ^d\rightarrow \setZ^d$ where $\vec{s}\in\{-1,1\}^d$ defined by $f_{\vec{s}}(\vec{z})=(\vec{s}(1)\vec{z}(1),\ldots,\vec{s}(d)\vec{z}(d))$. The mapping is bijective since $f_{\vec{s}}\circ f_{\vec{s}}$ is the identity. Notice that $\vec{Q}_{\vec{s}}\coloneqq f_{\vec{s}}(\vec{P})\cap\setN^d$ is a periodic set included in $\setN^d$ such that $\leq_{\vec{Q}_s}$ is a wpo on $\vec{Q}_{\vec{s}}$. We introduce the finite set $\vec{M}_{\vec{s}}\coloneqq\Min{\vec{Q}_{\vec{s}}\setminus\{\vec{0}\}}{\leq_{\vec{Q}_{\vec{s}}}}$. By induction on the norm $\sum_{i=1}^d\vec{q}$ where $\vec{q}\in\vec{Q}_{\vec{s}}$, we derive that $\vec{Q}_{\vec{s}}=\per{\vec{M}_{\vec{s}}}$. Now, just observe that $\vec{M}\coloneqq\bigcup_{\vec{s}}f_{\vec{s}}(\vec{M}_{\vec{s}})$ satisfies $\vec{P}=\per{\vec{M}}$.
\end{proof}

\section{Well-Partial-Order on Iruns of an Initialized BVAS and Amalgamation Property}
\label{app:lem:ibvas-wpo-and-amalgamation}
This appendix provides a detailed proof of \cref{lem:ibvas-wpo-and-amalgamation}.
As mentioned in \cref{sec:big-picture},
this lemma was proved in~\cite{DBLP:conf/fossacs/BiziereLS26} with slightly different notations.
We reprove it because our approach crucially relies on it.

\lemIbvasWpoAndAmalgamation*

Let us consider an IBVAS $\IB \equaldef (\vec{\Delta}_0, \B)$ where
$\B \coloneqq (\vec{\Delta}_1, \ldots, \vec{\Delta}_r)$.
To prove \cref{lem:ibvas-wpo-and-amalgamation},
we show
that $(\IRuns{\IB}, \trianglelefteq)$ is a wpo (see \cref{lem:ibvas-run-order-wpo}) and
that it satisfies the amalgamation property (see \cref{lem:ibvas-amalgamation}).

\begin{lemma}
  \label{lem:ibvas-run-order-wpo}
  The pair $(\IRuns{\IB}, \trianglelefteq)$ is a wpo.
\end{lemma}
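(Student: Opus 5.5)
We need to show that $\trianglelefteq$ is a partial order on iruns and that it is a wqo. The plan is to first check the partial-order axioms, then prove the wqo property by a Kruskal/Higman-style minimal bad sequence argument that uses the finiteness of $\vec{\Delta}_0$ and of the action sets.

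Reflexivity is immediate by induction, taking $\beta'=\beta$. For transitivity, suppose $\alpha\trianglelefteq\beta$ via $\beta'\sqsubseteq\beta$ and $\beta\trianglelefteq\gamma$ via $\gamma'\sqsubseteq\gamma$. We show by induction on $\beta$ that every subterm $\beta'\sqsubseteq\beta$ satisfies $\beta'\trianglelefteq\gamma''$ for some $\gamma''\sqsubseteq\gamma$, descending through the children. Then $\alpha\trianglelefteq\beta'\trianglelefteq\gamma''$, with $\gamma''\sqsubseteq\gamma$ and $\tgt{\alpha}\le\tgt{\beta}\le\tgt{\gamma}$. This reduces transitivity to the case of equal roots, which is handled by induction on the children. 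Antisymmetry follows from a size argument. If $\alpha\trianglelefteq\beta$, then $|\alpha|\le|\beta|$ (number of nodes), with equality only if $\beta'=\beta$. Hence $\alpha\trianglelefteq\beta\trianglelefteq\alpha$ forces equal sizes, so the roots match directly, the targets are equal, and the children are equal by induction.

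For the wqo property, we argue by contradiction using a minimal bad sequence in the sense of Nash-Williams. Suppose bad sequences exist, and choose $\rho_0,\rho_1,\ldots$ inductively so that each $\rho_k$ has minimal size among iruns extending a bad prefix. Let $U$ be the set of all children $\rho_k[j]$. The set $U$ is wqo under $\trianglelefteq$: a bad sequence in $U$ starting at a child of $\rho_{k}$ with minimal index $k$ would yield, after concatenation with the prefix $\rho_0,\ldots,\rho_{k-1}$, a bad sequence contradicting minimality. The one point to verify is that $\rho_i\ntrianglelefteq\sigma$ whenever $\sigma$ is a child of some $\rho_{k'}$. This holds because $\rho_i\trianglelefteq\rho_{k'}[j]$ implies $\rho_i\trianglelefteq\rho_{k'}$, using $\sqsubseteq$ together with $\tgt{\rho_{k'}[j]}\le\tgt{\rho_{k'}}$, which in turn uses $\tgt{\beta'}\le\tgt{\beta}$.

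It remains to handle the roots. Leaves have targets in the finite set $\vec{\Delta}_0$, and arities and actions range over finite sets. We therefore extract an infinite subsequence with a common arity $n$ and a common action (or a common leaf target). By Dickson's lemma applied to the tuples $(\rho_k[1],\ldots,\rho_k[n])\in U^n$, we can further extract indices $k<l$ with $\rho_k[j]\trianglelefteq\rho_l[j]$ for all $j$. The remaining condition is $\tgt{\rho_k}\le\tgt{\rho_l}$. This is automatic: since the actions are equal, $\tgt{\rho}=\act{\rho}+\sum_j\tgt{\rho[j]}$, and each $\tgt{\rho_k[j]}\le\tgt{\rho_l[j]}$ holds because $\trianglelefteq$ implies $\le$ on targets. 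Taking $\beta'=\rho_l$ then gives $\rho_k\trianglelefteq\rho_l$, contradicting badness. The main obstacle is getting the minimal-bad-sequence bookkeeping right in the step above, namely the embedding of children into their parents; the rest of the argument is routine.
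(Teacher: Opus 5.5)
Your checks of the partial-order axioms and your root step are fine, but the Nash-Williams gluing step fails at exactly the point you flagged. The inequality $\tgt{\rho_{k'}[j]}\le\tgt{\rho_{k'}}$ is false in general. Indeed $\tgt{\beta}=\act{\beta}+\sum_j\tgt{\beta[j]}$, and actions may have negative entries. Moreover, the clause ``$\tgt{\alpha}\le\tgt{\beta},\tgt{\beta'}$'' in the definition of $\trianglelefteq$ only says that $\tgt{\alpha}$ lies below both targets; it does not say $\tgt{\beta'}\le\tgt{\beta}$. So $\trianglelefteq$ is not closed under passing to subterms on the right. For instance, take the BVAS of \cref{exa:BVAS} with $\vec{\Delta}_0\coloneqq\{(0,1,0)\}$, the leaf $\lambda=((0,1,0),())$ and $\rho=((0,0,1),(\lambda,\lambda))$. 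Both are iruns, $\lambda\trianglelefteq\rho[1]$, but $\lambda\ntrianglelefteq\rho$ because $(0,1,0)\not\le(0,0,1)$. Without this closure you cannot exclude $\rho_m\trianglelefteq\sigma$ for $\sigma$ a child of a later $\rho_{k'}$. Hence the glued sequence $\rho_0,\ldots,\rho_{k-1},\sigma,\ldots$ need not be bad, and minimality yields no contradiction.

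The repair is to separate out the top-level target comparison. Let $\alpha\preceq\beta$ mean that there is $\beta'\sqsubseteq\beta$ with:
\begin{itemize}
\item $\tgt{\alpha}\le\tgt{\beta'}$, $\ra{\alpha}=\ra{\beta'}$ and $\act{\alpha}=\act{\beta'}$;
\item $\alpha[j]\preceq\beta'[j]$ and $\tgt{\alpha[j]}\le\tgt{\beta'[j]}$ for every $j$.
\end{itemize}
By induction, $\alpha\trianglelefteq\beta$ holds if, and only if, $\alpha\preceq\beta$ and $\tgt{\alpha}\le\tgt{\beta}$. The relation $\preceq$ is closed under subterms on the right, so your minimal-bad-sequence argument goes through for $\preceq$. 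In the root step, use on $U$ the intersection of $\preceq$ with $\le$ on targets, which is almost-full by Dickson's lemma. Intersecting once more with $\le$ on targets then shows that $\trianglelefteq$ is a wqo.

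This is exactly what the paper does, packaged through Kruskal's theorem. It labels each node with its target, its action and the tuple of its children's targets, and observes that $\rho\trianglelefteq\sigma$ holds if, and only if, $f(\rho)\preceq_T f(\sigma)$ and $\tgt{\rho}\le\tgt{\sigma}$. The tree embedding $\preceq_T$ is subterm-closed, which is precisely the property your argument lacks.
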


\begin{proof}
  It is straightforward to verify that $(\IRuns{\IB},\trianglelefteq)$ is a poset. We therefore only need to show that it is a wqo.
  To show that it is a wqo, we will apply Kruskal's lemma to ``decorated'' iruns.
  We first recall succinctly that lemma (see~\cite{polyWQO} for more details).

  A \emph{tree} $t$ on a set $A$ is defined inductively as a pair $(a,(t_1,\ldots, t_n))$ where $a\in A$, $n\in\setN$, and $t_1,\ldots,t_n$ is a sequence of trees on $A$. Such a pair is simply denoted by $a\langle t_1,\ldots,t_n\rangle$, and we denote by $T(A)$ the set of trees on $A$. Let $\preceq$ be a quasi-order on $A$. We introduce the quasi-order $\preceq_T$ on $T(A)$ inductively on the structure of a tree $t$ by $s\preceq_T t$ where $s\coloneqq a\langle s_1,\ldots,s_n\rangle$ and $t\coloneqq b\langle t_1,\ldots,t_m\rangle$ if (1) there exists $i\in\{1,\ldots,m\}$ such that $s\preceq_T t_i$, or (2) $a\preceq b$ and there exists a sequence $1\leq i_1<\ldots<i_n\leq m$ such that $s_1\preceq_T t_{i_1}\wedge \ldots s_n\preceq_T t_{i_n}$. By the Kruskal theorem, we deduce that if $\preceq$ is wqo on $A$ then $\preceq_T$ is wqo on $T(A)$.

  Let $\Lambda = \bigcup_{n=1}^r \setN^d \times \Delta_n \times (\setN^d)^n$. We define a wqo $\preceq$ on $\Lambda$ by $(\vec x, \delta, \vec y) \preceq (\vec x', \delta', \vec y')$ if $\vec x \le \vec x'$, $\delta = \delta'$ and $\vec y \le \vec y'$ belong to the same set $(\setN^d)^n$. Clearly $\preceq$ is a wqo on $\Lambda$ and we deduce that $\preceq_T$ is a wqo on $T(\Lambda)$.

  We introduce the (decoration) function $f:\IRuns{\IB} \to T(\Lambda)$ defined by induction on iruns $\rho \coloneqq (\vec c, (\rho_1,...,\rho_n))$ by $f(\rho)\coloneqq \lambda \langle f(\rho_1),...,f(\rho_n) \rangle$ with $\lambda \coloneqq (\vec c, \act{\rho}, (\tgt{\rho_1},...,\tgt{\rho_n}))$. By structural induction on $\sigma$, notice that $\rho\trianglelefteq \sigma$ if, and only if, $f(\rho) \preceq_{T(\Lambda)} f(\sigma)\wedge \tgt{\rho}\leq \tgt{\sigma}$. It follows that $\trianglelefteq$ is a wpo on $\IRuns{\IB}$.
\end{proof}

\begin{lemma}
  \label{lem:ibvas-amalgamation}
  For every iruns $\rho, \alpha, \beta$ such that $\rho \trianglelefteq \alpha, \beta$,
  there exists an irun $\sigma$ such that
  $\alpha, \beta \trianglelefteq \sigma$ and
  $\tgt{\rho} + \tgt{\sigma} = \tgt{\alpha} + \tgt{\beta}$.
\end{lemma}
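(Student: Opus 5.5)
We argue by structural induction on the irun $\alpha$, with a statement strengthened just enough to keep track of where $\rho$ sits inside $\alpha$ and $\beta$. Unfolding the definition, $\rho \trianglelefteq \alpha$ means there is a subrun $\alpha' \sqsubseteq \alpha$ with $\tgt{\rho} \leq \tgt{\alpha}, \tgt{\alpha'}$, $\ra{\rho}=\ra{\alpha'}$, $\act{\rho}=\act{\alpha'}$ and $\rho[j] \trianglelefteq \alpha'[j]$ for all $j$. Similarly, $\rho \trianglelefteq \beta$ gives a subrun $\beta' \sqsubseteq \beta$. We think of $\alpha$ as built in three layers: a copy of $\rho$ whose nodes are enlarged, pieces inserted into this copy, and a context surrounding it. These are the portion above $\alpha'$ and the extra chains sitting between $\alpha'[j]$ and the embedded copy of $\rho[j]$. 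The goal is to build $\sigma$ by keeping the skeleton of $\rho$, inserting the pieces of both $\alpha$ and $\beta$ in sequence, and shifting every configuration by the sum of the two increments relative to $\rho$.

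\textbf{Step 1 (a relativized order).} We introduce an auxiliary notion of ``insertion with increment''. Say that $\alpha$ extends $\rho$ with displacement $\vec{u}\coloneqq\tgt{\alpha}-\tgt{\rho}$. The point of the definition $\tgt{\rho}\leq\tgt{\alpha},\tgt{\alpha'}$ is that along the path from $\alpha'$ up to the root of $\alpha$, every node target is at least $\tgt{\rho}$. This is exactly the monotony needed to re-graft. We prove a \emph{grafting lemma}. Let $\gamma$ be a run, $\gamma'\sqsubseteq\gamma$ a subrun, and $\delta$ a run with $\gamma'\trianglelefteq\delta$ (or more generally with $\tgt{\delta}-\tgt{\gamma'}=\vec{v}\geq\vec{0}$ and the same sources-in-$\vec{\Delta}_0$ property). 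Then replacing $\gamma'$ by $\delta$ in $\gamma$ and adding $\vec{v}$ to every ancestor yields a run $\hat\gamma$ with $\gamma\trianglelefteq\hat\gamma$ and $\tgt{\hat\gamma}=\tgt{\gamma}+\vec{v}$. This is the same claim used informally in the proof of \cref{lem:Qinclusion}, proved by induction on $\gamma$. It uses only that adding a nonnegative vector preserves nonnegativity and actions.

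\textbf{Step 2 (induction).} We prove amalgamation by induction on $\rho$, with an inner induction on the height of the contexts. The children case is handled first. By the induction hypothesis applied to $\rho[j]\trianglelefteq\alpha'[j],\beta'[j]$, we get $\sigma_j$ with $\alpha'[j],\beta'[j]\trianglelefteq\sigma_j$ and $\tgt{\rho[j]}+\tgt{\sigma_j}=\tgt{\alpha'[j]}+\tgt{\beta'[j]}$. Set $\sigma'\coloneqq(\tgt{\alpha'}+\tgt{\beta'}-\tgt{\rho},(\sigma_1,\ldots,\sigma_n))$. Its action is $\act{\rho}$ by the additivity of the identities, its target is nonnegative since it is $\geq\tgt{\alpha'}$, and its sources stay in $\vec{\Delta}_0^+$. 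Hence $\sigma'$ is an irun with $\alpha',\beta'\trianglelefteq\sigma'$. Next we re-insert the contexts. Using Step 1 with $\gamma\coloneqq\alpha$, $\gamma'\coloneqq\alpha'$ and $\delta\coloneqq\sigma'$, with increment $\tgt{\beta'}-\tgt{\rho}$, we get $\hat\alpha$. Its target is $\tgt{\alpha}+\tgt{\beta'}-\tgt{\rho}$, and $\hat\alpha$ dominates both $\alpha$ and $\beta'$. We then graft the context of $\beta$ above $\hat\alpha$ by the same method. Ancestors of $\beta'$ in $\beta$ receive increment $\tgt{\alpha}-\tgt{\rho}\geq\vec{0}$, and $\hat\alpha$ now plays the role of $\beta'$. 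This yields $\sigma$ with $\tgt{\sigma}=\tgt{\alpha}+\tgt{\beta}-\tgt{\rho}$.

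\textbf{Main obstacle.} The delicate point is verifying $\alpha,\beta\trianglelefteq\sigma$ in the nested definition. Here the condition ``$\tgt{\rho}\leq\tgt{\beta},\tgt{\beta'}$'' must be rechecked for the new witnesses, and $\beta'$ is no longer directly a child-aligned subrun of $\sigma$. The inserted $\hat\alpha$ sits where $\beta'$ was, but $\beta'\trianglelefteq\hat\alpha$ must be witnessed through the copy of $\sigma'$ deep inside $\hat\alpha$. We handle this by proving, alongside Step 1, that $\trianglelefteq$ is transitive and that ``$\gamma'\trianglelefteq\delta$ implies $\gamma\trianglelefteq\gamma[\gamma'\mapsto\delta]$''. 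With these two facts, $\beta\trianglelefteq\sigma$ follows from $\beta'\trianglelefteq\hat\alpha$ by grafting. Likewise, $\alpha\trianglelefteq\hat\alpha\trianglelefteq\sigma$ follows by transitivity, since $\hat\alpha\sqsubseteq\sigma$ with $\tgt{\hat\alpha}\leq\tgt{\sigma}$.
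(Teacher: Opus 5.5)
Your proposal is correct and is essentially the paper's proof. You use induction on $\rho$ to amalgamate the children into $\sigma'$, then graft $\sigma'$ in place of $\alpha'$ (ancestors shifted by $\tgt{\beta'}-\tgt{\rho}$) and the result in place of $\beta'$ (ancestors shifted by $\tgt{\alpha}-\tgt{\rho}$); this produces exactly the paper's $\tau$ and $\sigma$, and your grafting lemma simply packages the paper's monotony-plus-leaf-substitution step together with its two closing facts.

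One harmless slip: the definition of $\trianglelefteq$ does not force the intermediate ancestors of $\alpha'$ to have targets at least $\tgt{\rho}$. Your grafting only needs the increments to be nonnegative, so nothing depends on that remark.
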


\begin{proof}
  We now prove the amalgamation property by structural induction on $\rho$.
  Let $\rho \trianglelefteq \alpha, \beta$ be three iruns. Following the definition of $\trianglelefteq$, there are subruns $\alpha' \sqsubseteq \alpha$ and $\beta' \sqsubseteq \beta$ such that 
  $\tgt{\rho} \le \tgt{\alpha}, \tgt{\alpha'}, \tgt{\beta}, \tgt{\beta'}$,
  $\ra{\rho} = \ra{\alpha'} = \ra{\beta'}$,
  $\act{\rho} = \act{\alpha'} = \act{\beta'}$
  and $\rho[j] \trianglelefteq \alpha'[j], \beta'[j]$ for every $j$.
  Define $n \coloneqq \ra{\rho}$.
  By induction hypothesis, for each $j \in [1,n]$ there is an irun $\sigma_j$ such that $\alpha'[j], \beta'[j] \trianglelefteq \sigma_j$ and $\tgt{\rho[j]} + \tgt{\sigma_j} = \tgt{\alpha'[j]} + \tgt{\beta'[j]}$.
  Summing these equalities over $j=1,\ldots,n$
  and adding $\act{\rho}$ yields
  \begin{equation} \label{eq:sum-amalg}
  \tgt{\alpha'} + \tgt{\beta'} = \tgt{\rho} + \act{\rho} + (\sum_{j=1}^n \tgt{\sigma_j}) 
  \end{equation}

  Let $\alpha''$ be a run obtained from $\alpha$ by replacing a subrun $\alpha'$ by just a leaf labeled by $\tgt{\alpha'}$.
  Since $\tgt{\beta'} \ge \tgt{\rho}$, by \cref{fact:monotony}, there is a run $\hat{\alpha}$ with $\src{\hat{\alpha}} \in \Delta_0^* (\tgt{\alpha'} + \tgt{\beta'} - \tgt{\rho}) \Delta_0^*$ and $\tgt{\hat{\alpha}} = \tgt{\alpha} + \tgt{\beta'} - \tgt{\rho}$.
  By a similar argument, we construct a run $\hat{\beta}$ such that $\src{\hat{\beta}} \in \Delta_0^* (\tgt{\beta'} + \tgt{\alpha} - \tgt{\rho}) \Delta_0^*$ and $\tgt{\hat{\beta}} = \tgt{\beta} + \tgt{\alpha} - \tgt{\rho}$.

  Let $\sigma'$ be the irun $\left(\act{\rho} + \sum_{j=1}^n \tgt{\sigma_j}, (\sigma_1,...,\sigma_n) \right)$. 
  Using \eqref{eq:sum-amalg}, we can replace in $\hat{\alpha}$ a leaf labeled $\tgt{\alpha'}+\tgt{\beta'}-\tgt{\rho}$ by the irun $\sigma'$. The resulting irun is denoted by $\tau$.
  The irun $\sigma$ is obtained by similarly replacing in $\hat{\beta}$ a leaf labeled $\tgt{\beta'} + \tgt{\alpha} - \tgt{\rho}$ by $\tau$.

  There only remains to show that $\alpha, \beta \trianglelefteq \sigma$. It is readily seen that $\beta' \trianglelefteq \tau$ (by considering the subrun $\sigma' \sqsubseteq \tau$). We conclude that $\beta \trianglelefteq \sigma$ by induction, using the following observation.
  \begin{fact}
    Let $n \in \setN$ and $\eta \coloneqq (\vec c, (\eta_1,...,\eta_n)), \theta \coloneqq (\vec c, (\theta_1,...,\theta_n))$ be two runs. If $\eta_j \trianglelefteq \theta_j$ for all $j \in [1,n]$, then $\eta \trianglelefteq \theta$.
  \end{fact}
  The same fact allows us to show that $\alpha \trianglelefteq \tau$. We deduce that $\alpha \trianglelefteq \sigma$ thanks to the following fact.
  \begin{fact}
    Let $\eta, \theta$ be two runs. If $\tgt{\eta} \le \tgt{\theta}$ and $\eta \trianglelefteq \theta'$ for some $\theta' \sqsubseteq \theta$, then $\eta \trianglelefteq \theta$.
  \end{fact}
\end{proof}

\section{Well-Quasi-Order on Runs of a Well-Structured VAS and Amalgamation Property}
\label{app:wsvas}
This appendix provides the detailed proof of \cref{lem:wsvas-wqo-and-amalgamation}.
As mentioned in \cref{subsec:wsvas},
we essentially lift to WSVAS existing proof arguments for VAS~\cite{DBLP:journals/tcs/Jancar90,DBLP:conf/popl/Leroux11,LS15}.

\lemWsvasWqoAndAmalgamation*

Let us consider a WSVAS $\V \equaldef (T, \sqsubseteq, \delta)$.
To prove \cref{lem:wsvas-wqo-and-amalgamation},
we show
that $(\Runs{\V}, \trianglelefteq)$ is a wqo (see \cref{cor:wsvas-run-order-wqo}) and
that it satisfies the amalgamation property (see \cref{lem:wsvas-amalgamation}).

\medskip

The binary relation $\trianglelefteq$ can be characterized in terms of subsequence embedding.
Recall that for every qoset $(\Sigma, \preceq)$,
the \emph{subsequence embedding} is the quasi-order $\preceq^*$ on $\Sigma^*$ defined by
$u \preceq^* v$ if $u = a_1 \cdots a_k$ and $v = v_0 b_1 v_1 \cdots b_k v_k$ for some
$v_i \in \Sigma^*$ and $a_i, b_i \in \Sigma$ such that $a_i \preceq b_i$.
To characterize $\trianglelefteq$ in terms of $\preceq^*$,
we consider the alphabet $\Sigma = \setN^d \times T \times \setN^d$ equipped with the
quasi-order $\preceq$ on $\Sigma$ defined by $(\vec{x}, t, \vec{y}) \preceq (\vec{x}', t', \vec{y}')$ if
$\vec{x} \leq \vec{x}'$, $t \sqsubseteq t'$ and $\vec{y} \leq \vec{y}'$.
The \emph{encoding} of a run $\rho \equaldef (\vec{c}_0, t_1, \vec{c}_1, \ldots, t_k, \vec{c}_k)$,
written $\enc{\rho}$,
is the sequence in $\Sigma^*$ defined by
$\enc{\rho} \equaldef (\vec{c}_0, t_1, \vec{c}_1) \cdots (\vec{c}_{k-1}, t_k, \vec{c}_k)$.
Note that $\enc{\rho}$ is the empty sequence if $\rho$ has length zero.

\begin{lemma}
  \label{lem:wsvas-run-order-charac}
  For every runs $\rho$ and $\sigma$,
  it holds that $\rho \trianglelefteq \sigma$
  if, and only if,
  $\dir{\rho} \leq \dir{\sigma}$ and
  $\enc{\rho} \preceq^* \enc{\sigma}$.
\end{lemma}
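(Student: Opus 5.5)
The plan is to prove both directions directly from the definitions of $\trianglelefteq$ and $\preceq^*$. The decomposition $\sigma = \sigma_0 u_1 \sigma_1 \cdots u_k \sigma_k$ from the definition of $\trianglelefteq$ is read as a choice of $k$ letters of $\enc{\sigma}$, the $i$-th of them being the one that uses transition $u_i$. Here $\sigma_0 u_1 \sigma_1 \cdots$ denotes the concatenation of runs, in which consecutive endpoints are glued: the target of $\sigma_{i-1}$ is the configuration before $u_i$, and the configuration after $u_i$ is the source of $\sigma_i$.

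\emph{($\Rightarrow$)} Suppose $\rho = (\vec{c}_0, t_1, \ldots, t_k, \vec{c}_k)$ and $\sigma = \sigma_0 u_1 \sigma_1 \cdots u_k \sigma_k$ witness $\rho \trianglelefteq \sigma$. The $i$-th chosen letter of $\enc{\sigma}$ is $(\tgt{\sigma_{i-1}}, u_i, \src{\sigma_i})$. From $\vec{c}_{i-1} \leq \tgt{\sigma_{i-1}}$, $t_i \sqsubseteq u_i$ and $\vec{c}_i \leq \src{\sigma_i}$, this letter dominates $(\vec{c}_{i-1}, t_i, \vec{c}_i)$ for $\preceq$. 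The chosen letters appear in increasing order, which gives $\enc{\rho} \preceq^* \enc{\sigma}$. Moreover $\src{\sigma} = \src{\sigma_0} \geq \vec{c}_0$ and $\tgt{\sigma} = \tgt{\sigma_k} \geq \vec{c}_k$, so $\dir{\rho} \leq \dir{\sigma}$.

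\emph{($\Leftarrow$)} Suppose $\enc{\rho} \preceq^* \enc{\sigma}$, witnessed by indices $j_1 < \cdots < j_k$ such that the $j_i$-th letter $(\vec{x}_i, u_i, \vec{y}_i)$ of $\enc{\sigma}$ dominates $(\vec{c}_{i-1}, t_i, \vec{c}_i)$. Cutting $\sigma$ just before and just after each of these letters produces runs $\sigma_0, \ldots, \sigma_k$ with $\sigma = \sigma_0 u_1 \sigma_1 \cdots u_k \sigma_k$. For $0 < i < k$, the run $\sigma_i$ goes from $\vec{y}_i$ to $\vec{x}_{i+1}$, and both endpoints dominate $\vec{c}_i$: indeed $\vec{c}_i \leq \vec{y}_i$ comes from the letter $i$ and $\vec{c}_i \leq \vec{x}_{i+1}$ from the letter $i+1$. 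The boundary runs are handled the same way. For $\sigma_0$, we have $\src{\sigma_0} = \src{\sigma} \geq \vec{c}_0$ by the hypothesis on directions, and $\tgt{\sigma_0} = \vec{x}_1 \geq \vec{c}_0$. Symmetrically, $\tgt{\sigma_k} = \tgt{\sigma} \geq \vec{c}_k$ and $\src{\sigma_k} = \vec{y}_k \geq \vec{c}_k$.

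The degenerate case $k = 0$ is where the direction hypothesis is genuinely needed. There $\rho = (\vec{c}_0)$ and $\sigma = \sigma_0$, so the requirement $\vec{c}_0 \leq \src{\sigma}, \tgt{\sigma}$ is exactly $\dir{\rho} \leq \dir{\sigma}$.

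No step is a real obstacle. The only point needing care is the index bookkeeping: checking that each interior segment receives \emph{both} inequalities from its two neighbouring letters, and that the two boundary segments receive their missing inequality from $\dir{\rho} \leq \dir{\sigma}$.
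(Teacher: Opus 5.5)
Your proposal is correct and follows essentially the same route as the paper's proof. In both, the forward direction reads off the letters $(\tgt{\sigma_{i-1}}, u_i, \src{\sigma_i})$ of $\enc{\sigma}$. The converse cuts $\sigma$ at the chosen letters and obtains each inequality $\vec{c}_i \leq \src{\sigma_i}, \tgt{\sigma_i}$ from the neighbouring letters, with the direction hypothesis covering the two ends.
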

\begin{proof}
  For the ``only if'' direction,
  assume that
  $\rho = (\vec{c}_0, t_1, \vec{c}_1, \ldots, t_k, \vec{c}_k)$ and
  $\sigma = \sigma_0 u_1 \sigma_1 \cdots u_k \sigma_k$
  for
  some configurations $\vec{c}_0, \ldots, \vec{c}_k$,
  some runs $\sigma_0, \ldots, \sigma_k$ and
  some transitions $t_1, \ldots, t_k, u_1, \ldots, u_k$
  such that
  $\vec{c}_i \leq \src{\sigma_i}, \tgt{\sigma_i}$ for all $i \in \{0, \ldots, k\}$ and
  $t_i \sqsubseteq u_i$ for all $i \in \{1, \ldots, k\}$.
  We have
  $\src{\rho} = \vec{c}_0 \leq \src{\sigma_0} = \src{\sigma}$ and
  $\tgt{\rho} = \vec{c}_k \leq \tgt{\sigma_k} = \tgt{\sigma}$,
  hence,
  $\dir{\rho} \leq \dir{\sigma}$.
  Observe firstly that
  $\enc{\sigma} = \enc{\sigma_0} \, (\tgt{\sigma_0}, u_1, \src{\sigma_1}) \, \enc{\sigma_1} \cdots (\tgt{\sigma_{k-1}}, u_k, \src{\sigma_k}) \, \enc{\sigma_k}$,
  and secondly that
  $(\vec{c}_{i-1}, t_i, \vec{c}_i) \preceq (\tgt{\sigma_{i-1}}, u_i, \src{\sigma_i})$
  for all $i \in \{1, \ldots, k\}$.
  It follows that the encodings of $\rho$ and $\sigma$ satisfy
  $\enc{\rho}
  =
  (\vec{c}_0, t_1, \vec{c}_1) \cdots (\vec{c}_{k-1}, t_k, \vec{c}_k)
  \preceq^*
  (\tgt{\sigma_0}, u_1, \src{\sigma_1}) \cdots (\tgt{\sigma_{k-1}}, u_k, \src{\sigma_k})
  \preceq^*
  \enc{\sigma}$.

  \smallskip

  Conversely,
  assume that $\dir{\rho} \leq \dir{\sigma}$ and $\enc{\rho} \preceq^* \enc{\sigma}$.
  Let us write $\rho$ as $\rho = (\vec{c}_0, t_1, \vec{c}_1, \ldots, t_k, \vec{c}_k)$.
  We derive from $\enc{\rho} \preceq^* \enc{\sigma}$ that $\enc{\sigma}$ may be written as
  $\enc{\sigma} = w_0 (\vec{x}_1, u_1, \vec{y}_1) w_1 \cdots (\vec{x}_k, u_k, \vec{y}_k) w_k$
  for some sequences $w_0, \ldots, w_k \in \Sigma^*$ and some triples $(\vec{x}_1, u_1, \vec{y}_1), \ldots, (\vec{x}_k, u_k, \vec{y}_k) \in \Sigma$ such that
  $(\vec{c}_{i-1}, t_i, \vec{c}_i) \preceq (\vec{x}_i, u_i, \vec{y}_i)$ for all $i \in \{1, \ldots, k\}$.
  It follows that $\sigma$ may be written as
  $\sigma = \sigma_0 u_1 \sigma_1 \cdots u_k \sigma_k$
  with $w_i = \enc{\sigma_i}$ for all $i \in \{0, \ldots, k\}$.
  Since
  $\enc{\sigma} = \enc{\sigma_0} \, (\vec{x}_1, u_1, \vec{y}_1) \, \enc{\sigma_1} \cdots (\vec{x}_k, u_k, \vec{y}_k) \, \enc{\sigma_k}$,
  we get that
  $\src{\sigma} = \src{\sigma_0}$,
  $\tgt{\sigma_{i-1}} = \vec{x}_i$ and $\vec{y}_i = \src{\sigma_i}$ for all $i \in \{1, \ldots, k\}$, and
  $\tgt{\sigma_k} = \tgt{\sigma}$.
  Recall that
  $\vec{c}_0 = \src{\rho} \leq \src{\sigma}$,
  $\vec{c}_k = \tgt{\rho} \leq \tgt{\sigma}$ and
  $(\vec{c}_{i-1}, t_i, \vec{c}_i) \preceq (\vec{x}_i, u_i, \vec{y}_i)$ for all $i \in \{1, \ldots, k\}$.
  It follows that
  $\vec{c}_i \leq \src{\sigma_i}, \tgt{\sigma_i}$ for all $i \in \{0, \ldots, k\}$ and
  $t_i \sqsubseteq u_i$ for all $i \in \{1, \ldots, k\}$.
  We have shown that $\rho \trianglelefteq \sigma$.
\end{proof}

\begin{corollary}
  \label{cor:wsvas-run-order-wqo}
  The pair $(\Runs{\V}, \trianglelefteq)$ is a wqo.
\end{corollary}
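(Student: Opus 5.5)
The plan is to reduce the statement to Higman's lemma, using the characterization of $\trianglelefteq$ in \cref{lem:wsvas-run-order-charac}. That lemma says $\rho \trianglelefteq \sigma$ holds if and only if both $\dir{\rho} \leq \dir{\sigma}$ and $\enc{\rho} \preceq^* \enc{\sigma}$. Hence $\trianglelefteq$ is exactly the inverse image of a product quasi-order under the map $\rho \mapsto (\dir{\rho}, \enc{\rho})$ from $\Runs{\V}$ into $(\setN^d \times \setN^d) \times \Sigma^*$. The first component carries the componentwise order $\leq$, and the second carries the subsequence embedding $\preceq^*$. An inverse image of a wqo under any map is a wqo, and reflexivity and transitivity transfer along the same map. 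It therefore suffices to show that this product is a wqo.

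First, $(\Sigma, \preceq)$ is a wqo. Indeed $\Sigma = \setN^d \times T \times \setN^d$ carries the product of $(\setN^d, \leq)$, which is a wqo by Dickson's lemma, with $(T, \sqsubseteq)$, which is a wqo by definition of a WSVAS, and with another copy of $(\setN^d, \leq)$. A finite product of wqos is a wqo, by the usual argument of repeatedly extracting subsequences.

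Second, by Higman's lemma, $(\Sigma^*, \preceq^*)$ is a wqo. Also $(\setN^{2d}, \leq)$ is a wqo by Dickson's lemma, so the product $(\setN^{2d} \times \Sigma^*)$ is a wqo.

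Finally, take an infinite sequence $(\rho_n)_{n \in \setN}$ of runs. We extract an infinite subsequence along which the directions are $\leq$-nondecreasing and the encodings are $\preceq^*$-nondecreasing. By \cref{lem:wsvas-run-order-charac}, this subsequence is then $\trianglelefteq$-nondecreasing.

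There is no real obstacle here. The only point that needs care is that Higman's lemma is applied to the alphabet $\Sigma$ with the quasi-order $\preceq$ induced by the wqo $\sqsubseteq$ on transitions, and not to a finite alphabet. This is precisely the general form of Higman's lemma, which holds for any wqo alphabet.
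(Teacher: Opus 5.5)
Your proposal is correct and follows the paper's proof essentially verbatim. Dickson's lemma and the wqo $(T,\sqsubseteq)$ give that $(\Sigma,\preceq)$ is a wqo, Higman's lemma lifts this to $(\Sigma^*,\preceq^*)$, and the characterization in \cref{lem:wsvas-run-order-charac} transfers it to $\trianglelefteq$; your explicit handling of the direction component via Dickson's lemma on $\setN^{2d}$ simply spells out a step the paper leaves implicit.
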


\begin{proof}
  The pair $(\setN^d, \leq)$ is a wqo by Dickson's Lemma.
  Since $(T, \sqsubseteq)$ is a wqo,
  we get that $(\Sigma, \preceq)$ is also a wqo.
  It follows from Higman's Lemma that $(\Sigma^*, \preceq^*)$ is a wqo.
  The corollary then follows from \cref{lem:wsvas-run-order-charac}.
\end{proof}

\begin{corollary}
  \label{cor:wsvas-run-order-decomp}
  Let $\rho$ and $\sigma$ be two runs such that
  $\rho = \rho_0 t_1 \rho_1 \cdots t_k \rho_k$ and
  $\sigma = \sigma_0 u_1 \sigma_1 \cdots u_k \sigma_k$
  for
  some runs $\rho_0, \ldots, \rho_k, \sigma_0, \ldots, \sigma_k$ and
  some transitions $t_1, \ldots, t_k, u_1, \ldots, u_k$.
  It holds that $\rho \trianglelefteq \sigma$
  if $\rho_i \trianglelefteq \sigma_i$ for all $i \in \{0, \ldots, k\}$ and
  $t_i \sqsubseteq u_i$ for all $i \in \{1, \ldots, k\}$.
\end{corollary}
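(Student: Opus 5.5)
The plan is to reduce \cref{cor:wsvas-run-order-decomp} to the characterization of \cref{lem:wsvas-run-order-charac}. So we need to check two things: $\dir{\rho} \leq \dir{\sigma}$ and $\enc{\rho} \preceq^* \enc{\sigma}$. Both follow from the componentwise hypotheses $\rho_i \trianglelefteq \sigma_i$ and $t_i \sqsubseteq u_i$. As a preliminary step, apply \cref{lem:wsvas-run-order-charac} to each pair $\rho_i \trianglelefteq \sigma_i$. This gives $\src{\rho_i} \leq \src{\sigma_i}$, $\tgt{\rho_i} \leq \tgt{\sigma_i}$ and $\enc{\rho_i} \preceq^* \enc{\sigma_i}$ for every $i \in \{0, \ldots, k\}$.

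For the directions, we have $\src{\rho} = \src{\rho_0} \leq \src{\sigma_0} = \src{\sigma}$ and $\tgt{\rho} = \tgt{\rho_k} \leq \tgt{\sigma_k} = \tgt{\sigma}$. Hence $\dir{\rho} \leq \dir{\sigma}$.

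For the encodings, use the decomposition already observed in the proof of \cref{lem:wsvas-run-order-charac}:
\[
\enc{\rho} = \enc{\rho_0} \, (\tgt{\rho_0}, t_1, \src{\rho_1}) \, \enc{\rho_1} \cdots (\tgt{\rho_{k-1}}, t_k, \src{\rho_k}) \, \enc{\rho_k},
\]
and the analogous decomposition for $\sigma$. The middle letters compare letter by letter, since $(\tgt{\rho_{i-1}}, t_i, \src{\rho_i}) \preceq (\tgt{\sigma_{i-1}}, u_i, \src{\sigma_i})$ follows from $t_i \sqsubseteq u_i$ together with the target and source inequalities above. The subsequence embedding $\preceq^*$ is compatible with concatenation: if $a_j \preceq^* b_j$ for every $j$, then $a_1 \cdots a_m \preceq^* b_1 \cdots b_m$, because the individual embeddings can be juxtaposed. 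Applying this to the blocks $\enc{\rho_i} \preceq^* \enc{\sigma_i}$ and to the single-letter blocks gives $\enc{\rho} \preceq^* \enc{\sigma}$. \cref{lem:wsvas-run-order-charac} then concludes $\rho \trianglelefteq \sigma$.

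No real obstacle is expected. The only point needing care is that the connecting letter for $\rho$ uses $\tgt{\rho_{i-1}}$ and $\src{\rho_i}$, and these must be compared with the corresponding entries for $\sigma$. That comparison is exactly what the direction inequalities from the preliminary step supply. The degenerate case where some $\rho_i$ has length zero needs no separate treatment, since then $\enc{\rho_i}$ is empty and the embedding is trivial.
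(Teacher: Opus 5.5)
Your proposal is correct and follows the paper's proof essentially step by step. You get $\dir{\rho} \leq \dir{\sigma}$ from $\rho_0 \trianglelefteq \sigma_0$ and $\rho_k \trianglelefteq \sigma_k$, decompose both encodings into blocks $\enc{\rho_i}$ separated by connecting letters, compare blocks and letters via \cref{lem:wsvas-run-order-charac}, and conclude with that lemma; your explicit remark that $\preceq^*$ is compatible with concatenation is a detail the paper leaves implicit.
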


\begin{proof}
  We have
  $\src{\rho} = \src{\rho_0} \leq \src{\sigma_0} = \src{\sigma}$ as $\rho_0 \trianglelefteq \sigma_0$.
  Similarly,
  $\tgt{\rho} = \tgt{\rho_k} \leq \tgt{\sigma_k} = \tgt{\sigma}$ as $\rho_k \trianglelefteq \sigma_k$.
  Let us show that $\enc{\rho} \preceq^* \enc{\sigma}$.
  Observe that
  \begin{align*}
    \enc{\rho} & = \enc{\rho_0} \, (\tgt{\rho_0}, t_1, \src{\rho_1}) \, \enc{\rho_1} \cdots (\tgt{\rho_{k-1}}, t_k, \src{\rho_k}) \, \enc{\rho_k}
    \\
    \enc{\sigma} & = \enc{\sigma_0} \, (\tgt{\sigma_0}, u_1, \src{\sigma_1}) \, \enc{\sigma_1} \cdots (\tgt{\sigma_{k-1}}, u_k, \src{\sigma_k}) \, \enc{\sigma_k}
  \end{align*}
  According to \cref{lem:wsvas-run-order-charac},
  we have
  $\src{\rho_i} \leq \src{\sigma_i}$,
  $\tgt{\rho_i} \leq \tgt{\sigma_i}$ and
  $\enc{\rho_i} \preceq^* \enc{\sigma_i}$
  for all $i \in \{0, \ldots, k\}$.
  This entails in particular that
  $(\tgt{\rho_{i-1}}, t_i, \src{\rho_i}) \preceq (\tgt{\sigma_{i-1}}, u_i, \src{\sigma_i})$
  for all $i \in \{1, \ldots, k\}$.
  We derive that
  $\enc{\rho} \preceq^* \enc{\sigma}$.
  The corollary then follows from \cref{lem:wsvas-run-order-charac}.
\end{proof}

\begin{lemma}
  \label{lem:wsvas-amalgamation}
  For every runs $\rho, \alpha, \beta$ such that $\rho \trianglelefteq \alpha, \beta$,
  there exists a run $\sigma$ such that
  $\alpha, \beta \trianglelefteq \sigma$ and
  $\dir{\rho} + \dir{\sigma} = \dir{\alpha} + \dir{\beta}$.
\end{lemma}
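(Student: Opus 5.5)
We prove the amalgamation property for WSVAS runs by induction on the length $k$ of $\rho$, using \cref{cor:wsvas-run-order-decomp} to assemble the amalgam from pieces. Throughout, write $\rho = (\vec{c}_0, t_1, \vec{c}_1, \ldots, t_k, \vec{c}_k)$. By definition of $\trianglelefteq$ we have two decompositions
\[
\alpha = \alpha_0 u_1 \alpha_1 \cdots u_k \alpha_k
\quad\text{and}\quad
\beta = \beta_0 v_1 \beta_1 \cdots v_k \beta_k ,
\]
with $\vec{c}_i \leq \src{\alpha_i}, \tgt{\alpha_i}, \src{\beta_i}, \tgt{\beta_i}$ and $t_i \sqsubseteq u_i, v_i$. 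The plan is to amalgamate each block $\alpha_i,\beta_i$ over the trivial run $(\vec{c}_i)$, and each transition pair $u_i,v_i$ over $t_i$, and then glue.

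\textbf{Transitions.} Since $\delta$ is diamond, for each $i$ we pick $w_i$ with $u_i, v_i \sqsubseteq w_i$ and $\delta(t_i) + \delta(w_i) = \delta(u_i) + \delta(v_i)$.

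\textbf{Blocks.} For each $i$ we need a run $\sigma_i$ with $\alpha_i, \beta_i \trianglelefteq \sigma_i$ and $\dir{\sigma_i} = \dir{\alpha_i} + \dir{\beta_i} - (\vec{c}_i,\vec{c}_i)$. We take the concatenation of $\alpha_i$ shifted by $\src{\beta_i} - \vec{c}_i$ with $\beta_i$ shifted by $\tgt{\alpha_i} - \vec{c}_i$. Both shifts are nonnegative, so these are valid runs because WSVAS runs are translation-monotone. The concatenation is consistent because both shifted endpoints equal $\tgt{\alpha_i} + \src{\beta_i} - \vec{c}_i$. Its direction is as required.

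The embedding $\alpha_i \trianglelefteq \sigma_i$ holds by matching each transition of $\alpha_i$ to itself in the first half, with the second half absorbed into the last block; the configuration inequalities hold since all shifts are $\geq \vec{0}$ and $\vec{c}_i \leq$ everything. Symmetrically, $\beta_i \trianglelefteq \sigma_i$ by matching into the second half.

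\textbf{Gluing.} Define $\sigma = \sigma_0 w_1 \sigma_1 \cdots w_k \sigma_k$. We must check that this is a run, i.e.\ $\tgt{\sigma_{i-1}} + \delta(w_i) = \src{\sigma_i}$. This follows by adding the identities
\[
\tgt{\alpha_{i-1}} + \delta(u_i) = \src{\alpha_i},\qquad
\tgt{\beta_{i-1}} + \delta(v_i) = \src{\beta_i},
\]
and subtracting $\vec{c}_{i-1} + \delta(t_i) = \vec{c}_i$. \cref{cor:wsvas-run-order-decomp} then yields $\alpha, \beta \trianglelefteq \sigma$, using $\alpha_i \trianglelefteq \sigma_i$ and $u_i \sqsubseteq w_i$ (resp.\ for $\beta$). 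Finally, the direction identity $\dir{\rho} + \dir{\sigma} = \dir{\alpha} + \dir{\beta}$ follows from the endpoint blocks $i=0$ and $i=k$.

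No induction is actually needed, since the block construction is direct. The main obstacle I expect is carefully verifying $\alpha_i \trianglelefteq \sigma_i$ through the definition of $\trianglelefteq$ (or via \cref{lem:wsvas-run-order-charac}), particularly the boundary configuration inequalities at the junction of the two halves.
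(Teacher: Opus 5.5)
Your proposal is correct and matches the paper's proof essentially step for step: the paper also takes $w_i$ from the diamond property, builds each block as the concatenation of $\alpha_i$ shifted by $\src{\beta_i}-\vec{c}_i$ with $\beta_i$ shifted by $\tgt{\alpha_i}-\vec{c}_i$, checks the gluing identity $\src{\sigma_i} = \tgt{\sigma_{i-1}} + \delta(w_i)$, and concludes with \cref{cor:wsvas-run-order-decomp}, with no induction. The only cosmetic difference is that the paper verifies $\alpha_i,\beta_i \trianglelefteq \sigma_i$ through the encoding characterization of \cref{lem:wsvas-run-order-charac} instead of directly from the definition as you do; your direct check is sound, including the junction inequalities, which hold because $\src{\alpha_i},\tgt{\beta_i} \geq \vec{c}_i$.
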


\begin{proof}
  We use some additional notations in this proof.
  Given a run $\rho = (\vec{c}_0, t_1, \vec{c}_1, \ldots, t_k, \vec{c}_k)$ and
  a vector $\vec{x} \in \setN^d$,
  we let $\uprun{\rho}{\vec{x}}$ denote the alternating sequence
  $(\vec{c}_0 + \vec{x}, t_1, \vec{c}_1 + \vec{x}, \ldots, t_k, \vec{c}_k + \vec{x})$.
  It is readily seen that $\uprun{\rho}{\vec{x}}$ is also a run and
  that $\rho \trianglelefteq \uprun{\rho}{\vec{x}}$.
  Given two runs $\rho, \rho'$ such that
  $\tgt{\rho} = \src{\rho'}$,
  we let $\rho \runcat \rho'$ denote the run obtained
  by concatenating $\rho$ and $\rho'$ in the obvious way
  (the target of $\rho$ and the source of $\rho'$ are merged together).
  Note that $\enc{\rho}, \enc{\rho'} \preceq^* \enc{\rho \runcat \rho'}$.
  Let us now prove the lemma.

  \smallskip

  Consider a run $\rho = (\vec{c}_0, t_1, \vec{c}_1, \ldots, t_k, \vec{c}_k)$ and
  let $\alpha, \beta$ be runs such that $\rho \trianglelefteq \alpha, \beta$.
  By definition of $\trianglelefteq$, it holds that
  $\alpha = \alpha_0 u_1 \alpha_1 \cdots u_k \alpha_k$
  and
  $\beta = \beta_0 v_1 \beta_1 \cdots v_k \beta_k$ for
  some runs $\alpha_0, \ldots, \alpha_k, \beta_0, \ldots, \beta_k$ and
  some transitions $u_1, \ldots, u_k, v_1, \ldots, v_k$
  such that
  $\vec{c}_i \leq \src{\alpha_i}, \src{\beta_i}, \tgt{\alpha_i}, \tgt{\beta_i}$ for all $i \in \{0, \ldots, k\}$ and
  $t_i \sqsubseteq u_i, v_i$ for all $i \in \{1, \ldots, k\}$.
  Since $\delta$ is a diamond map from $(T, \sqsubseteq)$ to $(\setZ^d, \leq)$,
  there exist $w_1, \ldots, w_k \in T$ such that
  $u_i, v_i \sqsubseteq w_i$ and $\delta(t_i) + \delta(w_i) = \delta(u_i) + \delta(v_i)$
  for all $i \in \{1, \ldots, k\}$.
  To construct the desired run $\sigma$,
  we introduce for each $i \in \{0, \ldots, k\}$
  the vectors $\vec{f}_i, \vec{g}_i, \vec{h}_i \in \setN^d$ and
  the runs $\sigma_i, \sigma'_i$ defined by:
  \begin{align*}
    \vec{f_i} & = \src{\alpha_i} + \src{\beta_i} - \vec{c}_i
    &
    \sigma_i & = \uprun{\alpha_i}{(\src{\beta_i} - \vec{c}_i)}
    \\
    \vec{g_i} & = \tgt{\alpha_i} + \src{\beta_i} - \vec{c}_i
    &
    \sigma'_i & = \uprun{\beta_i}{(\tgt{\alpha_i} - \vec{c}_i)}
    \\
    \vec{h_i} & = \tgt{\alpha_i} + \tgt{\beta_i} - \vec{c}_i
  \end{align*}
  Observe that
  $\src{\sigma_i} = \vec{f_i}$,
  $\tgt{\sigma_i} = \src{\sigma'_i} = \vec{g_i}$ and
  $\tgt{\sigma'_i} = \vec{h_i}$.
  Moreover,
  it is routinely checked that
  $\vec{f}_i = \vec{h}_{i-1} + \delta(w_i)$
  for all $i \in \{1, \ldots, k\}$.
  We derive that the alternating sequence
  $$
  \sigma \ = \ 
  (\sigma_0 \runcat \sigma'_0) \, w_1 \, (\sigma_1 \runcat \sigma'_1) \cdots \, w_k \, (\sigma_k \runcat \sigma'_k)
  $$
  is a run.
  Moreover,
  we have
  $\src{\sigma} = \vec{f}_0 = \src{\alpha_0} + \src{\beta_0} - \vec{c}_0 = \src{\alpha} + \src{\beta} - \src{\rho}$ and
  $\tgt{\sigma} = \vec{h}_k = \tgt{\alpha_k} + \tgt{\beta_k} - \vec{c}_k = \tgt{\alpha} + \tgt{\beta} - \tgt{\rho}$.
  This entails that
  $\dir{\rho} + \dir{\sigma} = \dir{\alpha} + \dir{\beta}$.
  It remains to show that $\alpha, \beta \trianglelefteq \sigma$.
  We will use \cref{cor:wsvas-run-order-decomp} to do so.

  \smallskip

  Let $i \in \{0, \ldots, k\}$ and let us show that $\alpha_i, \beta_i \trianglelefteq (\sigma_i \runcat \sigma'_i)$.
  The definitions of $\sigma_i$ and $\sigma'_i$ immediately entail that
  $\alpha_i \trianglelefteq \sigma_i$ and $\beta_i \trianglelefteq \sigma'_i$,
  hence,
  $\enc{\alpha_i} \preceq^* \enc{\sigma_i}$ and $\enc{\beta_i} \preceq^* \enc{\sigma'_i}$
  by \cref{lem:wsvas-run-order-charac}.
  It follows that $\enc{\alpha_i}, \enc{\beta_i} \preceq^* (\sigma_i \runcat \sigma'_i)$
  since $\enc{\sigma_i}, \enc{\sigma'_i} \preceq^* (\sigma_i \runcat \sigma'_i)$.
  Notice that
  $\src{\sigma_i \runcat \sigma'_i} = \vec{f_i} \geq \src{\alpha_i}, \src{\beta_i}$ and
  $\tgt{\sigma_i \runcat \sigma'_i} = \vec{h_i} \geq \tgt{\alpha_i}, \tgt{\beta_i}$.
  We derive,
  again by \cref{lem:wsvas-run-order-charac},
  that $\alpha_i, \beta_i \trianglelefteq (\sigma_i \runcat \sigma'_i)$.

  \smallskip

  We have shown in the previous paragraph that
  $\alpha_i, \beta_i \trianglelefteq (\sigma_i \runcat \sigma'_i)$
  for all $i \in \{0, \ldots, k\}$.
  Recall that
  $u_i, v_i \sqsubseteq w_i$ for all $i \in \{1, \ldots, k\}$.
  It follows from \cref{cor:wsvas-run-order-decomp} that
  $\alpha, \beta \trianglelefteq \sigma$.
\end{proof}

\end{document}